\renewcommand{\nomname}{List of abbreviations}
\newcommand{\field}[1]{\mathbb{#1}}
\newcommand{\tr}{\mbox{tr}}
\newcommand{\diag}{\mbox{diag}}
\DeclareMathOperator{\rank}{rank}
\DeclareMathOperator{\End}{End}
\newcommand{\str}{\mbox{str}}
\newcommand{\sdet}{\mbox{sdet}}
\renewcommand{\theequation}{\thesection.\arabic{equation}}
\newcounter{alphabetical}
\newcounter{alphappendix}
\newenvironment{proof}{\noindent\textbf{Proof}\\}{\noindent$\Box$\\}
\newtheorem{lemma}{Lemma}[section]
\newtheorem{theorem}[lemma]{Theorem}
\newtheorem{corollary}[lemma]{Corollary}
\newtheorem{definition}[lemma]{Definition}
\newtheorem{note}[lemma]{Note}
\newtheorem{remark}[lemma]{Remark}
\newtheorem{example}[lemma]{Example}
\newtheorem{proposition}[lemma]{Proposition}
\begin{document}
\cleardoublepage


\begin{titlepage}
\centering\Huge
Darboux Transformations, Discrete Integrable Systems   \\and Related Yang-Baxter Maps\\
\Large
\vspace{1 cm}
\textcolor{blue}{Sotiris Konstantinou-Rizos}\\
\vspace{1.5cm}

\begin{figure}[h]
	\centering
	\includegraphics[width=0.4\textwidth,height=0.3\textheight,keepaspectratio=true,scale=0.8]{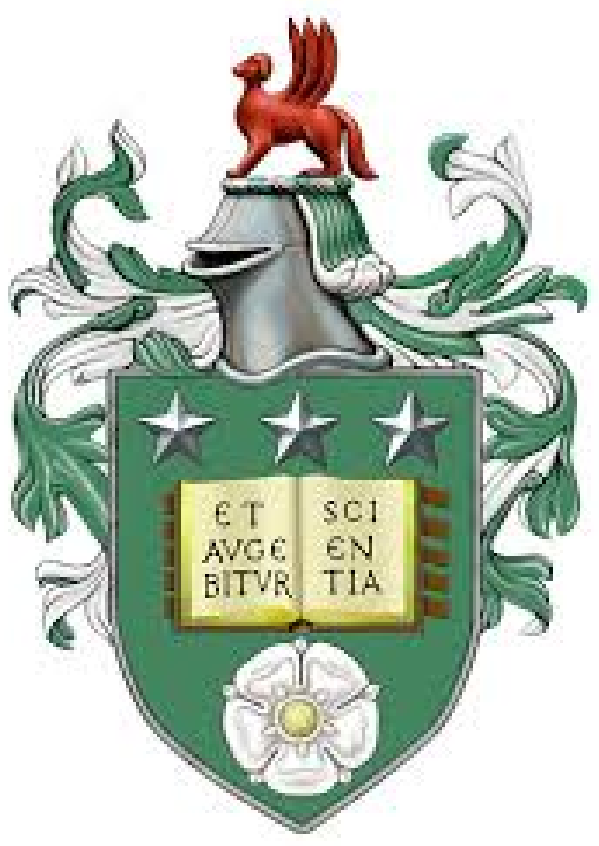}
\end{figure}

\Large
\vspace{1.3cm}

\textcolor{blue}{Department of Applied Mathematics} \\
\textcolor{blue}{University of Leeds} \\
\large
\vspace{1.4cm}
Submitted in accordance with the requirements for the degree of Doctor of Philosophy\\
\vspace{0.6cm}
April 2014\\

\pagenumbering{gobble}
\cleardoublepage
\large
\vspace{0.8cm}
The candidate confirms that the work submitted is his own, except where work which has formed part of jointly-authored publications has been included. The contribution of the candidate and the other authors to this work has been explicitly indicated below. The candidate confirms that appropriate credit has been given within the thesis where reference has been made to the work of others.\\
\vspace{0.5cm}
\normalsize
\begin{flushleft}
\begin{itemize}
	\item Chapter 5 is based on S. Konstantinou-Rizos and A.V. Mikhailov, ``Darboux trans-\\
	formations, finite reduction groups and related Yang-Baxter maps'', Journal of Physics A \textbf{46} (2013), 425201.
\end{itemize}

The construction of Yang-Baxter maps by matrix refactorisation problems of Darboux matrices was my original idea which we accomplished with the co-author, A.V. Mikha-\\ilov, for all the cases of finite reduction groups with two-dimensional representation. The contribution was equal. Moreover, I wrote the first draft of the paper which, after the co-author's amendments, was brought to a publishable form.
\end{flushleft}

\vspace{3.0cm}
\large
This copy has been supplied on the understanding that it is copyright material and that no quotation from the thesis may be published without proper acknowledgement.

\vspace{3.0cm} \copyright \, 2014 The University of Leeds and Sotiris Konstantinou-Rizos
\normalsize
\end{titlepage}
\vspace{10cm}

\cleardoublepage 
\vspace*{\stretch{1}}
\begin{center}
\Large{To my parents, Anastasia and Giorgos}
\end{center}
\vspace*{\stretch{2}}


\cleardoublepage
\centerline{\Huge{Acknowledgements}}

Doing a PhD is a hard job, but at the same time it can become a pleasant experience if you have the right people around you. Therefore, from this position, I would like to thank all the people who contributed --one way or another-- to making it happen.

Above all, I would like to thank my supervisor, Alexander Mikhailov, for his guidance and for being patient throughout my PhD studies. He has the experience and ability to tell you straightaway if your ideas can work, which is very useful for a research student. Moreover, he encouraged me to participate in many conferences from which I benefited a lot. His constructive criticism on my research made me mature as a mathematician and for that I am most grateful. At this point, I would also like to thank my adviser, Frank Nijhoff, for his interest and for setting the guidelines in our yearly meetings.

I am greatly indebted to the University of Leeds for granting me the William Wright Smith scholarship and to J.E. Crowther for the scholarship-contribution to fees. Moreover, I extend grateful thanks to the school of Mathematics for the financial support during my writing-up period and for financing all the conferences in which I participated; I feel that it was the ideal enviroment for postgraduate studies, in a very well-organised University which I would highly recommend to any graduate student. It was also very beneficial to be part of the integrable systems group, with weekly seminars and many other research activities. Many thanks are also due to Alastair Rucklidge and Jeanne Shuttleworth for welcoming me here in the school and for being very helpful over these years.

However, my postgraduate studies started in Greece, at the Univeristy of Patras, and therefore there are several people who contributed to building my maths background.

First of all, I was very lucky to do my Master's thesis under the supervision of Dimitris Tsoubelis. He is the person who introduced me to the theory of PDEs and, in general, to Mathematical Physics, and he has been an enormous source of knowledge and experience for me ever since. But more importantly, he taught me how to study mathematics myself, how to write mathematical texts and research proposals and several other things which I use daily in my research. Working with him is a pleasure, and he is a person who I always seek advice from at any stage of my career. Therefore, I would like to thank him for supporting my studies in several ways. He and his wife, Eudokia Papafotopoulou, have been very kind to me and for that I am most grateful.

I would also like to thank Vassilis Papageorgiou for his interest throughout these years and for contributing into building my theoretical background during both my undergraduate and postgraduate studies. He is always happy to discuss mathematics with me and he is always willing to help me. Moreover, he and his wife, Eug\'enie Foustoucos, had helped me to come here in Leeds in the beginning of my studies for which I am very grateful as well as for their continuous interest in my progress.

Moreover, I would not be able to start my PhD studies without the help of Tolis Fasois who prepared me for the IELTS exam. His job was difficult, as there were only a few days before the exams; yet, he managed to improve my English in no time and I would like to thank him from this position.

The person who introduced me into the theory of Yang-Baxter maps --which covers a big part of this thesis-- is Thodoris Kouloukas. We have had endless mathematical conversations and I would like to thank him for that, as well as for being a very good friend over these years. Many thanks are also due to my ex-neighbour and friend Georgi Grahovski for explaining to me what a Grassmann algebra is and for being very helpful during his postdoctoral stay in the University of Leeds. Moreover, I would like to thank my academic brother and friend, George Berkeley, for reading my thesis and making comments on it. It has been very nice working with both of them. It was also very beneficial to have several maths discussions with Pavlos Kassotakis, thus I would like to thank him for that.

Pavlos Xenitidis and Anastasia Kazaltzi have been my family here in Leeds. I met them in Patras when Pavlos was my academic tutor and they have been more than great friends ever since. I came to Leeds one year after they moved here when I decided to ``follow'' them. Living in Leeds without them would be completely different, as they made me feel as if I were home when I was away from it. I would like to thank them for everything from this position.

I was also very lucky, since Ilia Roustemoglou decided to join us one year after, and also Giorgos Papamikos a bit after her. They are both long-time friends of mine and it was a pleasure to have them around. Therefore, I would like to thank them for the nice time we had in the UK. Additionally, I would like to thank Giorgos for proof-reading my thesis.

My postgraduate studies would never be the same without the support of my friends Eleni Christodoulidi, Spiros Dafnis, Stelios ``Spawn'' Dimas, Nikos Kallinikos, Chrysavgi Kostopoulou, Akis Matzaris, Mitsos Nomikos, Nikitas Nikandros, Grigoris Protsonis, Panagiotis Protsonis and Stavros Anastasiou. At this point, I should extend many thanks to Panagis Karazeris for being patient and friendly to all of us. Additionally, I would also like to thank Giota Adamopoulou and Maura Capuzzo who are the ``new entries'' in our friends list. I would like to thank all the former for having an amazing time throughout these years (and they are quite a few!) and for making the daily routine a pleasant journey. Moreover, I had a really nice time with my PhD-fellows and friends Laurence Hawke, Vijay Teeluck, Julia Sauter, Lamia Alqahtani, Neslihan Delice, Abeer Al-Nahdi, Huda Alshanabari, Bartosz Szczesny, Gareth Hurst and all the rest of the people in the satellite postgraduate office. And, of course, special thanks to Ms Charikleia for cooking for all of us. Finally, living in a shared flat would never be the same without the pleasant company of my flatmates and very good friends Sophia-Marie Honny, Priya Sinha and Moa N\"asstr\"om.

Special thanks are due to my long-time friends Christos Fokas, Giorgos Panagiotidis in Athens and Costantinos Balamoshev and Ewa Golic in Warsaw, who I always think of no matter how far away they are.

Finally, I am more than grateful to my parents Anastasia and Giorgos for supporting everything I choose to do in my life. Moreover, special thanks to my uncle and aunt, Charilaos and Aleka, for their interest and support and to all the rest of the family but especially my twin brother Michael and my cousins and friends Costas, Stelios, Zoe and Marianna who I grew up with.

\cleardoublepage

\centerline{\Huge{Abstract}}
\addcontentsline{toc}{section}{Abstract}
Darboux transformations constitute a very important tool in the theory of integrable systems. They map trivial solutions of integrable partial differential equations to non-trivial ones and they link the former to discrete integrable systems. On the other hand, they can be used to construct Yang-Baxter maps which can be restricted to completely integrable maps (in the Liouville sense) on invariant leaves.

In this thesis we study the Darboux transformations related to particular Lax operators of NLS type which are invariant under the action of the so-called reduction group. Specifically, we study the cases of: 1) the nonlinear Schr\"odinger equation (with no reduction), 2) the derivative nonlinear Schr\"odinger equation, where the corresponding Lax operator is invariant under the action of the $\field{Z}_2$-reduction group and 3) a deformation of the derivative nonlinear Schr\"odinger equation, associated to a Lax operator invariant under the action of the dihedral reduction group. These reduction groups correspond to recent classification results of automorphic Lie algebras.

We derive Darboux matrices for all the above cases and we use them to construct novel discrete integrable systems together with their Lax representations. For these systems of difference equations, we discuss the initial value problem and, moreover, we consider their integrable reductions. Furthermore, the derivation of the Darboux matrices gives rise to many interesting objects, such as B\"acklund transformations for the corresponding partial differential equations as well as symmetries and conservation laws of their associated systems of difference equations.

Moreover, we employ these Darboux matrices to construct six-dimensional Yang-Baxter maps for all the afore-mentioned cases. These maps can be restricted to four-dimensional Yang-Baxter maps on invariant leaves, which are completely integrable; we also consider their vector generalisations.

Finally, we consider the Grassmann extensions of the Yang-Baxter maps corresponding to the nonlinear Schr\"odinger equation and the derivative nonlinear Schr\"odinger equation. These constitute the first examples of Yang-Baxter maps with noncommutative variables in the literature.

\pagenumbering{roman}
\cleardoublepage
\addcontentsline{toc}{section}{Contents}
\tableofcontents

\clearpage
\addcontentsline{toc}{section}{List of abbreviations}
 \printnomenclature
\markboth{List of abbreviations}{}
\markboth{\nomname}{\nomname}

\cleardoublepage
\addcontentsline{toc}{section}{List of figures}
\markboth{List of figures}{}
\listoffigures
\cleardoublepage

\pagenumbering{arabic}

\chapter{Introduction}
\label{chap1}
\setcounter{equation}{0}
\renewcommand{\theequation}{\thechapter.\arabic{equation}}

The aim of this chapter is to give an introduction to the subject of integrable systems, which forms the context of this thesis. Integrable systems arise in nonlinear processes and, both in their classical and quantum version, have many applications in various fields of mathematics and physics.

However, the definition of integrable systems is itself highly nontrivial; many scientists have different opinions on what ``integrable'' should mean, which makes the definition of integrability elusive, rather than tangible. In fact, a comprehensive definition of integrability is not yet available. As working definitions we often use the existence of a Lax pair, the solvability of the system by the IST, the existence of infinitely many symmetries or conservation laws, or the existence of a sufficient number of first integrals which are in involution (Liouville integrability); there is even a book entirely devoted to \textit{what is integrability} \cite{Zakharov1991}.

In this thesis we are interested in the derivation of discrete integrable systems and Yang-Baxter maps, from (integrable) PDEs which admit Lax representation, via Darboux transformations. Specifically, we shall be focusing on particular PDEs of NLS type whose corresponding Lax operators possess certain symmetries, due to the action of the so-called reduction group.

Since these AKNS-type Lax operators\index{Lax operator(s)!AKNS type} we are dealing with constitute a key role in the integrability of their associated equations under the IST, the inverse scattering method and the AKNS scheme deserve a few pages in the first part of this introduction. However, we will skip the technical parts of their methods, as it is not the aim of this thesis. For detailed information on the methods and the historical review of the results, we indicatively refer to \cite{Ablo-Segur, Ablo-Clarkson, DJ} (and the references therein).

The second part of this chapter is devoted to a brief introduction to the integrability of discrete systems; in the main, their multidimensional consistency\index{multidimensional consistency} and some recent classification results.

\section{Lax representations and the IST}
The inverse scattering transformation (or just transform)\nomenclature{IST}{Inverse scattering transform}\index{inverse scattering transform} is a method for solving nonlinear PDEs. Its name is due to the main idea of the method, namely the recovery of the time evolution of the potential solution of the nonlinear equation, from the time evolution of its scattering data.\index{scatering data} As a matter of fact, the method of the inverse scattering transform\index{inverse scattering transform} is of the same philosophy as the Fourier transform\index{Fourier transform} technique for solving linear PDEs; actually, the IST is also found in the literature as the nonlinear Fourier transform.\index{nonlinear Fourier transform} However, it does not apply to all nonlinear equations in a systematic way. 

The first example of nonlinear PDE solvable by the IST\index{inverse scattering transform} method, is the KdV equation,\index{KdV equation} namely 
\begin{equation}\label{KdV}
u_t=6uu_x-u_{xxx},\quad u=u(x,t),
\end{equation}
which is undoubtedly the most celebrated nonlinear PDE over the last few decades. It mostly owes its popularity to Gardner, Greene, Kruskal and Miura,\nomenclature{GGKM}{Gardner, Greene, Kruskal and Miura} who were the first to derive the exact solution of the Cauchy problem\index{Cauchy problem} for the KdV equation,\index{KdV equation} for rapidly decaying initial values, in late sixties \cite{GGKM}. However, equation (\ref{KdV}) was derived by Diederik Korteweg and Gustav de Vries in 1895, as a mathematical model of water-waves in shallow channels. In fact, they showed that the KdV equation represents Scott Russel's solitary wave, known as \textit{soliton}\index{soliton} (see \cite{DJ} for details). The name ``soliton''\index{soliton} was given by Zabusky and Kruskal\footnote{They initially called it ``solitron'', but at the same time a company was trading with the same name and therefore had to remove the ``r''.} in 1965, when they discovered numerically that these wave solutions behave like particles; they retain their amplitude and speed after collision.

The work of GGKM in 1967, namely the IST\index{inverse scattering transform} method, is probably one of the most significant results of the last century in the theory of nonlinear PDEs. It is not only a technique for solving the initial value problem for KdV,\index{KdV equation!initial value problem} but it also initiated a more general scheme applicable to other nonlinear PDEs. In fact, P. Lax was the one who contributed in this direction, formulating a more general framework a year later in \cite{Lax}. 

\subsection{Lax representations}
Lax's generalisation concerns nonlinear \textit{evolution equations},\index{evolution equation(s)} namely equations of the form
\begin{equation}\label{EE}
u_t=N(u),\quad u=u(x,t),
\end{equation}
where $N$ is a nonlinear differential operator, which does not depend on $\partial_t$.

In particular, Lax considered a pair of linear differential operators, $\mathcal{L}$ and $\mathcal{A}$. Operator $\mathcal{L}$ is associated to the following spectral problem of finding eigenvalues and eigenfunctions
\begin{subequations}\label{Lax_pair}
\begin{align}
\mathcal{L}\psi &=\lambda\psi,\quad \psi=\psi(x,t)\label{Lax_pair-x}
\intertext{while $\mathcal{A}$ is the operator related to the time evolution of the eigenfunctions}
\psi_t &=\mathcal{A}\psi.\label{Lax_pair-t}
\end{align}
\end{subequations}

\begin{proposition}(Lax's equation)
If the spectral parameter does not evolve in time, namely $\lambda_t=0$, then relations (\ref{Lax_pair}) imply
\begin{equation}\label{Lax's_equation}
\mathcal{L}_t+\left[\mathcal{L},\mathcal{A}\right]=0,
\end{equation}
where $\left[\mathcal{L},\mathcal{A}\right]:=\mathcal{L}\mathcal{A}-\mathcal{A}\mathcal{L}$.
\end{proposition}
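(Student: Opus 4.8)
The plan is to differentiate the spectral problem (\ref{Lax_pair-x}) with respect to $t$ and then eliminate the time derivative of the eigenfunction using (\ref{Lax_pair-t}). Concretely, I would start from $\mathcal{L}\psi=\lambda\psi$ and apply $\partial_t$ to both sides, using the Leibniz rule for the operator product $\mathcal{L}\psi$, to obtain $\mathcal{L}_t\psi+\mathcal{L}\psi_t=\lambda_t\psi+\lambda\psi_t$. The hypothesis $\lambda_t=0$ kills the first term on the right.

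Next I would substitute $\psi_t=\mathcal{A}\psi$ from (\ref{Lax_pair-t}) on both sides, giving $\mathcal{L}_t\psi+\mathcal{L}\mathcal{A}\psi=\lambda\mathcal{A}\psi$. The key algebraic move is that $\lambda$ is a scalar, so it commutes with $\mathcal{A}$; hence $\lambda\mathcal{A}\psi=\mathcal{A}(\lambda\psi)=\mathcal{A}\mathcal{L}\psi$, where the last equality re-uses the spectral equation (\ref{Lax_pair-x}). Rearranging yields $\left(\mathcal{L}_t+\mathcal{L}\mathcal{A}-\mathcal{A}\mathcal{L}\right)\psi=0$, that is, $\left(\mathcal{L}_t+[\mathcal{L},\mathcal{A}]\right)\psi=0$.

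The final step, and the only genuinely delicate point, is to pass from this identity holding on the eigenfunction $\psi$ to the operator identity (\ref{Lax's_equation}). I would argue that the relation $\left(\mathcal{L}_t+[\mathcal{L},\mathcal{A}]\right)\psi=0$ holds for the eigenfunction associated to every admissible value of $\lambda$, and that the family of such eigenfunctions is rich enough (complete) that a differential operator annihilating all of them must vanish identically; this is where one implicitly invokes the analytic setting of the IST. Granting that, the operator $\mathcal{L}_t+[\mathcal{L},\mathcal{A}]$ is the zero operator, which is exactly (\ref{Lax's_equation}). I expect this completeness/non-degeneracy argument to be the main obstacle in making the proof fully rigorous, whereas the computation itself is a short two-line manipulation.
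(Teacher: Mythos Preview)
Your argument is correct and follows exactly the paper's route: differentiate $\mathcal{L}\psi=\lambda\psi$ in $t$, use $\lambda_t=0$ and $\psi_t=\mathcal{A}\psi$, and rewrite $\lambda\mathcal{A}\psi=\mathcal{A}\mathcal{L}\psi$ to reach $(\mathcal{L}_t+[\mathcal{L},\mathcal{A}])\psi=0$. The paper disposes of the last step by simply noting that $\psi$ is an arbitrary eigenfunction, so your more careful remark about completeness is, if anything, an improvement in rigour rather than a deviation.
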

\begin{proof}
Differentiation of (\ref{Lax_pair-x}) with respect to $t$ implies
\begin{equation}
\mathcal{L}_t\psi+\mathcal{L}\psi_t=\lambda\psi_t.
\end{equation}
Using both relations (\ref{Lax_pair}), the above equation can be rewritten as
\begin{equation}
\left(\mathcal{L}_t+\mathcal{L}\mathcal{A}-\mathcal{A}\mathcal{L}\right)\psi=0.
\end{equation}
Now, since the above holds for the arbitrary eigenfunction $\psi(x,t)$, it implies equation (\ref{Lax's_equation}).
\end{proof}

If a nonlinear evolution equation (or a system of equations) of the form (\ref{EE}) is equivalent to (\ref{Lax's_equation}), then we can associate to it a pair of linear operators as (\ref{Lax_pair}). In this case, equation (\ref{Lax's_equation}) is called the \textit{Lax equation}\index{Lax equation}, while equations (\ref{Lax_pair}) constitute a \textit{Lax representation}\index{Lax representation} or, simply, a \textit{Lax pair}\index{Lax pair} for (\ref{EE}). In particular, equation (\ref{Lax_pair-x}) is called the \textit{spatial part} of the Lax pair\index{Lax pair!spatial part ($x$-part)} (or $x$-part), while equation (\ref{Lax_pair-t}) is called its \textit{temporal part} (or $t$-part).\index{Lax pair!temporal part ($t$-part)}

The property of a nonlinear evolution equation\index{evolution equation(s)} to be written as a compatibility condition of a pair of linear equations (\ref{Lax's_equation}) plays a key role towards the solvability of the equation under the IST,\index{inverse scattering transform} and it is usually used as an integrability criterion.

\begin{remark}\normalfont
For a given nonlinear evolution equation\index{nonlinear evolution equation(s)} (\ref{EE}) there is no systematic method of writing it as a compatibility condition of a pair of linear equations, namely to determine operators $\mathcal{L}$ and $\mathcal{A}$. In fact, the usual procedure is to first study differential operators of certain form, and then to examine what kind of PDEs result from their compatibility condition.
\end{remark}

\begin{example}\normalfont
The KdV equation\index{KdV equation} (\ref{KdV}) can be written as a compatibility condition of the form (\ref{Lax's_equation}), of a system of linear equations (\ref{Lax_pair}), where $\mathcal{L}$ and $\mathcal{A}$ are given by
\begin{subequations}\label{Lax_KdV}
\begin{align}
\mathcal{L}&=-\partial_x^2+u,\quad u=u(x,t),\label{Lax_KdV-x}\\
\mathcal{A}&=-4\partial_x^3+3u\partial_x+3u_x.\label{Lax_KdV-t}
\end{align}
\end{subequations}
Operators $\mathcal{L}$ and $\mathcal{A}$ constitute a Lax pair for the KdV equation.\index{Lax pair!for KdV equation}\index{KdV equation!Lax pair}
\end{example}

Operator (\ref{Lax_KdV-x}) is the so-called Schr\"odinger operator\index{Schr\"odinger operator}\index{Schr\"odinger operator} and the corresponding equation $\mathcal{L}\psi=\lambda\psi$ is the time-independent Schr\"odinger\index{time-independent Schr\"odinger equation} equation, which constitutes a fundamental equation in mathematical physics since the first quarter of the $20^{th}$ century. However, the Lax pair (\ref{Lax_KdV}) for the KdV equation\index{KdV equation!Lax pair}\index{Lax pair!for KdV equation} was not derived from the equation itself. As a matter of fact, the ``guess'' of the operator (\ref{Lax_KdV-x}) was inspired by the desire to link the KdV equation\index{KdV equation} with Schr\"odinger's equation.

We will come back to this Schr\"odinger equation in the next chapter, where we shall study its covariance under the so-called Darboux transformation.\index{Darboux transformation(s)}

\subsection{The inverse scattering transform}
Although so far the method of the inverse scattering transform\index{inverse scattering transform} is not yet formulated to be uniformly applicable to all nonlinear evolution equations,\index{evolution equation(s)} it always consists of three basic steps. We briefly explain these steps, and we also present them schematically in Figure \ref{ISTscheme}.

Consider the following Cauchy problem
\begin{equation}\label{CauchyNEE}
u_t=N(u),\quad u(x,0)=f(x),\quad u:=u(x,t),
\end{equation}
for a nonlinear evolution equation\index{evolution equation(s)}. Let us also assume that the above PDE admits Lax representation\index{Lax representation} (\ref{Lax_pair}).

\subsubsection{Step $I$: The direct problem}
The direct problem consists of finding the scattering transformation at a fixed value of the temporal parameter, say $t=0$, by using the initial condition $u(x,0)=f(x)$. That is to find the spectral data of operator $\mathcal{L}$, which are called the \textit{scattering data}.\index{scattering data} The scattering transform at $t=0$ is nothing but a set of scattering data, which we denote $S(u)|_{t=0}$.

\subsubsection{Step $II$: Time evolution of the scattering data}
This is the part where one needs to determine the scattering data at an arbitrary time $t\in\field{R}$, i.e. given $S(u)|_{t=0}$, use the second equation of (\ref{Lax_pair}) to determine $S(u)|_{t\in\field{R}}$. The significance of this part lies in the fact that we are now dealing with a linear problem,  (\ref{Lax_pair-t}), rather than a nonlinear one as the original.

\subsubsection{Step $III$: The inverse problem}
Analogously to the Fourier transform\index{Fourier transform} method, the final step is to recover $u=u(x,t)$ from $S(u)|_{t\in\field{R}}$.

\begin{figure}[ht]
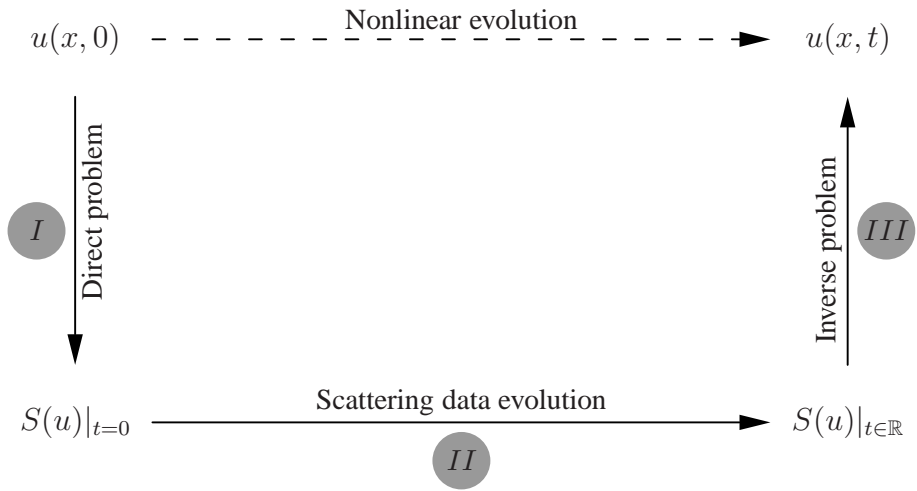

\centering
\centertexdraw{ 
\move (-1.6 0)  \arrowheadtype t:F \avec(1.6 0)
\move (-2 1.7)  \arrowheadtype t:F \avec(-2 0.3)
\move (2 0.3)  \arrowheadtype t:F \avec(2 1.7) 
\textref h:C v:C \htext(-2 2){$u(x,0)$}
\textref h:C v:C \htext(2 2){$u(x,t)$}
\textref h:C v:C \htext(-2 0){$S(u)|_{t=0}$}
\textref h:C v:C \htext(2 0){$S(u)|_{t\in \field{R}}$}\lpatt(0.067 0.1)
\move (-1.6 2)  \arrowheadtype t:F \avec(1.6 2)
\move (-2.2 1)\fcir f:.6 r:.15
\textref h:C v:C \small{\htext(-2.2 1){$I$}}
\move (2.2 1)\fcir f:.6 r:.15
\textref h:C v:C \small{\htext(2.2 1){$III$}}
\move (0 -.2)\fcir f:.6 r:.15
\textref h:C v:C \small{\htext(0 -.2){$II$}}
\textref h:C v:C \small{\htext(0 2.1){Nonlinear evolution}}
\textref h:C v:C \small{\htext(0 .1){Scattering data evolution}}
\textref h:C v:C \small{\vtext(-1.9 1.1){Direct problem}}
\textref h:C v:C \small{\vtext(1.9 .9){Inverse problem}}
}
\caption{IST scheme}\label{ISTscheme}
\end{figure}

\subsection{The AKNS scheme}
In 1971 Zakharov and Shabat \cite{ZS} applied the inverse scattering transform\index{inverse scattering transform} method to solve the NLS equation, introducing a more general formulation than Lax's. Specifically, they introduced a pair of linear equations, namely
\begin{subequations}\label{ZS_pair}
\begin{align}
\partial_x\psi &=\mathcal{L}\psi,\quad \psi:=\psi(x,t),\\
\partial_t\psi &=\mathcal{T}\psi,
\end{align}
\end{subequations}
where $\mathcal{L}=\mathcal{L}(x,t;\lambda)$ and $\mathcal{T}=\mathcal{T}(x,t;\lambda)$ are $2\times 2$ matrices. They showed that the NLS equation,\index{nonlinear Schr\"odinger (NLS)!equation}
\begin{equation}
p_t=p_{xx}+4p^2q,\qquad q_t=-q_{xx}-4pq^2,
\end{equation}
can be written as a compatibility condition, $\psi_{xt}=\psi_{tx}$, of the system of linear equations (\ref{ZS_pair}), where $\mathcal{L}$ and $\mathcal{T}$ are given by
\begin{equation}\label{AKNS-type}
\mathcal{L}=D_x+U,\quad \mathcal{T}=D_t+V,
\end{equation}
and $U$ and $V$ by
\begin{subequations}
\begin{align}
U&=\lambda \sigma_3+\left(\begin{array}{cc} 0 & 2p \\ 2q & 0\end{array}\right),\quad \sigma_3:=\diag(1, -1),\\
V&=\lambda^2\sigma_3+\lambda \left(\begin{array}{cc} 0 & 2p \\ 2q & 0\end{array}\right)+\left(\begin{array}{cc} -2pq & p_x \\ -q_x & 2pq\end{array}\right).
\end{align}
\end{subequations}

A year later, Ablowitz, Kaup, Newell and Segur in \cite{AKNS}, motivated by Zakharov and Shabat's result, solved the sine-Gordon equation and they generalised this method to cover a wider number of nonlinear PDEs (see \cite{AKNS2}). In the rest of this thesis, we shall refer to operators of the form (\ref{AKNS-type}) as \textit{Lax operators of AKNS-type}.\index{Lax operator(s)!AKNS type}

\section{Discrete integrable systems}
Discrete systems, namely systems with their independent variables taking discrete values, are of particular interest and have many applications in several sciences as physics, biology, financial mathematics, as well as several other branches of mathematics, since they are essential in numerical analysis. Initially, they were appearing as discretisations of continuous equations, but now discrete integrable systems,\index{integrable system(s)!discrete} and in particular those defined on a two-dimensional lattice, are appreciated in their own right from a theoretical perspective.


The study of discrete systems and their integrability earned its interest in late seventies; Hirota studied particular discrete systems in 1977, in a series of papers \cite{HirotaKdV, HirotaToda, HirotaSG, HirotaTBT} where he derived discrete analogues of many already famous PDEs. In the early eighties, semi-discrete and discrete systems started appearing in field-theoretical models in the work of Jimbo and Miwa; they also provided a method of generating discrete soliton equations \cite{Jimbo-Miwa,Jimbo-Miwa-II,Jimbo-Miwa-III,Jimbo-Miwa-IV,Jimbo-Miwa-V}. Shortly after, Ablowitz and Taha in a series of papers \cite{Ablowitz-I, Ablowitz-II, Ablowitz-III} are using numerical methods in order to find solutions for known integrable PDEs, using as basis of their method some partial difference equations, which are integrable in their own right. Moreover, Capel, Nijhoff, Quispel and collaborators provided some of the first systematic tools for studying discrete integrable systems and, in particular, for the direct construction of integrable lattice equations (we indicatively refer to \cite{Capel-Nijhoff,Capel-Nijhoff-II}); that was a starting point for new systems of discrete equations to appear in the literature.

In 1991 Grammaticos, Papageorgiou and Ramani proposed the first discrete integrability test, known as \textit{singularity confinement}\index{singularity confinement} \cite{GRP}, which is similar to that of the Painlev\'e property for continuous integrability. However, as mentioned in \cite{Gramm-Schw-Tam}, it is not sufficient criterion for predicting integrability, as it does not furnish any information about the rate of growth of the solutions of the discrete integrable system.

As in the continuous case, the usual integrability criterion being used for discrete systems is the existence of a Lax pair. Nevertheless, a very important integrability criterion is that of the 3D-\textit{consistency}\index{3D-consistency} and, by extension, the \textit{multidimensional consistency}.\index{multidimensional consistency} This was proposed independently by Nijhoff in 2001 \cite{Frank4} and Bobenko and Suris in 2002 \cite{Bobenko-Suris}.

In what follows, we briefly explain what is the 3D-consistency\index{3D-consistency} property and we review some recent classification results. For more information on the integrability of discrete systems we refer to \cite{Frank5} which is one of the few self-contained monographs, as well as \cite{Gramm-Schw-Tam} for a collection of results.

\subsection{Equations on Quad-Graphs: 3D-consistency}
Let us consider a discrete equation of the form
\begin{equation}\label{QGeq}
Q(u,u_{10},u_{01},u_{11};a,b)=0,
\end{equation}
where $u_{ij}$, $i,j=0,1$, $u\equiv u_{00}$, belong in a set $\mathcal{A}$ and the parameters $a,b\in\field{C}$. Moreover, we assume that (\ref{QGeq}) is uniquely solvable for any $u_i$ in terms of the rest. We can interpret the fields $u_i$ to be attached to the vertices of a square as in Figure \ref{3Dconsistency}-(a).

If equation (\ref{QGeq}) can be generalised in a consistent way on the faces of a cube, then it is said to be \textit{3D-consistent}. In particular, suppose we have the initial values $u$, $u_{100}$, $u_{010}$ and $u_{001}$ attached to the vertices of the cube as in Figure \ref{3Dconsistency}-(b). Now, since equation (\ref{QGeq}) is uniquely solvable, we can uniquely determine values $u_{110}$, $u_{101}$ and $u_{011}$, using the bottom, front and left face of the cube. Then, there are three ways to determine value $u_{111}$, and we have the following.

\begin{definition}
If for any choice of initial values $u$, $u_{100}$, $u_{010}$ and $u_{001}$, equation $Q=0$ produces the same value $u_{111}$ when solved using the left, back or top face of the cube, then it is called 3D-consistent.\index{3D consistent equation(s)}
\end{definition}

\begin{figure}[ht]
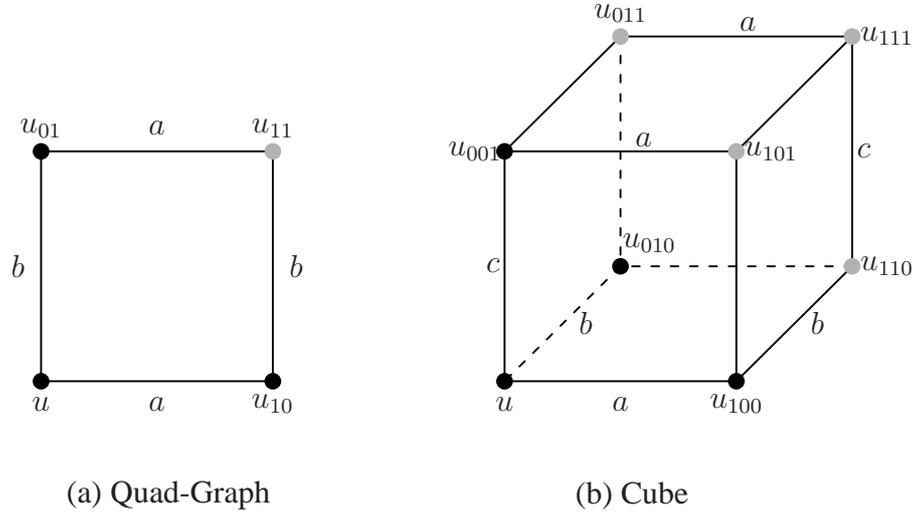

\centering
\centertexdraw{ 
\setunitscale 0.6
\move (-2 0)  \lvec(0 0) \lvec(0 2) \lvec(-2 2) \lvec(-2 0)
\move (-2 0) \fcir f:0.0 r:0.075 \move (0 0) \fcir f:0.0 r:0.075
\move (0 2) \fcir f:.7 r:0.075 \move (-2 2) \fcir f:0.0 r:0.075
\textref h:C v:C \htext(-2 -0.2){$u$}
\textref h:C v:C \htext(0 -0.2){$u_{10}$}
\textref h:C v:C \htext(-2 2.2){$u_{01}$}
\textref h:C v:C \htext(0 2.2){$u_{11}$}
\textref h:C v:C \htext(-1 -0.2){$a$}
\textref h:C v:C \htext(-1 2.2){$a$}
\textref h:C v:C \htext(-2.2 1){$b$}
\textref h:C v:C \htext(0.2 1){$b$}

\lpatt()
\setgray 0
\move (2 0)  \lvec(4 0) \lvec(5 1) 
\lpatt(0.067 0.1) \lvec(3 1) \lvec(2 0) 
\lpatt() \lvec(2 2) \lvec(3 3) 
\lpatt (0.067 0.1) \lvec(3 1) 
\lpatt() \move (3 3) \lvec(5 3) \lvec(4 2) \lvec(4 0) 
\move (2 2) \lvec(4 2)
\move (5 3) \lvec(5 1)
\move (2 0) \fcir f:0.0 r:0.075 \move (4 0) \fcir f:0.0 r:0.075
\move (2 2) \fcir f:0.0 r:0.075 \move (4 2) \fcir f:0.7 r:0.075
\move (3 1) \fcir f:0.0 r:0.075 \move (5 1) \fcir f:0.7 r:0.075
\move (3 3) \fcir f:0.7 r:0.075 \move (5 3) \fcir f:0.7 r:0.075
\textref h:C v:C \htext(2 -.2){$u$}
\textref h:C v:C \htext(4 -.2){$u_{100}$}
\textref h:C v:C \htext(5.3 1){$u_{110}$}
\textref h:C v:C \htext(3.25 1.2){$u_{010}$}
\textref h:C v:C \htext(1.745 2){$u_{001}$}
\textref h:C v:C \htext(4.3 2){$u_{101}$}
\textref h:C v:C \htext(5.3 3){$u_{111}$}
\textref h:C v:C \htext(3 3.2){$u_{011}$}
\textref h:C v:C \htext(3 -.2){$a$}
\textref h:C v:C \htext(3.2 2.1){$a$}
\textref h:C v:C \htext(4.1 3.1){$a$}
\textref h:C v:C \htext(5.1 2){$c$}
\textref h:C v:C \htext(2.7 .5){$b$}
\textref h:C v:C \htext(4.7 .5){$b$}
\textref h:C v:C \htext(1.9 1){$c$}

\textref h:C v:C \htext(3.1 -1){(b) Cube}
\textref h:C v:C \htext(-0.9 -1){(a) Quad-Graph}
}
\caption{3D-consistency.}\label{3Dconsistency}\index{3D consistency}
\end{figure}

\begin{note}\normalfont
In the above interpretation, we have adopted the following notation: We consider the square in Figure \ref{3Dconsistency}-(a) to be an elementary square in a two dimensional lattice. Then, we assume that field $u$ depends on two discrete variables $n$ and $m$, i.e. $u=u(n,m)$. Therefore, $u_{ij}$s on the vertices of \ref{3Dconsistency}-(a) are
\begin{equation}
u_{00}=u(n,m),\quad u_{10}=u(n+1,m),\quad u_{01}=u(n,m+1),\quad u_{11}=u(n+1,m+1).
\end{equation}
Moreover, for the interpretation on the cube we assume that $u$ depends on a third variable $k$, such that
\begin{equation}
u_{000}=u(n,m,k),\quad u_{100}=u(n+1,m,k),\ldots\quad u_{111}=u(n+1,m+1,k+1).
\end{equation}
\end{note}

Now, as an illustrative example we use the discrete potential KdV equation\index{discrete (potential) KdV equation} which first appeared in \cite{HirotaKdV}.

\begin{example}\label{dpKdvEx}\normalfont (Discrete potential KdV equation)\index{discrete (potential) KdV equation}
Consider equation (\ref{QGeq}), where $Q$ is given by
\begin{equation}
Q(u,u_{10},u_{01},u_{11};a,b)=(u-u_{11})(u_{10}-u_{01})+b-a.
\end{equation}
Now, using the bottom, front and left faces of the cube \ref{3Dconsistency}-(b), we can solve equations
\begin{subequations}
\begin{align}
Q(u,u_{100},u_{010},u_{110};a,b)&=0,\\
Q(u,u_{100},u_{001},u_{101};a,c)&=0,\\
Q(u,u_{010},u_{001},u_{011};b,c)&=0,
\end{align}
\end{subequations}
to obtain solutions for $u_{110}$, $u_{101}$ and $u_{011}$, namely
\begin{subequations}\label{u110u101u001}
\begin{align}
u_{110}&=u+\frac{a-b}{u_{010}-u_{100}},\label{u110}\\
u_{101}&=u+\frac{a-c}{u_{001}-u_{100}},\label{u101}\\
u_{011}&=u+\frac{b-c}{u_{001}-u_{010}},\label{u011}
\end{align}
\end{subequations}
respectively.

Now, if we shift (\ref{u110}) in the $k$-direction, and then substitute $u_{101}$ and $u_{011}$ (which appear in the resulting expression for $u_{11}$) by (\ref{u110u101u001}), we deduce
\begin{equation}
u_{111}=-\frac{(a-b)u_{100}u_{010}+(b-c)u_{010}u_{001}+(c-a)u_{100}u_{001}}{(a-b)u_{001}+(b-c)u_{100}+(c-a)u_{010}}.
\end{equation}
It is obvious that, because of the symmetry in the above expression, we would obtain exactly the same expression for $u_{111}$ if we had alternatively shifted $u_{101}$ in the $m$-direction and substituted $u_{110}$ and $u_{011}$ by (\ref{u110u101u001}), or if we had shifted $u_{011}$ in the $n$-direction and substituted $u_{110}$ and $u_{101}$. Thus, the dpKdV\nomenclature{dpKdV}{discrete potential Korteweg-de Vries} equation is 3D-consistent.\index{3D consistent equation(s)} 
\end{example}

\subsection{ABS classification of maps on quad-graphs}
In 2003 \cite{ABS-2004} Adler, Bobenko and Suris\nomenclature{ABS}{Adler, Bobenko and Suris} classified all the 3D-consistent equations\index{3D consistent equation(s)} in the case where $\mathcal{A}=\field{C}$. In particular, they considered all the equations of the form (\ref{QGeq}), where $u,u_{10},u_{01},u_{11},a,b$$\in\field{C}$, that satisfy the following properties:

\textbf{(I) Multilinearity.} Function $Q=Q(u,u_{10},u_{01},u_{11};a,b)$ is a first order polynomial in each of its arguments, namely linear in each of the fields $u,u_{10},u_{01},u_{11}$. That is,
\begin{equation}
Q(u,u_{10},u_{01},u_{11};a,b)=a_1uu_{10}u_{01}u_{11}+a_2uu_{10}u_{01}+a_3uu_{10}u_{11}+\ldots+a_{16},
\end{equation}
where $a_i=a_i(a,b)$, $i=1,\ldots,16$.

\textbf{(II) Symmetry.} Function $Q$ satisfies the following symmetry property
\begin{equation}
Q(u,u_{10},u_{01},u_{11};a,b)=\epsilon Q(u,u_{01},u_{10},u_{11};b,a)=\sigma Q(u_{10},u,u_{11},u_{01};a,b),
\end{equation}
with $\epsilon,\sigma=\pm1$.

\textbf{(III) Tetrahedron property.} That is, the final value $u_{111}$ is independent of $u$.

ABS proved that all the equations of the form (\ref{QGeq}) which satisfy the above conditions, can be reduced to seven basic equations, using M\"obius (fraction linear) transformations of the independent variables and point transformations of the parameters. These seven equations are distributed into two lists known as the $Q$-\textit{list}\index{$Q$-list} (list of 4 equations) and the $H$-\textit{list}\index{$H$-list} (list of 3 equations).

\begin{remark}\normalfont
The dpKdV equation\index{discrete (potential) KdV equation} in Example \ref{dpKdV} is the fisrt member of the $H$-list ($H1$ equation).
\end{remark}

\subsubsection{Lax representations}
Those equations of the form (\ref{QGeq}) which satisfy the multilinearity condition (I), admit Lax representation\index{Lax representation}. In fact, in this case, introducing an auxiliary spectral parameter, $\lambda$, there is an algorithmic way to find a matrix $L$ such that equation (\ref{QGeq}) can be written as the following \textit{zero-curvature} equation
\begin{equation}\label{zerocurv}
L(u_{11},u_{01};a,\lambda)L(u_{01},u;b,\lambda)=L(u_{11},u_{10};b,\lambda)L(u_{10},u;a,\lambda).
\end{equation}

We shall see later on that 1) equations of the form (\ref{QGeq}) with the fields on the edges of the square \ref{3Dconsistency}-(a) are related to Yang-Baxter maps\index{Yang-Baxter (YB) map(s)} and 2) Yang-Baxter maps may have Lax representation as (\ref{zerocurv}).

\subsection{Classification of quadrirational maps: The $F$-list}
A year after the classification of the 3D-consistent equations, ABS in \cite{ABS-2005} classified all the quadrirational maps\index{quadrirational map(s)} in the case where $\mathcal{A}=\field{CP}^1$\index{$\field{CP}^1$}; the associated list of maps is known as the $F$-list.\index{$F$-list} Recall that, a map $Y:(x,y)\mapsto (u(x,y),v(x,y))$ is called \textit{quadrirational}, if the maps
\begin{equation}
u(.,y):\mathcal{A}\rightarrow \mathcal{A},\quad v(x,.):\mathcal{A}\rightarrow \mathcal{A},
\end{equation}
are birational. In particular, we have the following.

\begin{theorem}(ABS, $F$-list)\index{$F$-list} Up to M\"obius transformations\index{M\"obius transformations}, any quadrirational map \index{quadrirational map(s)} on $\field{CP}^1\times\field{CP}^1$\index{$\field{CP}^1$} is equivalent to one of the following maps
\begin{align}
 u&=ayP, ~~~\quad v=bxP,~~~\qquad P=\frac{(1-b)x+b-a+(a-1)y}{b(1-a)x+(a-b)xy+a(b-1)y}; \tag{$F_I$}\label{FI}\\
 u&=\frac{y}{a}P, \quad\quad v=\frac{x}{b}P,\quad\qquad P=\frac{ax-by+b-a}{x-y}; \tag{$F_{II}$}\label{FII}\\
 u&=\frac{y}{a}P,\quad \quad v=\frac{x}{b}P, \quad\qquad P=\frac{ax-by}{x-y};\tag{$F_{III}$}\label{FIII}\\
 u&=yP ~~\quad\quad v=xP,~\quad\qquad P=1+\frac{b-a}{x-y}\tag{$F_{IV}$};\label{FIV}\\
 u&=y+P,\quad v=x+P,\qquad P=\frac{a-b}{x-y}\tag{$F_{V}$},\label{FV}
\end{align}
up to suitable choice of the parameters $a$ and $b$.
\end{theorem}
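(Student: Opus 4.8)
The plan is to follow the projective-geometric route of Adler, Bobenko and Suris: encode a quadrirational map by a pencil of conics, and reduce the theorem to the (finite) classification of such pencils up to projective equivalence.

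First I would unpack the hypothesis algebraically. Quadrirationality (the birationality of $Y$ together with that of the partial maps it and its inverse define, in particular $u(\cdot,y)$ and $v(x,\cdot)$) forces $u$ to be a fractional-linear function of $x$ with coefficients rational in $y$, and $v$ a fractional-linear function of $y$ with coefficients rational in $x$, with the graph $\{(x,y,u,v)\}\subset(\field{CP}^1)^4$ cut out by equations that are affine-linear in each of the appropriate pairs of variables. Eliminating $u$ and $v$ shows that $Y$ is governed by a single one-parameter family of curves of bidegree $(1,1)$ on $\field{CP}^1\times\field{CP}^1$: through a generic point passes exactly one member of the family, and its intersections with the two rulings through that point recover $u$ and $v$. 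Under the Segre embedding $\field{CP}^1\times\field{CP}^1\hookrightarrow\field{CP}^3$ the members of this family are plane sections of a quadric, i.e. conics, and the family is a pencil; the marking (which ruling plays the role of $x$ and which of $u$, likewise $y$ versus $v$) is recorded by distinguishing two of its base points. Changes of $x,y,u,v$ by Möbius transformations act as $\mathrm{PGL}_3$ on the pencil, so the classification reduces to that of pencils of conics.

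Next I would classify pencils of conics with smooth generic member in $\field{CP}^2$ by the configuration of their four base points counted with multiplicity: (i) four distinct points; (ii) two simple points and one point of simple contact; (iii) two points of contact; (iv) one simple point and a point of contact of order three; (v) one point of contact of order four — five orbits under $\mathrm{PGL}_3$. For each type I would put the pencil into a normal form, translate the conic construction of $Y$ back into affine coordinates on $\field{CP}^1\times\field{CP}^1$, and use the residual Möbius freedom in the four variables to bring the two remaining moduli of the pencil to the single pair $(a,b)$; cases (i)--(v) then produce exactly $F_I,\dots,F_V$. One must also check directly that each of $F_I,\dots,F_V$ is quadrirational and that no two are Möbius-equivalent, so the list is complete and irredundant.

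The main obstacle is the first reduction — making the passage from ``quadrirational'' to ``pencil of conics'' both precise and exhaustive. One has to show that the partial birationality hypotheses really do force the bidegree-$(1,1)$ structure with nothing exotic escaping (for instance a map contracting extra curves, or base loci of higher bidegree), that the data attached to $u$ and the data attached to $v$ glue into one pencil with the claimed base-point pattern rather than two unrelated families, and that the construction is reversible, every pencil type genuinely yielding a quadrirational map. After that, the normal forms of pencils of conics and the extraction of $F_I$--$F_V$ are essentially bookkeeping, though the collided cases (iv) and (v) need care since the affine coordinates degenerate there. An alternative, purely computational route sidesteps the geometry: normalise by Möbius transformations so that the singular curves of $Y$ (where $u(\cdot,y)$ or $v(x,\cdot)$ drops rank) sit in standard position — say $u=\infty\Leftrightarrow x=\infty$, $u=0\Leftrightarrow x=0$ and similarly for $v$ — which cuts the problem to a small explicit family whose classification, together with its coincidence limits, again gives the five maps; but the pencil-of-conics picture is what explains why the answer has precisely this shape.
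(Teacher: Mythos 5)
The paper does not prove this theorem: it is quoted as a known classification result and attributed to Adler, Bobenko and Suris \cite{ABS-2005}, so there is no in-paper argument to compare yours against. Your outline is, however, essentially the strategy of the original ABS proof: encode a quadrirational map by the one-parameter family of bidegree-$(1,1)$ curves on $\field{CP}^1\times\field{CP}^1$, pass through the Segre embedding to a pencil of conics, and classify pencils by the multiplicity pattern of their four base points into the five types that yield $F_I$--$F_V$. So as a roadmap it is the right one, and your remarks about where the residual M\"obius freedom is spent and about the degenerate cases (iv), (v) are accurate.

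That said, what you have written is a plan rather than a proof, and you correctly identify the load-bearing step yourself: the reduction from ``quadrirational'' to ``governed by a pencil of $(1,1)$-curves with the claimed base-point structure'' is where all the real work sits, and it is not carried out. In particular you would need to (a) show that the affine-linearity of the defining equations in the appropriate variable pairs follows from, and not merely is consistent with, the birationality of the four partial maps; (b) rule out the degenerate subcases where the generic member of the pencil is reducible (these produce maps that are M\"obius-conjugate to simpler ones or fail quadrirationality and must be excluded explicitly); and (c) verify irredundancy, i.e.\ that no two of $F_I,\dots,F_V$ are equivalent, which ABS do by comparing the base-point configurations as projective invariants. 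Until those are supplied the argument is a correct skeleton of the known proof, not a self-contained one.
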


We shall come back to the $F$-list in chapter 4, where we shall see that all the equations of the $F$-list have the Yang-Baxter property\index{Yang-Baxter property}; yet, the other members of their equivalence classes may not satisfy the Yang-Baxter equation\index{Yang-Baxter equation}. However, we shall present a more precise list given in \cite{PSTV}. 

Finally, we devote the last part of this introduction to present the plan of this thesis.

\section{Organisation of the thesis}
The results of the thesis are distributed to chapters 3, 5 and 6 and appear in the articles \cite{SPS}, \cite{Sokor-Sasha} and \cite{GSS}, respectively. The character of chapter 2 is introductory, while chapter 4 is a review to recent developments in the area of Yang-Baxter maps.\index{Yang-Baxter (YB) map(s)} Specifically, this thesis is organised as follows.

\textbf{Chapter 2} deals with B\"acklund and Darboux transformations. In particular, starting with the original theorem of Darboux, that was presented in 1882 (\cite{Darboux}), we explain that a Darboux transformation\index{Darboux transformation(s)} is nothing else but a transformation which leaves covariant a Sturm-Liouville problem.\index{Sturm-Liouville problem} We show that this fact can be used to construct hierarchies of solutions of particular nonlinear equations and we present the very well-known Darboux transformation for the KdV equation. Moreover, we explain what are B\"acklund transformations,\index{B\"acklund transformation(s)} namely transformations which relate either solutions of a particular PDE (auto-BT),\index{auto-B\"acklund transformation(s)} or solutions of different PDEs (hetero-BT).\index{hetero-B\"acklund transformation(s)} We show how, using BTs, one can construct solutions of a nonlinear PDE  in an algebraic manner, and we present the well-known examples of the BTs for the sine-Gordon equation\index{sine-Gordon equation} and the KdV equation.\index{KdV equation}

In \textbf{chapter 3} we derive Darboux transformations for particular NLS type equations, namely the NLS equation,\index{nonlinear Schr\"odinger (NLS)!equation} the DNLS equation\index{derivative nonlinear Schr\"odinger (DNLS)!equation} and a deformation of the DNLS equation.\index{deformation of DNLS!equation} The spatial parts of the Lax pair\index{Lax pair!spatial part ($x$-part)} of these equations are represented by (Lax) operators\index{Lax operator(s)} which possess certain symmetries; in particular, these symmetries are due to the action of the reduction group.\index{reduction group} In all the afore-mentioned cases, we derive DTs which are understood as gauge-like transformations\index{gauge-like transformation(s)} which depend rationally on a spectral parameter and inherit the symmetries of their corresponding Lax operator. These DTs\index{Darboux transformation(s)} are employed in the construction of novel discrete integrable systems\index{integrable system(s)!discrete} which have first integrals\index{first integral(s)} and, in some cases, can be reduced to Toda type equations.\index{Toda (type) equation(s)} Moreover, the derivation of the DT\index{Darboux transformation(s)} implies other significant objects, such as B\"acklund transformations\index{B\"acklund transformation(s)} for the corresponding PDEs, as well as symmetries and conservation laws\index{conservation law(s)} for the associated discrete systems. All these cases of NLS type equations studied in this chapter correspond to recent classification results.

\textbf{Chapter 4} has introductory character and it is devoted to Yang-Baxter maps.\index{Yang-Baxter (YB) map(s)} In particular, we explain what Yang-Baxter maps are and what is their connection with matrix refactorisation problems.\index{refactorisation problem(s)} Moreover, we show the relation between the YB equation\index{Yang-Baxter (YB) equation} and 3D consistency\index{3D consistency} equations, plus we review some of the recent developments, such as the associated transfer dynamics\index{transfer dynamics} and some recent classification results.

In \textbf{chapter 5} we employ the Darboux transformations\index{Darboux transformation(s)} --derived in \textbf{chapter 3}-- in the construction of Yang-Baxter maps,\index{Yang-Baxter (YB) map(s)} and we study their integrability as finite discrete maps. Particularly, we construct six-dimensional YB maps which can be restricted to four-dimensional YB maps which are completely integrable\index{completely integrable} in the Liouville sense. These integrable restrictions are motivated by the existence of certain first integrals.\index{first integral(s)} In the case of NLS equation, the four-dimensional restriction is the Adler-Yamilov map.\index{Adler-Yamilov!map}

\textbf{Chapter 6} is devoted to the noncommutative extensions\index{noncommutative extension(s)} of both Darboux transformations\index{Darboux transformation(s)} and Yang-Baxter maps\index{Yang-Baxter (YB) map(s)} in the cases of NLS and DNLS equations. Specifically, we show that there are explicit Yang-Baxter maps with Darboux-Lax representation between Grassman algebraic varieties. We deduce novel endomorphisms of Grassmann varieties and, in particular, we present ten-dimensional maps which can be restricted to eight-dimensional Yang-Baxter maps on invariant leaves, related to the Grassmann-extended NLS and DNLS equations. We discuss their Liouville integrability and we consider their vector generalisations.

Finally, in \textbf{chapter 7} we provide the reader with a summary of the results of the thesis, as well as with some ideas for future work.

\chapter{B\"acklund and Darboux transformations}
\label{chap2} \setcounter{equation}{0}
\renewcommand{\theequation}{\thechapter.\arabic{equation}}

\section{Overview}
B\"acklund and Darboux (or Darboux type) transformations \index{Darboux transformation(s)}\index{B\"acklund transformation(s)} originate from differential geometry of surfaces in the nineteenth century, and they constitute an important and very well studied connection with the modern soliton theory and the theory of integrable systems.\index{integrable system(s)}

In the modern theory of integrable systems, these transformations are used to generate solutions of partial differential equations, starting from known solutions, even trivial ones. In fact, Darboux transformations\index{Darboux transformation(s)} apply to systems of linear equations, while B\"acklund transformations\index{B\"acklund transformation(s)} are generally related to systems of nonlinear equations.

This chapter is organised as follows: The next section deals with Darboux transformations and, in particular, the original theorem of Darboux and its application to the KdV equation,\index{KdV equation} as well as its generalisation, namely Crum's theorem. Then, section 3 is devoted to B\"acklund transformations and how they can be used to construct solutions in a algebraic way starting with known ones, using Bianchi's permutability;\index{Bianchi's!permutability} in particular, we present the examples of the B\"acklund transformation for the sine-Gordon equation and the KdV equation.

For further information on B\"acklund and Darboux transformations we indicatively refer to \cite{GuChaohao, Matveev-Salle, Rog-Schief} (and the references therein).

\section{Darboux transformations}
In 1882 Jean Gaston Darboux \cite{Darboux} presented the so-called ``Darboux theorem''\index{Darboux theorem} which states that a Sturm-Liouville problem\index{Sturm-Liouville problem} is covariant with respect to a linear transformation. In the recent literature, this is called the \textit{Darboux transformation} \cite{Matveev-Salle, Rog-Schief}. The first book devoted to the relation between Darboux transformations\index{Darboux transformation(s)} and the soliton theory is that of Matveev and Salle \cite{Matveev-Salle}.

\subsection{Darboux's theorem}
Darboux's original result is related to the so-called \textit{one-dimensional, time-independent Schr\"odinger} equation,\index{time-independent Schr\"odinger equation} namely
\begin{equation}\label{Sturm-Liouville}
y''+(\lambda-u)y=0, \quad u=u(x),
\end{equation}
which can be found in the literature as a \textit{Sturm-Liouville problem}\index{Sturm-Liouville problem} of finding eigenvalues and eigenfunctions. Moreover, we refer to $u$ as a \textit{potential function}, or just \textit{potential}.

In particular we have the following.

\begin{theorem}(Darboux)\label{DarbouxTheorem}\index{Darboux theorem}
Let $y_1=y_1(x)$ be a particular integral of the Sturm-Liouville problem \index{Sturm-Liouville problem} (\ref{Sturm-Liouville}), for the value of the spectral parameter $\lambda=\lambda_1$. Consider also the following (Darboux) transformation
\begin{equation}\label{DT}
y\mapsto y[1]:=\left(\frac{d}{dx}-l_1\right)y,
\end{equation}
of an arbitrary solution, $y$, of (\ref{Sturm-Liouville}),  where $l_1= l_1(y_1)=y_{1,x}y_1^{-1}$ is the logarithmic derivative of $y_1$. Then, $y[1]$ obeys the following equation
\begin{subequations}\label{transEq}
\begin{align}
y''[1]+&(\lambda-u[1])y[1]=0,\label{transEq-1}
\intertext{where $u[1]$ is given by}
&u[1]=u-2l_1'.\label{transEq-2}
\end{align}
\end{subequations}
\end{theorem}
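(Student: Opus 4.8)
The plan is to verify the claim by a direct substitution: compute $y''[1] + (\lambda - u[1])y[1]$ using the definition $y[1] = y' - l_1 y$ and the hypothesis that both $y$ and $y_1$ solve the Sturm--Liouville equation, and show the result vanishes identically provided $u[1] = u - 2l_1'$. First I would record the two facts that drive everything: since $y$ solves \eqref{Sturm-Liouville} at a generic $\lambda$, we have $y'' = (u-\lambda)y$; and since $y_1$ solves it at $\lambda = \lambda_1$, the logarithmic derivative $l_1 = y_{1,x}/y_1$ satisfies a Riccati equation obtained by dividing $y_1'' = (u-\lambda_1)y_1$ by $y_1$, namely $l_1' + l_1^2 = u - \lambda_1$. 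This Riccati identity is the algebraic heart of the matter and is what lets one trade second derivatives of $y_1$ for polynomial expressions in $l_1$.

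The main computation then proceeds in a few routine steps. Differentiating $y[1] = y' - l_1 y$ once gives $y'[1] = y'' - l_1' y - l_1 y'$, and differentiating again gives $y''[1] = y''' - l_1'' y - 2 l_1' y' - l_1 y''$. Next I would substitute $y'' = (u-\lambda)y$ and its derivative $y''' = (u-\lambda)y' + u' y$ to eliminate all derivatives of $y$ of order $\geq 2$, collecting the result as a combination of $y$ and $y'$. Adding $(\lambda - u[1])y[1] = (\lambda - u + 2l_1')(y' - l_1 y)$ and grouping by $y'$ and by $y$, the coefficient of $y'$ should collapse to $u' - 2l_1' - (u - 2l_1' - \lambda) \cdot$(something) and, after using only elementary algebra, vanish; the coefficient of $y$ will involve $l_1''$, $l_1 l_1'$, $u'$, and lower terms, and here one invokes the Riccati identity (and its derivative $l_1'' + 2l_1 l_1' = u'$) to show it vanishes as well. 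Cancellation in the $y$-coefficient is where the choice $u[1] = u - 2l_1'$ is forced, so this is simultaneously a verification and an explanation of formula \eqref{transEq-2}.

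The step I expect to be the only real obstacle is bookkeeping: keeping the $y$-terms and $y'$-terms cleanly separated and making sure every occurrence of $l_1''$ is eliminated via the differentiated Riccati relation $l_1'' = u' - 2l_1 l_1'$. There is no conceptual difficulty — no existence or regularity issue arises because $y_1$ is assumed to be a genuine solution and $l_1$ is simply its logarithmic derivative (implicitly where $y_1 \neq 0$) — so once the two auxiliary identities $l_1' + l_1^2 = u - \lambda_1$ and $l_1'' + 2l_1 l_1' = u'$ are in hand, the verification is a finite algebraic manipulation. I would close by remarking that the transformation is manifestly first order in $y$ and that the new potential $u[1]$ depends only on $u$ and the chosen seed solution $y_1$, which is precisely the covariance statement.
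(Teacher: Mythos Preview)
Your proposal is correct and follows essentially the same approach as the paper: substitute $y[1]=y'-l_1 y$ into the target equation, use $y''=(u-\lambda)y$ to eliminate higher derivatives of $y$, collect the result as a linear combination of $y$ and $y'$, and show both coefficients vanish using the Riccati relation satisfied by $l_1$. The only organisational difference is that the paper leaves $u[1]$ undetermined and reads off $u[1]=u-2l_1'$ from the vanishing of the $y'$-coefficient (then verifies the $y$-coefficient by recovering the Riccati identity after an integration), whereas you take $u[1]=u-2l_1'$ as given and feed in the Riccati identity $l_1'+l_1^2=u-\lambda_1$ and its derivative from the outset; the underlying algebra is the same.
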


\begin{proof}
Substitution of $y[1]$ in (\ref{DT}) into (\ref{transEq}) implies
\begin{equation}\label{eqyy'}
(u-2l_1'-u[1])y'+(u'-l_1 u-l_1''+l_1 u[1])y=0,
\end{equation}
where we have used (\ref{Sturm-Liouville}) to express $y''[1]$ and $y'''[1]$ in terms of $y$ and $y'$. Now, since $y$ in (\ref{eqyy'}) is arbitrary, it follows that
\begin{equation}\label{eqsuld}
u[1]=u-2l_1',\quad u'-l_1 u-l_1''+l_1 u[1]=0.
\end{equation}
Now, substitution of the first equation of (\ref{eqsuld}) to the second implies
\begin{equation}\label{intld}
u-l_1^2-l_1'=\lambda_1=\text{const.},
\end{equation}
after one integration with respect to $x$. Equation (\ref{intld}) is identically satisfied due to the definition of the logarithmic function, $l_1$, and the fact that $y_1$ obeys (\ref{Sturm-Liouville}).
\end{proof}

Darboux's theorem states that function $y[1]$ given in (\ref{DT}) obeys a Sturm-Liouville problem of the same structure with (\ref{Sturm-Liouville}), namely the same equation (\ref{Sturm-Liouville}) but with an updated potential $u[1]$. In other words, equation (\ref{Sturm-Liouville}) is covariant with respect to the Darboux transformation, $y\mapsto y[1]$, $u\mapsto u[1]$.

\subsection{Darboux transformation for the KdV equation and Crum's theorem}\index{KdV equation!Darboux transformation}
The significance of the Darboux theorem lies in the fact that transformation (\ref{DT}) maps solutions of a Sturm-Liouville equation (\ref{Sturm-Liouville}) to other solutions of the same equation, which allows us to construct hierarchies of such solutions. At the same time, the theorem provides us with a relation between the ``old'' and the ``new'' potential. In fact, if the potential $u$ obeys a nonlinear ODE\nomenclature{ODE}{Ordinary differential equation} (or more importantly a nonlinear PDE\nomenclature{PDE}{Partial differential equation}\footnote{Potential $u$ may depend on a temporal parameter $t$, namely $u=u(x,t)$.}), then relation (\ref{transEq}) may allow us to construct new non-trivial solutions starting from trivial ones, such as the zero solution.

\begin{example}\normalfont
Consider the Sturm-Liouville equation (\ref{Sturm-Liouville}) in the case where the potential, $u$, satisfies the KdV equation.\index{KdV equation} Therefore, both the eigenfunction $y$ and the potential $u$ depend on $t$, which slips into their expressions as a parameter.

In this case, equation (\ref{Sturm-Liouville}) is nothing else but the spatial part of the Lax pair for the KdV equation\index{Lax pair!for KdV equation} that we have seen in the previous chapter; recall:
\begin{equation}\label{x-LaxKdV}
\mathcal{L} y=\lambda y \quad \text{or}\quad y_{xx}+(\lambda-u(x,t))y=0.
\end{equation}

Now, according to theorem \ref{DarbouxTheorem}, for a known solution of the KdV \nomenclature{KdV}{Korteweg de Vries} equation,\index{KdV equation} say $u$, we can solve (\ref{Sturm-Liouville}) to obtain $y=y(x,t;\lambda)$. Evaluating at $\lambda=\lambda_1$, we get $y_1(x,t)=y(x,t;\;\lambda_1)$ and thus, using equation (\ref{transEq-2}), a new potential $u[1]$. Therefore, we simultaneously obtain new solutions, $(y[1],u[1])$, for both the linear equation (\ref{x-LaxKdV}) and the KdV equation\footnote{Potential $u[1]$ is a solution of the KdV equation,\index{KdV equation} since it can be readily shown that the pair $(y[1],u[1])$ also satisfies the temporal part of the Lax pair for KdV.}, which are given by
\begin{subequations}\label{eqy1u1}
\begin{align}
y[1]&=(\partial_x-l_1)y,\label{eqy1u1-1}\\
u[1]&=u-2l_{1,x},        \label{eqy1u1-2}
\end{align}
\end{subequations}
respectively.

Now, applying the Darboux transformation once more, we can construct a second solution of the KdV equation\index{KdV equation} in a fully algebraic manner. Specifically, first we consider the solution $y_2[1]$, which is $y[1]$ evaluated at $\lambda=\lambda_2$, namely
\begin{equation}
y_2[1]=(\partial_x-l_1)y_2.
\end{equation}
where $y_2=y(x,t;\lambda_2)$
Then, we obtain a second pair of solutions, $(y[2],u[2])$, for (\ref{x-LaxKdV}) and the KdV equation,\index{KdV equation} given by
\begin{subequations}
\begin{align}
y[2]&=(\partial_x-l_2)y[1]\stackrel{(\ref{eqy1u1-1})}{=}(\partial_x-l_2)(\partial_x-l_1)y,\\
u[2]&=u[1]-2l_{2,x}\stackrel{(\ref{eqy1u1-2})}{=}u-2(l_{1,x}+l_{2,x}).
\end{align}
\end{subequations} 

This procedure can be repeated successively, in order to construct hierarchies of solutions for the KdV equation,\index{KdV equation} namely
\begin{equation}
(y[1],u[1])\rightarrow(y[2],u[2])\rightarrow\cdots\rightarrow(y[n],u[n])\rightarrow\cdots,
\end{equation}
where $(y[n],u[n])$ are given by
\begin{equation}\label{eqynun}
y[n]=\left(\prod_{k=1}^{\stackrel{\curvearrowleft}{n}}(\partial_x-l_k)\right)y,\quad u[n]=u-2\sum_{k=1}^n(l_{k,x}),
\end{equation} 
where ``$\curvearrowleft$" indicates that the terms of the above ``product" are arranged from the right to the left.
\end{example}

We must note that Crum in 1955 \cite{Crum} derived more practical and elegant expressions for $y[n]$ and $u[n]$, in (\ref{eqynun}), which are formulated in the following generalisation of the Darboux theorem \ref{DarbouxTheorem}.

\begin{theorem}\label{Crum}(Crum) Let $y_1, y_1,\dots, y_n$ be particular integrals of the Sturm-Liouville equation (\ref{Sturm-Liouville}), corresponding to the eigenvalues $\lambda_1, \lambda_2,\ldots, \lambda_n$. Then, the following function
\begin{equation}
y[n]=\frac{W[y_1, y_2,\dots, y_n, y]}{W[y_1, y_2,\dots, y_n]},
\end{equation}
where $W[y_1, y_2,\dots, y_n]$ denotes the Wronskian determinant of functions $y_1, y_1,\dots, y_n$, obeys the following equation
\begin{equation}
y_{x,x}[n]+(\lambda-u[n])y[n]=0,
\end{equation}
where the potential $u[n]$ is given by
\begin{equation}
u[n]=u-2\frac{d^2}{dx^2}\ln(W[y_1, y_2,\dots, y_n]).
\end{equation}
\end{theorem}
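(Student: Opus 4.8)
The plan is to prove Crum's theorem by induction on $n$, using Darboux's theorem (Theorem~\ref{DarbouxTheorem}) as the base case and the single-step iteration already carried out in the preceding example. For $n=1$ the claimed formula reduces to $y[1] = W[y_1,y]/W[y_1]$, and since $W[y_1] = y_1$ and $W[y_1,y] = y_1 y_x - y_{1,x} y = y_1(y_x - l_1 y)$, this is exactly $y[1] = (\partial_x - l_1)y$, matching (\ref{DT}); likewise $u[1] = u - 2 (d^2/dx^2)\ln y_1 = u - 2 l_{1,x}$, which is (\ref{transEq-2}). So the base case is Darboux's theorem verbatim.

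For the inductive step I would assume the statement holds for $n-1$ eigenfunctions, i.e.\ that $\tilde y := W[y_1,\dots,y_{n-1},y]/W[y_1,\dots,y_{n-1}]$ solves the Sturm--Liouville equation with potential $\tilde u = u - 2(d^2/dx^2)\ln W[y_1,\dots,y_{n-1}]$, and that the same formula applied to $y_n$ gives a solution $\tilde y_n$ of that same equation at $\lambda = \lambda_n$. Then one more Darboux transformation with seed $\tilde y_n$ produces $y[n] = (\partial_x - \tilde l_n)\tilde y$ with $\tilde l_n = \tilde y_{n,x}\tilde y_n^{-1}$, and $u[n] = \tilde u - 2(d^2/dx^2)\ln \tilde y_n$. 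It remains to identify these with the claimed Wronskian expressions. The key algebraic identity needed is the Jacobi/Sylvester determinant identity (the ``Lewis Carroll'' identity) relating a bordered determinant, its principal minor, and the two single-bordered minors: writing $W_k = W[y_1,\dots,y_k]$, one has
\begin{equation}
W[y_1,\dots,y_{n-1}]\, W[y_1,\dots,y_n,y] = W[y_1,\dots,y_n]\,\partial_x W[y_1,\dots,y_{n-1},y] - W[y_1,\dots,y_{n-1},y]\,\partial_x W[y_1,\dots,y_n],
\end{equation}
or more compactly the statement that the Darboux step applied to the $(n-1)$-fold transform reproduces the $n$-fold Wronskian ratio. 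Expanding $(\partial_x - \tilde l_n)\tilde y$ and clearing denominators, the required identity is precisely of this Jacobi/Plücker type for Wronskians; for the potential, one uses $\ln \tilde y_n = \ln W[y_1,\dots,y_n] - \ln W[y_1,\dots,y_{n-1}]$ together with $\tilde u$'s formula, so that $u[n] = u - 2(d^2/dx^2)\ln W[y_1,\dots,y_n]$ telescopes out cleanly.

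The main obstacle is the bookkeeping in that Wronskian identity: one must verify that the differential-operator composition $(\partial_x - \tilde l_n)\circ(\text{the }(n-1)\text{-step transform})$ equals the ratio of $n\times n$ Wronskians, which is where the Jacobi determinant identity for derivatives of bordered Wronskians enters and where sign and normalization errors are easy to make. An alternative, perhaps cleaner route is to prove directly (without invoking the inductive Darboux chain) that $y[n] = W[y_1,\dots,y_n,y]/W[y_1,\dots,y_n]$ annihilates each $y_j$ when $\lambda = \lambda_j$ — which is immediate since the numerator Wronskian then has a repeated column — and that it is a monic $n$-th order differential operator in $y$; one then checks that any such operator conjugating $\partial_x^2 + (\lambda - u)$ into an operator of the same form must, by matching the top coefficients, shift the potential by $-2(d^2/dx^2)\ln(\text{leading kernel Wronskian})$. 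Either way, the analytic content is entirely in Darboux's theorem; the remaining work is the determinantal algebra of Wronskians, which I would state as a lemma and prove by the Jacobi identity rather than grind out term by term.
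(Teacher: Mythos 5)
The paper does not actually prove Theorem \ref{Crum}: it states the result and refers the reader to Crum's 1955 paper, so there is no in-text argument to compare yours against. Judged on its own, your inductive route is the standard proof and it is correct. The base case computation $W[y_1,y]/W[y_1]=(\partial_x-l_1)y$ is right, and the inductive step does reduce to exactly the Wronskian identity you quote,
\begin{equation}
W_{n-1}\,W[y_1,\dots,y_n,y]=W_n\,\partial_x W[y_1,\dots,y_{n-1},y]-W[y_1,\dots,y_{n-1},y]\,\partial_x W_n,
\end{equation}
which is the Jacobi/Pl\"ucker identity for bordered Wronskians; combined with the elementary fact that $W(a/c,b/c)=W(a,b)/c^2$ it gives $(\partial_x-\tilde l_n)\tilde y=W[y_1,\dots,y_n,y]/W_n$ directly, and the potential formula telescopes as you say. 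Two small points you should make explicit if you write this up: (i) the induction hypothesis has to be invoked twice, once for the generic solution $y$ at spectral parameter $\lambda$ and once for $y_n$ at $\lambda=\lambda_n$ to produce the seed $\tilde y_n=W_n/W_{n-1}$ of the $n$-th Darboux step; (ii) the whole chain requires $W_k\neq 0$ for $k=1,\dots,n$ on the interval considered (otherwise $\tilde l_n$ is undefined), which is implicit in the statement but worth recording as a hypothesis. Your alternative sketch (characterising $y[n]$ as the monic $n$-th order operator annihilating $y_1,\dots,y_n$ and matching top coefficients in the conjugation of $\partial_x^2+\lambda-u$) is also a legitimate and arguably cleaner route, but as written it is only an outline; the coefficient-matching step there needs the same amount of care as the Wronskian bookkeeping in the inductive argument.
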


\begin{remark}\normalfont
For $n=1$, Crum's theorem \ref{Crum} coincides with Darboux's theorem \ref{DarbouxTheorem}.
\end{remark}

In this thesis, we understand Darboux transformations as gauge-like transformations\index{gauge-like transformation(s)} which depend on a spectral parameter. In fact, as we shall see in the next chapter, their dependence on the spectral parameter is essential to construct discrete integrable systems.

\section{B\"acklund transformations}
As mentioned earlier, B\"acklund transformations \nomenclature{BT}{B\"acklund transformation}\index{B\"acklund transformation(s)} originate in differential geometry in the 1880s and, in particular, they arose as certain transformations between surfaces.

In the theory of integrable systems, they are seen as relations between solutions of the same PDE (auto-BT)\index{auto-B\"acklund transformation(s)} or as relations between solutions of two different PDEs (hetero-BT).\index{hetero-B\"acklund transformation(s)} Regarding the nonlinear equations which have Lax representation, Darboux transformations apply to the associated linear problem (Lax pair), while B\"acklund transformations\index{B\"acklund transformation(s)} are related to the nonlinear equation itself. Therefore, unlike DTs, BTs\index{B\"acklund transformation(s)} do not depend on the spectral parameter which appears in the definition of the Lax pair.\index{Lax pair} Yet, both DTs \nomenclature{DT}{Darboux transformation} and BTs\index{B\"acklund transformation(s)} serve the same purpose; they are used to construct non-trivial solutions starting from trivial ones.

\begin{definition} (BT-loose Def.) \index{B\"acklund transformation(s)}Consider the following partial differential equations for $u$ and $v$:
\begin{subequations}
\begin{align}
F(u,u_x,u_t,u_{xx},u_{xt},\ldots)&=0,\\
G(v,v_x,v_t,v_{xx},v_{xt},\ldots)&=0.
\end{align}
\end{subequations}
Consider also the following pair of relations 
\begin{equation}
\mathcal{B}_i(u,u_x,u_t,\ldots,v,v_x,v_t,\ldots)=0,
\end{equation}
between $u$, $v$ and their derivatives. If $\mathcal{B}_i=0$ is integrable for $v$, $\mod \langle F=0 \rangle$, and the resulting $v$ is a solution of $G=0$, and vice versa, then it is an hetero-B\"acklund transformation.\index{hetero-B\"acklund transformation(s)} Moreover, if $F\equiv G$, the relations $\mathcal{B}_i=0$ is an auto-B\"acklund transformation.\index{auto-B\"acklund transformation(s)}
\end{definition}

The simplest example of BT\index{B\"acklund transformation(s)} are the well-known Cauchy-Riemann\index{Cauchy-Riemann relations} relations in complex analysis, for the analyticity of a complex function, $f=u(x,t)+v(x,t)i$.

\begin{example}\normalfont (Laplace equation)
Functions $u=u(x,t)$ and $v=v(x,t)$ are harmonic, namely
\begin{equation}\label{Laplace}
\nabla^2u=0, \quad \nabla^2v=0,
\end{equation}
if the following Cauchy-Riemann relations\index{Cauchy-Riemann relations} hold
\begin{equation}\label{Cauchy-Riemann}
u_x=v_t, \quad u_t=-v_x.
\end{equation}
The latter equations constitute an auto-BT\index{B\"acklund transformation(s)!for Laplace equation} for the Laplace equation (\ref{Laplace})\index{Laplace equation} and can be used to construct solutions of the same equations, starting with known ones. For instance, consider the simple solution $v(x,t)=xt$. Then, according to (\ref{Cauchy-Riemann}), a second solution of (\ref{Laplace}), $u$, has to satisfy $u_x=x$ and $u_t=-t$. Therefore, $u$ is given by
\begin{equation}
u=\frac{1}{2}(x^2-t^2).
\end{equation}
\end{example}

However, even though Laplace's equation is linear, the same idea works for nonlinear equations. 


\subsection{BT for sine-Gordon equation and Bianchi's permutability}\index{Bianchi's!permutability}
One of the first examples of BT was for the nonlinear sine-Gordon equation,\index{sine-Gordon equation}\index{sine-Gordon equation!B\"acklund transformation}\index{B\"acklund transformation(s)!for SG equation}\nomenclature{SG}{sine-Gordon}
\begin{equation}\label{SG}\index{sine-Gordon equation}
u_{xt}=\sin u, \quad u=u(x,t).
\end{equation}

Let us now consider the following well-known relations
\begin{align}\label{BTSG}
\mathcal{B}_\alpha:
\begin{cases}
\left(\frac{u+v}{2}\right)_x&=\alpha \sin\left(\frac{u-v}{2}\right),\\
\left(\frac{u-v}{2}\right)_t&=\frac{1}{\alpha} \sin\left(\frac{u+v}{2}\right),
\end{cases}
\end{align}
between functions $u=u(x,t)$ and $v=v(x,t)$.

We have the following.

\begin{proposition}
Relations (\ref{BTSG}) constitute an auto-BT between the solutions $u=u(x,t)$ and $v=v(x,t)$ of the SG equation (\ref{SG}).\index{sine-Gordon equation}\index{auto-B\"acklund transformation(s)!for SG equation}
\end{proposition}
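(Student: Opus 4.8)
The plan is to verify directly that the B\"acklund relations $\mathcal{B}_\alpha$ in (\ref{BTSG}) are compatible precisely when both $u$ and $v$ solve the sine-Gordon equation (\ref{SG}). First I would introduce the convenient notation $f = \tfrac{u+v}{2}$ and $g = \tfrac{u-v}{2}$, so that $u = f+g$, $v = f-g$, and rewrite the system $\mathcal{B}_\alpha$ as $f_x = \alpha \sin g$ and $g_t = \tfrac{1}{\alpha}\sin f$. The key idea is that the system prescribes $f_x$ and $g_t$; compatibility is then a matter of checking that the mixed partial derivatives agree, i.e. that $(f_x)_t$ computed from the first equation is consistent with the evolution, and likewise for $(g_t)_x$.

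The main computation runs as follows. Differentiate $f_x = \alpha\sin g$ with respect to $t$ to get $f_{xt} = \alpha (\cos g)\, g_t = \alpha(\cos g)\tfrac{1}{\alpha}\sin f = (\cos g)\sin f$, using the second relation of $\mathcal{B}_\alpha$. Similarly, differentiate $g_t = \tfrac{1}{\alpha}\sin f$ with respect to $x$ to obtain $g_{xt} = \tfrac{1}{\alpha}(\cos f)\, f_x = \tfrac{1}{\alpha}(\cos f)\,\alpha \sin g = (\cos f)\sin g$. Now translate back: since $u = f+g$ we have $u_{xt} = f_{xt} + g_{xt} = (\cos g)\sin f + (\cos f)\sin g = \sin(f+g) = \sin u$, so $u$ solves (\ref{SG}); and since $v = f - g$ we have $v_{xt} = f_{xt} - g_{xt} = (\cos g)\sin f - (\cos f)\sin g = \sin(f-g) = \sin v$, so $v$ solves (\ref{SG}) as well. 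This establishes that $\mathcal{B}_\alpha = 0$ forces both functions to be solutions, which is the forward direction.

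For the converse --- that, given a solution $v$ of (\ref{SG}), the system $\mathcal{B}_\alpha$ is integrable for $u$ and yields another solution --- I would argue that the overdetermined system for $u$ (equivalently for $f$ and $g$, with $v$ known) has its integrability condition satisfied exactly because $v_{xt} = \sin v$; the same trigonometric identities above, read in reverse, show that the cross-derivative condition $f_{xt} = g_{xt}$-compatible data is consistent, so a solution $u$ exists locally (by the Frobenius/cross-derivative criterion for first-order systems), and the computation of the preceding paragraph then shows the resulting $u$ satisfies (\ref{SG}). By the symmetry of the relations (\ref{BTSG}) under $u \leftrightarrow v$ together with $\alpha \mapsto \alpha$ — more precisely, swapping $u$ and $v$ sends $g \mapsto -g$ and leaves $f$ fixed, which preserves the system up to sign — the roles of $u$ and $v$ are interchangeable, so (\ref{BTSG}) is indeed an auto-B\"acklund transformation for the sine-Gordon equation.

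I expect the only delicate point to be the converse direction: one must be careful that "integrable for $u$" is precisely the statement that the mixed-partials obstruction vanishes modulo $\langle v_{xt} = \sin v\rangle$, and that local existence of $u$ then follows. The forward direction is a routine application of the angle-addition formulae for sine, so the real content is organizing the compatibility bookkeeping cleanly, which the substitution $f = \tfrac{u+v}{2}$, $g = \tfrac{u-v}{2}$ makes transparent.
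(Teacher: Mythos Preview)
Your proof is correct and follows essentially the same route as the paper: differentiate the first relation of (\ref{BTSG}) in $t$ and the second in $x$, use the relations themselves to eliminate the derivatives, and then add/subtract to recover $u_{xt}=\sin u$ and $v_{xt}=\sin v$ via the sine addition formula. Your substitution $f=\tfrac{u+v}{2}$, $g=\tfrac{u-v}{2}$ is just a tidy relabelling of what the paper writes out directly; your discussion of the converse (Frobenius-type integrability for $u$ given a solution $v$) goes a bit beyond what the paper actually proves, but is correct and consistent with its ``loose'' definition of a B\"acklund transformation.
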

\begin{proof}
Differentiating the first equation of (\ref{BTSG}) with respect to $t$ and the second with respect to $x$, we obtain
\begin{subequations}\label{d-BTSG}
\begin{align}
\left(\frac{u+v}{2}\right)_{xt}&=\cos\left(\frac{u-v}{2}\right)\sin\left(\frac{u+v}{2}\right),\label{d-BTSG-1}\\
\left(\frac{u-v}{2}\right)_{tx}&=\cos\left(\frac{u+v}{2}\right)\sin\left(\frac{u-v}{2}\right),\label{d-BTSG-2}
\end{align}
\end{subequations}
where he have made use of (\ref{BTSG}). Now, we demand that the above equations are compatible, namely $u_{xt}=u_{tx}$ and $v_{xt}=v_{tx}$. Adding equations (\ref{d-BTSG}) by parts, we deduce that $u$ obeys the SG equation.\index{sine-Gordon equation} Moreover, the same is true for $v$ after subtraction of (\ref{d-BTSG}) by parts. Hence, (\ref{BTSG}) is an auto-B\"acklund transformation\index{auto-B\"acklund transformation(s)!for SG equation} for the SG equation.\index{sine-Gordon equation}
\end{proof}

\begin{remark}\normalfont
We shall refer to the first equation of (\ref{BTSG}) as the \textit{spatial part} (or $x$-\textit{part}) of the BT,\index{B\"acklund transformation(s)!spatial part ($x$-part)} while we refer to the second one as the \textit{temporal part} (or $t$-\textit{part}) \index{B\"acklund transformation(s)!temporal part ($t$-part)}of the BT.
\end{remark}

\subsubsection{Bianchi's permutability: Nonlinear superposition principle of solutions of the SG equation}\index{Bianchi's!permutability}\index{nonlinear superposition principle}
Starting with a function $u=u(x,t)$, such that $u_{xt}=\sin u$, one can construct a second solution of the SG equation, $u_1=\mathcal{B}_{\alpha_1}(u)$, using the spatial part of the BT (\ref{BTSG}), namely
\begin{equation}\label{BTsgu1}
\left(\frac{u_1+u}{2}\right)_x=\alpha_1 \sin\left(\frac{u_1-u}{2}\right).
\end{equation}
Moreover, using another parameter, $\alpha_2$, we can construct a second solution $u_2=\mathcal{B}_{\alpha_2}(u)$, given by
\begin{equation}\label{BTsgu2}
\left(\frac{u_2+u}{2}\right)_x=\alpha_2 \sin\left(\frac{u_2-u}{2}\right).
\end{equation}

Now, starting with the solutions $u_1$ and $u_2$, we can construct two new solutions $u_{12}$ and $u_{21}$ from relations $u_{12}=\mathcal{B}_{\alpha_2}(u_1)$ and
$u_{21}=\mathcal{B}_{\alpha_1}(u_2)$, namely
\begin{eqnarray}
\left(\frac{u_{12}+u_1}{2}\right)_x&=\alpha_2 \sin\left(\frac{u_{12}-u_1}{2}\right),\label{BTsgu12}\\
\left(\frac{u_{21}+u_2}{2}\right)_x&=\alpha_1 \sin\left(\frac{u_{21}-u_2}{2}\right),\label{BTsgu21}
\end{eqnarray}
as represented schematically in Figure \ref{BianchiPer}-(a).

\begin{figure}[ht]
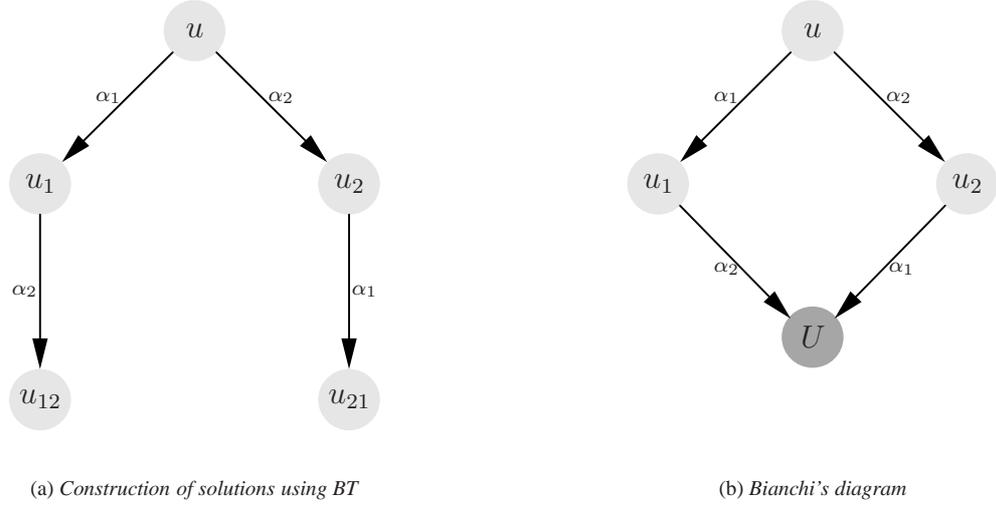

\centering
\centertexdraw{ \setunitscale 0.8
\move (0 1)\fcir f:.9 r:.2  \move (-1 0)\fcir f:.9 r:.2 \move (1 0)\fcir f:.9 r:.2 \move (-1 -1.41)\fcir f:.9 r:.2 \move (1 -1.41)\fcir f:.9 r:.2
\move (4 1)\fcir f:.9 r:.2  \move (3 0)\fcir f:.9 r:.2 \move (5 0)\fcir f:.9 r:.2 \move (4 -1)\fcir f:.65 r:.2 
\textref h:C v:C \htext(0 1){$u$} \textref h:C v:C \htext(-1 0){$u_1$} \textref h:C v:C \htext(1 0){$u_2$} \textref h:C v:C \htext(-1 -1.41){$u_{12}$} \textref h:C v:C \htext(1 -1.41){$u_{21}$}
\textref h:C v:C \htext(4 1){$u$} \textref h:C v:C \htext(3 0){$u_1$} \textref h:C v:C \htext(5 0){$u_2$} \textref h:C v:C \htext(4 -1){$U$}
\move (-.14 .86) \arrowheadtype t:F \avec(-.86 .14) \move (.14 .86) \arrowheadtype t:F \avec(.86 .14)
\move (-1 -.2) \arrowheadtype t:F \avec(-1 -1.21) \move (1 -.2) \arrowheadtype t:F \avec(1 -1.21)
\move (3.86 .86) \arrowheadtype t:F \avec(3.14 .14) \move (4.14 .86) \arrowheadtype t:F \avec(4.86 .14)
\move (3.14 -.14) \arrowheadtype t:F \avec(3.86 -0.88)
\move (4.86 -.14) \arrowheadtype t:F \avec (4.14 -0.88)
\textref h:C v:C \scriptsize{\htext(0 -2){(a) \textit{Construction of solutions using BT}}}
\textref h:C v:C \scriptsize{\htext(4 -2){(b) \textit{Bianchi's diagram}}}
\textref h:C v:C \htext(-.56 .56){$\alpha_1$}
\textref h:C v:C \htext(.56 .56){$\alpha_2$}
\textref h:C v:C \htext(-1.1 -.705){$\alpha_2$}
\textref h:C v:C \htext(1.1 -.705){$\alpha_1$}
\textref h:C v:C \htext(3.44 .56){$\alpha_1$}
\textref h:C v:C \htext(4.56 .56){$\alpha_2$}
\textref h:C v:C \htext(3.44 -.56){$\alpha_2$}
\textref h:C v:C \htext(4.576 -.56){$\alpha_1$}
}
\caption{Bianchi's permutability}\label{BianchiPer}\index{Bianchi's!permutability}
\end{figure}

Nevertheless, the above relations need integration in order to derive the actual solutions $u_1$, $u_2$ and, in retrospect, solutions $u_{12}$ and $u_{21}$. Yet, having at our disposal solutions $u_1$ and $u_2$, a new solution can be constructed using Bianchi's permutativity\index{Bianchi's!permutability} (see Figure \ref{BianchiPer}-(b)) in a purely algebraic way. Specifically, we have the following.

\begin{proposition}
Imposing the condition $u_{12}=u_{21}$, the BTs \{(\ref{BTsgu1})-(\ref{BTsgu21})\} imply the following solution of the SG equation:
\begin{equation}\label{U-solSG}
U=u+4\arctan\left[\frac{\alpha_1+\alpha_2}{\alpha_2-\alpha_1}\tan\left(\frac{u_1-u_2}{4}\right)\right].
\end{equation}
\end{proposition}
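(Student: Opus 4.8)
The plan is to derive the superposition formula purely algebraically from the four spatial B\"acklund relations (\ref{BTsgu1})--(\ref{BTsgu21}) by eliminating all $x$-derivatives. First I would add (\ref{BTsgu1}) and (\ref{BTsgu12}), and separately add (\ref{BTsgu2}) and (\ref{BTsgu21}); in the first sum the combination $(u_1)_x$ cancels between the two left-hand sides up to sign bookkeeping, and similarly $(u_2)_x$ cancels in the second sum, leaving two relations that involve only the $x$-derivatives of $u$ and of the terminal fields $u_{12}$, $u_{21}$. Concretely, writing $S(z):=\sin(z/2)$, the first sum gives $\tfrac12(u_{12}+u)_x = \alpha_1 S(u_1-u)+\alpha_2 S(u_{12}-u_1)$ and the second gives $\tfrac12(u_{21}+u)_x = \alpha_2 S(u_2-u)+\alpha_1 S(u_{21}-u_2)$.

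Next I would impose the closure condition $u_{12}=u_{21}=:U$. Then the left-hand sides of the two summed relations are identical, so subtracting them kills every derivative and yields a purely trigonometric identity:
\begin{equation}
\alpha_1\sin\!\left(\tfrac{u_1-u}{2}\right)+\alpha_2\sin\!\left(\tfrac{U-u_1}{2}\right)=\alpha_2\sin\!\left(\tfrac{u_2-u}{2}\right)+\alpha_1\sin\!\left(\tfrac{U-u_2}{2}\right).
\end{equation}
The task is then to solve this for $U$ in terms of $u,u_1,u_2$. I would apply the sum-to-product identity $\sin A-\sin B=2\cos\tfrac{A+B}{2}\sin\tfrac{A-B}{2}$, grouping the two $\alpha_1$-terms together and the two $\alpha_2$-terms together after moving them to opposite sides. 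This produces $\alpha_1\cdot 2\cos\big(\tfrac{U+u-u_1-u}{4}+\dots\big)\sin(\dots)=\alpha_2\cdot(\dots)$; choosing the pairing so that the common argument $\tfrac{u_1+u_2-u-U}{4}$ (or its analogue) factors out of both sides, the cosine of that common argument cancels, and what survives is an equation of the form $\alpha_1\sin\big(\tfrac{(U-u)-(u_1-u)-(u_2-u)\text{ type combination}}{4}\big)=\pm\alpha_2\sin(\text{same with roles of }u_1,u_2\text{ and signs swapped})$. Expanding both sides with $\sin(P\mp Q)=\sin P\cos Q\mp\cos P\sin Q$ where $P=\tfrac{U-u}{4}-\tfrac{u_1+u_2-2u}{4}$ and $Q=\tfrac{u_1-u_2}{4}$, and collecting the $\sin P$ and $\cos P$ terms, gives $(\alpha_1-\alpha_2)\cos Q\,\tan P$ proportional to $(\alpha_1+\alpha_2)\sin Q$, i.e.
\begin{equation}
\tan\!\left(\tfrac{U-u}{4}-\tfrac{u_1+u_2-2u}{4}\right)=\frac{\alpha_1+\alpha_2}{\alpha_2-\alpha_1}\tan\!\left(\frac{u_1-u_2}{4}\right),
\end{equation}
and a suitable re-examination of the shift constant (using that $u$ enters symmetrically, so the argument of $\tan$ on the left should collapse to $\tfrac{U-u}{4}$ after correctly tracking which BT carries which parameter) yields exactly (\ref{U-solSG}) upon taking $\arctan$ and multiplying by $4$.

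The main obstacle will be the bookkeeping of signs and of the additive constants of integration when combining the B\"acklund relations: the spatial BT (\ref{BTSG}) is a first-order relation, so eliminating derivatives is clean, but one must be careful that $\mathcal{B}_{\alpha_2}(u_1)$ and $\mathcal{B}_{\alpha_1}(u_2)$ use the parameters in the order dictated by Figure \ref{BianchiPer}, and that the half-angle arguments are grouped consistently so the common cosine factor genuinely cancels rather than merely appearing to. A secondary subtlety is justifying that the algebraic relation obtained determines $U$ as a genuine solution of the sine-Gordon equation and not merely a formal consequence; but this follows from the permutability theorem / Bianchi diagram structure already set up, since $U=u_{12}$ is by construction $\mathcal{B}_{\alpha_2}(u_1)$ with $u_1=\mathcal{B}_{\alpha_1}(u)$, hence a solution, and the closure condition only selects which such solution. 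I would also note that the temporal parts of the BTs are not needed for the superposition formula, consistent with the classical fact that Bianchi permutability is an algebraic (derivative-free) identity.
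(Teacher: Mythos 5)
Your overall strategy is the same as the paper's: take linear combinations of the four spatial BT relations so that all $x$-derivatives drop out once $u_{12}=u_{21}=U$ is imposed, then use a sum-to-product identity so that a common factor cancels, and finally expand to get a tangent relation. However, the execution as written contains a concrete sign error at the very first step, and it is not merely "bookkeeping": it changes the final answer. Adding (\ref{BTsgu1}) and (\ref{BTsgu12}) does \emph{not} cancel $(u_1)_x$ — the left-hand side of the sum is $\tfrac12(u_{12}+u)_x+(u_1)_x$, so the spurious derivative survives. The cancellation you want requires \emph{subtracting} (\ref{BTsgu1}) from (\ref{BTsgu12}) (and (\ref{BTsgu2}) from (\ref{BTsgu21})), which gives
\begin{equation*}
\tfrac12(U-u)_x=\alpha_2\sin\Bigl(\tfrac{U-u_1}{2}\Bigr)-\alpha_1\sin\Bigl(\tfrac{u_1-u}{2}\Bigr)
=\alpha_1\sin\Bigl(\tfrac{U-u_2}{2}\Bigr)-\alpha_2\sin\Bigl(\tfrac{u_2-u}{2}\Bigr),
\end{equation*}
i.e.\ $\alpha_1\bigl[\sin\bigl(\tfrac{u_1-u}{2}\bigr)+\sin\bigl(\tfrac{U-u_2}{2}\bigr)\bigr]=\alpha_2\bigl[\sin\bigl(\tfrac{u_2-u}{2}\bigr)+\sin\bigl(\tfrac{U-u_1}{2}\bigr)\bigr]$, which is exactly the identity the paper reaches (there by subtracting (\ref{BTsgu2}) from (\ref{BTsgu1}) and (\ref{BTsgu21}) from (\ref{BTsgu12}) and then subtracting the results). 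Your displayed identity has the opposite sign on the $\sin\bigl(\tfrac{u_1-u}{2}\bigr)$ and $\sin\bigl(\tfrac{u_2-u}{2}\bigr)$ terms, i.e.\ it pairs each $\alpha_i$ with a \emph{difference} of sines rather than a sum.

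That sign flip is fatal if carried through literally: for a difference of sines the identity $\sin A-\sin B=2\cos\tfrac{A+B}{2}\sin\tfrac{A-B}{2}$ makes the common factor a \emph{sine} (of $\tfrac{u_1+u_2-u-U}{4}$), and after cancelling it you are left with $\alpha_1\cos\bigl(\tfrac{U-u+u_1-u_2}{4}\bigr)=\alpha_2\cos\bigl(\tfrac{U-u-(u_1-u_2)}{4}\bigr)$, which resolves to $\tan\bigl(\tfrac{U-u}{4}\bigr)=\tfrac{\alpha_1-\alpha_2}{\alpha_1+\alpha_2}\cot\bigl(\tfrac{u_1-u_2}{4}\bigr)$ — a cotangent law, not (\ref{U-solSG}). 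With the correct (sum) identity one instead uses $\sin A+\sin B=2\sin\tfrac{A+B}{2}\cos\tfrac{A-B}{2}$, the common \emph{cosine} cancels, and the surviving sines have arguments $\tfrac{U-u}{4}\pm\tfrac{u_1-u_2}{4}$ with no extra shift; the spurious offset $\tfrac{u_1+u_2-2u}{4}$ in your penultimate equation, which you then have to argue away, is another artifact of the same sign error. Your closing remarks — that the temporal parts are not needed and that $U$ is automatically a solution by the Bianchi construction — are correct and match the paper's remark following the proposition.
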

\begin{proof}
First, we subtract (\ref{BTsgu2}) and (\ref{BTsgu21}) from (\ref{BTsgu1}) and (\ref{BTsgu12}), respectively, and then subtract by parts the resulting equations to obtain
\begin{equation*}
\alpha_1\left(\sin\left(\frac{u_1-u}{2}\right)+\sin\left(\frac{u_{21}-u_2}{2}\right)\right)=\alpha_2\left(\sin\left(\frac{u_2-u}{2}\right)+\sin\left(\frac{u_{12}-u_1}{2}\right)\right).
\end{equation*}

Now, the above equation becomes
\begin{equation}
\alpha_1\sin\left(\frac{U-u+u_1-u_2}{4}\right)=\alpha_2\sin\left(\frac{U-u-(u_1-u_2)}{4}\right),
\end{equation}
where we have used the well-known identity $\sin A+\sin B=2\sin(\frac{A+B}{2})\cos(\frac{A-B}{2})$.

Finally, using the identity $\sin(A\pm B)=\sin A \cos B\pm \sin B\cos A$, for $A=(U-u)/4$ and $B=u_1-u_2$, we deduce
\begin{equation}
\tan\left(\frac{U-u}{4}\right)=\frac{\alpha_1+\alpha_2}{\alpha_2-\alpha_1}\tan\left(\frac{u_1-u_2}{4}\right),
\end{equation}
which can be solved for $U$ to give (\ref{U-solSG}).
\end{proof}

\begin{remark}\normalfont
One can verify that $U$ given in (\ref{U-solSG}) satisfies both equations (\ref{BTsgu12}) (where $U=u_{12}$) and (\ref{BTsgu21}) (where $U=u_{21}$) modulo equations (\ref{BTsgu1}) and (\ref{BTsgu2}). Moreover, it satisfies the corresponding temporal part of equations (\ref{BTsgu12}) and (\ref{BTsgu21}).
\end{remark}

\begin{remark}\normalfont
Relation (\ref{U-solSG}) is nothing else but a nonlinear superposition principle \index{nonlinear superposition principle} for the production of solutions of the SG equation.
\end{remark}

\subsection{B\"acklund transformation for the PKdV equation}\index{KdV equation!B\"acklund transformation}\index{B\"acklund transformation(s)!for the PKdV equation}
An auto-B\"acklund transformation associated to the PKdV\nomenclature{PKDV}{Potential Korteweg de Vries} equation is given by the following relations
\begin{align}\label{BTKdV}
\mathcal{B}_\alpha:
\begin{cases}
\left(u+v\right)_x&=2\alpha+\frac{1}{2}(u-v)^2,\\
\left(u-v\right)_t&=3(u_x^2-v_x^2)-(u-v)_{xxx},
\end{cases}
\end{align}
which was first presented in 1973 in a paper of Wahlquist and Estabrook \cite{Wahlquist-Estabrook}. In this section we show how we can construct algebraically a solution of the PKdV equation,\index{KdV equation} using Bianchi's permutability.\index{Bianchi's!permutability}

\subsubsection{Bianchi's permutability: Nonlinear superposition principle of solutions of the PKdV equation}\index{Bianchi's!permutability}\index{nonlinear superposition principle}
Let $u=u(x,t)$ be a function satisfying the PKdV equation.\index{KdV equation} Focusing on the spatial part of the BT (\ref{BTKdV}), we can construct two new solutions, from $u_1=\mathcal{B}_{\alpha_1}(u)$ and $u_2=\mathcal{B}_{\alpha_2}(u)$, i.e.
\begin{eqnarray}
\left(u_1+u\right)_x&=&2\alpha_1+\frac{1}{2}(u_1-u)^2,\label{BTKdVu1}\\
\left(u_2+u\right)_x&=&2\alpha_2+\frac{1}{2}(u_2-u)^2.\label{BTKdVu2}
\end{eqnarray}

Moreover, following \ref{BianchiPer}-(b), we can construct two more from relations $u_{12}=\mathcal{B}_{\alpha_2}(u_1)$ and $u_{21}=\mathcal{B}_{\alpha_1}(u_2)$, i.e.
\begin{eqnarray}
\left(u_{12}+u_1\right)_x&=&2\alpha_2+\frac{1}{2}(u_{12}-u_1)^2,\label{BTKdVu12}\\
\left(u_{21}+u_2\right)_x&=&2\alpha_1+\frac{1}{2}(u_{21}-u_2)^2.\label{BTKdVu21}
\end{eqnarray}

\begin{proposition}
Imposing the condition $u_{12}=u_{21}$, the BTs \{(\ref{BTKdVu1})-(\ref{BTKdVu21})\} imply the following solution of the PKdV equation:
\begin{equation}
U=u-4\frac{\alpha_1-\alpha_2}{u_1-u_2}.
\end{equation}
\end{proposition}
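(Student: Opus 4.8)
The plan is to mimic the Bianchi-permutability computation carried out for the sine-Gordon equation in the preceding proposition, now using the spatial part of the B\"acklund transformation (\ref{BTKdV}). First I would write down the four equations (\ref{BTKdVu1})--(\ref{BTKdVu21}) and subtract them in pairs so as to eliminate the derivative terms. Concretely, subtract (\ref{BTKdVu2}) from (\ref{BTKdVu1}) to get $(u_1-u_2)_x = 2(\alpha_1-\alpha_2) + \tfrac12\big((u_1-u)^2 - (u_2-u)^2\big)$, and subtract (\ref{BTKdVu21}) from (\ref{BTKdVu12}) to get $(u_{12}-u_{21})_x = 2(\alpha_2-\alpha_1) + \tfrac12\big((u_{12}-u_1)^2-(u_{21}-u_2)^2\big)$. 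Now impose $u_{12}=u_{21}=:U$. The left-hand side of the second relation becomes $0$, and adding the two relations by parts makes the derivative of $u_1-u_2$ on the left cancel against nothing — so instead I would add the two equations so that the $2(\alpha_1-\alpha_2)$ terms cancel, leaving $(u_1-u_2)_x$ on the left equal to a purely algebraic expression in $u,u_1,u_2,U$ on the right.

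The key algebraic step is then to factor the right-hand side. Using $a^2-b^2=(a-b)(a+b)$ on each pair of squares, the quadratic terms collapse: $(u_1-u)^2-(u_2-u)^2 = (u_1-u_2)(u_1+u_2-2u)$ and $(U-u_1)^2-(U-u_2)^2 = (u_2-u_1)(2U-u_1-u_2)$. Substituting and dividing through by the common factor $(u_1-u_2)$ (which is nonzero generically, and this is exactly where one needs $u_1\neq u_2$), the derivative $(u_1-u_2)_x$ drops out entirely and one is left with a linear equation relating $u$, $u_1$, $u_2$, $U$ and, crucially, no $\alpha_i$ and no derivatives. Solving that linear equation for $U$ should yield $U = u - 4(\alpha_1-\alpha_2)/(u_1-u_2)$ after keeping track of signs and the factor of $2$ in the transformation.

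The main obstacle — really the only subtlety — is bookkeeping: making sure the right combination of the four equations is taken so that simultaneously the $x$-derivatives and the spectral parameters $2\alpha_i$ cancel, and then not making a sign error when the common factor $u_1-u_2$ is divided out. One should also remark, as in the sine-Gordon case, that the candidate $U$ obtained this way automatically satisfies the remaining (temporal) part of (\ref{BTKdV}) as well, modulo (\ref{BTKdVu1})--(\ref{BTKdVu2}), so that $U$ is genuinely a new solution of the PKdV equation and not merely a formal consequence of the spatial relations; this is a routine verification that I would state but not belabour. If one prefers to avoid dividing by $u_1-u_2$, an alternative is to treat the four Riccati-type relations as defining $u_1,u_2,u_{12},u_{21}$ and simply verify by direct substitution that the stated $U$ closes the Bianchi diagram, but the subtraction approach is shorter and parallels the structure already established for the sine-Gordon equation in this section.
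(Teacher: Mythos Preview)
Your overall strategy matches the paper's exactly: subtract the four spatial B\"acklund relations in pairs and combine the results. But you commit precisely the bookkeeping slip you warned yourself about, and it propagates.

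When you subtract (\ref{BTKdVu21}) from (\ref{BTKdVu12}), the left-hand side is $(u_{12}+u_1)_x-(u_{21}+u_2)_x=(u_{12}-u_{21})_x+(u_1-u_2)_x$, not $(u_{12}-u_{21})_x$. After imposing $u_{12}=u_{21}=U$, this is $(u_1-u_2)_x$, not $0$. So both paired equations have the \emph{same} left-hand side $(u_1-u_2)_x$, and the correct move is to \emph{subtract} them, not add. Subtracting kills the derivative cleanly and leaves
\[
0=4(\alpha_1-\alpha_2)+\tfrac12\Big((u_1-u)^2-(u_2-u)^2-(U-u_1)^2+(U-u_2)^2\Big).
\]
Your factorisations of the differences of squares are correct; using them here gives $0=4(\alpha_1-\alpha_2)+(u_1-u_2)(U-u)$, whence the stated formula.

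Two of your claims are therefore wrong: the derivative does not ``drop out'' by dividing through by $u_1-u_2$ (it cancels because both subtracted equations share the same left-hand side), and the $\alpha_i$ do \emph{not} disappear---they must survive, since they appear in the answer. Your alternative of verifying the closure by direct substitution is fine, and your remark about the temporal part is appropriate.
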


\begin{proof}
It is straightforward calculation; one needs to subtract (\ref{BTKdVu2}) and (\ref{BTKdVu21}) by (\ref{BTKdVu1}) and (\ref{BTKdVu12}), respectively, and subtract the resulting equations.
\end{proof}

In the next chapter --where we study Darboux transformations\index{Darboux transformation(s)} for particular NLS type equations-- we shall see that BT\index{B\"acklund transformation(s)} arise naturally in the derivation of DT.

\chapter{Darboux tranformations for NLS type equations and discrete integrable systems}\index{integrable system(s)!discrete}
\label{chap3} \setcounter{equation}{0}
\renewcommand{\theequation}{\thechapter.\arabic{equation}}

\section{Overview}
As we have seen in the previous chapter, Darboux and B{\"a}cklund transformations are closely related to the notion of integrability \cite{Rog-Schief}. They can be derived from Lax pairs in a systematic way, e.g. \cite{Cieslinski, Crampin}, and provide the means to construct classes of solutions for the integrable equations to which they are related. Moreover, they can be interpreted as differential-difference equations \cite{Adler-Yamilov, levi1981, levi1980, Frank} and their commutativity, also referred to as Bianchi's permutability theorem,\index{Bianchi's!permutability} leads to systems of difference equations \cite{Adler1996, Frank1984, Quispel1984}.

In this chapter we use Lax operators which are invariant under the action of the reduction group to derive Darboux transformations. We interpret the associated Darboux matrices as Lax matrices of a discrete Lax pair and construct systems of difference equations. 

More precisely, our starting point is the general Lax operator 
\begin{equation}
\mathcal{L} = D_x + U(p,q;\lambda).
\end{equation}
Here, the $2 \times 2$ matrix $U$ belongs to the Lie algebra ${\mathfrak{sl}}(2,{\mathbb{C}})$, depends implicitly on $x$ through potentials $p$, $q$, and depends rationally on the spectral parameter $\lambda$. Imposing the invariance of operator $\mathcal{L}$ under the action of the automorphisms of ${\mathfrak{sl}}(2,{\mathbb{C}})$, i.e. the reduction group, inequivalent classes of Lax operators can be constructed systematically. This classification of the corresponding automorphic Lie algebras was presented in \cite{Lombardo-Sanders}, which were also derived in \cite{BuryPhD} in a different way.

Because of this construction, the resulting Lax operators have specific $\lambda$-dependence and possess certain symmetries. We shall be assuming that the corresponding Darboux transformations inherit the same $\lambda$-dependence and symmetries, hence the derivation of these transformations is considerably simplified in this context. Once a Darboux transformation has been derived, one may construct algorithmically new fundamental solutions, i.e. solutions of the equation $\mathcal{L}\Psi = 0$, from a given initial one using this transformation. Moreover, combining two different Darboux transformations and imposing their commutativity, a set of algebraic relations among the potentials involved in $\mathcal{L}$ results as a necessary condition.

One may interpret these potentials as functions defined on a two dimensional lattice and, consequently, the corresponding algebraic relations among them as a system of difference equations. One interesting characteristic of the resulting discrete systems is their multidimensional consistency \cite{ABS-2004, ABS-2005, Bobenko-Suris,Frank3,Frank5,Frank4}.\index{multidimensional consistency} This means that these systems can be extended into a three dimensional lattice in a consistent way and, consequently, in an infinite dimensional lattice. 

Another property of these systems, following from their derivation, is that they admit symmetries. The latter are nothing else but the B{\"a}cklund transformation of the corresponding continuous system to which the Lax operator $\mathcal{L}$ is related.

The chapter is organised as follows: In the next section we briefly explain what is a reduction group, what automorphic Lie algebras are and we list the cases of the PDEs we study; the nonlinear Schr\"odinger equation, the derivative nonlinear Schr\"odinger equation and a deformation of the derivative nonlinear Schr\"odinger equation. In section 3 we present the general scheme we follow to derive Darboux matrices and construct systems of difference equations. Finally, section 4 is devoted to the derivation of Darboux matrices for NLS type equations, while section 5 deals with employing these Darboux matrices to  construct discrete integrable systems\index{integrable system(s)!discrete} and their integrable reductions.

The results of this chapter appear in \cite{SPS}.


\section{The reduction group and automorphic Lie algebras}\index{reduction group}\index{automorphic Lie algebras}
The reduction group \index{reduction group}was first introduced in \cite{Mikhailov3, Mikhailov2}. It is a discrete group of automorphisms of a Lax operator,\index{Lax operator(s)} and its elements are simultaneous automorphisms of the corresponding Lie algebra and fractional-linear transformations of the spectral parameter.

Automorphic Lie algebras \index{automorphic Lie algebras} were introduced in \cite{LombardoPhD,Lombardo} and studied in \cite{BuryPhD,Bury-Sasha,LombardoPhD,Lombardo,Lombardo-Sanders}. These algebras constitute a subclass of infinite dimensional Lie algebras and their name is due to their construction which is very similar to the one for automorphic functions. 

Following Klein's classification \cite{Klein} of finite groups of fractional-linear transformations\index{fractional-linear transformations} of a complex variable, in \cite{BuryPhD, Bury-Sasha} it has been shown that in the case of $2\times 2$ matrices, which we study in this chapter, the essentially different reduction groups are
\begin{itemize}
\item the trivial group (with no reduction);
\item the cyclic reduction group $\field{Z}_2$\index{$\field{Z}_2$ reduction group}\index{reduction group!$\field{Z}_2$ (cyclic)} (leading to the Kac-Moody algebra\index{Kac-Moody algebra} $A_1^1$);
\item the Klein reduction group $\field{Z}_2\times \field{Z}_2 \cong \field{D}_2$.\index{$\field{Z}_2\times\field{Z}_2$ reduction group}\index{reduction group!$\field{Z}_2\times\field{Z}_2$ (Klein)}
\end{itemize}
Reduction groups $\field{Z}_2$ and $\field{D}_2$ have both \textit{degenerate}\index{degenerate orbits} and \textit{generic} orbits\index{generic orbits}. Degenerate are those orbits that correspond to the fixed points of the fractional-linear transformations\index{fractional-linear transformations} of the spectral paramater, while the others are called generic. 

Now, the following Lax operators 
\begin{subequations}
\begin{align}
\label{Lax-NLS}
&\mathcal{L} =D_x +\lambda\left(\begin{array}{cc} 1 & 0 \\ 0 & -1\end{array}\right)+\left(\begin{array}{cc} 0 & 2p \\ 2q & 0\end{array}\right),\\
\label{Lax-DNLS}
&\mathcal{L}=D_x+\lambda^{2} \left(\begin{array}{cc} 1 & 0 \\ 0 & -1\end{array}\right)+\lambda \left(\begin{array}{cc} 0 & 2p \\ 2q & 0\end{array}\right),\\
\label{Lax-dDNLS}
&\mathcal{L}=D_x+(\lambda^2-\lambda^{-2})\left(\begin{array}{cc} 1 & 0 \\ 0 & -1\end{array}\right)+\lambda \left(\begin{array}{cc} 0 & 2\,p\\ 2\,q & 0\end{array}\right)+\lambda^{-1} \left(\begin{array}{cc} 0 & 2\,q\\ 2\,p & 0\end{array}\right),
\end{align}
\end{subequations}
constitute all the essential different Lax operators, with poles of minimal order, invariant with respect to the generators of $\field{Z}_2$ and $\field{D}_2$ groups with degenerate orbits\index{degenerate orbits}. In what follows, we study the Darboux transformations\index{Darboux transformation(s)} for all the above cases.

Operator (\ref{Lax-NLS}) is associated with the NLS\nomenclature{NLS}{Nonlinear Schr\"odinger} equation  \cite{ZS},
\begin{equation}\label{NLS}
p_t=p_{xx}+4p^2q,\qquad q_t=-q_{xx}-4pq^2,
\end{equation}\index{nonlinear Schr\"odinger (NLS)!equation}
while (\ref{Lax-DNLS}) and (\ref{Lax-dDNLS}) are associated with the DNLS\nomenclature{DNLS}{Derivative nonlinear Schr\"odinger} equation \cite{Kaup-Newell},
\begin{equation}\label{DNLS}
p_t=p_{xx}+4(p^2q)_x,\qquad q_t=-q_{xx}+4(pq^2)_x.
\end{equation}\index{derivative nonlinear Schr\"odinger (DNLS)!equation}
and a deformation of the DNLS equation \cite{MSY}
\begin{equation}\label{dDNLS}
p_t=p_{xx}+8(p^2q)_x-4q_x,\qquad q_t=-q_{xx}+8(pq^2)_x-4p_x,
\end{equation}\index{deformation of DNLS!equation}
respectively.


\section{General framework}
In this section we present the general framework for the derivation of Darboux matrices related to Lax operators  of AKNS\nomenclature{AKNS}{Ablowitz, Kaup, Newell and Segur}\index{Lax operator(s)!AKNS type} type. Moreover, we explain how we can employ these Darboux matrices to construct discrete integrable systems.\index{integrable system(s)!discrete}

\subsection{Derivation of Darboux matrices}
The Lax operators \index{Lax operator(s)} which we consider in the rest of this thesis are of the following AKNS form
\begin{equation} \label{L-operator}
\mathcal{L}(p,q;\lambda)=D_x+U(p,q;\lambda),
\end{equation}
where $U$ is a $2 \times 2$ traceless matrix which belongs in the Lie algebra ${\mathfrak{sl}}(2,{\mathbb{C}})$, depends implicitly on $x$ through the potential functions $p$ and $q$ and is a rational function in the spectral parameter $\lambda \in \field{C}$. We also require that the dependence on the spectral parameter is nontrivial\footnote{By nontrivial we mean that $\lambda$ cannot be eliminated by a gauge transformation\index{gauge transformation}. For instance, matrix $\left(\begin{array}{cc} * & \lambda^{-1} \\  & * \end{array}\right)$ has trivial dependence on $\lambda$ since
$$\left(\begin{array}{cc} \lambda^{1/2} &  \\  & \lambda^{-1/2} \end{array}\right)\left(\begin{array}{cc} * & \lambda^{-1} \\  & * \end{array}\right)\left(\begin{array}{cc} \lambda^{1/2} &  \\  & \lambda^{-1/2} \end{array}\right)^{-1}=\left(\begin{array}{cc} * & 1 \\  & * \end{array}\right)$$
}.

\begin{remark}\normalfont 
In the forthcoming analysis we shall only be needing the spatial part of the Lax pair\index{Lax pair!spatial part ($x$-part)} of the associated PDEs.
\end{remark}

In what follows, by Darboux transformation \index{Darboux transformation(s)} we understand a map
\begin{equation}\label{DarbouxDef}
\mathcal{L}\rightarrow \mathcal{\tilde{L}}=M\mathcal{L}M^{-1},
\end{equation}
where $\mathcal{\tilde{L}}$ has exactly the same form as $\mathcal{L}$ but updated with new potentials $p_{10}$ and $q_{10}$, namely
\begin{equation}
\tilde{\mathcal{L}}=D_x+U(p_{10},q_{10};\lambda).
\end{equation}
Matrix $M$ in (\ref{DarbouxDef}) is an invertible matrix called the \textit{Darboux matrix}\index{Darboux matrix(-ces)}.

According to definition (\ref{DarbouxDef}), a Darboux matrix may depend on any of the potential functions $p$, $q$, $p_{10}$ and $q_{10}$, and the spectral parameter $\lambda$. Moreover, given a Lax operator\index{Lax operator(s)} $\mathcal{L}$, we can calculate the Darboux matrix using the following.
\begin{proposition}\label{DarbouxMatrix}
Given a Lax operator \index{Lax operator(s)} of the form (\ref{L-operator}), the Darboux matrix, $M$, satisfies the following equation
\begin{equation}\label{DMeq}
D_x M+U_{10}M-MU=0,
\end{equation} 
where $U_{10}=U(p_{10},q_{10};\lambda)$.
\end{proposition}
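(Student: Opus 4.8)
The plan is to derive equation (\ref{DMeq}) directly from the defining relation (\ref{DarbouxDef}) of the Darboux matrix. The key observation is that (\ref{DarbouxDef}) is an identity of differential operators acting on functions, so I would test it against an arbitrary (vector) function, or equivalently manipulate it as an operator identity, being careful about how $D_x$ interacts with multiplication operators.

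First I would rewrite (\ref{DarbouxDef}) as $\tilde{\mathcal{L}} M = M \mathcal{L}$, since $M$ is invertible; this avoids dealing with $M^{-1}$ and its derivative. Next I would substitute the explicit forms $\mathcal{L} = D_x + U(p,q;\lambda)$ and $\tilde{\mathcal{L}} = D_x + U(p_{10},q_{10};\lambda) = D_x + U_{10}$, obtaining $(D_x + U_{10})M = M(D_x + U)$ as an identity of operators. The main (and essentially only) subtlety is the Leibniz rule: the composition $D_x \circ M$, viewed as an operator, equals $(D_x M) + M D_x$, where $D_x M$ denotes the matrix obtained by differentiating the entries of $M$, while $M D_x$ is the composition of multiplication by $M$ with differentiation. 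So I would expand the left-hand side as $D_x M + M D_x + U_{10} M$ and the right-hand side as $M D_x + M U$. The terms $M D_x$ cancel on both sides, leaving $D_x M + U_{10} M - M U = 0$, which is exactly (\ref{DMeq}).

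I would then remark that this computation can equivalently be phrased by applying both sides of $\tilde{\mathcal{L}} M \Psi = M \mathcal{L} \Psi$ to an arbitrary fundamental solution $\Psi$ (or any smooth vector function) and using that such $\Psi$ are arbitrary, which is the more pedestrian route; both arguments are routine and short.

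The ``hard part'' here is essentially nonexistent: the only place to go wrong is mishandling the operator $D_x$ as though it commuted with the multiplication operator $M$, so the proof amounts to a careful application of the product rule. I expect the write-up to be a few lines, and the only thing worth emphasising is the distinction between $D_x M$ (a matrix) and $M D_x$ (an operator) so that the cancellation of the $M D_x$ terms is transparent.
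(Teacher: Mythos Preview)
Your proposal is correct and follows essentially the same approach as the paper: rewrite (\ref{DarbouxDef}) as $\tilde{\mathcal{L}}M=M\mathcal{L}$, substitute $\mathcal{L}=D_x+U$ and $\tilde{\mathcal{L}}=D_x+U_{10}$, and use the Leibniz rule $D_x\cdot M=M_x+MD_x$ to cancel the $MD_x$ terms and arrive at (\ref{DMeq}). The paper's proof is just the terse version of exactly this computation.
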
 
\begin{proof}
By definition of the Darboux transformation we have that $\tilde{\mathcal{L}}M=M\mathcal{L}$, namely
\begin{equation}
(D_x+U(p_{10},q_{10};\lambda))M=M(D_x+U(p,q;\lambda)).
\end{equation}
Therefore, since $D_x\cdot M=M_x+MD_x$, the above equation implies that $M$ must obey equation (\ref{DMeq}).
\end{proof}

Although the above proposition can be used to determine $M$, this cannot be done in full generality without making choices for $M$ and analysing its dependence on the spectral parameter $\lambda$. 

Our first choice will be based on the following.

\begin{proposition}
A composition of two Darboux matrices for an operator of the form (\ref{L-operator}) is a Darboux matrix for the same operator.
\end{proposition}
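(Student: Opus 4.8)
The plan is to show directly that if $M_1$ and $M_2$ are Darboux matrices for $\mathcal{L}$, then their product $M = M_2 M_1$ (in an appropriate order) is again a Darboux matrix. The cleanest route is to use the characterisation from Proposition~\ref{DarbouxMatrix}, namely that a matrix $N$ is a Darboux matrix for $\mathcal{L}(p,q;\lambda)$ precisely when $\tilde{\mathcal{L}} N = N \mathcal{L}$ for some operator $\tilde{\mathcal{L}}$ of the same form (with updated potentials), equivalently when $D_x N + \tilde{U} N - N U = 0$. So I would start by writing $\tilde{\mathcal{L}}_1 M_1 = M_1 \mathcal{L}$, where $\tilde{\mathcal{L}}_1$ has potentials $p_{10}, q_{10}$, and $\tilde{\mathcal{L}}_2 M_2 = M_2 \tilde{\mathcal{L}}_1$, where $M_2$ is a Darboux matrix for the operator $\tilde{\mathcal{L}}_1$ and $\tilde{\mathcal{L}}_2$ has potentials $p_{20}, q_{20}$, say.

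First I would compose these two intertwining relations: from $\tilde{\mathcal{L}}_2 M_2 = M_2 \tilde{\mathcal{L}}_1$ and $\tilde{\mathcal{L}}_1 M_1 = M_1 \mathcal{L}$ we get
\begin{equation}
\tilde{\mathcal{L}}_2 (M_2 M_1) = M_2 \tilde{\mathcal{L}}_1 M_1 = M_2 M_1 \mathcal{L}.
\end{equation}
Hence $M = M_2 M_1$ satisfies $\tilde{\mathcal{L}}_2 M = M \mathcal{L}$, which is exactly the defining relation $\mathcal{L} \to M \mathcal{L} M^{-1}$ of a Darboux transformation, provided $M$ is invertible. Invertibility is immediate since $M_1$ and $M_2$ are each invertible by hypothesis (Darboux matrices are invertible by definition), so $M_2 M_1$ is invertible. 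Then I would remark that $\tilde{\mathcal{L}}_2 = D_x + U(p_{20}, q_{20}; \lambda)$ has exactly the same form as $\mathcal{L}$ but with updated potentials, so the requirement that $\tilde{\mathcal{L}} = M \mathcal{L} M^{-1}$ be of the original form is met; equivalently one may verify directly that $D_x M + U(p_{20},q_{20};\lambda) M - M U(p,q;\lambda) = 0$ by applying the Leibniz rule $D_x(M_2 M_1) = (D_x M_2) M_1 + M_2 (D_x M_1)$ and substituting the two equations of the form \eqref{DMeq} satisfied by $M_1$ and $M_2$ individually.

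The one point that deserves care --- and which I expect to be the only real subtlety rather than a genuine obstacle --- is keeping track of which operator each Darboux matrix intertwines: $M_2$ must be a Darboux matrix not for the original $\mathcal{L}$ but for the intermediate operator $\tilde{\mathcal{L}}_1$ already obtained from the first transformation, so that $M_2$ depends on the potentials $p_{10}, q_{10}$ (and their updates $p_{20}, q_{20}$). Once the bookkeeping of the potential labels is set up consistently, the argument is a one-line associativity computation and the rest is just observing that the class of operators of the form \eqref{L-operator} is closed under the updates and that the product of invertible matrices is invertible. I would present the proof essentially as the displayed chain of equalities above, preceded by the correct statement of what $M_1$ and $M_2$ intertwine and followed by the invertibility remark.
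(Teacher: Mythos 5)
Your proof is correct and follows essentially the same route as the paper's: both compose the two intertwining relations $\tilde{\mathcal{L}}_1 M_1 = M_1\mathcal{L}$ and $\tilde{\mathcal{L}}_2 M_2 = M_2\tilde{\mathcal{L}}_1$ and read off that $M_2M_1$ conjugates $\mathcal{L}$ to $\tilde{\mathcal{L}}_2$. Your explicit remark that the second matrix must intertwine the \emph{intermediate} operator $\tilde{\mathcal{L}}_1$ rather than the original $\mathcal{L}$ is a small but welcome clarification of a point the paper's own proof glosses over with the notation $\hat{\tilde{\mathcal{L}}}$.
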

\begin{proof}
Let $M$ and $K$ be Darboux matrices for an operator $\mathcal{L}$ of the form (\ref{L-operator}). Then, 
\begin{equation}\label{MLKL}
\tilde{\mathcal{L}}M=M\mathcal{L},\quad \hat{\mathcal{L}}K=K\mathcal{L}.
\end{equation}
by definition of the Darboux matrix. Now,
\begin{equation}
KM\mathcal{L}\stackrel{(\ref{MLKL})}{=}K\tilde{\mathcal{L}}M\stackrel{(\ref{MLKL})}{=}\hat{\tilde{\mathcal{L}}}KM.
\end{equation}
Therefore, $\hat{\tilde{\mathcal{L}}}=KM\mathcal{L}(KM)^{-1}$ which proves the statement.
\end{proof}

We define the \textit{rank} of a Darboux transformation\index{Darboux transformation(s)!rank}\index{rank of Darboux transformation} to be the rank of the matrix which appears as coefficient of the higher power of the spectral parameter.
In the next sections, we shall be assuming that the Darboux tranformations are of rank 1; in fact, in some examples, Darboux transformations of full rank can be written as composition of Darboux transformations of rank 1. 

The second choice is related to the form of the corresponding Lax operators. Since the Lax operators\index{Lax operator(s)} we deal with have rational dependence on the spectral parameter, we impose without loss of generality, that the same holds for matrix $M$ as well. Moreover, we employ any symmetries of the Lax operator, $\mathcal{L}$, as symmetries inherited to $M$. Specifically, if the the Lax operator\footnote{For simplicity of the notation, we sometimes omit the dependence on the potentials $p$ and $q$, i.e. $\mathcal{L}(p,q;\lambda)\equiv\mathcal{L}(\lambda)$.} $\mathcal{L}(\lambda)$ satisfies a relation of the form
\begin{equation}
\mathcal{L}(\lambda)=\Sigma(\lambda)\mathcal{L(\sigma(\lambda))}\Sigma(\lambda)^{-1},
\end{equation}
for some invertible function $\sigma(\lambda)$ and some invertible matrix $\Sigma(\lambda)$, then we shall be assuming that $M$ must obey the same relation, namely
\begin{equation}\label{Msym}
M(\lambda)=\Sigma(\lambda)M(\sigma(\lambda))\Sigma(\lambda)^{-1}.
\end{equation}
Relation (\ref{Msym}) imposes some restrictions on the form of matrix $M$ and reduces the number of functions involved in it.

Now, let $M$ be a Darboux matrix for the operator $\mathcal{L}$, and $\Psi=\Psi(x,\lambda)$ a fundamental solution of the linear equation 
\begin{equation}\label{LinearEq}
\mathcal{L}\Psi(x,\lambda)=0. 
\end{equation}
Then, we have the following.
\begin{proposition}\label{FS-Det}
Matrix $M$ maps fundamental solutions of (\ref{LinearEq}) to fundamental solutions of $\tilde{\mathcal{L}}\Psi=0$. Moreover, the determinant of $M$ is independent of $x$.
\end{proposition}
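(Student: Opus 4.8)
The plan is to split the statement into its two assertions and treat them in order. For the first, I would let $\Psi(x,\lambda)$ be a fundamental solution of $\mathcal{L}\Psi=0$, that is $D_x\Psi + U(p,q;\lambda)\Psi = 0$ with $\det\Psi\neq 0$, and set $\tilde\Psi := M\Psi$. Differentiating and using Proposition \ref{DarbouxMatrix} (equation (\ref{DMeq}), i.e. $D_xM = MU - U_{10}M$) gives
\begin{equation}
D_x\tilde\Psi = (D_xM)\Psi + M D_x\Psi = (MU - U_{10}M)\Psi - MU\Psi = -U_{10}\tilde\Psi,
\end{equation}
so $\tilde{\mathcal{L}}\tilde\Psi = 0$ with $\tilde{\mathcal{L}} = D_x + U(p_{10},q_{10};\lambda)$. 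Since $M$ is invertible, $\tilde\Psi$ is again nonsingular, hence a fundamental solution; this is the content of the first claim.

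For the second assertion I would compute $D_x\det M$. The cleanest route is Jacobi's formula (Liouville's identity): for a matrix-valued function $M(x)$ one has $D_x \det M = \det M \cdot \tr(M^{-1} D_x M)$, valid wherever $M$ is invertible. From (\ref{DMeq}), $M^{-1}D_xM = M^{-1}(MU - U_{10}M) = U - M^{-1}U_{10}M$, so
\begin{equation}
\tr(M^{-1}D_xM) = \tr U - \tr(M^{-1}U_{10}M) = \tr U - \tr U_{10} = 0,
\end{equation}
because both $U = U(p,q;\lambda)$ and $U_{10} = U(p_{10},q_{10};\lambda)$ lie in $\mathfrak{sl}(2,\mathbb{C})$ and trace is invariant under conjugation. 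Hence $D_x\det M = 0$, i.e. $\det M$ does not depend on $x$.

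The only genuine subtlety — and the step I would flag — is the possibility that $M$ is not invertible on all of its domain: for the rank-$1$ Darboux matrices considered here, $\det M$ is a polynomial (or rational function) in $\lambda$ with isolated zeros, and Jacobi's formula in the form above literally requires $M^{-1}$. I would handle this either by noting that $\det M$ is a polynomial in $\lambda$ whose coefficients are functions of $x$ and running the Liouville computation on the open set where $\det M\neq 0$, then invoking continuity in $\lambda$ to extend the conclusion; or, more robustly, by differentiating $\det M$ directly via the multilinearity of the determinant in the columns of $M$ together with $D_xM = MU - U_{10}M$, which yields $D_x\det M = (\tr U - \tr U_{10})\det M = 0$ as a polynomial identity in $\lambda$ with no invertibility hypothesis needed. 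The rest is routine.
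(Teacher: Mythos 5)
Your proof is correct. The first half is essentially the paper's argument in a differentiated form: the paper verifies $\tilde{\mathcal{L}}\Psi_{10}=M\mathcal{L}M^{-1}(M\Psi)=M\mathcal{L}\Psi=0$ directly from the conjugation definition of the Darboux transformation, while you unwind the same identity through equation (\ref{DMeq}); the two are interchangeable and both correctly note that invertibility of $M$ preserves fundamentality. For the constancy of $\det M$ you genuinely diverge. The paper never differentiates $\det M$ at all: it applies Liouville's formula to the two fundamental solutions, concludes that $\det\Psi$ and $\det\Psi_{10}$ are each independent of $x$ because $U$ is traceless, and then reads off $\partial_x(\det M)=0$ from the relation $\Psi_{10}=M\Psi$, i.e.\ from $\det M=\det\Psi_{10}/\det\Psi$. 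You instead apply Jacobi's formula to $M$ itself, using (\ref{DMeq}) together with $\tr U=\tr U_{10}=0$ and the conjugation-invariance of the trace. These are two faces of the same computation, but yours is more self-contained — no fundamental solution needs to be invoked to prove the second claim — and your closing remark is a genuine refinement: the paper's route rests on $\det\Psi\neq 0$ (harmless, since $\Psi$ is fundamental), whereas your fallback of differentiating $\det M$ by multilinearity in its columns gives $\partial_x\det M=(\tr U-\tr U_{10})\det M=0$ as a polynomial identity in $\lambda$ with no invertibility hypothesis, which also covers the isolated values of $\lambda$ where $\det M$ vanishes.
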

\begin{proof}
Let $M$ map $\Psi$ to $\Psi_{10}$, namely $\Psi_{10}=M\Psi$. Then, according to (\ref{DarbouxDef})
\begin{equation}
\tilde{\mathcal{L}}\Psi_{10}=M\mathcal{L}M^{-1}\Psi_{10}=M\mathcal{L}M^{-1}(M\Psi)=M\mathcal{L}\Psi=0,
\end{equation}
i.e. $\Psi_{10}$ is a solution of $\tilde{\mathcal{L}}\Psi=0$. Moreover, $\Psi_{10}$ is fundamental, since $\Psi$ is fundamental, $\det{M}\neq 0$ and $\Psi_{10}=M\Psi$.

Now, recall Liouville's formula\footnote{It is also known as Abel-Jacobi-Liouville identity.}\index{Liouville's formula}\index{Abel-Jacobi-Liouville identity} for solutions of the linear equation $L\Psi=0$, given by
\begin{equation}
\det\Psi(x,t;\lambda)=\det\Psi(x_0,t;\lambda)\exp\left(-\int^{x}_{x_0}\tr U(p(\xi),q(\xi);\lambda)d\xi\right).
\end{equation}
Since $U$ is traceless, from the above formula we deduce that the determinants of $\Psi$ and $\Psi_{10}$ are non-zero and independent of $x$. Hence, the relation $\Psi_{10}=M\Psi$ implies that $\partial_x(\det(M))=0$.
\end{proof}


\subsection{Discrete Lax pairs and discrete systems}
Starting with a fundamental solution of equation (\ref{LinearEq}), say $\Psi$, we can employ two Darboux matrices\index{Darboux matrix(-ces)}\index{Darboux matrix(-ces)} to derive two new fundamental solutions $\Psi_{10}$ and $\Psi_{01}$ as follows
\begin{equation} \label{dis-LP}
\Psi_{10}=M(p,q,p_{10},q_{10};\lambda)\Psi \equiv M\Psi,\quad \Psi_{01}=M(p,q,p_{01},q_{01};\lambda)\Psi \equiv K\Psi.
\end{equation}\index{discrete!Lax pair}
Then, a third solution can be derived in a purely algebraic way as shown in Figure \ref{bianchi}.

\begin{figure}[ht]
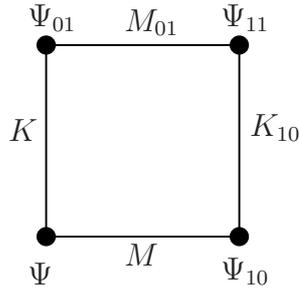

\centertexdraw{ \setunitscale 0.5
\linewd 0.02 \arrowheadtype t:F 
\htext(0 0.5) {\phantom{T}}
\move (-1 -2) \lvec (1 -2) 
\move(-1 -2) \lvec (-1 0) \move(1 -2) \lvec (1 0) \move(-1 0) \lvec(1 0)
\move (1 -2) \fcir f:0.0 r:0.1 \move (-1 -2) \fcir f:0.0 r:0.1
 \move (-1 0) \fcir f:0.0 r:0.1 \move (1 0) \fcir f:0.0 r:0.1  
\htext (-1.2 -2.5) {$\Psi$} \htext (.8 -2.5) {$\Psi_{10}$} \htext (-.2 -2.3) {$M$}
\htext (-1.2 .15) {$\Psi_{01}$} \htext (.8 .15) {$\Psi_{11}$} \htext (-.2 .1) {$M_{01}$}
\htext (-1.4 -1) {$K$} \htext (1.1 -1) {$K_{10}$}}
\caption{{Bianchi commuting diagram}} \label{bianchi}\index{Bianchi's!commuting diagram}
\end{figure}

Specifically, starting with two fundamental solutions of equation (\ref{LinearEq}), $\Psi_{10}$ and $\Psi_{01}$, one can construct two new fundamental solutions using Darboux matrices
\begin{equation*}
M(p_{10},q_{10},p_{11},q_{11};\lambda)\equiv K_{10} \quad M(p_{01},q_{01},p_{11},q_{11};\lambda)\equiv M_{01},
\end{equation*}
namely the following
\begin{eqnarray*}
&&\mathcal{X}:=M_{01}\Psi_{01}=M_{01}K\Psi\equiv M(p_{01},q_{01},p_{11},q_{11};\lambda)M(p,q,p_{01},q_{01};\lambda)\Psi,\\
&&\mathcal{Y}:=K_{10}\Psi_{10}=K_{10}M\Psi\equiv M(p_{10},q_{10},p_{11},q_{11};\lambda)M(p,q,p_{10},q_{10};\lambda)\Psi.
\end{eqnarray*}
Imposing that these two different solutions coincide, i.e. $\mathcal{X}=\mathcal{Y}=\Psi_{11}$ (see Figure \ref{bianchi}), the following condition must hold
\begin{equation}\label{CompCond}
M_{01}K-K_{10}M=0.
\end{equation}
If the latter condition is written out explicitly, it results in algebraic relations among the various potentials involved.

We can interpret the above construction in a discrete way. Particularly, let us assume that $p$ and $q$ are functions depending not only on $x$ but also on two discrete variables $n$ and $m$, i.e. $p=p(x;n,m)$ and $q=q(x;n,m)$. Furthermore, we define the \textit{shift operators} $\mathcal{S}$ and $\mathcal{T}$ acting on a function $f=f(n,m)$ as 
\begin{equation}
\mathcal{S}f(n,m)=f(n+1,m), \qquad \mathcal{T}f(n,m)=f(n,m+1).
\end{equation}
We shall refer to $\mathcal{S}$ and $\mathcal{T}$ as the \textit{shift operators in the} $n$ and the $m$-\textit{direction}, respectively. 

Now, we expect that the shift operators $\mathcal{S}$ and $\mathcal{T}$ commute with each other and with the differential operator $\partial_x$. In particular, we have the following.
\begin{proposition}
The shift operators $\mathcal{S}$ and $\mathcal{T}$ commute with each other and also with the partial differential operator on solutions of $\mathcal{L}\Psi=0$.
\end{proposition}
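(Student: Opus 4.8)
The plan is to establish the two commutativity statements separately, since they have somewhat different flavours. The claim that $\mathcal{S}$ and $\mathcal{T}$ commute \emph{with each other} is essentially a consequence of the construction of the discrete variables themselves: $\mathcal{S}f(n,m)=f(n+1,m)$ and $\mathcal{T}f(n,m)=f(n,m+1)$ as maps on functions of the two independent discrete variables $n,m$, so $\mathcal{S}\mathcal{T}f(n,m)=f(n+1,m+1)=\mathcal{T}\mathcal{S}f(n,m)$ directly from the definitions. At the level of the fundamental solutions this is exactly the content of the Bianchi commuting diagram (Figure~\ref{bianchi}) together with the compatibility condition (\ref{CompCond}): the two a priori different ways of reaching $\Psi_{11}$ from $\Psi$ — first shift in $n$ then in $m$, versus first in $m$ then in $n$ — are forced to agree, $M_{01}K=K_{10}M$, which is precisely $\mathcal{S}\mathcal{T}\Psi=\mathcal{T}\mathcal{S}\Psi$ applied consistently.

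The more substantive part is commutativity of each shift with $\partial_x$ on solutions of $\mathcal{L}\Psi=0$. Here I would argue as follows. Fix a fundamental solution $\Psi$ of (\ref{LinearEq}) and define $\Psi_{10}=M\Psi$ where $M=M(p,q,p_{10},q_{10};\lambda)$ is the Darboux matrix; by Proposition~\ref{FS-Det} this $\Psi_{10}$ solves $\tilde{\mathcal{L}}\Psi_{10}=0$, where $\tilde{\mathcal{L}}=D_x+U(p_{10},q_{10};\lambda)$, i.e. it is the fundamental solution associated to the shifted potentials. Thus applying $\mathcal{S}$ (in the sense of passing from $\Psi$ to $\Psi_{10}$) and then $\partial_x$ gives $\partial_x(\mathcal{S}\Psi)=\partial_x\Psi_{10}=-U(p_{10},q_{10};\lambda)\Psi_{10}$. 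On the other hand, $\partial_x\Psi=-U(p,q;\lambda)\Psi$, so $\mathcal{S}(\partial_x\Psi)$ is obtained by applying the Darboux matrix $M$ to $-U(p,q;\lambda)\Psi$; one must check this equals $-U(p_{10},q_{10};\lambda)M\Psi$, i.e. that $M\,U(p,q;\lambda)=U(p_{10},q_{10};\lambda)M-D_xM$ acting on $\Psi$. But this is exactly the Darboux equation (\ref{DMeq}) of Proposition~\ref{DarbouxMatrix}, $D_xM+U_{10}M-MU=0$. Hence $\partial_x(\mathcal{S}\Psi)=\mathcal{S}(\partial_x\Psi)$ on solutions, and the identical argument with $K$ in place of $M$ (and the $(0,1)$-shifted potentials) gives $\partial_x\mathcal{T}=\mathcal{T}\partial_x$.

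I expect the main obstacle to be purely one of bookkeeping and interpretation rather than depth: one has to be careful that $\mathcal{S}$, $\mathcal{T}$ and $\partial_x$ are being regarded as operators acting on the space of functions of $(x;n,m)$, so that statements like ``$\mathcal{S}$ applied to $\Psi$'' are unambiguously the passage to the Darboux-transformed fundamental solution, and that ``commute on solutions of $\mathcal{L}\Psi=0$'' means the identity holds when restricted to that solution space (it need not, and is not claimed to, hold as an operator identity on arbitrary functions of $x$). Once this is set up, both commutativity statements reduce to identities already proved — the defining Darboux equation (\ref{DMeq}) and the compatibility condition (\ref{CompCond}) — so no new computation is needed.
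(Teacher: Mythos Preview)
Your overall strategy is exactly that of the paper: reduce the commutativity of $\mathcal{S}$ and $\mathcal{T}$ to the compatibility condition (\ref{CompCond}), and reduce the commutativity of $\mathcal{S}$ (or $\mathcal{T}$) with $\partial_x$ to the Darboux equation (\ref{DMeq}). That is the right plan.

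There is, however, a genuine bookkeeping slip in your second computation. You compute $\partial_x(\mathcal{S}\Psi)=-U_{10}\Psi_{10}=-U_{10}M\Psi$ (correct, via Proposition~\ref{FS-Det}) and then write that $\mathcal{S}(\partial_x\Psi)$ is ``obtained by applying the Darboux matrix $M$ to $-U\Psi$'', giving $-MU\Psi$. But $\mathcal{S}$ is the shift $n\mapsto n+1$, not left-multiplication by $M$; these agree on $\Psi$ by construction ($\Psi_{10}=M\Psi$) but not on arbitrary functions. With your interpretation the two sides would match only if $MU=U_{10}M$, yet you then assert ``i.e.\ that $MU=U_{10}M-D_xM$'', inserting the $D_xM$ term without it having appeared anywhere in the calculation.

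The paper avoids this by computing the two sides the other way around: it evaluates $(\mathcal{S}\Psi)_x=(M\Psi)_x$ via the Leibniz rule to get $M_x\Psi+M\Psi_x=M_x\Psi-MU\Psi$, and evaluates $\mathcal{S}\Psi_x$ by genuinely shifting the equation $\Psi_x=-U\Psi$ to $-U_{10}\Psi_{10}=-U_{10}M\Psi$. Equating these produces exactly $M_x+U_{10}M-MU=0$, which is (\ref{DMeq}). If you reorganise your argument along these lines (Leibniz on one side, shift of the linear equation on the other) the $D_xM$ term appears naturally and the proof is complete.
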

\begin{proof}
Let $\Psi$ be a solution of $\mathcal{L}\Psi=0$. For the commutativity of $\mathcal{S}$ and $\mathcal{T}$ we have that 
\begin{equation}
\mathcal{T}\mathcal{S}\Psi=\mathcal{T}(M\Psi)=M_{01}\Psi_{01}=M_{01}K\Psi,
\end{equation}
and on the other hand
\begin{equation}
\mathcal{S}\mathcal{T}\Psi=\mathcal{S}(K\Psi)=K_{10}\Psi_{10}=K_{10}M\Psi,
\end{equation}
which proves that $\left[S,T\right]\Psi=0$, due to (\ref{CompCond}).

Now, we prove the commutativity between the shift operator $\mathcal{S}$ and the partial differential operator, and the proof is exactly the same for $\mathcal{T}$. We basically need to show that $\mathcal{S}\Psi_x=(\mathcal{S}\Psi)_x$. Indeed,
\begin{equation}\label{Lrhs}
(\mathcal{S}\Psi)_x=\partial_x\Psi_{10}\stackrel{(\ref{dis-LP})}{=}(M\Psi)_x=M_x\Psi+M\Psi_x\stackrel{(\ref{L-operator})}{=}M_x\Psi-MU\Psi.
\end{equation}
On the other hand, we have that
\begin{equation}\label{Llhs}
\mathcal{S}\Psi_x\stackrel{(\ref{L-operator})}{=}-\mathcal{S}(U\Psi)=-U_{10}\Psi_{10}\stackrel{(\ref{dis-LP})}{=}-U_{10}M\Psi.
\end{equation}
Due to proposition \ref{DarbouxMatrix}, the right hand sides of (\ref{Llhs}) and (\ref{Lrhs}) are equal which completes the proof.
\end{proof}

In addition, we interpret the shifts of $p$ and $q$ with
\begin{equation}
p_{ij}=p(x;n+i,m+j),\quad q_{ij}=q(x;n+i,m+j), \quad p_{00}\equiv p,~~q_{00}\equiv q,
\end{equation}
respectively. 

In this notation, system (\ref{dis-LP}) can be considered as a discrete Lax pair, and equation (\ref{CompCond}) is nothing but its compatibility condition. Furthermore, the resulting polynomials from condition (\ref{CompCond}) define a system of partial difference equations for $p$ and $q$.

\begin{note}\normalfont
For the sake of simplicity, in the rest of the thesis, we adopt the following notation: the derivative with respect to $x$ of a scalar object with lower indices, say $\partial_x p_{ij}$, will be denoted by $p_{ij,x}$. Moreover, for a matrix with lower index, say $M_0$, with $M_{0,ij}$ we shall denote its $(i,j)$-element.
\end{note}

Now let $\textbf{u}$ be a solution of a system of difference equations, $\Delta(\textbf{u})=0$. Moreover, let $\overline{\textbf{u}}$ be given by
\begin{equation}\label{eqRe}
\overline{\textbf{u}}:=R_\epsilon (\textbf{u})= \textbf{u}+\epsilon\, r(\textbf{u}),
\end{equation}
were $\epsilon$ is an infinitesimal parameter. We have the following (\cite{MWX}).

\begin{definition}
We shall say that $R_\epsilon(\textbf{u})$ constitutes an \textit{infinitensimal symmetry}\index{infinitensimal symmetry} --or just a \textit{symmetry}\index{symmetry}-- of $\Delta(\textbf{u})=0$, if  $\,\overline{\textbf{u}}:=R_\epsilon(\textbf{u})$ is also a solution up to order $\epsilon^2$, namely
\begin{equation}\label{sym-bu}
\Delta(\overline{\textbf{u}})={\cal{O}}(\epsilon^2).
\end{equation}
\end{definition}

\begin{corollary}
Map $R_\epsilon(\textbf{u})$ defined in (\ref{eqRe}) is a symmetry of a system of difference equations, $\Delta(\textbf{u})$, if the following condition\footnote{Operator $D_{\Delta}:=\sum_{i,j\in\field{Z}}\frac{\partial \Delta}{\partial \textbf{u}_{ij}}\mathcal{S}^i\mathcal{T}^j$ is called the \textit{Fr\'echet derivative}\index{Fr\'echet derivative}.} 
\begin{equation}\label{FrechetDer}
\sum_{i,j\in\field{Z}}\frac{\partial \Delta}{\partial \textbf{u}_{ij}}\mathcal{S}^i\mathcal{T}^j r(\textbf{u})=0,
\end{equation}
is satisfied $\mod \langle \Delta(\textbf{u})=0 \rangle$.
\end{corollary}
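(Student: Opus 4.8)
The plan is to expand $\Delta(\overline{\textbf{u}})$ in powers of the infinitesimal parameter $\epsilon$ and to match the result against the definition of a symmetry given above. First I would note that, since $\overline{\textbf{u}}=R_\epsilon(\textbf{u})=\textbf{u}+\epsilon\,r(\textbf{u})$ and the shift operators $\mathcal{S},\mathcal{T}$ are linear and commute, the shifted fields satisfy $\overline{\textbf{u}}_{ij}=\mathcal{S}^i\mathcal{T}^j\overline{\textbf{u}}=\textbf{u}_{ij}+\epsilon\,\mathcal{S}^i\mathcal{T}^j r(\textbf{u})$. Since $\Delta$ is a difference polynomial, it depends on $\textbf{u}$ and only finitely many of its shifts, so all sums appearing below are finite and the Taylor expansion is an honest finite-dimensional one in the jet variables $\{\textbf{u}_{ij}\}$.

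Next I would Taylor-expand $\Delta$, regarded as a function of the independent variables $\{\textbf{u}_{ij}\}$, about the point $\{\textbf{u}_{ij}\}$:
\[
\Delta(\overline{\textbf{u}})=\Delta(\textbf{u})+\epsilon\sum_{i,j\in\field{Z}}\frac{\partial\Delta}{\partial\textbf{u}_{ij}}\,\mathcal{S}^i\mathcal{T}^j r(\textbf{u})+\mathcal{O}(\epsilon^2).
\]
Because $\textbf{u}$ is assumed to be a solution, $\Delta(\textbf{u})=0$, so the zeroth-order term vanishes, and the coefficient of $\epsilon$ is exactly the left-hand side of (\ref{FrechetDer}), i.e. the Fr\'echet derivative $D_\Delta\,r(\textbf{u})$ applied to $r(\textbf{u})$.

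Finally, the hypothesis that (\ref{FrechetDer}) holds $\mod\langle\Delta(\textbf{u})=0\rangle$ says precisely that this first-order term is zero whenever $\textbf{u}$ satisfies $\Delta(\textbf{u})=0$ together with all its shifts $\mathcal{S}^i\mathcal{T}^j\Delta(\textbf{u})=0$. Substituting, we obtain $\Delta(\overline{\textbf{u}})=\mathcal{O}(\epsilon^2)$, which by the Definition preceding the Corollary is exactly the statement that $\overline{\textbf{u}}:=R_\epsilon(\textbf{u})$ is an infinitesimal symmetry of $\Delta(\textbf{u})=0$, completing the proof.

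This argument is essentially routine; the only points that need care — rather than constituting a genuine obstacle — are (i) justifying that the $\mathcal{O}(\epsilon^2)$ remainder is uniform, which follows because $\Delta$ involves only finitely many shifts, and (ii) reading "$\mod\langle\Delta=0\rangle$" correctly, namely that one is permitted to use $\Delta=0$ and all its shifts to simplify $D_\Delta\,r(\textbf{u})$ before concluding it vanishes. I would spell out (ii) explicitly, since it is the step where the discrete (as opposed to purely pointwise) nature of the problem enters.
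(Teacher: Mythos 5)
Your proposal is correct and follows essentially the same route as the paper: a first-order Taylor expansion of $\Delta(\textbf{u}+\epsilon\,r(\textbf{u}))$ in the jet variables $\{\textbf{u}_{ij}\}$, identifying the $\epsilon$-coefficient with the Fr\'echet derivative $D_\Delta r(\textbf{u})$ and invoking $\Delta(\textbf{u})=0$ together with its shifts. The paper's proof is a one-line version of the same expansion (phrased as the converse implication, which is immediate since the expansion gives the equivalence), so your write-up simply makes explicit the bookkeeping of shifted increments and the reading of the condition $\mod\langle\Delta(\textbf{u})=0\rangle$.
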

\begin{proof}
If we expand $\Delta(\textbf{u}+\epsilon r(\textbf{u}))$ in series, then equating the $\epsilon$-terms, (\ref{sym-bu}) implies (\ref{FrechetDer}).
\end{proof}

In the next section we shall see for particular examples that, the derivation of Darboux matrices gives rise to particular differential-difference equations\index{differential-difference equation(s)} which possess first integrals.\index{first integral(s)} In some cases, the latter may be used to reduce the number of the dependent variables and derive scalar equations; some of them are of Toda type and some others are defined on a stencil of six or seven points. Additionally, the form of these systems allows us to pose an initial value problem on the staircase.

The derived system of differential-difference equations\index{differential-difference equation(s)} is, in general, of the form
\begin{equation}\label{Baecklund-x}
\mathcal{P}(p,q,\partial_xp,\partial_xq,\partial_xp_{10},\partial_xq_{10})=0.
\end{equation}
However, in all the examples of this thesis, system (\ref{Baecklund-x}) boils down to an evolutionary differential-difference system of equations of the form
\begin{equation}\label{Baecklund-x_local}
\textbf{p}_x=\mathcal{\textbf{Z}}(\textbf{p},\textbf{p}_{10}), \quad \textbf{p}=(p,q),
\end{equation}
and its shifted consequences.

For the above system of equations we have the following.
\begin{proposition}\label{d_sym}
The differential-difference equations (\ref{Baecklund-x_local})\index{differential-difference equation(s)} constitute generators of generalised symmetries\index{generalised symmetries} of the associated system of difference equations\index{difference equation(s)}.
\end{proposition}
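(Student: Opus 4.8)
The plan is to verify directly the Fr\'echet-derivative criterion of the Corollary above. Write the system of difference equations produced by (\ref{CompCond}) as $\Delta(\textbf{u})=M_{01}K-K_{10}M=0$, and take as symmetry characteristic the right-hand side $r(\textbf{u})=\textbf{Z}(\textbf{p},\textbf{p}_{10})$ of (\ref{Baecklund-x_local}) together with its lattice shifts. Then it suffices to show that the total $x$-derivative of $\Delta$ along this flow vanishes $\mod\langle\Delta(\textbf{u})=0\rangle$. What makes this work is that (\ref{Baecklund-x_local}) is nothing but the component form of the Darboux-matrix equation (\ref{DMeq}) of Proposition \ref{DarbouxMatrix}, so the flow it generates is precisely the one under which every Darboux matrix in the Bianchi diagram of Figure \ref{bianchi} evolves by $D_xM=MU-U_{10}M$ and the correspondingly shifted relations; equivalently, both the discrete system (\ref{CompCond}) and the differential-difference system (\ref{Baecklund-x_local}) are encoded in the single, triply compatible linear problem $\mathcal{L}\Psi=0$, $\Psi_{10}=M\Psi$, $\Psi_{01}=K\Psi$.

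Concretely, I would differentiate $\Delta=M_{01}K-K_{10}M$ with respect to $x$ to obtain $M_{01,x}K+M_{01}K_x-K_{10,x}M-K_{10}M_x$, and then substitute Proposition \ref{DarbouxMatrix} for the four Darboux matrices: $M_x=MU(p,q;\lambda)-U(p_{10},q_{10};\lambda)M$ and $K_x=KU(p,q;\lambda)-U(p_{01},q_{01};\lambda)K$, while $M_{01,x}$ and $K_{10,x}$ are the $\mathcal{T}$- and $\mathcal{S}$-shifts of these identities (legitimate because $\mathcal{S}$ and $\mathcal{T}$ commute with $\partial_x$ on solutions of $\mathcal{L}\Psi=0$, as already established), namely $M_{01,x}=M_{01}U(p_{01},q_{01};\lambda)-U(p_{11},q_{11};\lambda)M_{01}$ and $K_{10,x}=K_{10}U(p_{10},q_{10};\lambda)-U(p_{11},q_{11};\lambda)K_{10}$. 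The four terms carrying $U(p_{10},q_{10};\lambda)$ and $U(p_{01},q_{01};\lambda)$ cancel in pairs, and the remainder telescopes to
\[
\partial_x\bigl(M_{01}K-K_{10}M\bigr)=\bigl(M_{01}K-K_{10}M\bigr)U(p,q;\lambda)-U(p_{11},q_{11};\lambda)\bigl(M_{01}K-K_{10}M\bigr),
\]
that is, $\partial_x\Delta=\Delta\,U(p,q;\lambda)-U(p_{11},q_{11};\lambda)\,\Delta$. Hence each entry of $\partial_x\Delta$ is a linear combination of the entries of $\Delta$, so $\partial_x\Delta\equiv0\mod\langle\Delta(\textbf{u})=0\rangle$; this is exactly condition (\ref{FrechetDer}), and by the Corollary the flow (\ref{Baecklund-x_local}) and its shifts generate a symmetry of $\Delta(\textbf{u})=0$.

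The main obstacle is the bookkeeping that identifies the flow (\ref{Baecklund-x_local}) and its shifts with the matrix equations used above. One has to check that (a) the derived differential-difference relations (\ref{Baecklund-x}) really do reduce to the evolutionary form (\ref{Baecklund-x_local}) in each of the cases treated in Section 4, so that $r(\textbf{u})$ is a genuine local characteristic; (b) the shifted copies $M_{01,x}$, $K_{10,x}$ are governed by the correspondingly shifted evolutionary equations, which is where commutativity of $\mathcal{S},\mathcal{T}$ with $\partial_x$ is invoked; and (c) the two Darboux matrices $M$ and $K$ carry their respective B\"acklund parameters (the analogues of $a$ and $b$ in the quad-graph picture), so that the correct evolutionary equation is used for each lattice direction. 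Once these identifications are secured, the proof reduces to the short telescoping computation displayed above, requiring no further ingredients.
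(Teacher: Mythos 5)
Your proposal is correct and follows essentially the same route as the paper: differentiate $M_{01}K-K_{10}M$ with respect to $x$, substitute the four instances of Proposition \ref{DarbouxMatrix}, and observe that the result reduces to $\Delta U-U_{11}\Delta$, which vanishes modulo $\Delta=0$. The telescoping computation you display is exactly the paper's proof; the additional remarks on the Fr\'echet-derivative criterion and on shift/derivative commutativity are consistent framing rather than a different argument.
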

\begin{proof}
Indeed,
\begin{equation}
\frac{d}{dx}(M_{01}K-K_{10}M)=M_{01,x}K+M_{01}K_x-K_{10,x}M-K_{10}M_x.
\end{equation}
Now, according to Proposition \ref{DarbouxMatrix} we have
\begin{eqnarray*}
&&M_x=MU-U_{10}M, \quad M_{01,x}=M_{01}U_{01}-U_{11}M_{01} \\
&&K_x=KU-U_{01}K, \quad K_{10,x}=K_{10}U_{10}-U_{11}K_{10}.
\end{eqnarray*}
Therefore,
\begin{equation}
\frac{d}{dx}(M_{01}K-K_{10}M)=U_{11}(K_{10}M-M_{01}K)+(M_{01}K-K_{10}M)U.
\end{equation}
The right-hand-side of the above is zero $\mod \langle K_{10}M-M_{01}K=0 \rangle$ which proves the statement.
\end{proof}

\begin{note}\normalfont
System (\ref{Baecklund-x_local}) is nothing but the $x$-part (spatial-part) of a B\"acklund transfo- rmation\index{B\"acklund transformation(s)} between solutions of the PDE associated to the Lax operator.
\end{note}

In the next section we derive Darboux matrices for particular PDEs of NLS type; the NLS equation, the DNLS equation and a deformation of the DNLS equation. 

In all the above cases, we find more than one Darboux matrix\index{Darboux matrix(-ces)}\index{Darboux matrix(-ces)}. At this point it is worth mentioning that, the interpretation of any pair of Darboux matrices as a discrete Lax pair does not always lead to a discrete integrable system.\index{integrable system(s)!discrete} On the contrary, in several cases, the compatibility condition (\ref{CompCond}) yields a trivial system. 

\section{NLS type equations}
In this section we study the Darboux transformations\index{Darboux transformation(s)} related to the NLS type equations discussed earlier.

As an illustrative example, we start with the NLS equation.

\subsection{The nonlinear Schr\"odinger equation}\label{sec-NLS}
The Lax operator \index{Lax operator(s)!for NLS equation}\index{nonlinear Schr\"odinger (NLS)!Lax operator} in this case is given by
\begin{subequations}\label{NLS-U}
\begin{align}
&\mathcal{L} := D_x + U(p,q;\lambda)=D_x +\lambda U^{1}+U^{0},
\intertext{where $U^{1}$ and $U^0$ are given by}
&U^1\equiv \sigma_3={\rm{diag}}(1,-1),\quad U^0=\left(\begin{array}{cc} 0 & 2p \\ 2q & 0\end{array}\right).
\end{align}
\end{subequations}
Operator $\mathcal{L}$ is the spatial part of the Lax pair for the nonlinear Schr\"odinger equation.\index{Lax pair!spatial part ($x$-part)!for NLS equation}

The NLS equation has the following scaling symmetry
\begin{equation}
p_{10} = \alpha \beta^{-1} p,\quad q_{10} = \beta \alpha^{-1} q.
\end{equation}
A spectral parameter independent Darboux matrix\index{nonlinear Schr\"odinger (NLS)!spectral parameter independent Darboux matrix}\index{Darboux matrix(-ces)!for NLS} corresponding to the above symmetry is given by the following constant matrix
\begin{equation}
M = \left(\begin{array}{cc} \alpha & 0 \\ 0 & \beta \end{array} \right),\quad \alpha \beta \ne 0.
\end{equation}

Now, we seek Darboux matrices which depend on the spectral parameter, $\lambda$. The simplest $\lambda$-dependence is a linear dependence. In particular, we have the following.

\begin{proposition}\label{M-NLS-Prop}
Let $M$ be a Darboux matrix for the Lax operator (\ref{NLS-U}) and suppose it is linear in $\lambda$. Let also $M$ define a Darboux transformation\index{Darboux transformation} of rank 1. Then, up to a gauge transformation,\index{gauge transformation} $M$ is given by\index{nonlinear Schr\"odinger (NLS)!Darboux matrix}\index{Darboux matrix(-ces)!for NLS}
\begin{equation}\label{M-NLS}
M(p,q_{10},f):=\lambda \left(\begin{array}{cc}1 & 0\\0 & 0\end{array}\right)+\left(\begin{array}{cc}f & p\\q_{10} & 1 \end{array}\right),
\end{equation}
where the potentials $p$ and $q$ satisfy the following differential-difference equations\index{differential-difference equation(s)}
\begin{subequations} \label{nls-comp-cond}
\begin{eqnarray}
&& \partial_x f=2 (pq-p_{10}q_{10}),\label{nls-comp-cond-1}\\
&& \partial_x p =2 (p f -p_{10}), \label{nls-comp-cond-2}\\
&& \partial_x q_{10}=2 ( q-q_{10}f).\label{nls-comp-cond-3}
\end{eqnarray}
\end{subequations}
Moreover, matrix (\ref{M-NLS}) degenerates to
\begin{equation}\label{M-degen}
M_c(p,f)=\lambda \left(\begin{array}{cc}1 & 0\\0 & 0\end{array}\right)+\left(\begin{array}{cc}f & p\\\frac{c}{p} & 0 \end{array}\right),\quad f=\frac{p_x}{2p}.
\end{equation}
\end{proposition}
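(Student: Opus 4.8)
The plan is to use Proposition~\ref{DarbouxMatrix}: a Darboux matrix $M$ for the operator (\ref{NLS-U}) must satisfy $D_xM+U_{10}M-MU=0$. First I would write down the most general matrix linear in $\lambda$,
\begin{equation*}
M=\lambda A+B,\qquad A,B\in\text{Mat}_{2\times2}(\mathbb{C}),
\end{equation*}
with entries depending on $x$ through the potentials $p,q,p_{10},q_{10}$, and substitute into (\ref{DMeq}). Collecting powers of $\lambda$ gives three matrix equations (from $\lambda^2$, $\lambda^1$, $\lambda^0$). The $\lambda^2$-term yields $\sigma_3 A=A\sigma_3$, forcing $A$ to be diagonal, $A=\diag(a_1,a_2)$; the $\lambda^1$-term gives $\partial_xA+[U^0_{10}A - AU^0] + \sigma_3 B - B\sigma_3=0$, which (since $A$ is constant by the $\lambda^2$ analysis, or rather forces $A$ constant and diagonal) pins down the off-diagonal part of $B$ in terms of $p,q_{10},a_1,a_2$ and forces $a_1,a_2$ constant. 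Imposing the rank~1 condition means the leading coefficient $A$ has rank one; up to a gauge transformation (conjugation by a constant invertible matrix, which by the composition proposition still yields a Darboux matrix and preserves the form of $U$) we may normalise $A=\diag(1,0)$.

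Next I would feed $A=\diag(1,0)$ back into the $\lambda^1$ and $\lambda^0$ equations. The $\lambda^1$-equation then determines the off-diagonal entries of $B$: it forces $B_{12}=p$ and $B_{21}=q_{10}$ (this is exactly where the ``new'' potential $q_{10}$ enters the $(2,1)$-slot and the ``old'' potential $p$ enters the $(1,2)$-slot — a reflection of the rank~1 / triangular structure), while leaving the diagonal entries $B_{11}=:f$ and $B_{22}=:g$ to be constrained. Using a residual gauge freedom (the diagonal conjugations that preserve $A=\diag(1,0)$, together with the constant Darboux matrix $\diag(\alpha,\beta)$) one normalises $B_{22}=1$, giving the claimed form (\ref{M-NLS}). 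The $\lambda^0$-equation, $\partial_xB+U^0_{10}B-BU^0=0$, then becomes four scalar equations: the $(1,1)$, $(1,2)$, $(2,1)$ entries reproduce (\ref{nls-comp-cond-1}), (\ref{nls-comp-cond-2}), (\ref{nls-comp-cond-3}) respectively, and the $(2,2)$ entry is either identically satisfied or follows from the others together with $\partial_x\det M=0$ (Proposition~\ref{FS-Det}); I would check that consistency explicitly.

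Finally, for the degenerate case (\ref{M-degen}): here one looks for solutions of (\ref{nls-comp-cond}) in which the Darboux matrix $M$ itself drops rank for some (all) $\lambda$, equivalently $\det M\equiv 0$. Since $\det M = \lambda\cdot 1\cdot(\text{sub-block}) + \det B = \lambda(B_{22}) + (fB_{22}-pq_{10})$-type expression — more precisely with the form (\ref{M-NLS}), $\det M=(\lambda+f)\cdot1 - pq_{10}$ evaluated appropriately — setting the determinant to vanish identically in $\lambda$ forces $B_{22}=0$ and $f\cdot0 - pq_{10}=0$, i.e. the $(2,2)$-entry degenerates to $0$ and then (\ref{nls-comp-cond-3}) becomes an algebraic constraint forcing $q_{10}=c/p$ for a constant $c$ (an $x$-independent quantity, by integrating (\ref{nls-comp-cond-1})), while (\ref{nls-comp-cond-2}) with $p_{10}$ absorbed reduces to $f=p_x/(2p)$. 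I would present this as: impose $\det M=\lambda^0$-constant, solve the resulting constraints on the system (\ref{nls-comp-cond}), and read off (\ref{M-degen}).

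The main obstacle I anticipate is bookkeeping the gauge freedom correctly — i.e. being precise about which conjugations are allowed (they must preserve both the AKNS form of $\mathcal{L}$ and the rank-1 normalisation $A=\diag(1,0)$), and checking that after using this freedom the $(2,2)$-entry can indeed be set to $1$ without losing generality and that the remaining $\lambda^0$-equations are not over-determined. The actual computation of the $\lambda$-graded equations and their reduction to (\ref{nls-comp-cond}) is routine linear algebra once the structure is fixed.
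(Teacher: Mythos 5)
Your main argument follows the paper's proof essentially step for step: expand $M=\lambda M_1+M_0$, read off the $\lambda^2$, $\lambda^1$, $\lambda^0$ equations from (\ref{DMeq}), conclude $M_1$ is constant diagonal, use the rank-1 hypothesis to normalise $M_1=\diag(1,0)$ up to gauge, determine $M_{0,12}=p$ and $M_{0,21}=q_{10}$ from the $\lambda^1$ equation, and extract (\ref{nls-comp-cond}) from the $\lambda^0$ equation. One ordering point: the $(2,2)$ entry of $M_0$ cannot be set to $1$ by gauge freedom alone. You first need the $(2,2)$ component of the $\lambda^0$ equation, $\partial_x M_{0,22}=2(M_{0,21}p-q_{10}M_{0,12})=2(q_{10}p-q_{10}p)=0$, which shows $M_{0,22}=\alpha$ is a constant; only then does composing with the constant diagonal Darboux matrix rescale $\alpha\to1$, and only when $\alpha\neq0$. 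That dichotomy is precisely where (\ref{M-degen}) comes from.

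The genuine error is in your treatment of the degenerate case. You characterise it by $\det M\equiv0$, but for $M_c$ in (\ref{M-degen}) one has $\det M_c=(\lambda+f)\cdot 0-p\cdot(c/p)=-c$, a nonzero constant for generic $c$. The correct characterisation is that the $\lambda$-coefficient of $\det M$, namely $M_{0,22}=\alpha$, vanishes, i.e.\ $\det M$ is independent of $\lambda$ --- not identically zero. Imposing $\det M\equiv0$ literally would additionally force $pq_{10}=0$, which is not the degenerate case and would derail the argument. Relatedly, $q_{10}=c/p$ is not an algebraic consequence of (\ref{nls-comp-cond-3}): with $\alpha=0$ the $\lambda^0$ equations reduce to the pair of ODEs $p_x=2fp$ and $q_{10,x}=-2fq_{10}$, whence $(pq_{10})_x=0$ and $pq_{10}=c$ after one integration with respect to $x$; the relation $f=p_x/(2p)$ then comes from the first of these ODEs, not from integrating the $f$-equation. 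With these two repairs your proof coincides with the paper's.
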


\begin{proof}
Let us suppose that $M$ is of the form $M=\lambda M_{1}+M_{0}$. Substitution of $M$ to equation (\ref{DMeq}) implies a second order algebraic equation in $\lambda$. Equating the coefficients of the several powers of $\lambda$ equal to zero, we obtain the following system of equations
\begin{subequations}
\begin{align}
\lambda^2: & \quad \left[\sigma_3,M_1\right]=0,\label{M-syst-1}\\
\lambda^1:   & \quad M_{1,x}+\left[\sigma_3,M_0\right]+U^{0}_{10}M_1-M_1U^0=0,\label{M-syst-2}\\
\lambda^0: & \quad M_{0,x}+U^0_{10}M_0-M_0U^0=0,\label{M-syst-3}
\end{align}
\end{subequations}
where with $\left[\sigma_3,M_1\right]$ we denote the commutator of $\sigma_3$ and $M_1$.

Equation (\ref{M-syst-1}) implies that $M_1$ must be diagonal, i.e. $M_1=\diag(c_1,c_2)$. Then, we substitute $M_1$ to (\ref{M-syst-2}).

Now, for simplicity of the notation, we denote the $(1,1)$ and $(2,2)$ entries of $M_0$ by $f$ and $g$ respectively. Then, it follows from equation (\ref{M-syst-3}) that the entries of matrix $M_0$ must satisfy the following equations 
\begin{subequations} \label{fgpq10}
\begin{eqnarray}
\partial_x f&=&2 (M_{0,12}q-p_{10}M_{0,21}),\label{fgpq10-1}\\
\partial_x g&=&2 (M_{0,21}p-q_{10}M_{0,12}),\label{fgpq10-2}\\
\partial_x M_{0,12} &=&2 (p f -g p_{10}), \label{fgpq10-3}\\
\partial_x M_{0,21}&=&2 ( q g-q_{10}f).\label{fgpq10-4}
\end{eqnarray}
\end{subequations}

The off-diagonal part of (\ref{M-syst-2}) implies that the $(1,2)$ and $(2,1)$ entries of matrix $M_0$ are given by
\begin{equation}
M_{0,12}=c_1 p-c_2 p_{10}, \quad M_{0,21}=c_1q_{10}-c_2 q.
\end{equation}
Additionally, from the diagonal part of (\ref{M-syst-2}) we deduce that $c_{1,x}=c_{2,x}=0$. Since the Darboux transformation is of rank one, namely $\rank M_1=1$, one of the constants $c_i$, $i=1,2$ must be zero. Thus, after rescaling we can choose either $c_1=1$, $c_2=0$ or $c_1=0$, $c_2=-1$. These two choices correspond to gauge equivalent Darboux matrices.

Indeed, the choice $c_1=1$, $c_2=0$ implies $M_{0,12}=p$ and $M_{0,21}=q_{10}$. Moreover, (\ref{fgpq10-2}) implies that $g=\text{const.}=\alpha$, i.e.
\begin{equation}
M_0=\left(\begin{array}{cc}f & p\\
q_{10} & \alpha \end{array}\right).
\end{equation}
In this case the Darboux matrix is given by 
\begin{equation}\label{M-NLS-const.}
\lambda \left(\begin{array}{cc}1 & 0\\0 & 0\end{array}\right)+\left(\begin{array}{cc}f & p\\q_{10} & \alpha \end{array}\right),
\end{equation} 
where, according to (\ref{fgpq10}), its entries satisfy
\begin{equation}\label{nls-comp-cond-alpha}
\partial_x f=2 (pq-p_{10}q_{10}),\quad \partial_x p =2 (p f -\alpha p_{10}),\quad \partial_x q_{10}=2 ( cq -q_{10}f).
\end{equation}
Now, if $\alpha\neq 0$, it can be rescaled to $\alpha=1$ and thus the Darboux matrix in this case is given by (\ref{M-NLS}) where its entries obey (\ref{nls-comp-cond}).

Similarly, the second choice, $c_1=0$, $c_2=-1$, leads to the following Darboux matrix
\begin{equation}\label{M-NLS-2}
N(p_{10},q,g)=\lambda \left(\begin{array}{cc}0 & 0\\0 & -1\end{array}\right)+\left(\begin{array}{cc}1 & p_{10}\\q & g \end{array}\right).
\end{equation}
The above is gauge equivalent to (\ref{M-NLS}), since $\sigma_1 N(p_{10},q,g) \sigma_1^{-1}$ is of the form (\ref{M-NLS}).

Therefore, a linear in $\lambda$ Darboux matrix is given by (\ref{M-NLS}) where its entries obey the differential-difference equations\index{differential-difference equation(s)} (\ref{nls-comp-cond}).

In the case where $\alpha=0$, from (\ref{nls-comp-cond-alpha}) we deduce
\begin{equation}\label{pxq}
p_x=2fp,\quad q_{10,x}=-2fq_{10}.
\end{equation}
Thus, the Darboux matrix in this case is given by
\begin{equation}
M(p,q_{10},f):=\lambda \left(\begin{array}{cc}1 & 0\\0 & 0\end{array}\right)+\left(\begin{array}{cc}f & p\\q_{10} & 0 \end{array}\right).
\end{equation}
In addition, after an integration with respect to $x$, equations (\ref{pxq}) imply that $q_{10}=c/p$. Hence, the Darboux matrix in this case is given by (\ref{M-degen}).
\end{proof}

It is straightforward to show that system (\ref{nls-comp-cond}) admits the following first integral\index{first integral(s)}
\begin{equation} \label{nls-const}
\partial_x \left(f-p\,q_{10} \right)\,=\,0\,.
\end{equation}
which implies that $\partial_x\det M=0$.


\subsection{The derivative nonlinear Schr\"odinger equation: $\field{Z}_2$-reduction group}
The Lax operator in this case is given by
\begin{subequations} \label{SL2-Lax-Op} \index{Lax operator(s)!for DNLS equation}\index{derivative nonlinear Schr\"odinger (DNLS)!Lax operator}
\begin{align}
&\mathcal{L}=D_x+\lambda^{2} U^2+\lambda U^1,
\intertext{where}
U^2=\sigma_3=&\diag(1,-1) \quad \text{and} \quad U^1=\left(\begin{array}{cc} 0 & 2p \\ 2q & 0\end{array}\right).
\end{align}
\end{subequations}
This is the spatial part of the Lax pair for the DNLS\index{Lax pair!spatial part ($x$-part)!for DNLS equation} equation (\ref{DNLS}) \index{derivative nonlinear Schr\"odinger (DNLS)!equation}, and it is invariant under the transformation
\begin{equation}\label{reduction-Z2}\index{$\field{Z}_2$-reduction group}
s_1(\lambda): \mathcal{L}(\lambda) \rightarrow \sigma_{3}\mathcal{L}(-\lambda) \sigma_{3}.
\end{equation}
As a matter of fact, the above involution generates the reduction group which is isomorphic to the $\field{Z}_2$ group.

As in the case of the NLS equation, the DNLS equation has the following scaling symmetry
\begin{equation}
p_{10} = \alpha \beta^{-1} p,\quad q_{10} = \beta \alpha^{-1} q.
\end{equation}
A spectral parameter independent Darboux matrix\index{nonlinear Schr\"odinger (NLS)!spectral parameter independent Darboux matrix}\index{Darboux matrix(-ces)!for NLS} corresponding to the above symmetry is given by the following constant matrix
\begin{equation}\label{cDar}
M = \left(\begin{array}{cc} \alpha & 0 \\ 0 & \beta \end{array} \right),\quad \alpha \beta \ne 0.
\end{equation}

Now, we seek Darboux matrices with the same $\lambda$-dependence as the non-differential part of (\ref{SL2-Lax-Op}), namely of the form
\begin{equation}\label{M-Z2-gnef}
M=\lambda^2 M_2+\lambda M_1+M_0.
\end{equation}

\begin{lemma}\label{M2M1M0}
A second order matrix polynomial in $\lambda$, of the form (\ref{M-Z2-gnef}), is invariant under the involution (\ref{reduction-Z2}), iff $M_2$ and $M_0$ are diagonal matrices and $M_1$ is off-diagonal.
\end{lemma}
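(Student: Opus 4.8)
The plan is to carry out a direct computation of the conjugation condition $M(\lambda) = \sigma_3 M(-\lambda)\sigma_3$ imposed by the involution \eqref{reduction-Z2}. First I would write $M(\lambda) = \lambda^2 M_2 + \lambda M_1 + M_0$ and compute $\sigma_3 M(-\lambda)\sigma_3 = \lambda^2 \sigma_3 M_2 \sigma_3 - \lambda \sigma_3 M_1 \sigma_3 + \sigma_3 M_0 \sigma_3$, using that $\sigma_3^2 = I$. Since $M(\lambda)$ is required to equal this for all $\lambda$, and the two sides are polynomials in $\lambda$ of degree two, I would equate coefficients of $\lambda^2$, $\lambda^1$, $\lambda^0$ separately to obtain the three matrix equations
\begin{subequations}
\begin{align}
M_2 &= \sigma_3 M_2 \sigma_3,\\
M_1 &= -\sigma_3 M_1 \sigma_3,\\
M_0 &= \sigma_3 M_0 \sigma_3.
\end{align}
\end{subequations}

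Next I would recall the elementary fact about conjugation by $\sigma_3 = \diag(1,-1)$: for a $2\times 2$ matrix $A = \begin{pmatrix} a & b \\ c & d\end{pmatrix}$ one has $\sigma_3 A \sigma_3 = \begin{pmatrix} a & -b \\ -c & d\end{pmatrix}$. Hence $\sigma_3 A \sigma_3 = A$ holds if and only if $b = c = 0$, i.e. $A$ is diagonal, and $\sigma_3 A \sigma_3 = -A$ holds if and only if $a = d = 0$, i.e. $A$ is off-diagonal (anti-diagonal). Applying this observation to the three equations above yields immediately that $M_2$ and $M_0$ are diagonal while $M_1$ is off-diagonal, which is the "only if" direction. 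For the "if" direction I would simply note that the same three equations are then satisfied by construction, so reversing the coefficient-matching shows $M(\lambda) = \sigma_3 M(-\lambda)\sigma_3$, i.e. $M$ is invariant under \eqref{reduction-Z2}.

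There is essentially no hard step here: the statement is a bookkeeping consequence of matching powers of $\lambda$ together with the trivial linear-algebra lemma on $\sigma_3$-conjugation. The only thing to be slightly careful about is spelling out that "invariant under the involution \eqref{reduction-Z2}" means precisely the identity $M(\lambda) = \sigma_3 M(-\lambda)\sigma_3$ of matrix-valued functions of $\lambda$, and that because both sides are polynomials of degree at most two in $\lambda$, equality as functions is equivalent to equality of the three coefficient matrices — this is where one uses that we are comparing polynomials, not general functions. I would present the proof in exactly this order: expand, match coefficients, invoke the $\sigma_3$-conjugation lemma, and observe that each implication is reversible.
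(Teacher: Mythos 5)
Your proof is correct and is precisely the computation the paper has in mind — its own "proof" consists of the single remark that the claim is straightforward once one demands $M(\lambda)=\sigma_3 M(-\lambda)\sigma_3$, and your expansion, coefficient-matching in powers of $\lambda$, and appeal to the elementary fact that $\sigma_3 A\sigma_3 = A$ (resp.\ $-A$) forces $A$ diagonal (resp.\ off-diagonal) simply fills in those details. No gaps; both directions are handled as the paper intends.
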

\begin{proof}
It is straightforward if we demand that $M$ satisfies the condition: $M(\lambda)=\sigma_3M(-\lambda)\sigma_3$.
\end{proof}

As mentioned earlier, we shall restrict ourselves to Darboux transformations of rank 1, and we have the following.

\begin{proposition}\label{Darboux-DNLScase}
Let $M$ in (\ref{M-Z2-gnef}) be a Darboux matrix for the Lax operator (\ref{SL2-Lax-Op}) that is invariant under the involution (\ref{reduction-Z2}) and $\rank{M_2}=1$. Then, up to a gauge transformation\index{gauge transformation}, $M$ is given by
\begin{equation} \label{DT-sl2-gen}\index{derivative nonlinear Schr\"odinger (DNLS)!Darboux matrix}\index{Darboux matrix(-ces)!for DNLS}
M(p,q_{10},f;c_1,c_2) := \lambda^{2}\left(\begin{array}{cc} f & 0\\ 0 & 0\end{array}\right)+\lambda\left(\begin{array}{cc} 0 & fp\\ fq_{10} & 0\end{array}\right)+\left(\begin{array}{cc} c_1 & 0\\ 0 & c_2 \end{array}\right),
\end{equation}
where $p$ and $q$ satisfy the following differential-difference equations\index{differential-difference equation(s)}
\begin{subequations} \label{sl2-D-sym-gen}
\begin{align}
\partial_x f &= 2 f \left(pq-p_{10}q_{10} \right),\label{sl2-D-sym-gen-1}\\
\partial_x p &= 2 p \left(p_{10}q_{10}-pq \right) - 2\frac{c_2 p_{10} - c_1 p}{f},\label{sl2-D-sym-gen-2}\\
\partial_x q_{10} &= 2 q_{10}\left(p_{10}q_{10}-pq \right) - 2\frac{c_1 q_{10}- c_2 q}{f}.\label{sl2-D-sym-gen-3}
\end{align}
\end{subequations}
\end{proposition}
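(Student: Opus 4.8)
The plan is to mirror the proof of Proposition~\ref{M-NLS-Prop}, adjusted for the quadratic $\lambda$-dependence of the operator (\ref{SL2-Lax-Op}) and for the $\field{Z}_2$-symmetry (\ref{reduction-Z2}). First I would invoke Lemma~\ref{M2M1M0}: invariance of $M=\lambda^2M_2+\lambda M_1+M_0$ under the involution (\ref{reduction-Z2}) forces $M_2=\diag(a_1,a_2)$ and $M_0=\diag(c_1,c_2)$ to be diagonal while $M_1$ is off-diagonal; denote the $(1,2)$- and $(2,1)$-entries of $M_1$ by $b_1$ and $b_2$.

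Next I would substitute this ansatz into the Darboux equation (\ref{DMeq}), with $U=\lambda^2\sigma_3+\lambda U^1$ and $U_{10}=\lambda^2\sigma_3+\lambda U^1_{10}$. Since $D_xM$ has degree two in $\lambda$ whereas $U_{10}M-MU$ has degree four, equating to zero the coefficients of $\lambda^4,\dots,\lambda^0$ yields: at order $\lambda^4$, $[\sigma_3,M_2]=0$, which is automatic; at order $\lambda^0$, $M_{0,x}=0$, so $c_1,c_2$ are constants; at order $\lambda^3$, two algebraic relations that solve to $b_1=a_1p-a_2p_{10}$ and $b_2=a_1q_{10}-a_2q$; at order $\lambda^2$, the (diagonal) relation $M_{2,x}+U^1_{10}M_1-M_1U^1=0$; and at order $\lambda^1$, the (off-diagonal) relation $M_{1,x}+U^1_{10}M_0-M_0U^1=0$. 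Each order splits cleanly into its diagonal or its off-diagonal part, so no information is lost.

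Substituting the expressions for $b_1,b_2$ into the $\lambda^2$-order relation and simplifying, one finds that each diagonal entry of $M_2$ satisfies $\partial_x a_i=\pm 2a_i(pq-p_{10}q_{10})$. The hypothesis $\rank M_2=1$ forces exactly one of $a_1,a_2$ to vanish. Taking $a_2=0$ and writing $f:=a_1$ (necessarily nonzero), the matrix $M$ acquires precisely the form (\ref{DT-sl2-gen}) and the $\lambda^2$-order relation becomes (\ref{sl2-D-sym-gen-1}); the opposite choice $a_1=0$ gives a Darboux matrix gauge-equivalent to this one by conjugation with $\sigma_1$, exactly as in Proposition~\ref{M-NLS-Prop}, which accounts for the phrase ``up to a gauge transformation'' in the statement. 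Finally, inserting $b_1=fp$ and $b_2=fq_{10}$ into the $\lambda^1$-order relations $\partial_x b_1+2c_2p_{10}-2c_1p=0$ and $\partial_x b_2+2c_1q_{10}-2c_2q=0$, and eliminating $f_x$ by means of (\ref{sl2-D-sym-gen-1}), yields (\ref{sl2-D-sym-gen-2}) and (\ref{sl2-D-sym-gen-3}); the division by $f$ is legitimate precisely because $\rank M_2=1$ keeps $f\neq0$.

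I do not expect a genuine obstacle here: the whole argument reduces to elementary linear algebra in $\mathfrak{sl}(2,\field{C})$ together with a single integration in $x$ (for $M_{0,x}=0$). The two delicate points are the sign-bookkeeping across the five orders of $\lambda$, and the verification that the two rank-one branches $a_1=0$ and $a_2=0$ are gauge-equivalent; this last point is the only step that appeals to the freedom in the definition (\ref{DarbouxDef}) rather than to a direct computation.
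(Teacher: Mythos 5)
Your proof is correct and follows essentially the same route as the paper's: invoke Lemma \ref{M2M1M0} for the structure of $M$, expand equation (\ref{DMeq}) order by order in $\lambda$, read off $M_1$ from the $\lambda^3$-coefficient, the evolution of $f$ from the $\lambda^2$-coefficient, and the remaining differential-difference equations from the $\lambda^1$-coefficient after eliminating $f_x$. The only cosmetic difference is that you keep both diagonal entries of $M_2$ general until the rank condition is applied, whereas the paper fixes $M_2=\diag(f,0)$ at the outset; the gauge-equivalence of the two rank-one branches is handled identically.
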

\begin{proof}
According to Lemma (\ref{M2M1M0}), matrix $M_2$ should be diagonal. Additionally, since $\rank{M_2}=1$, we consider $M_2$ to be of the form $M_2=\diag(f,0)$. The choise $M_2=\diag(0,g)$ leads to a gauge equivalent Darboux transformation.

Now, \ref{DMeq} implies that $M$ satisfies the following system of equations
\begin{subequations}
\begin{align}
\lambda^4: & \quad \left[\sigma_3,M_2\right]=0,\label{MZ2-sys-1}\\
\lambda^3: & \quad \left[\sigma_3,M_1\right]+U^1_{10}M_2-M_2U^1=0,\label{MZ2-sys-2}\\
\lambda^2: & \quad M_{2,x}+\left[\sigma_3,M_0\right]+U^1_{10}M_1-M_1U^1=0,\label{MZ2-sys-3}\\
\lambda^1: & \quad M_{1,x}+U^1_{10}M_0-M_0U^1=0,\label{MZ2-sys-4}\\
\lambda^0: & \quad M_{0,x}=0.\label{MZ2-sys-5}
\end{align}
\end{subequations}
The first one, (\ref{MZ2-sys-1}), is satisfied automatically since $M_2$ is diagonal. Moreover, the last one, (\ref{MZ2-sys-5}), implies that $M_0$ is constant, i.e. $M_0=\diag(c_1,c_2)$, where $c_1,c_2\in\field{C}$.

From equation (\ref{MZ2-sys-2}) we determine the entries of the off-diagonal matrix $M_1$, namely
\begin{equation}
M_{1,12}=fp \quad \text{and} \quad M_{1,21}=fq_{10}.
\end{equation}
That is, $M$ is given by (\ref{DT-sl2-gen}).

Equation (\ref{MZ2-sys-3}) implies the first differential-difference equation (\ref{sl2-D-sym-gen-1}), while equation (\ref{MZ2-sys-4}) implies the following
\begin{equation}\label{compEq}
(fp)_x=2c_1p-2c_2p_{10} \quad \text{and} \quad (fq_{10})_x=2c_2q-2c_1q_{10}.
\end{equation}
With use of (\ref{sl2-D-sym-gen-1}), the above equations can be rewritten in the form (\ref{sl2-D-sym-gen-2}) and (\ref{sl2-D-sym-gen-3}), respectively.
\end{proof}
 
A first integral\index{first integral(s)} of the differential-difference equations (\ref{sl2-D-sym-gen}) is given by
\begin{equation} \label{sl2-D-con-det-gen}
\partial_x \left(f^{2}pq_{10}-c_2 f\right)=0.
\end{equation}
The above integral guarantees that the determinant of the Darboux matrix (\ref{DT-sl2-gen}) is independent of $x$.

\begin{remark}\normalfont
If the constants $c_1$ and $c_2$ in (\ref{DT-sl2-gen}) are non-zero, then we can rescale them to 1, by composing with a Darboux matrix (\ref{cDar}) and changing $f\rightarrow f\alpha^{-1}$ and $q_{10}\rightarrow q_{10}\alpha\beta^{-1}$.
\end{remark}

However, if either of $c_1$ or $c_2$ is zero, we can bring the differential-difference equations (\ref{sl2-D-sym-gen}) into polynomial form.

\subsubsection{Case I: $c_1=c_2=0$. Modified Volterra chain}
In this case, from equations (\ref{compEq}), we obtain $(fp)_x=(fq_{10})_x=0$. An integration of the latter implies that $f=1/p$ and $q_{10}=p$, where we have set the constants of integration equal to 1. Hence, the Darboux matrix (\ref{DT-sl2-gen}) degenerates to
\begin{equation} \label{DT-sl2-degen}\index{derivative nonlinear Schr\"odinger (DNLS)!Darboux matrix}\index{Darboux matrix(-ces)!for DNLS}
M_{deg}(p) := \lambda^{2}\left(\begin{array}{cc} 1/p & 0\\ 0 & 0\end{array}\right)+\lambda\left(\begin{array}{cc} 0 & 1\\ 1 & 0\end{array}\right).
\end{equation} 

Moreover, the corresponding differential-difference equations become
\begin{equation} \label{Z2-case1-BT}
q_{10} = p \quad \partial_x p=2p^2\left(p_{10}-p_{-10} \right),
\end{equation}
and the first integral (\ref{sl2-D-con-det-gen})\index{first integral(s)} holds identically. The resulting differential-difference equations (\ref{Z2-case1-BT}) constitute the modified Volterra chain \cite{Yamilov}.\index{modified Volterra chain}

\subsubsection{Case II: $c_1=1$ and $c_2=0$}
In this case the Darboux matrix simplifies to
\begin{equation} \label{Z2-case2-DM-M}
M(p,q_{10},f) := \lambda^2 \left(\begin{array}{cc} f & 0 \\ 0 & 0 \end{array} \right) + \lambda \left(\begin{array}{cc} 0 & f p \\ f q_{10} & 0 \end{array} \right) + \left(\begin{array}{cc} 1 & 0 \\ 0 & 0\end{array} \right).
\end{equation}

The differential-difference equations (\ref{sl2-D-sym-gen}) become
\begin{subequations} \label{Z2-case2-BT}
\begin{align}
\partial_x f &= 2 f (pq-p_{10} q_{10}),\\
\partial_x p &= 2 p \left( p_{10} q_{10}-p q\right) + \frac{2 p}{f}, \\
\partial_x q_{10} &= 2 q_{10} \left(p_{10} q_{10}-p q\right) -\frac{2 q_{10}}{f},
\end{align}
\end{subequations}
and the first integral (\ref{sl2-D-con-det-gen})\index{first integral(s)} can be rewritten as
\begin{equation} \label{Z2-case2-fi}
\partial_x\left(f^2 pq_{10}\right) = 0.
\end{equation}

Now, in order to express $f$ in terms of $p$ and $q$, avoiding any square roots, we make the following point transformation
\begin{equation}
p = u^2,\quad q = v_{-10}^2.
\end{equation}
Thus, the first integral\index{first integral(s)} (\ref{Z2-case2-fi}) implies $f^2u^2v^2= 1$ and, subsequently, $f$ is given by $f=\pm 1/uv$. Moreover, for this $f$, system (\ref{Z2-case2-BT}) can be rewritten in a polynomial form as
\begin{equation}
\partial_x u = u (u_{10}^2 v^2-u^2 v_{-10}^2) \pm u^2 v,\quad \partial_x v = v (u_{10}^2 v^2 - u^2 v_{-10}^2) \mp u v^2.
\end{equation}

\subsection{A deformation of the derivative nonlinear Schr\"odinger equa-\\tion: Dihedral reduction group}
In this case, the Lax operator\footnote{The full Lax pair of the associated PDE can be found in \cite{BuryPhD, Bury-Sasha}.} is given by
\begin{subequations}\label{L-dDNLS}\index{Lax operator(s)!for deformation of DNLS equation}\index{deformation of DNLS!Lax operator}
\begin{align}
&\mathcal{L}=D_x+\lambda^2U^2+\lambda U^1+\lambda^{-1} U^{-1}+\lambda^{-2}U^{-2},
\intertext{where}
U^2&\equiv -U^{-2}=\sigma_3, \quad U^{1}\equiv (U^{-1})^T=\left(\begin{array}{cc} 0 & 2\,p\\ 2\,q & 0\end{array}\right).
\end{align}
\end{subequations}

Operator (\ref{L-dDNLS}) is the spatial part of the Lax pair of the deformation of the DNLS equation\index{Lax pair!spatial part ($x$-part)!for the deformation of DNLS equation} (\ref{dDNLS}) \index{deformation of DNLS!equation}, and it is invariant under the following transformations
\begin{subequations}  \label{reduct_group-D2}
\begin{align}
 s_1(\lambda)&:\mathcal{L}(\lambda) \rightarrow \sigma_3\mathcal{L}(-\lambda)\sigma_3,\\
 s_2(\lambda)&:\mathcal{L}(\lambda) \rightarrow \sigma_1\mathcal{L}(\frac{1}{\lambda})\sigma_1,\quad \sigma_1=\left(\begin{array}{cc} 0 & 1 \\ 1 & 0 \end{array} \right).
\end{align}
\end{subequations}
The above involutions generate the reduction group which is isomorphic to $\field{Z}_2\times\field{Z}_2\cong \field{D}_2$ (dihedral group)\index{$\field{D}_2$-reduction group}\index{reduction group!dihedral}. 

Equation (\ref{dDNLS}) has the obvious symmetry $(p,q)\rightarrow (-p,-q)$. A spectral parameter independent Darboux matrix for (\ref{L-dDNLS})\index{deformation of DNLS!spectral parameter independent Darboux matrix}\index{Darboux matrix(-ces)!for deformation of DNLS}, which corresponds to the latter symmetry, is given by $\sigma_3$. For a $\lambda$-dependent Darboux matrix, we seek a matrix with the same dependence on the spectral parameter as the non-differential part of $\mathcal{L}$ in (\ref{L-dDNLS}). Specifically, we are seeking for $M$ of the form
\begin{equation}\label{M-DNLS-Ansatz}
M=\lambda^2M_2+\lambda M+M_0+\lambda^{-1}M_{-1}+\lambda^{-2}M_{-2}.
\end{equation}

\begin{lemma}\label{M-DNLS-red}
A matrix of the form (\ref{M-DNLS-Ansatz}), which is invariant under the involutions $s_1$ and $s_2$ in (\ref{reduct_group-D2}), is given by
\begin{equation*}
M=\lambda^2\left(\begin{array}{cc} \alpha & 0\\ 0 & \beta\end{array}\right)+\lambda\left(\begin{array}{cc} 0 & \gamma\\ \delta & 0\end{array}\right)+
\left(\begin{array}{cc} \mu & 0\\ 0 & \nu\end{array}\right)+
\lambda^{-1}\left(\begin{array}{cc} 0 & \delta\\ \gamma & 0\end{array}\right)+\lambda^{-2}\left(\begin{array}{cc} \beta & 0\\ 0 & \alpha\end{array}\right).
\end{equation*}
\end{lemma}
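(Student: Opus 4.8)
The plan is to impose the two invariance conditions (\ref{reduct_group-D2}) directly on the general ansatz (\ref{M-DNLS-Ansatz}) and read off the constraints on the five coefficient matrices $M_2, M_1, M_0, M_{-1}, M_{-2}$, exactly as suggested by the proofs of the analogous Lemmas \ref{M2M1M0} and \ref{M-DNLS-red}'s predecessors. Write $M_k = \left(\begin{smallmatrix} a_k & b_k \\ c_k & d_k \end{smallmatrix}\right)$ for $k = -2, -1, 0, 1, 2$, so there are twenty scalar unknowns to begin with.

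First I would handle the involution $s_1$. The condition $M(\lambda) = \sigma_3 M(-\lambda) \sigma_3$ acts on each Laurent coefficient separately: conjugation by $\sigma_3$ flips the sign of the off-diagonal entries, while $\lambda \mapsto -\lambda$ multiplies $M_k$ by $(-1)^k$. Matching powers of $\lambda$ forces the even-index coefficients $M_2, M_0, M_{-2}$ to be diagonal and the odd-index coefficients $M_1, M_{-1}$ to be off-diagonal. This is the same computation as in Lemma \ref{M2M1M0}, just carried out on each monomial in $\lambda$. After this step the ansatz already has the stated block structure, and the remaining unknowns are the diagonal entries $(a_2,d_2)$, $(a_0,d_0)$, $(a_{-2},d_{-2})$ and the off-diagonal entries $(b_1,c_1)$, $(b_{-1},c_{-1})$.

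Next I would impose $s_2$, i.e. $M(\lambda) = \sigma_1 M(1/\lambda) \sigma_1$. Conjugation by $\sigma_1$ swaps the two diagonal entries of a diagonal matrix and swaps the two off-diagonal entries of an off-diagonal matrix, while $\lambda \mapsto 1/\lambda$ sends the $\lambda^k$-coefficient to the $\lambda^{-k}$-coefficient. Matching the $\lambda^2$ and $\lambda^{-2}$ terms gives $M_{-2} = \sigma_1 M_2 \sigma_1$, i.e. $a_{-2} = d_2$ and $d_{-2} = a_2$; matching the $\lambda^1$ and $\lambda^{-1}$ terms gives $M_{-1} = \sigma_1 M_1 \sigma_1$, i.e. $b_{-1} = c_1$ and $c_{-1} = b_1$; and matching the $\lambda^0$ term gives $M_0 = \sigma_1 M_0 \sigma_1$, which imposes nothing beyond what we already have (a diagonal matrix stays a general diagonal matrix — wait, $\sigma_1$ swaps its entries, so actually this forces $a_0 = d_0$? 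No: it forces $M_0 = \sigma_1 M_0 \sigma_1$, and a diagonal $M_0$ with distinct entries is \emph{not} fixed by $\sigma_1$, so indeed $a_0 = d_0$). Hmm — but the claimed normal form in Lemma \ref{M-DNLS-red} has $M_0 = \diag(\mu,\nu)$ with $\mu \neq \nu$ in general. So I would re-examine: the $\lambda^0$ coefficient of $\sigma_1 M(1/\lambda)\sigma_1$ is $\sigma_1 M_0 \sigma_1$, which must equal $M_0$; for a diagonal matrix this does force $\mu = \nu$. I suspect the intended statement keeps $\mu,\nu$ independent only because one checks the two involutions are meant to act as stated; I would double-check the action of $s_2$ on the $\lambda^0$ block carefully, since this is the one subtle point. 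Relabelling $\alpha := a_2$, $\beta := d_2$, $\gamma := b_1$, $\delta := c_1$, $\mu := a_0$, $\nu := d_0$ then yields precisely the displayed form.

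The computation is entirely routine bookkeeping; the only thing requiring care is the fixed-point analysis of $s_2$ on the constant ($\lambda^0$) coefficient, which I flag as the place where a sign or a swap could be mishandled. Everything else is forced term-by-term, so I would organise the proof as: (1) set up the general coefficient matrices; (2) apply $s_1$ to get the even/odd diagonal/off-diagonal pattern; (3) apply $s_2$ to identify $M_{-2}$ with $M_2$ (transposed across the anti-diagonal) and $M_{-1}$ with $M_1$ likewise; (4) rename constants to obtain the stated matrix. The converse direction — that the displayed matrix genuinely satisfies both $s_1$ and $s_2$ — is immediate by inspection and I would state it in one line.
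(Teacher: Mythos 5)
Your approach is the same as the paper's: expand $M$ in powers of $\lambda$ and impose each involution coefficient-by-coefficient, using $s_1$ to fix the diagonal/off-diagonal pattern of the even- and odd-index coefficients and $s_2$ to identify the $\lambda^{k}$ and $\lambda^{-k}$ coefficients via conjugation by $\sigma_1$. The paper's proof of Lemma \ref{M-DNLS-red} records exactly the constraints you obtain for $k=\pm 2$ and $k=\pm 1$ and is silent about the $\lambda^0$ block.

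The point you flag is real, and you have in fact resolved it correctly: the $\lambda^0$ coefficient of $\sigma_1 M(1/\lambda)\sigma_1$ is $\sigma_1 M_0\sigma_1=\diag(\nu,\mu)$, so strict invariance under $s_2$ forces $\mu=\nu$. The displayed normal form in the lemma, with $\mu$ and $\nu$ independent, is therefore only a necessary condition; it contains matrices that are not $s_2$-invariant. This does no harm downstream: the lemma remains true read as ``invariance implies this form'' (since $\mu=\nu$ is a special case of the display), and in the proof of the subsequent proposition the paper recovers $\mu=\nu$ anyway from equation (\ref{mu-nu}), by integrating $\mu_x=\nu_x$ and setting the integration constant to zero --- a step that would be automatic had the $\lambda^0$ constraint from $s_2$ been imposed at the level of the lemma. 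So your argument is correct and slightly sharper than the paper's; you should simply commit to the conclusion $\mu=\nu$ rather than leaving it as an open question in the write-up.
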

\begin{proof}
Relation $M(\lambda)=\sigma_3 M(-\lambda)\sigma_3$ implies that $M_0$, $M_2$ and $M_{-2}$ must be diagonal, whereas $M_1$ and $M_{-1}$ must be off-diagonal. Moreover, from relation $M(\lambda)=\sigma_1 M(1/\lambda)\sigma_1$, we obtain: $M_{2,11}=M_{-2,22}$, $M_{2,22}=M_{-2,11}$, $M_{1,12}=M_{1,21}$ and $M_{1,21}=M_{1,12}$.
\end{proof}

\begin{proposition}\index{Darboux matrix(-ces)!for deformation of DNLS}\index{deformation of DNLS!Darboux matrix}
Let $M$ be a Darboux matrix of the form (\ref{M-DNLS-Ansatz}) for the Lax operator (\ref{L-dDNLS}) and $\rank M_2=1$. Moreover, suppose that it is invariant under the involutions $s_1$ and $s_2$. Then, up to a gauge transformation\index{gauge transformation} and an additive constant, $M$ is given by
\begin{equation}  \label{Dih-M}
M(p,q_{10},f,g) := f\left(\left(\begin{array}{cc} \lambda^2&0 \\0 & \lambda^{-2} \end{array}\right) + \lambda  \left(\begin{array}{cc} 0 & p \\q_{10} & 0 \end{array}\right) + g I + \lambda^{-1}  \left(\begin{array}{cc} 0& q_{10} \\ p &0 \end{array}\right)\right)
\end{equation}
where $I$ is the identity matrix and $p$, $q$, $f$ and $g$ satisfy the following system of differential-difference equations\index{differential-difference equation(s)}
\begin{subequations}\label{Dih-D-cond}
\begin{eqnarray} 
\partial_x p &=& 2\Big((p_{10} q_{10}-pq) p + (p-p_{10}) g  + q- q_{10}\Big), \\
\partial_x q_{10} &=& 2\Big((p_{10} q_{10}-pq) q_{10}+(q-q_{10}) g + p- p_{10} \Big),\\
\partial_x g &=& 2 \Big((p_{10} q_{10}-pq) g + (p-p_{10})p +  (q-q_{10}) q_{10} \Big),\\
\partial_x f &=& 2 (pq-p_{10} q_{10}) f.
\end{eqnarray}
\end{subequations}
\end{proposition}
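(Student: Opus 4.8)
The plan is to mimic the $\field{Z}_2$-reduction case (Proposition \ref{Darboux-DNLScase}): feed the most general admissible matrix into the defining equation (\ref{DMeq}) of a Darboux matrix and read off the content power by power in $\lambda$. I would start from the matrix furnished by Lemma \ref{M-DNLS-red}, which already builds in invariance under the involutions $s_1$, $s_2$ of (\ref{reduct_group-D2}): $M_2$, $M_0$, $M_{-2}$ are diagonal, $M_1$, $M_{-1}$ are off-diagonal, $M_{-2}$ (resp. $M_{-1}$) is $M_2$ (resp. $M_1$) with its entries interchanged, and $s_2$ in fact forces the $\lambda^0$ term to be scalar, $M_0=\mu I$. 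Substituting $M=\lambda^2M_2+\lambda M_1+M_0+\lambda^{-1}M_{-1}+\lambda^{-2}M_{-2}$ into (\ref{DMeq}), with $U=\lambda^2\sigma_3+\lambda U^1+\lambda^{-1}(U^1)^T-\lambda^{-2}\sigma_3$ and $U_{10}=U(p_{10},q_{10};\lambda)$ from (\ref{L-dDNLS}), the left-hand side becomes a Laurent polynomial in $\lambda$ supported on $\lambda^4,\dots,\lambda^{-4}$, so vanishing of each coefficient gives nine $2\times2$ equations. Since $\mathcal{L}$ and the ansatz are both invariant under $\lambda\mapsto\lambda^{-1}$ (conjugated by $\sigma_1$), the $\lambda^{k}$ and $\lambda^{-k}$ equations carry the same information, and it suffices to analyse $k=4,3,2,1,0$.

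At $\lambda^{\pm4}$ the equations read $[\sigma_3,M_{\pm2}]=0$ and hold automatically since $M_{\pm2}$ are diagonal. At $\lambda^{\pm3}$ they are algebraic and, writing $M_2=\diag(\alpha,\beta)$ and denoting by $\gamma$, $\delta$ the first- and second-row off-diagonal entries of $M_1$, they give $\gamma=\alpha p-\beta p_{10}$ and $\delta=\alpha q_{10}-\beta q$. The hypothesis $\rank M_2=1$ forces $\alpha\beta=0$; choosing $\beta=0$ (the choice $\alpha=0$ produces the $\sigma_1$-conjugate, hence gauge-equivalent, matrix) and setting $\alpha=:f$ yields $\gamma=fp$, $\delta=fq_{10}$, so with $M_0=\mu I$ and $g:=\mu/f$ the matrix $M$ takes exactly the form (\ref{Dih-M}). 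Here $f$ is nowhere zero: $M_2\neq0$ gives $f\not\equiv0$, and $f$ will be seen to satisfy a linear homogeneous ODE, so it never vanishes and division by $f$ below is legitimate.

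It remains to extract the system (\ref{Dih-D-cond}) from the equations at $\lambda^{2},\lambda^{1},\lambda^{0}$. The $\lambda^{2}$ equation is $\partial_xM_2+[\sigma_3,M_0]+U^1_{10}M_1-M_1U^1=0$; its $(2,2)$ entry is the identity $0=0$ (since $\beta=0$) and, after inserting $\gamma=fp$, $\delta=fq_{10}$, its $(1,1)$ entry collapses to $\partial_xf=2f(pq-p_{10}q_{10})$, the last equation of (\ref{Dih-D-cond}). The $\lambda^{0}$ equation, after the $[\sigma_3,M_{\pm2}]$ terms cancel, is $\partial_x\mu=2\gamma(p-p_{10})+2\delta(q-q_{10})$; substituting $\mu=fg$, $\gamma=fp$, $\delta=fq_{10}$, using the $f$-equation and dividing by $f$ gives the $g$-equation. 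Finally the $\lambda^{1}$ equation $\partial_xM_1+[\sigma_3,M_{-1}]+U^1_{10}M_0+U^{-1}_{10}M_2-M_0U^1-M_2U^{-1}=0$ reduces, after the same substitutions, to expressions for $\partial_x(fp)$ and $\partial_x(fq_{10})$ in terms of $p,q,p_{10},q_{10},f,g$; dividing by $f$ and using the $f$-equation produces the $p$- and $q_{10}$-equations of (\ref{Dih-D-cond}). The $\lambda^{-1},\lambda^{-2},\lambda^{-3},\lambda^{-4}$ equations then reproduce the same relations, which also re-confirms $\mu=\nu$ and serves as a consistency check.

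The argument is essentially bookkeeping: the only labour is tracking which products in $U_{10}M$ and $MU$ contribute to a given power of $\lambda$ and then tidying up the resulting $2\times2$ diagonal/off-diagonal identities. The $\lambda\mapsto\lambda^{-1}$ symmetry halves the work and the rank-one reduction disposes of the rest, so I do not anticipate a conceptual obstacle; the points demanding care are verifying that the $\lambda^{3}$ and $\lambda^{-3}$ (respectively $\lambda^{1}$ and $\lambda^{-1}$) equations force the same relations, and keeping explicit track of the gauge choice $\beta=0$ together with the overall normalisation of $f$ --- these being the ``gauge transformation and additive constant'' to which the statement refers.
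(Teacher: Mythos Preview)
Your proposal is correct and follows essentially the same route as the paper: insert the symmetry-reduced ansatz from Lemma~\ref{M-DNLS-red} into (\ref{DMeq}) and read off the coefficients power by power in $\lambda$, exploiting the dihedral symmetry to halve the work.

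One point where you are in fact slightly sharper than the paper: you observe that $s_2$ already forces $M_0=\sigma_1 M_0\sigma_1$, hence $M_0=\mu I$. The paper's Lemma~\ref{M-DNLS-red} omits this constraint on the $\lambda^0$ coefficient and records only $M_0=\diag(\mu,\nu)$; the paper then recovers $\mu=\nu$ later, by deriving $\mu_x=\nu_x$ from the $\lambda^0$ equation (\ref{mu-nu}) and integrating --- the arbitrary integration constant there is precisely the ``additive constant'' referred to in the proposition. Your direct symmetry argument renders that integration step, and the attendant constant, unnecessary. Similarly, you invoke $\rank M_2=1$ directly to get $\alpha\beta=0$, whereas the paper first derives $\alpha\beta=\text{const.}$ from the $\lambda^2$ equation and only then uses the rank hypothesis; both are valid, yours is more direct.
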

\begin{proof}
From equation (\ref{DMeq}) we deduce the system of equations 
\begin{subequations}\label{eqDar-dDNLS}
\begin{align}
\lambda^4: & \quad\left[\sigma_3,M_2\right]=0,\label{eqDar-dDNLS-1}\\
\lambda^3: & \quad\left[\sigma_3,M_1\right]+U^1_{10}M_2-M_2U^1=0,\label{eqDar-dDNLS-2}\\
\lambda^2: & \quad M_{2,x}+U^1_{10}M_1-M_1U^1=0,\label{eqDar-dDNLS-3}\\
\lambda^1: & \quad M_{1,x}+\left[\sigma_3,M_{-1}\right]+U^1_{10}M_0-M_0U^1+U^{-1}_{10}M_2-M_2U^{-1}=0,\label{eqDar-dDNLS-4}\\
\lambda^0: & \quad M_{0,x}+U^1_{10}M_1-M_{-1}U^{1}+U^{-1}_{10}M_1-M_1U^{-1}=0,\label{eqDar-dDNLS-5}
\end{align}
\end{subequations}
for the entries of $M$. Because of the symmetry, the negative powers of $\lambda$ correspond to the same equations as the above.

Equation (\ref{eqDar-dDNLS-1}) is automatically satisfied as $M_2$ is diagonal. From equation (\ref{eqDar-dDNLS-2}) we identify the entries $\gamma$ and $\delta$ of matrix $M_1$ in terms of the potentials, namely
\begin{equation}\label{gamma-delta}
\gamma=\alpha p-\beta p_{10} \quad \text{and} \quad \delta=\alpha q_{10}-\beta q,
\end{equation}
whereas from (\ref{eqDar-dDNLS-4}) we can express their derivatives in terms of the potentials and the entries of matrices $M_2$ and $M_0$ as
\begin{subequations}\label{Der-gamma-delta}
\begin{align}
\gamma_x&=2\left(\alpha q-\beta q_{10}+\mu p-\nu p_{10}-\delta\right), \\
\delta_x&=2\left(\beta p-\alpha p_{10}+\nu q-\mu q_{10}+\gamma\right).
\end{align}
\end{subequations}

Now, from equation (\ref{eqDar-dDNLS-3}) we obtain the derivatives of $\alpha$ and $\beta$ in terms of the potentials, $\gamma$ and $\delta$, namely
\begin{equation}\label{der-alpha-beta}
\alpha_x=2(\gamma q-\delta p_{10}) \quad \text{and} \quad \beta_x=2(\delta p-\gamma q_{10}).
\end{equation}
Substituting (\ref{gamma-delta}) into the above equations, we deduce that $\alpha\beta=$const. Moreover, since $\rank M_2=1$, either $\alpha$ or $\beta$ has to be zero. Thus, we can choose $\beta=0$ and, then, $\alpha$ can be any arbitrary function of $x$, say $\alpha=f(x)$. The case $\alpha=0$ and $\beta = f(x)$ leads to a Darboux matrix which is gauge equivalent to the former. 

Now, $\alpha=f(x)$ and $\beta=0$  imply that $\gamma=fp$ and $\delta=fq_{10}$ from (\ref{gamma-delta}).

The last equation of (\ref{eqDar-dDNLS}) implies
\begin{equation}\label{mu-nu}
\mu_x=\nu_x=2f\left(p(p-p_{10})+q(q-q_{10})\right).
\end{equation}
Therefore, after a simple integration, we deduce that $\mu=\nu$ (where we have assumed that the integration constant is zero).

Now, the first equation of (\ref{der-alpha-beta}) can be rewritten as
\begin{equation}
f_x=2f(pq-p_{10}q_{10}).
\end{equation}
Using the above, and since we have expressed $\gamma$ and $\delta$ in terms of $f$, $p$ and $q_{10}$, equations (\ref{Der-gamma-delta}) can be rewritten as
\begin{subequations}\label{pxq10x}
\begin{align}
p_x&=2\Big((p_{10}q_{10}-pq)p+q-q_{10}+\frac{\mu}{f}(p-p_{10})\Big),\\
q_{10,x}&=2\Big((p_{10}q_{10}-pq)q_{10}+p-p_{10}+\frac{\mu}{f}(q-q_{10})\Big).
\end{align}
\end{subequations}
Now, without any loss of generality, we can set\footnote{This choice was made in order to retrieve polynomial expressions in (\ref{pxq10x}).} $\mu=fg$, which proves the statement.
\end{proof}

It is straightforward to show that the quantities 
\begin{equation} \label{dih-D-fi}
\Phi_1 := f^2 \left( g- pq_{10} \right), \quad \Phi_2 := f^2 \left(g^2+1 - p^2-q_{10}^2\right),
\end{equation}
are first integrals\index{first integral(s)} for the system of equations (\ref{Dih-D-cond}), namely $\partial_x \Phi_i = 0$, $i=1,2$. In fact, these first integrals\index{first integral(s)} imply that matrix $M$ has constant determinant, since
\begin{equation}
\det M = \left(\lambda^2 +\lambda^{-2} \right) \Phi_1 + \Phi_2.
\end{equation}

\section{Derivation of discrete systems and initial value prob-\\lems}
In this section we employ the Darboux matrices derived in the previous section to derive discrete integrable systems.\index{integrable system(s)!discrete} We shall present only the pairs of Darboux matrices which lead to genuinely non-trivial discrete integrable systems.\index{integrable system(s)!discrete}
For these systems we consider an initial value problem on the staircase.

\subsection{Nonlinear Schr\"odinger equation and related discrete systems}
Having derived two Darboux matrices\index{Darboux matrix(-ces)} for operator (\ref{NLS-U}), we focus on the one given in (\ref{M-NLS}) and consider the following discrete Lax pair
\begin{equation}\label{dLaxP}
\Psi_{10} = M \Psi,\quad \Psi_{01} = K\Psi,
\end{equation}
where $M$ and $K$ are given by
\begin{subequations} \label{NLS-disc-LP}
\begin{align}
&M\equiv M(p,q_{10},f)=\lambda \left(\begin{array}{cc} 1 & 0\\0 & 0 \end{array}\right)+\left(\begin{array}{cc} f & p\\q_{10} & 1\end{array}\right),\\
&K\equiv M(p,q_{01},g)= \lambda \left(\begin{array}{cc} 1 & 0\\ 0 & 0\end{array}\right)+\left(\begin{array}{cc} g & p\\ q_{01} & 1\end{array}\right).
\end{align}
\end{subequations}
The compatibility condition of (\ref{NLS-disc-LP}) results to
\begin{subequations} \label{nls-comp}
\begin{eqnarray}
 f_{01} -f - \left( g_{10}-g\right) = 0,&&\label{nls-comp-1}\\
 f_{01}\,g-fg_{10}-p_{10}q_{10}+p_{01}q_{01}=0,&&\\
 p \left(f_{01}-g_{10} \right)-p_{10}+p_{01}=0,&&\\
 q_{11}\left(f-g\right)-q_{01}+q_{10}=0.&&
\end{eqnarray}
\end{subequations}
This system can be solved either for $(p_{01},q_{01},f_{01},g)$ or for $(p_{10},q_{10},f,g_{10})$. In either of these cases, we derive two solutions. The first one is
\begin{equation} \label{triv-sol}
p_{10} = p_{01},\quad q_{10} = q_{01}, \quad f = g,\quad g_{10} = f_{01},
\end{equation}
which is trivial and corresponds to $M(p,q_{10},f)=M(p,q_{01},g)$.

The second solution is given by
\begin{subequations} \label{LPcompatEq}
\begin{eqnarray}
\hspace{-.3cm}&& p_{01} = \frac{q_{10} p^2 + (g_{10} - f) p + p_{10}}{1+p q_{11}},\quad 
q_{01} = \frac{p_{10}{ q_{11}}^{2} + (f-g_{10}) q_{11} + q_{10}}{1+p q_{11}},  \\
\hspace{-.3cm}&& f_{01} = \frac{q_{11} (p_{10} + p g_{10}) + f - p q_{10}}{1+ p q_{11}},\quad 
g = \frac{q_{11} (p f- p_{10}) + g_{10}+p q_{10}}{1+pq_{11}}. 
\end{eqnarray}
\end{subequations}

The above system has some properties which take their rise in the derivation of the Darboux matrix.\index{Darboux matrix(-ces)} In particular, we have the following.

\begin{proposition}
System (\ref{LPcompatEq}) admits two first integrals,\index{first integral(s)} $\mathcal{F}:=f-pq_{10}$ and $\mathcal{G}:=g-pq_{01}$, and the following conservation law\index{conservation law(s)}
	\begin{equation}\label{conlaw}
	(\mathcal{T}-1)f=(\mathcal{S}-1)g
	\end{equation}
\end{proposition}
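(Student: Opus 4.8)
The plan is to dispatch the conservation law directly and to obtain the two first integrals from the determinant of the Darboux matrix.

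The conservation law (\ref{conlaw}), which written out reads $f_{01}-f=g_{10}-g$, is literally equation (\ref{nls-comp-1}) of the compatibility system (\ref{nls-comp}); so nothing needs to be proved there beyond rephrasing it with the shift operators.

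For the first integrals I would use the observation that the Darboux matrices $M\equiv M(p,q_{10},f)$ and $K\equiv M(p,q_{01},g)$ of (\ref{NLS-disc-LP}) satisfy $\det M=\lambda+f-pq_{10}=\lambda+\mathcal{F}$ and $\det K=\lambda+g-pq_{01}=\lambda+\mathcal{G}$, hence $\det M_{01}=\lambda+\mathcal{T}\mathcal{F}$ and $\det K_{10}=\lambda+\mathcal{S}\mathcal{G}$. Taking determinants of the compatibility condition $M_{01}K-K_{10}M=0$ (equation (\ref{CompCond})) yields the identity in $\lambda$
\[
(\lambda+\mathcal{T}\mathcal{F})(\lambda+\mathcal{G})=(\lambda+\mathcal{S}\mathcal{G})(\lambda+\mathcal{F}).
\]
Comparing the coefficients of $\lambda$ gives $\mathcal{T}\mathcal{F}-\mathcal{F}=\mathcal{S}\mathcal{G}-\mathcal{G}=:c$; substituting $\mathcal{T}\mathcal{F}=\mathcal{F}+c$ and $\mathcal{S}\mathcal{G}=\mathcal{G}+c$ into the comparison of the constant terms then gives $c(\mathcal{F}-\mathcal{G})=0$. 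On the non-trivial branch (\ref{LPcompatEq}) one has $\mathcal{F}\neq\mathcal{G}$ --- indeed $\mathcal{F}=\mathcal{G}$ forces $f=g$ and $q_{10}=q_{01}$, i.e. the trivial solution (\ref{triv-sol}) --- so $c=0$, that is $\mathcal{T}\mathcal{F}=\mathcal{F}$ and $\mathcal{S}\mathcal{G}=\mathcal{G}$. Thus $\mathcal{F}$ is constant along the $m$-direction and $\mathcal{G}$ along the $n$-direction, which is the asserted first-integral property.

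An alternative, entirely computational route that avoids the genericity remark is to substitute the explicit formulas (\ref{LPcompatEq}) directly: inserting them into $\mathcal{T}\mathcal{F}=f_{01}-p_{01}q_{11}$ and into $g-pq_{01}$, in each case the factor $1+pq_{11}$ cancels and one is left with $f-pq_{10}$ and $g_{10}-p_{10}q_{11}$ respectively, i.e. $\mathcal{T}\mathcal{F}=\mathcal{F}$ and $\mathcal{S}\mathcal{G}=\mathcal{G}$. The only mild obstacle is this bookkeeping with the shifts (and, in the determinant version, isolating the degenerate case (\ref{triv-sol})); everything else is immediate. As a consistency check, $\det M=\lambda+\mathcal{F}$ together with Proposition \ref{FS-Det} reproduces the continuous first integral (\ref{nls-const}).
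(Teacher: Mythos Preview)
Your proposal is correct. The conservation law part is identical to the paper's: it too simply observes that (\ref{conlaw}) is equation (\ref{nls-comp-1}) rewritten. For the first integrals, however, the paper takes only your second, computational route: it says relation (\ref{nls-const}) \emph{suggests} $(\mathcal{T}-1)\mathcal{F}=0$ and $(\mathcal{S}-1)\mathcal{G}=0$, and that these ``can be verified with straightforward calculation, using equations (\ref{nls-comp})''. Your determinant argument is a genuinely different and more conceptual proof: it explains \emph{why} $\mathcal{F}$ and $\mathcal{G}$ must be conserved --- they are the $\lambda$-free parts of $\det M$ and $\det K$, and multiplicativity of the determinant applied to (\ref{CompCond}) does the work. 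The price is the small genericity step disposing of the factor $c(\mathcal{F}-\mathcal{G})=0$; your justification there is in fact correct (combining $\mathcal{F}=\mathcal{G}$ with the fourth equation of (\ref{nls-comp}) gives $(1+pq_{11})(f-g)=0$, hence $f=g$, $q_{10}=q_{01}$, and then the remaining equations force (\ref{triv-sol})). Both routes are valid; the paper's is terser, yours ties the result to the structural fact that $\partial_x\det M=0$ from Proposition~\ref{FS-Det}.
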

\begin{proof}
Relation (\ref{nls-const}) suggests that
	\begin{equation}\label{1stInts}
	({\cal{T}}-1)\left(f-pq_{10}\right)=0\quad {\mbox{and}} \quad ({\cal{S}}-1)\left(g-pq_{01}\right)=0,
	\end{equation}
which can be verified with straightforward calculation, using equations (\ref{nls-comp-1}). Thus, $F=f-pq_{10}$ and $G:=g-pq_{01}$ are first integrals.\index{first integral(s)} Moreover, equation (\ref{nls-comp-1}) can be written in the form of the conservation law (\ref{conlaw}).\index{conservation law(s)}
\end{proof}

\begin{corollary}
The following relations hold.
\begin{equation} \label{nls-fi}
f-pq_{10}=\alpha(n)\quad {\mbox{and}} \quad g-pq_{01}=\beta(m).
\end{equation}
\end{corollary}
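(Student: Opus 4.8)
The plan is to read off the conclusion directly from the two first integrals established in the preceding Proposition, using only the interpretation of $\mathcal{S}$ and $\mathcal{T}$ as shifts in the $n$- and $m$-directions. Recall that the Proposition gives $(\mathcal{T}-1)\mathcal{F}=0$ and $(\mathcal{S}-1)\mathcal{G}=0$, where $\mathcal{F}=f-pq_{10}$ and $\mathcal{G}=g-pq_{01}$.

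First I would unpack $(\mathcal{T}-1)\mathcal{F}=0$. Since $\mathcal{T}$ shifts $m\mapsto m+1$, this relation reads $\mathcal{F}(n,m+1)=\mathcal{F}(n,m)$ for all $n,m$; hence $\mathcal{F}$ is independent of the discrete variable $m$ and therefore equals some function of $n$ alone, which we name $\alpha(n)$. This yields $f-pq_{10}=\alpha(n)$. Symmetrically, $(\mathcal{S}-1)\mathcal{G}=0$ means $\mathcal{G}(n+1,m)=\mathcal{G}(n,m)$, so $\mathcal{G}$ is independent of $n$ and equals a function $\beta(m)$ of $m$ alone, giving $g-pq_{01}=\beta(m)$.

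There is essentially no obstacle here: the statement is an immediate corollary of the fact that a lattice quantity annihilated by $\mathcal{T}-1$ (respectively $\mathcal{S}-1$) cannot depend on $m$ (respectively $n$). The only thing worth spelling out, for completeness, is that the first integrals $\mathcal{F}$ and $\mathcal{G}$ were themselves verified from the compatibility equations (\ref{nls-comp-1}) together with relation (\ref{nls-const}), so no further computation on system (\ref{LPcompatEq}) is required. If desired, one could also remark that the conservation law (\ref{conlaw}) is consistent with this: differencing $f=\alpha(n)+pq_{10}$ and $g=\beta(m)+pq_{01}$ and using (\ref{nls-comp-1}) recovers $(\mathcal{T}-1)f=(\mathcal{S}-1)g$.
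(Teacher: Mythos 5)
Your proposal is correct and follows exactly the reasoning the paper intends: the Corollary is an immediate consequence of the preceding Proposition, since a quantity annihilated by $\mathcal{T}-1$ (resp. $\mathcal{S}-1$) is independent of $m$ (resp. $n$) and hence a function $\alpha(n)$ (resp. $\beta(m)$) alone. The paper gives no separate proof for the Corollary, and your remark that the $x$-independence is already guaranteed by relation (\ref{nls-const}) covers the only point one might otherwise query.
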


\begin{remark}\normalfont
In view of relations (\ref{nls-fi}), we can interpret functions $f$ and $g$ as being given on the edges of the quadrilateral where system (\ref{LPcompatEq}) is defined, and, consequently, consider system (\ref{LPcompatEq}) as a vertex-bond system \cite{HV}. 
\end{remark}

\subsubsection{Initial value problem on the staircase}
Our choice to solve system (\ref{nls-comp}) for $p_{01}$, $q_{01}$, $f_{01}$ and $g$ is motivated by the initial value problem related to system (\ref{LPcompatEq}). Suppose that initial values for $p$ and $q$ are given on the vertices along a staircase as shown in Figure \ref{fig-ivp}. Functions $f$ and $g$ are given on the edges of this initial value configuration in a consistent way with the first integrals\index{first integral(s)} (\ref{nls-fi}). In particular, horizontal edges carry the initial values of $f$ and vertical edges the corresponding ones of $g$.

\begin{figure}[ht]
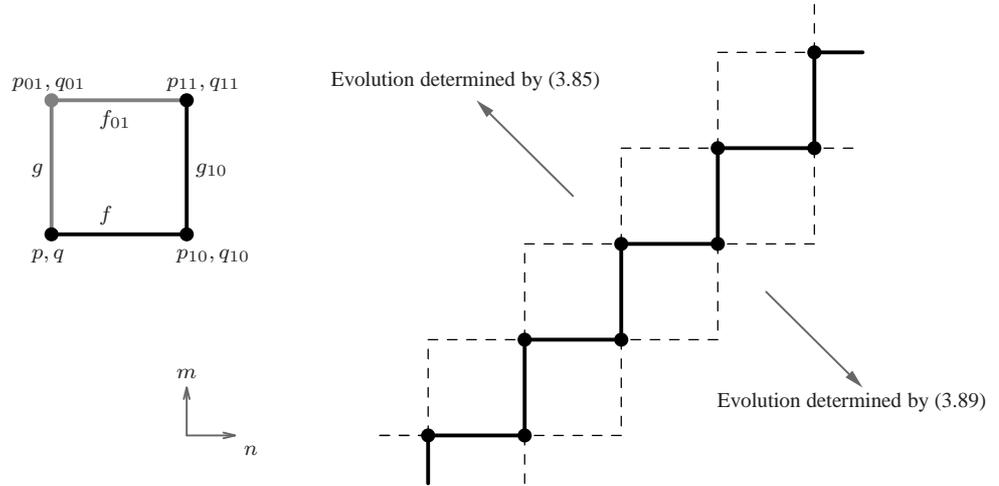

\centertexdraw{
\setunitscale 0.5
\move(-4.5 -2) \linewd 0.02 \setgray 0.4 \arrowheadtype t:V \arrowheadsize l:.12 w:.06 \avec(-4.5 -1.5) 
\move(-4.5 -2) \arrowheadtype t:V  \avec(-4 -2)
\arrowheadsize l:.20 w:.10
\move(-.5 .5) \linewd 0.02 \setgray 0.4 \arrowheadtype t:F \avec(-1.5 1.5) 
\move(1.5 -.5) \linewd 0.02 \setgray 0.4 \arrowheadtype t:F \avec(2.5 -1.5) 
\setgray 0.5
\linewd 0.04 \move (-4.5 1.5)  \lvec (-5.9 1.5) \lvec (-5.9 .1)
\move (-5.9 1.5) \fcir f:0.5 r:0.075
\htext (-6.1 .7) {\scriptsize{$g$}}
\htext (-4.4 .7) {\scriptsize{$g_{10}$}}
\setgray 0.0
\linewd 0.04 \move (-2 -2.5) \lvec (-2 -2) \lvec (-1 -2) \lvec (-1 -1) \lvec (0 -1) \lvec (0 0) \lvec (1 0) \lvec(1 1) \lvec (2 1) \lvec(2 2) \lvec(2.5 2)
\move (-5.9 .1) \lvec (-4.5 .1) \lvec (-4.5 1.5)  
\linewd 0.015 \lpatt (.1 .1 ) \move (-2 -2) \lvec (-2 -1) \lvec(-1 -1) \lvec (-1 0) \lvec (0 0) \lvec (0 1) \lvec(1 1) \lvec (1 2) \lvec (2 2) \lvec (2 2.5)
\move(-2.5 -2) \lvec(-2 -2) \move(-2.5 -2) \lvec(-2 -2)
\move (-1 -2.5) \lvec (-1 -2) \lvec(0 -2) \lvec(0 -1) \lvec(1 -1) \lvec(1 0) \lvec(2 0) \lvec(2 1) \lvec(2.5 1)
\move (-2 -2) \fcir f:0.0 r:0.075 \move (-1 -2) \fcir f:0.0 r:0.075
\move (-1 -1) \fcir f:0.0 r:0.075 \move (0 -1) \fcir f:0.0 r:0.075
\move (0 0) \fcir f:0.0 r:0.075 \move (1 0) \fcir f:0.0 r:0.075  
\move (1 1) \fcir f:0.0 r:0.075 \move (2 1) \fcir f:0.0 r:0.075
\move (2 2) \fcir f:0.0 r:0.075
\move (-5.9 .1) \fcir f:0.0 r:0.075 \move (-4.5 .1) \fcir f:0.0 r:0.075 \move (-4.5 1.5) \fcir f:0.0 r:0.075
\htext (-3.9 -2.2) {\scriptsize{$n$}}
\htext (-4.6 -1.4) {\scriptsize{$m$}}
\htext (-6.1 -.2) {\scriptsize{$p,q$}}
\htext (-4.6 -.2) {\scriptsize{$p_{10},q_{10}$}}
\htext (-5.4 0.2) {\scriptsize{$f$}}
\htext (-5.4 1.2) {\scriptsize{$f_{01}$}}
\htext (-6.3 1.6) {\scriptsize{$p_{01},q_{01}$}}
\htext (-4.7 1.6) {\scriptsize{$p_{11},q_{11}$}}
\htext (-3 1.6) {{\scriptsize{Evolution determined by (\ref{LPcompatEq})}}}
\htext (1 -1.75) {{\scriptsize{Evolution determined by (\ref{LPcompatEq-2})}}}
}
\caption{{Initial value problem and direction of evolution}} \label{fig-ivp}
\end{figure}

With these initial conditions, the values of $p$ and $q$ can be uniquely determined at every vertex of the lattice, while $f$ and $g$ on the corresponding edges. This is obvious from the rational expressions (\ref{LPcompatEq}) defining the evolution above the staircase, cf. Figure \ref{fig-ivp}. 

For the evolution below the staircase, one has to use 
\begin{subequations}\label{LPcompatEq-2}
\begin{eqnarray} 
p_{10} &=& \frac{q_{01} p^2 + (f_{01} - g) p + p_{01}}{1+p\, q_{11}},\quad 
q_{10} = \frac{p_{01}{ q_{11}}^{2} + (g-f_{01})q_{11} + q_{01}}{1+pq_{11}},\\
g_{10} &=& \frac{q_{11} (p_{01} + p f_{01}) + g - p q_{01}}{1+ p q_{11}},\quad 
f = \frac{q_{11} (p g- p_{01}) + f_{01}+p q_{01}}{1+p\,q_{11}},
\end{eqnarray}
\end{subequations}
which uniquely defines the evolution below the staircase as indicated in Figure \ref{fig-ivp}.

\begin{remark}\normalfont
We could consider more general initial value configurations of staircases of lengths $\ell_1$ and $\ell_2$ in the $n$ and $m$ lattice direction, respectively. Such initial value problems are consistent with evolutions (\ref{LPcompatEq}), (\ref{LPcompatEq-2}) determining the values of all fields uniquely at every vertex and edge of the lattice.
\end{remark}

\subsubsection{Derivation of an Adler-Yamilov type of system}
Now, using first integrals\index{first integral(s)} we can reduce system (\ref{LPcompatEq}) to an \textit{Adler-Yamilov type}\index{Adler-Yamilov!type of system} of system as those in \cite{Adler-Yamilov}. Specifically, we have the following.

\begin{proposition}
System (\ref{LPcompatEq}) can be reduced to the following non-autonomous Adler-Yamilov type of system for $p$ and $q$:
\begin{equation} \label{nls-pq-sys}
p_{01}=p_{10}-\frac{\alpha(n)-\beta(m)}{1+ pq_{11}}p,\quad q_{01}=q_{10}+\frac{\alpha(n)-\beta(m)}{1+ pq_{11}}q_{11}.
\end{equation}
\end{proposition}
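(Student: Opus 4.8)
The plan is to insert the two first integrals recorded in the preceding corollary directly into the rational expressions (\ref{LPcompatEq}) and observe that the auxiliary fields $f$ and $g$ drop out. By relations (\ref{nls-fi}) we have $f=\alpha(n)+p\,q_{10}$ and $g=\beta(m)+p\,q_{01}$. Since $\beta$ depends on $m$ alone it is invariant under the shift $\mathcal{S}$, so applying $\mathcal{S}$ to the second relation gives $g_{10}=\beta(m)+p_{10}\,q_{11}$. These three identities express $f$, $g$ and $g_{10}$ entirely through $p$, $q$ (and their shifts) together with the lattice ``constants'' $\alpha(n)$, $\beta(m)$; the non-autonomy of the reduced system is simply inherited from these constants of integration.

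First I would treat the equation for $p_{01}$ in (\ref{LPcompatEq}), whose numerator is $q_{10}p^{2}+(g_{10}-f)p+p_{10}$. Substituting $g_{10}-f=\beta(m)+p_{10}q_{11}-\alpha(n)-p\,q_{10}$, the term $-p^{2}q_{10}$ cancels the leading $q_{10}p^{2}$, and collecting what remains yields $p_{10}\,(1+p\,q_{11})-(\alpha(n)-\beta(m))\,p$; dividing by $1+p\,q_{11}$ produces the first equation of (\ref{nls-pq-sys}). The equation for $q_{01}$ is handled identically: its numerator $p_{10}q_{11}^{2}+(f-g_{10})q_{11}+q_{10}$, after inserting $f-g_{10}$, loses the $p_{10}q_{11}^{2}$ term and collapses to $q_{10}\,(1+p\,q_{11})+(\alpha(n)-\beta(m))\,q_{11}$, which upon division by $1+p\,q_{11}$ gives the second equation of (\ref{nls-pq-sys}).

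Finally I would confirm that nothing is lost in passing to the two-field description: the remaining components of (\ref{LPcompatEq}), namely the expressions for $f_{01}$ and for $g$, must be consistent with the substitution. This is immediate, since once $f$ and $g$ are pinned down by the first integrals (\ref{nls-fi}), those two equations reduce to identities once (\ref{nls-pq-sys}) is taken into account — equivalently, the conservation law (\ref{conlaw}) is exactly the compatibility relation between (\ref{nls-fi}) and (\ref{nls-pq-sys}). Thus (\ref{LPcompatEq}) is faithfully encoded by (\ref{nls-pq-sys}) together with (\ref{nls-fi}). There is no genuine analytic obstacle here; the only point requiring care is bookkeeping of the shifts, in particular using $g_{10}=\beta(m)+p_{10}q_{11}$, so the proposition amounts to a clean elimination of the edge variables $f$, $g$.
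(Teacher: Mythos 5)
Your proposal is correct and follows exactly the route the paper takes: its proof consists of the single remark that one substitutes the first integrals (\ref{nls-fi}) for $f$ and $g$ in (\ref{LPcompatEq}), which is precisely the elimination you carry out. Your algebra checks out (including the shifted relation $g_{10}=\beta(m)+p_{10}q_{11}$ and the cancellation of the quadratic terms in the numerators), so you have simply written out the details the paper leaves implicit.
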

\begin{proof}
The proof is straightforward if one uses relations (\ref{nls-fi}) to replace $f$ and $g$ in system (\ref{LPcompatEq}).
\end{proof}

\subsubsection{Derivation of discrete Toda equation}
Now, we will use two different Darboux matrices associated with the NLS equation to construct the discrete Toda equation \cite{Suris}\index{discrete!Toda equation}\index{Toda (type) equation(s)}. 

In fact, we introduce a discrete Lax pair as (\ref{dLaxP}), with $M=M_1(p,f)$ in (\ref{M-degen}) and $K=M(p,q_{01},g)$ in (\ref{M-NLS}). That is, we consider the following system
\begin{subequations}
\begin{align}
\Psi_{10} &= \left(\lambda \left(\begin{array}{cc} 1 & 0\\0 & 0 \end{array}\right)+\left(\begin{array}{cc} f & p\\ \frac{1}{p} & 0 \end{array}\right)\right) \Psi,\\
\Psi_{01} &= \left( \lambda \left(\begin{array}{cc} 1 & 0\\ 0 & 0\end{array}\right)+\left(\begin{array}{cc} g & p\\ q_{01} & 1\end{array}\right) \right)\Psi,
\end{align}
\end{subequations}
and impose its compatibility condition.

From the coefficient of the $\lambda$-term in the latter condition we extract the following equations
\begin{subequations}
\begin{align}
f-f_{01}&=g-g_{10},\label{l1-term-1}\\
p_{01}&=\frac{1}{q_{11}}\label{l1-term-2}.
\end{align}
\end{subequations}
Additionally, the $\lambda^0$-term of the compatibility condition implies
\begin{subequations}
\begin{align}
f_{01}g-g_{10}f&=\frac{p_{10}}{p}-p_{01}q_{01},\label{l0-term-1}\\
g_{10}-f_{01}&=\frac{p_{01}}{p},\label{l0-term-2}\\
g-f&=\frac{p_{01}}{p}.\label{l0-term-3}
\end{align}
\end{subequations}

Now, recall from the previous section that, using (\ref{nls-fi}), the quantities $g$ and $g_{10}$ are given by
\begin{equation}
g=\beta(m)+pq_{01} \quad \text{and} \quad g_{10}=\beta(m)+p_{10}q_{11}.
\end{equation}
We substitute $g$ and $g_{10}$ into (\ref{l0-term-2}) and (\ref{l0-term-3}), and then replace $p$ and its shifts using (\ref{l1-term-2}). Then, we can express $f$ and $f_{01}$ in terms of the potential $q$ and its shifts:
\begin{subequations}\label{ff01}
\begin{align}
f&=\frac{q_{01}}{q_{10}}-\frac{q_{10}}{q_{11}}+\beta(m),\label{ff01-1}\\
f_{01}&=\frac{q_{11}}{q_{20}}-\frac{q_{10}}{q_{11}}+\beta(m).\label{ff01-2}
\end{align}
\end{subequations}

\begin{proposition}
The compatibility of system (\ref{ff01}) yields a fully discrete Toda type equation.
\end{proposition}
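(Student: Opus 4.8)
The plan is to read \emph{compatibility} of (\ref{ff01}) as the demand that the two formulae be mutually consistent, that is, that the $f_{01}$ produced by (\ref{ff01-2}) coincide with the $\mathcal{T}$-shift of the $f$ produced by (\ref{ff01-1}). Before doing that I would first check that this is the \emph{only} residual constraint: equation (\ref{l1-term-2}) fixes $p=1/q_{10}$, the first integral $g-pq_{01}=\beta(m)$ fixes $g$ and $g_{10}$, and a short direct computation shows that, once (\ref{ff01}) is imposed, equations (\ref{l1-term-1}) and (\ref{l0-term-1}) hold identically. Hence everything collapses to the single scalar relation $\mathcal{T}f=f_{01}$.

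Carrying this out, apply $\mathcal{T}$ to (\ref{ff01-1}) and equate the result with the right-hand side of (\ref{ff01-2}):
\[
\frac{q_{02}}{q_{11}}-\frac{q_{11}}{q_{12}}+\beta(m+1)=\frac{q_{11}}{q_{20}}-\frac{q_{10}}{q_{11}}+\beta(m).
\]
The term $-q_{10}/q_{11}$ is common to both sides; moving it across together with the $\beta$'s gives
\[
\frac{q_{10}+q_{02}}{q_{11}}-q_{11}\Bigl(\frac{1}{q_{20}}+\frac{1}{q_{12}}\Bigr)=\beta(m)-\beta(m+1),
\]
a single partial difference equation for $q$ supported on the five points $q_{10},q_{12}$ (a vertical pair) and $q_{02},q_{20}$ (an anti-diagonal pair) around the central vertex $q_{11}$, with an $n$-independent (lattice-parameter) inhomogeneity; in the basis $\{(0,1),(-1,1)\}$ this is exactly the plus-shaped five-point stencil of the Toda lattice.

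To exhibit the relation as a fully discrete Toda-type equation I would pass to the variable $q_{ij}=\exp(\phi_{ij})$ (or, equivalently, to a bond/ratio variable absorbing $\beta$); the equation then acquires the exponential Flaschka--Hirota shape of the discrete-time Toda lattice, and after the obvious relabelling of the lattice (recentring the stencil and, if convenient, introducing light-cone coordinates) it matches the discrete Toda equation of \cite{Suris}, with $\beta(m)-\beta(m+1)$ playing the role of the step parameter and the autonomous chain recovered when $\beta$ is constant. The shift and the algebraic rearrangement are routine; the real work is this last identification --- choosing the substitution that puts the equation into a recognised Toda normal form and handling the non-autonomous $\beta$-term --- together with the bookkeeping above that guarantees no further independent equation survives.
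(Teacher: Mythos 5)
Your proposal follows essentially the same route as the paper: apply $\mathcal{T}$ to the first relation of (\ref{ff01}), equate the result with the second to obtain a five-point equation for $q$, and then exponentiate to reach Toda form; your five-point equation agrees exactly with the paper's, which then sets $q\mapsto\exp(-w_{-1,-1})$ to write it as $e^{w_{1,-1}-w}-e^{w-w_{-1,1}}+e^{w_{0,1}-w}-e^{w-w_{0,-1}}=\beta(m+1)-\beta(m)$. The only blemish is the remark that $-q_{10}/q_{11}$ is ``common to both sides'' --- it occurs only on the right-hand side --- but your subsequent rearrangement is nonetheless correct, so nothing is affected.
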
\index{discrete!Toda equation}\index{Toda (type) equation(s)}
\begin{proof}
Applying the shift operator $\mathcal{T}$ on both sides of (\ref{ff01-1}) and demanding that its right-hand side agrees with that of (\ref{ff01-2}), we obtain
\begin{equation}
\frac{q_{11}}{q_{20}}-\frac{q_{02}}{q_{11}}+\frac{q_{11}}{q_{12}}-\frac{q_{10}}{q_{11}}=\beta(m+1)-\beta(m).
\end{equation}
Then, we make the transformation
\begin{equation}
q\rightarrow \exp(-w_{-1,-1}),
\end{equation}
which implies the following discrete Toda type equation \index{discrete!Toda equation}\index{Toda (type) equation(s)}
\begin{equation}\label{DToda}\index{discrete!Toda equation}\index{Toda (type) equation(s)}
e^{w_{1,-1}-w}-e^{w-w_{-1,1}}+e^{w_{0,1}-w}-e^{w-w_{0,-1}}=\beta(m+1)-\beta(m),
\end{equation}
and proves the statement.
\end{proof}

\begin{remark}\normalfont
The discrete Toda equation (\ref{DToda}) can be written in the form of a conservation law,\index{conservation law(s)}
\begin{equation}
(\mathcal{S}-1)e^{w_{0,-1}-w_{-10}}=(\mathcal{T}-1)(e^{w_{0,-1}-w_{-10}}-e^{w-w_{0,-1}}+\beta(m)).
\end{equation}
\end{remark}


\subsection{Derivative nonlinear Schr\"odinger equation and related discrete systems}
Now we consider the difference Lax pair 
\begin{equation} \label{sl2-ddLP}
\Psi_{10}\,=\,M(p,q_{10},f;c_1,c_2) \,\Psi\,,\quad \Psi_{01}\,=\,M(p,q_{01},g;1,1)\,\Psi\,, 
\end{equation}
where matrix $M$ is given in (\ref{DT-sl2-gen}) and at least one of the constants $c_1$, $c_2$ is different from zero. 

From the consistency condition of this system, we derive the following system of equations.
\begin{subequations} \label{sl2-res-eq}
\begin{eqnarray}
 &&fg_{10}-gf_{01}  = 0,\label{consLaw} \\
 &&f_{01}q_{11} - fq_{10} - c_1 g_{10}q_{11} + c_2 gq_{01}=0, \\
 &&f_{01}p_{01} - fp - c_2 g_{10}p_{10} + c_1 gp = 0,\\
 &&f_{01}-f- c_1 (g_{10}-g) - f g_{10} p_{10}q_{10} + g f_{01}p_{01}q_{01} = 0.
\end{eqnarray}
\end{subequations}

As in the case of the nonlinear Schr\"odinger equation, we can solve equations (\ref{sl2-res-eq}) either for $p_{01}$, $q_{01}$, $f_{01}$ and $g$ or for $p_{10}$, $q_{10}$, $f$ and $g_{10}$, motivated by the evolution of the intial value problem on the staircase (cf. Figure \ref{fig-ivp}). 

Specifically, the first branch is the trivial solution given by 
\begin{equation}
p_{01}=\frac{c_2}{c_1}, \quad q_{01}=\frac{c_1}{c_2}q_{10}, \quad f_{01}=c_1g_{10}, \quad g=\frac{1}{c_1}f
\end{equation}
The second branch involves rational expressions of the remaining variables, and it is given by 
\begin{subequations} \label{SL2-sol}
\begin{align}
p_{01} &=\frac{A}{f\,B^2}\,\left(f^2 p^2 q_{10} + c_2 f p (g_{10} p_{10} q_{10}-1) - c_2^2 g_{10} p_{10} + c_1 c_2 g_{10} p \right), \\
g &= g_{10}\,\frac{A}{B}, \\ 
f_{01} &=f\, \frac{B}{A},\\ 
q_{01} &= \frac{B}{g_{10} A^2}\,\left(f (q_{11}-q_{10} + g_{10} p_{10} q_{10} q_{11}) + c_1 g_{10} q_{11} (g_{10} p_{10} q_{11}-1)\right),
\intertext{where $A$ and $B$ are given by the following expressions}
&A:=fp q_{11} + c_2 (g_{10} p_{10} q_{11}-1),\quad\text{and}\quad B:= f p q_{10} + c_1 g_{10} p q_{11} - c_2.
\end{align}
\end{subequations}

When either $c_1$ or $c_2$ is equal to 0, then system (\ref{sl2-res-eq}) admits a unique non-trivial solution and it is given by the above expressions if we set $c_1$ or $c_2$ equal to 0 accordingly.

Now, as in the case of NLS equation, the derivation of the Darboux matrix gives rise to some integrability properties for system (\ref{sl2-res-eq}). Specifically, we have the following.

\begin{proposition}\label{1stInt-z2}
System (\ref{sl2-res-eq}) admits two first integrals,\index{first integral(s)} $\mathcal{F}:=f^2pq_{10}-c_2f$ and $\mathcal{G}:=g^2pq_{01}-g$, and the following conservation law\index{conservation law(s)}
	\begin{equation}\label{conlawZ2}
	(\mathcal{T}-1)\ln f=(\mathcal{S}-1)\ln g.
	\end{equation}
\end{proposition}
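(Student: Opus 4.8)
The plan is to dispose of the conservation law first, since it is immediate, and then obtain the two first integrals from a determinant identity rather than by brute substitution. For (\ref{conlawZ2}): dividing the first equation of (\ref{sl2-res-eq}), $fg_{10}=gf_{01}$, by $fg$ gives $f_{01}/f=g_{10}/g$, and taking logarithms — using $\mathcal{T}(\ln f)=\ln f_{01}$ and $\mathcal{S}(\ln g)=\ln g_{10}$ — turns this into $(\mathcal{T}-1)\ln f=(\mathcal{S}-1)\ln g$. Nothing further is needed here.

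For the first integrals I would compute the determinants of the two Darboux matrices of the pair (\ref{sl2-ddLP}). A one-line calculation from (\ref{DT-sl2-gen}) gives
\[
\det M(p,q_{10},f;c_1,c_2)=c_1c_2-\lambda^2\mathcal{F},\qquad \det K=\det M(p,q_{01},g;1,1)=1-\lambda^2\mathcal{G},
\]
which is consistent with (\ref{sl2-D-con-det-gen}) (that relation already says $\partial_x\det M=0$). The consistency condition of (\ref{sl2-ddLP}) is $M_{01}K=K_{10}M$ — this is (\ref{CompCond}) for the present pair, and (\ref{sl2-res-eq}) is its entrywise form — so taking determinants yields $(\det M_{01})(\det K)=(\det K_{10})(\det M)$. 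Substituting the formulas above and comparing coefficients of $\lambda^4$ and $\lambda^2$ (the $\lambda^0$ terms cancel automatically) gives the pair of scalar relations
\[
(\mathcal{T}\mathcal{F})\,\mathcal{G}=\mathcal{F}\,(\mathcal{S}\mathcal{G}),\qquad (\mathcal{T}-1)\mathcal{F}=c_1c_2\,(\mathcal{S}-1)\mathcal{G}.
\]
Eliminating $(\mathcal{S}-1)\mathcal{G}$ between them produces $\bigl((\mathcal{T}-1)\mathcal{F}\bigr)\bigl(c_1c_2\mathcal{G}-\mathcal{F}\bigr)=0$; since $c_1c_2\mathcal{G}-\mathcal{F}$ is not identically zero on solutions of (\ref{sl2-res-eq}), we get $(\mathcal{T}-1)\mathcal{F}=0$, while if $c_1c_2=0$ the second relation gives this directly. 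Substituting $(\mathcal{T}-1)\mathcal{F}=0$ back into the first relation gives $\mathcal{F}\,(\mathcal{S}-1)\mathcal{G}=0$, so $(\mathcal{S}-1)\mathcal{G}=0$ on the locus $\mathcal{F}\neq0$ and hence everywhere, since all quantities involved are rational in the initial data.

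An equivalent and probably more pedestrian route — likely the one actually used — is to insert the explicit solution (\ref{SL2-sol}) directly into $\mathcal{T}\mathcal{F}=f_{01}^2p_{01}q_{11}-c_2f_{01}$ and $\mathcal{S}\mathcal{G}=g_{10}^2p_{10}q_{11}-g_{10}$ and simplify to $\mathcal{F}$ and $\mathcal{G}$. In either approach I expect the real obstacle to be bookkeeping rather than anything conceptual: on the determinant route the only delicate point is the harmless observation that $c_1c_2\mathcal{G}-\mathcal{F}\not\equiv0$, together with keeping the degenerate cases $c_1c_2=0$ and $\mathcal{F}=0$ in mind; on the substitution route it is merely the size of the rational expressions, which one would realistically check with a computer algebra system.
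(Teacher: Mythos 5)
Your treatment of the conservation law is exactly the paper's: both arguments take logarithms of $fg_{10}=gf_{01}$, i.e.\ of equation (\ref{consLaw}). For the first integrals, however, you take a genuinely different and more structural route. The paper simply asserts that $(\mathcal{T}-1)\mathcal{F}=0$ and $(\mathcal{S}-1)\mathcal{G}=0$ ``can be verified with straightforward calculation using equations (\ref{sl2-res-eq})'' --- i.e.\ the pedestrian substitution you describe as the likely original argument --- whereas you extract them from $\det(M_{01}K)=\det(K_{10}M)$. Your determinant formulas are correct ($\det M=c_1c_2-\lambda^2\mathcal{F}$, $\det K=1-\lambda^2\mathcal{G}$), the two scalar relations read off at orders $\lambda^4$ and $\lambda^2$ are correct, and the elimination to $\bigl((\mathcal{T}-1)\mathcal{F}\bigr)\bigl(c_1c_2\mathcal{G}-\mathcal{F}\bigr)=0$ checks out. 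What your approach buys is an explanation of where the first integrals come from --- constancy of the determinant along the lattice, the discrete counterpart of (\ref{sl2-D-con-det-gen}) --- rather than a verification that lands on them by apparent accident.

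One point you call ``harmless'' deserves more care. The factor $c_1c_2\mathcal{G}-\mathcal{F}$ is not merely ``not identically zero on solutions of (\ref{sl2-res-eq})'': it vanishes identically on the trivial branch, since there $g=\tfrac{1}{c_1}f$ and $q_{01}=\tfrac{c_1}{c_2}q_{10}$ give $c_1c_2\mathcal{G}=c_1c_2\bigl((f/c_1)^2p(c_1/c_2)q_{10}-f/c_1\bigr)=f^2pq_{10}-c_2f=\mathcal{F}$ exactly, and on that branch $(\mathcal{T}-1)\mathcal{F}$ genuinely need not vanish (the shifted data $p_{10},q_{11},g_{10}$ remain free there). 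So the dichotomy in your product identity is precisely the trivial/non-trivial branch split, the proposition is really a statement about the evolution defined by (\ref{SL2-sol}), and your non-vanishing claim must be asserted and checked on that branch specifically --- where it does hold generically, the two branches being distinct components of the variety. This does not break your argument, and the paper makes the same restriction implicitly, but it is a genuine case distinction rather than a removable technicality.
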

\begin{proof}
Indeed, relation (\ref{sl2-D-con-det-gen}) suggests that
	\begin{equation}
	({\cal{T}}-1)\left(f^2pq_{10}-c_2f\right)=0\quad {\mbox{and}} \quad ({\cal{S}}-1)\left(g^2pq_{01}-g\right)=0.
	\end{equation}
This can be verified with straightforward calculation using equations (\ref{sl2-res-eq}). Thus, $F=f-pq_{10}$ and $G:=g-pq_{01}$ are first integrals.\index{first integral(s)} Moreover, from equation (\ref{consLaw}) we get $\ln(fg_{10})=\ln(gf_{01})$ which implies
\begin{equation}
\ln f_{01}-\ln f=\ln g_{10}-\ln g.
\end{equation}
The latter equation can be written in the form of the conservation law (\ref{conlaw}).\index{conservation law(s)}
\end{proof}

\subsubsection{Derivation of a six-point difference equation}\index{six-point difference equation}\index{difference equation(s)!six-point}
We now use Darboux matrices $M_{deg}(p)$ and $M(p,q_{01},g;1,1)$ in (\ref{DT-sl2-degen}) and (\ref{DT-sl2-gen}), respectively, to define the following discrete Lax pair
\begin{subequations}\label{DLaxP-Z2}
\begin{align}
\Psi_{10}&=\left(\lambda^{2}\left(\begin{array}{cc} 1/p & 0\\ 0 & 0\end{array}\right)+\lambda\left(\begin{array}{cc} 0 & 1\\ 1 & 0\end{array}\right)\right)\Psi,\label{DLaxP-Z2-1}\\
\Psi_{01}&= \left(\lambda^{2}\left(\begin{array}{cc} g & 0\\ 0 & 0\end{array}\right)+\lambda\left(\begin{array}{cc} 0 & gp\\ gq_{01} & 0\end{array}\right)+\left(\begin{array}{cc} 1 & 0\\ 0 & 1 \end{array}\right)\right)\Psi.\label{DLaxP-Z2-2}
\end{align}
\end{subequations}

The compatibility of the above system implies the following equations
\begin{equation}\label{pgg10}
g_{10}=\frac{p}{p_{01}}g, \quad g_{10}=\frac{p}{q_{11}}g, \quad \frac{1}{p_{01}}+gq_{01}=g_{10}p_{10}+\frac{1}{p}.
\end{equation}
From the first two of the above equations we conclude that $q_{11}=p_{01}$ and, thus, $q_{10}=p$ and $q_{01}=p_{-1,1}$. Additionally, we use the third equation of (\ref{pgg10}) to express field $g$ in terms of $p$ and its shifts. Then, the first equation of (\ref{pgg10}) can be rewritten as the following six-point difference equation \index{six-point difference equation}\index{difference equation!six-point}
\begin{equation}\label{sixpointEq}
\frac{p_{01}-p}{p_{01}(p_{01}p_{-11}-pp_{10})}=\frac{p_{11}-p_{10}}{p_{10}(p_{11}p_{01}-p_{10}p_{20})}
\end{equation}

Equation (\ref{pgg10}) can be solved uniquely for any of the $p$ and its shifts, apart from $p_{10}$ and $p_{01}$. This allows us to define uniquely the evolution of the initial data placed on a double staircase, as it is shown in Figure (\ref{fig-ivp-Z2-p})

\begin{figure}[ht]
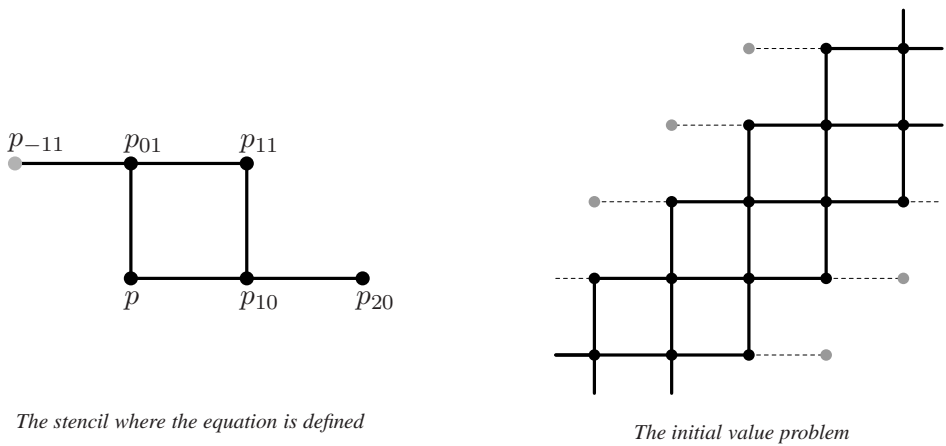

\centertexdraw{
\setunitscale 0.4
\setgray 0.0
\linewd 0.04 \move (-8 -1) \lvec (-6.5 -1) \lvec (-6.5 0.5) \lvec (-8 0.5) \lvec (-8 -1) \move (-8 0.5) \lvec (-9.5 0.5) \move (-6.5 -1) \lvec (-5 -1)
\move (-8 -1) \fcir f:0.0 r:0.095 \move (-6.5 -1) \fcir f:0.0 r:0.095
\move (-6.5 0.5) \fcir f:0.0 r:0.095 \move (-8 0.5) \fcir f:0.0 r:0.095
\move (-9.5 0.5) \fcir f:0.7 r:0.095 \move (-5 -1) \fcir f:0.0 r:0.095
\move (-2 -2.5) \lvec (-2 -2) \lvec (-1 -2) \lvec (-1 -1) \lvec (0 -1) \lvec (0 0) \lvec (1 0) \lvec(1 1) \lvec (2 1) \lvec(2 2) \lvec(2.5 2)
\linewd 0.04  \move (-2 -2) \lvec (-2 -1) \lvec(-1 -1) \lvec (-1 0) \lvec (0 0) \lvec (0 1) \lvec(1 1) \lvec (1 2) \lvec (2 2) \lvec (2 2.5) \move(-2.5 -2) \lvec(-2 -2) \move(-2.5 -2) \lvec(-2 -2)
\move (-1 -2.5) \lvec (-1 -2) \lvec(0 -2) \lvec(0 -1) \lvec(1 -1) \lvec(1 0) \lvec(2 0) \lvec(2 1) \lvec(2.5 1)
\linewd 0.001
\lpatt (.05 .05) \move (-2 0) \lvec (-1 0) \move (0 1) \lvec (-1 1) \move (0 2) \lvec (1 2) \move (-2.5 -1) \lvec (-2 -1) 
\move  (0 -2) \lvec (1 -2) \move (1 -1) \lvec (2 -1) \move (2 0) \lvec (2.5 0)
\lpatt  (1 1)
\move (-2 -2) \fcir f:0.0 r:0.075 \move (-2 -1) \fcir f:0.0 r:0.075
\move (-1 -2) \fcir f:0.0 r:0.075 \move (0 -2) \fcir f:0.0 r:0.075
\move (-1 -1) \fcir f:0.0 r:0.075 \move (-1 0) \fcir f:0.0 r:0.075 \move (0 -1) \fcir f:0.0 r:0.075 \move (1 -1) \fcir f:0.0 r:0.075
\move (0 0) \fcir f:0.0 r:0.075 \move (0 1) \fcir f:0.0 r:0.075 \move (1 0) \fcir f:0.0 r:0.075 \move (2 0) \fcir f:0.0 r:0.075 
\move (1 1) \fcir f:0.0 r:0.075 \move (2 1) \fcir f:0.0 r:0.075
\move (2 2) \fcir f:0.0 r:0.075 \move (1 2) \fcir f:0.0 r:0.075
\move (-2 0) \fcir f:0.6 r:0.075 
\move (-1 1) \fcir f:0.6 r:0.075 \move (0 2) \fcir f:0.6 r:0.075
\move (1 -2) \fcir f:0.6 r:0.075 \move (2 -1) \fcir f:0.6 r:0.075
\htext (-8.1 -1.4) {\small{$p$}}
\htext (-6.6 -1.4) {\small{$p_{10}$}}
\htext (-8.1 0.65) {\small{$p_{01}$}}
\htext (-6.6 0.65) {\small{$p_{11}$}}
\htext (-5.1 -1.4) {\small{$p_{20}$}}
\htext (-9.6 0.65) {\small{$p_{-11}$}}
\htext (-9.5 -3) {\scriptsize{\it{The stencil where the equation is defined}}}
\htext (-1.5 -3.15) {\scriptsize{\it{The initial value problem}}}
}
\caption{{\small{The stencil of six points and the initial value problem for equation (\ref{sixpointEq})}}} \label{fig-ivp-Z2-p}
\end{figure}

A first integral\index{first integral(s)} of equation (\ref{pgg10}) is given by the following.

\begin{corollary}
Equation (\ref{pgg10}) admits a first integral\index{first integral(s)} $\mathcal{G}_p$ given by
\begin{equation}
\mathcal{G}_p=\frac{(p_{01}-p)(p_{10}-p_{-11})}{(p_{01}p_{-11}-pp_{10})^2}.
\end{equation}
\end{corollary}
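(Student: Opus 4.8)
The plan is to obtain $\mathcal{G}_p$ by pushing the first integral $\mathcal{G}=g^{2}pq_{01}-g$ of Proposition~\ref{1stInt-z2} through the reduction that produced the six-point equation~(\ref{sixpointEq}) from the discrete Lax pair~(\ref{DLaxP-Z2}). The Darboux matrix $M(p,q_{01},g;1,1)$ occurring as the $\mathcal{T}$-shift in~(\ref{DLaxP-Z2}) is exactly the one used in~(\ref{sl2-ddLP}), so the quantity $\mathcal{G}=g^{2}pq_{01}-g$ is still conserved, $(\mathcal{S}-1)\mathcal{G}=0$; one may either quote Proposition~\ref{1stInt-z2} directly, or re-derive this modulo~(\ref{pgg10}) by substituting $g_{10}=\frac{p}{p_{01}}g$, $q_{11}=p_{01}$ and the third relation of~(\ref{pgg10}) into $g_{10}^{2}p_{10}q_{11}-g_{10}-g^{2}pq_{01}+g$ and checking that the two resulting multiples of $(p_{01}-p)/p_{01}$ cancel.

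Concretely, the first step is to make the constraints of~(\ref{pgg10}) explicit: the first two relations force $q_{11}=p_{01}$, hence $q_{10}=p$ and $q_{01}=p_{-11}$, as already noted in the derivation of~(\ref{sixpointEq}). Eliminating $g_{10}$ between $g_{10}=\frac{p}{p_{01}}g$ and $\frac{1}{p_{01}}+gq_{01}=g_{10}p_{10}+\frac{1}{p}$ (with $q_{01}=p_{-11}$) gives a linear equation for $g$ whose solution is
\[
g=\frac{p_{01}-p}{p\,(p_{01}p_{-11}-p\,p_{10})}.
\]
The second step is to substitute this together with $q_{01}=p_{-11}$ into $\mathcal{G}=g\,(gpq_{01}-1)$. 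A one-line computation yields $gpq_{01}-1=\dfrac{p\,(p_{10}-p_{-11})}{p_{01}p_{-11}-p\,p_{10}}$, so that
\[
\mathcal{G}=g\,(gpq_{01}-1)=\frac{(p_{01}-p)(p_{10}-p_{-11})}{(p_{01}p_{-11}-p\,p_{10})^{2}}=:\mathcal{G}_p.
\]
Since $\mathcal{G}$ is annihilated by $\mathcal{S}-1$ and $\mathcal{G}_p$ is just $\mathcal{G}$ written in the reduced variable $p$, the same holds for $\mathcal{G}_p$; equivalently, $\mathcal{G}_p$ is a first integral of~(\ref{pgg10}), and hence of the six-point equation~(\ref{sixpointEq}).

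There is no genuine obstacle here, only bookkeeping: one must be consistent about which lattice shift sends $q_{10}\mapsto p$ and which sends $q_{01}\mapsto p_{-11}$, and one must eliminate $g_{10}$ using an \emph{independent} pair of the relations in~(\ref{pgg10}) (all three become dependent once $q_{11}=p_{01}$ is imposed). If a fully self-contained proof avoiding Proposition~\ref{1stInt-z2} is preferred, the alternative is to verify $\mathcal{S}\mathcal{G}_p=\mathcal{G}_p$ directly modulo~(\ref{sixpointEq}); this is a routine but longer rational-function identity, which is why routing through $g$ and the already-established first integral $\mathcal{G}$ is the more economical path.
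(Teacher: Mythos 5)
Your proposal is correct and follows exactly the route of the paper's own (one-line) proof: express the fields appearing in $\mathcal{G}=g^{2}pq_{01}-g$ from Proposition \ref{1stInt-z2} in terms of $p$ and its shifts using the relations (\ref{pgg10}), and the explicit formulas you obtain for $g$ and for $gpq_{01}-1$ check out, giving precisely $\mathcal{G}_p$. Your added remark that $(\mathcal{S}-1)\mathcal{G}=0$ should strictly be re-verified modulo (\ref{pgg10}) (since the six-point equation comes from the Lax pair (\ref{DLaxP-Z2}) rather than (\ref{sl2-ddLP})) is a sensible piece of care that the paper leaves implicit, but it does not change the argument.
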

\begin{proof}
We express all the fields in $\mathcal{G}$, which is given in proposition \ref{1stInt-z2}, in terms of $p$ and its shifts.
\end{proof}  


\subsection{A deformation of the derivative nonlinear Schr\"odinger equa-\\tion and related discrete systems}
Here we employ Darboux matrix (\ref{Dih-M}) to introduce the following discrete Lax pair
\begin{equation}
\Psi_{10}=M(p,q_{10},f,g)\Psi, \quad \Psi_{01}=M(p,q_{01},u,v)\Psi.
\end{equation}
The compatibility condition of this Lax pair leads to an equation solely for $f$ and $u$, given by
\begin{equation}\label{conLaw-dDNLS}
f_{01}u-u_{10}f=0,
\end{equation}
and a system of equations for the remaining fields, given by
\begin{subequations}\label{DS-dDNLS}
\begin{align}
p(p_{01}-p)+(q_{01}-q_{10})q_{11}+g_{01}v-gv_{10}=&0,\\
g_{01}-g+v-v_{10}+p_{01}q_{01}-p_{10}q_{10}=&0,\\
p_{01}-p_{10}+g_{01}q_{01}+(v-g)q_{11}-v_{10}q_{10}=&0,\\
q_{01}-q_{10}+g_{01}p-gp_{10}+vp_{01}-v_{10}p=&0.
\end{align}
\end{subequations}

\begin{remark}\normalfont
We shall consider the case where the value of $\Phi_1$ in (\ref{dih-D-fi}) is nonzero. In the oposite case, we deduce that $g=pq_{10}$ and similarly $v=pq_{01}$; this is due to the fact that $f$ must be nonzero ($f=0$ implies $M=0$). Then, system (\ref{DS-dDNLS}) has only the trivial solution $p_{01}=p_{10}$ and $q_{01}=q_{10}$.
\end{remark}

\begin{proposition}
Equation (\ref{conLaw-dDNLS}) can be written in the form of the following conservation law\index{conservation law(s)}
\begin{equation}\label{conLaw-ds-D2}
(\mathcal{T}-1)\ln (g-pq_{10})=(\mathcal{S}-1)\ln (v-pq_{01}).
\end{equation}
\end{proposition}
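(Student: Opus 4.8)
The plan is to extract the conservation law from the determinant of the compatibility condition, exactly in the spirit in which the first integrals (\ref{dih-D-fi}) and the displayed formula for $\det M$ following them were obtained. Write $K:=M(p,q_{01},u,v)$ for the second Darboux matrix of the Lax pair, so that the consistency condition (\ref{CompCond}) reads $M_{01}K=K_{10}M$ with the subscripts denoting the relevant lattice shifts. Taking determinants produces the scalar identity
\begin{equation*}
\det M_{01}\cdot\det K=\det K_{10}\cdot\det M ,
\end{equation*}
which holds identically in $\lambda$.

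Next I would use that every determinant here is explicitly known. By the computation recorded after (\ref{dih-D-fi}) one has $\det M=(\lambda^{2}+\lambda^{-2})\Phi_{1}+\Phi_{2}$ with $\Phi_{1}=f^{2}(g-pq_{10})$, and the corresponding expressions for $\det K$, $\det M_{01}$ and $\det K_{10}$ follow by the substitutions $(p,q_{10},f,g)\mapsto(p,q_{01},u,v)$, $(p,q_{10},f,g)\mapsto(p_{01},q_{11},f_{01},g_{01})$ and $(p,q_{01},u,v)\mapsto(p_{10},q_{11},u_{10},v_{10})$ respectively. Each is a Laurent polynomial in $\lambda$ whose top term is a multiple of $\lambda^{2}$, so both sides of the determinant identity are Laurent polynomials of top degree $\lambda^{4}$; comparing the coefficients of $\lambda^{4}$ gives
\begin{equation*}
f_{01}^{2}(g_{01}-p_{01}q_{11})\,u^{2}(v-pq_{01})=u_{10}^{2}(v_{10}-p_{10}q_{11})\,f^{2}(g-pq_{10}).
\end{equation*}

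Finally I would feed in equation (\ref{conLaw-dDNLS}) itself: since $f_{01}u=u_{10}f$, the common factor $f_{01}^{2}u^{2}=u_{10}^{2}f^{2}$ may be cancelled --- it is nonzero, for $f=0$ or $u=0$ would force $M$ or $K$ to vanish --- leaving
\begin{equation*}
(g_{01}-p_{01}q_{11})(v-pq_{01})=(v_{10}-p_{10}q_{11})(g-pq_{10}).
\end{equation*}
Under the hypothesis of the preceding remark the quantity $\Phi_{1}$, and hence $g-pq_{10}$ together with its counterpart $v-pq_{01}$, does not vanish, so all four factors above are nonzero; recognising $g_{01}-p_{01}q_{11}=\mathcal{T}(g-pq_{10})$ and $v_{10}-p_{10}q_{11}=\mathcal{S}(v-pq_{01})$ and taking logarithms turns this identity into precisely (\ref{conLaw-ds-D2}). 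Reading the chain of implications backwards shows that, modulo the remaining equations (\ref{DS-dDNLS}), equation (\ref{conLaw-dDNLS}) and the conservation law are equivalent up to a sign fixed by continuity, so (\ref{conLaw-dDNLS}) is genuinely this conservation law in multiplicative form. There is no serious obstacle in the argument; the only point to watch is the non-vanishing of $f$, $u$, $g-pq_{10}$ and $v-pq_{01}$ that justifies the cancellations, which is exactly what the standing assumption $\Phi_{1}\neq 0$ of the remark supplies.
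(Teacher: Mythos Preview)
Your argument is correct, and it arrives at exactly the same multiplicative identity
\[
(g_{01}-p_{01}q_{11})(v-pq_{01})=(v_{10}-p_{10}q_{11})(g-pq_{10})
\]
that the paper obtains before taking logarithms. The route, however, is different from the paper's. The paper simply rescales the first integral $\Phi_{1}=f^{2}(g-pq_{10})$ (and its counterpart $u^{2}(v-pq_{01})$ for $K$) to the value $1$, so that $f^{2}=1/(g-pq_{10})$ and $u^{2}=1/(v-pq_{01})$; substituting these directly into $f_{01}u=u_{10}f$ and squaring gives the identity immediately. Your approach instead takes the determinant of the full compatibility condition $M_{01}K=K_{10}M$ and reads off the $\lambda^{4}$ coefficient, which yields $(\Phi_{1})_{01}\Psi_{1}=(\Psi_{1})_{10}\Phi_{1}$, and then cancels $f_{01}^{2}u^{2}=u_{10}^{2}f^{2}$ using (\ref{conLaw-dDNLS}). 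This is slightly longer but has the advantage of not relying on any normalisation of $\Phi_{1}$: it works for any nonzero level of the first integral, whereas the paper's substitution presupposes the particular rescaling $\Phi_{1}=1$. Both arguments use the non-vanishing assumption $\Phi_{1}\neq 0$ from the remark preceding the proposition to justify the logarithm.
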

\begin{proof}
The value of $\Phi_1$ can be rescaled to 1 and thus
\begin{equation}
f^2=\frac{1}{g-pq_{10}}, \quad u^2=\frac{1}{v-pq_{01}}.
\end{equation}
Then, we substitute the above back to equation (\ref{conLaw-dDNLS}), and the latter can be written in the form (\ref{conLaw-ds-D2}).
\end{proof}

Motivated by the initial value problem on the staircase, system (\ref{DS-dDNLS}) can be solved either for ($p_{01},q_{01},g_{01},v$) or ($p_{10},q_{10},g,v_{10}$). However, we present this the solution in the Appendix because of its length. Some properties of system (\ref{DS-dDNLS}) are given in the following.

\begin{proposition}
System (\ref{DS-dDNLS}) admits two first integrals\index{first integral(s)}
\begin{equation}\label{1stInt-DS-D2}
(\mathcal{T}-1)\frac{g-pq_{10}}{g^2+1-p^2-q_{10}^2}=0, \quad (\mathcal{S}-1)\frac{v-pq_{01}}{v^2+1-p^2-q_{01}^2}=0,
\end{equation}
and a conservation law given by\index{conservation law(s)}
\begin{equation}\label{conLaw-DS-D2}
(\mathcal{T}-1)(g+pq)=(\mathcal{S}-1)(v+pq).
\end{equation}
\end{proposition}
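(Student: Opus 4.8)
The plan is to prove the two first integrals and the conservation law by pulling the corresponding facts about the Darboux matrix $M$ in (\ref{Dih-M}) through the discrete Lax pair, exactly as was done for the NLS case in the proof following (\ref{conlaw}) and for the $\field{Z}_2$ case in Proposition \ref{1stInt-z2}. The key observation is that the quantities $\Phi_1$ and $\Phi_2$ in (\ref{dih-D-fi}) are first integrals of the continuous system (\ref{Dih-D-cond}) and, since the discrete system (\ref{DS-dDNLS}) is the compatibility condition of the discrete Lax pair built from $M$, the ratio $\Phi_1/\Phi_2 = (g-pq_{10})/(g^2+1-p^2-q_{10}^2)$ should be invariant under the shift $\mathcal{T}$, while the analogous ratio built from the second Darboux matrix $M(p,q_{01},u,v)$ should be invariant under $\mathcal{S}$.

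First I would verify (\ref{1stInt-DS-D2}) directly: applying $\mathcal{T}$ to the claimed integral means comparing the expression in the fields $(p,q_{10},g)$ with its shift in $(p_{01},q_{11},g_{01})$, and this equality must be checked modulo the four equations of (\ref{DS-dDNLS}). Rather than grinding the rational solution from the Appendix, the cleaner route is to note that $\det M$ is $x$-independent (the remark after (\ref{dih-D-fi})), so $\det M = (\lambda^2+\lambda^{-2})\Phi_1 + \Phi_2$ is a conserved density whose $\mathcal{T}$-shift equals $\det M_{01}$; comparing the $(\lambda^2+\lambda^{-2})$-coefficients and the constant coefficients gives $\mathcal{T}\Phi_1 = \Phi_1$ and $\mathcal{T}\Phi_2 = \Phi_2$ only up to the scalar factor $f^2$, and dividing one by the other cancels that factor, yielding the first relation in (\ref{1stInt-DS-D2}). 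The second relation follows identically using the $K$-matrix $M(p,q_{01},u,v)$ and the shift $\mathcal{S}$.

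For the conservation law (\ref{conLaw-DS-D2}), I would start from the continuous relation $\partial_x g = 2((p_{10}q_{10}-pq)g+(p-p_{10})p+(q-q_{10})q_{10})$ together with $\partial_x f = 2(pq-p_{10}q_{10})f$ from (\ref{Dih-D-cond}), and look for a combination whose $x$-derivative vanishes and which therefore descends to a discrete conserved quantity. A quicker path: add the second and fourth equations of (\ref{DS-dDNLS}) suitably — the second equation already reads $g_{01}-g+v-v_{10}+p_{01}q_{01}-p_{10}q_{10}=0$, i.e. $(\mathcal{T}-1)g - (\mathcal{S}-1)v = p_{10}q_{10}-p_{01}q_{01} = (\mathcal{S}-1)(pq_{10})\big|_{\text{shifted}} - \dots$; after expressing $p_{10}q_{10}-p_{01}q_{01}$ as a mixed difference of $pq$ one rearranges into $(\mathcal{T}-1)(g+pq) = (\mathcal{S}-1)(v+pq)$.

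The main obstacle is the middle step: the bookkeeping needed to rewrite $p_{10}q_{10}-p_{01}q_{01}$ (and the quadratic terms appearing in the first integral check) as clean total differences in the operators $\mathcal{S}-1$ and $\mathcal{T}-1$ is where sign errors and index slips creep in, and one must be careful that every manipulation is performed strictly $\mod\langle(\ref{DS-dDNLS})\rangle$. I would handle this by working with the $\det M$ identity wherever possible so that the factor $f^2$ (respectively $u^2$) is carried symbolically and cancelled at the end, rather than substituting the explicit rational solution.
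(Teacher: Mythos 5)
Your handling of the conservation law is correct and coincides with the paper's: expanding $(\mathcal{T}-1)(g+pq)-(\mathcal{S}-1)(v+pq)$ gives $g_{01}-g+v-v_{10}+p_{01}q_{01}-p_{10}q_{10}$, which is precisely the left-hand side of the \emph{second} equation of (\ref{DS-dDNLS}) (the paper attributes this to the ``first equation'', but the second is the one that rearranges; you have it right). Your fallback for (\ref{1stInt-DS-D2}) --- eliminate $f$ from $\Phi_1,\Phi_2$ in (\ref{dih-D-fi}) to get $\partial_x\bigl(\Phi_1/\Phi_2\bigr)=0$, let this suggest the discrete statement, then verify it directly modulo the four equations of (\ref{DS-dDNLS}) --- is exactly the paper's argument.

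The gap is in the determinant shortcut you prefer. Taking $\det$ of the compatibility condition (\ref{CompCond}) and writing $\det M=\mu\Phi_1+\Phi_2$ with $\mu=\lambda^2+\lambda^{-2}$, and $\det K=\mu\tilde\Phi_1+\tilde\Phi_2$ with $\tilde\Phi_1=u^2(v-pq_{01})$, $\tilde\Phi_2=u^2(v^2+1-p^2-q_{01}^2)$, the $\mu^2$- and $\mu^0$-coefficients of $\det M_{01}\det K=\det K_{10}\det M$ give
\begin{equation*}
\mathcal{T}\Phi_1=\frac{\mathcal{S}\tilde\Phi_1}{\tilde\Phi_1}\,\Phi_1,\qquad
\mathcal{T}\Phi_2=\frac{\mathcal{S}\tilde\Phi_2}{\tilde\Phi_2}\,\Phi_2,
\end{equation*}
so $\Phi_1$ and $\Phi_2$ are indeed each rescaled under $\mathcal{T}$, but by multipliers that are \emph{a priori} different. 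Your assertion that both pick up one and the same scalar factor, so that it cancels in the ratio, is exactly equivalent to the second claimed first integral $(\mathcal{S}-1)\bigl(\tilde\Phi_1/\tilde\Phi_2\bigr)=0$; as written the argument is circular. It can be repaired by also using the $\mu^1$-coefficient, $\mathcal{T}\Phi_1\cdot\tilde\Phi_2+\mathcal{T}\Phi_2\cdot\tilde\Phi_1=\mathcal{S}\tilde\Phi_1\cdot\Phi_2+\mathcal{S}\tilde\Phi_2\cdot\Phi_1$, which after substituting the two relations above collapses to $\bigl(\tfrac{\mathcal{S}\tilde\Phi_1}{\tilde\Phi_1}-\tfrac{\mathcal{S}\tilde\Phi_2}{\tilde\Phi_2}\bigr)\bigl(\Phi_1\tilde\Phi_2-\Phi_2\tilde\Phi_1\bigr)=0$. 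Hence the multipliers coincide --- and both first integrals follow --- on the branch $\Phi_1\tilde\Phi_2\neq\Phi_2\tilde\Phi_1$, but you must state this step and deal separately with the degenerate branch (which contains the trivial solution $p_{01}=p_{10}$, $q_{01}=q_{10}$, $v=g$, $g_{01}=v_{10}$) before you are entitled to divide.
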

\begin{proof}
Eliminating $f$ from $\Phi_1$ and $\Phi_2$ we obtain
\begin{equation}
\partial_x\left(\frac{g-pq_{10}}{g^2+1-p^2-q_{10}^2} \right)=0.
\end{equation}
This suggests that relations (\ref{1stInt-DS-D2}) constitute first integrals \index{first integral(s)} for system (\ref{DS-dDNLS}) and can be readily shown. Moreover, it is straightforward to show that the first equation of (\ref{DS-dDNLS}) can be written in the form (\ref{conLaw-DS-D2}).
\end{proof}

In what follows, we use the first integrals (\ref{1stInt-DS-D2}) \index{first integral(s)}to reduce  system (\ref{DS-dDNLS}).


\subsubsection{Derivation of a discrete Toda type equation} \index{discrete!Toda equation}\index{Toda (type) equation(s)}
Here we consider the case where the first integrals\index{first integral(s)} (\ref{conLaw-DS-D2}) have the values
\begin{equation}
\frac{g-pq_{10}}{g^2+1-p^2-q_{10}^2}=0,\quad \text{and} \quad \frac{v-pq_{01}}{v^2+1-p^2-q_{01}^2}=\frac{1}{2},
\end{equation}
which implies the following algebraic equations
\begin{equation}\label{eqs-g-v}
g-pq_{10}=0, \quad (v-1+p-q_{01})(v-1-p+q_{01})=0.
\end{equation}
For the latter we choose the solution
\begin{equation}\label{sub-g-v}
g=pq_{10},\quad \text{and} \quad v=p-q_{01}+1.
\end{equation}

Substitution of the above expressions into system (\ref{DS-dDNLS}) we obtain a system of equations for $p$, $q$ and their shifts. Motivated by the initial value problem on the staircase as in (\ref{fig-ivp}), we solve this system either for $p_{01}$, $q_{01}$ or for $p_{10}$, $q_{10}$. In particular, we solve for $p_{01}$, $q_{01}$ and, in order to simplify the retrieved exressions, we make the point transformation 
\begin{equation}\label{pointtrans}
(p,q)\mapsto (p-1,q-1).
\end{equation}
Then, we come up with
\begin{equation}\label{sys-p01q01}
p_{01}=\frac{p_{10}q_{10}}{q_{11}}, \quad q_{01}=\frac{(p-2)(q_{10}-2)}{p_{10}q_{10}-2q_{11}}q_{11}+2.
\end{equation}

System (\ref{sys-p01q01}) admits the conservation law\index{conservation law(s)}
\begin{equation}
(\mathcal{T}-1)(p-1)(q_{10}+q-2)=(\mathcal{S}-1)((p-1)q-q_{01}),
\end{equation}
which results from (\ref{conLaw-DS-D2}) after substitution (\ref{sub-g-v}) followed by the point transformation (\ref{pointtrans}). Moreover, the first equation of (\ref{sys-p01q01}) can be written in the form of a conservation law,\index{conservation law(s)} namely
\begin{equation}\label{2ndconlaw}
(\mathcal{T}-1)\ln(pq_{10})=(\mathcal{S}-1)\ln(p).
\end{equation}

\begin{proposition}
System (\ref{sys-p01q01}) implies the following discrete Toda type equation
\begin{equation}\label{TodaEq}\index{discrete!Toda equation}\index{Toda (type) equation(s)}
e^{w_{0,-1}-w}-e^{w-w_{01}}+e^{w_{0,-1}-w_{1,-1}}-e^{w-w_{1,-1}}=\frac{1}{2}(e^{w_{-11}-w_{01}}-e^{w_{0,-1}-w_{1,-1}}).
\end{equation}
\end{proposition}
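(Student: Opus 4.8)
The plan is to reproduce, for system (\ref{sys-p01q01}), the mechanism used earlier to obtain the discrete Toda equation (\ref{DToda}): exploit a conservation law of the system to trade $(p,q)$ for a single potential $w$, substitute, and collapse what remains onto one scalar equation.

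First I would put (\ref{sys-p01q01}) into a more transparent form. Cross-multiplying the first equation gives $p_{01}q_{11}=p_{10}q_{10}$, which is precisely the conservation law (\ref{2ndconlaw}); using it, the denominator in the second equation factorises as $p_{10}q_{10}-2q_{11}=(p_{01}-2)q_{11}$, so that equation simplifies to $(p_{01}-2)(q_{01}-2)=(p-2)(q_{10}-2)$, which is itself a conservation law, $(\mathcal{T}-1)\ln[(p-2)(q-2)]=(\mathcal{S}-1)\ln(q-2)$. Thus (\ref{sys-p01q01}) is equivalent to a pair of multiplicative conservation laws.

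Next, exactly as one introduces $w$ through a substitution $q\mapsto\exp(\cdot)$ in the derivation of (\ref{DToda}), I would introduce a potential from the first conservation law $p_{01}q_{11}=p_{10}q_{10}$: writing $\ln p=(\mathcal{T}-1)w$ and $\ln(pq_{10})=(\mathcal{S}-1)w$ expresses $p$ and the shifts of $q$ that occur as exponentials of nearest-neighbour differences of $w$, and renders the first equation an identity. Substituting these expressions into the reduced second equation and expanding the two products $(p_{01}-2)(q_{01}-2)$ and $(p-2)(q_{10}-2)$, the constant terms $4$ cancel, the cross terms become exponentials of sums (these will carry the coefficient $\tfrac12$ once one divides through by $2$), and the terms $-2\cdot(\text{exponential})$ become the remaining four terms of coefficient $1$; finally a suitable shift of the lattice centres the relation on the stencil $\{w,\,w_{0,1},\,w_{0,-1},\,w_{1,-1},\,w_{-1,1}\}$, yielding (\ref{TodaEq}).

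The main obstacle is the bookkeeping in this last step: one must fix the sign and the lattice shift in the parametrisation $p=p(w)$, $q=q(w)$ so that the first equation is automatically satisfied, then keep careful track of the translations needed to bring the substituted second equation onto the standard six-point stencil and recombine the like exponential terms — the factor $\tfrac12$ and the extra exponential on the right-hand side of (\ref{TodaEq}) being a direct trace of the additive constant $-2$ inherited from the point transformation (\ref{pointtrans}). A slightly different route, which I would keep in reserve, is to eliminate $p$ directly between the first equation of (\ref{sys-p01q01}) and a $\mathcal{T}^{-1}$-shifted copy of the reduced second equation, obtaining a scalar difference equation for $q$ alone, and only then perform the exponential substitution; this avoids naming the potential but requires a resultant-type elimination.
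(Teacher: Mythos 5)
Your proposal follows essentially the same route as the paper: the conservation law (\ref{2ndconlaw}) is used to introduce the potential $w$ (the paper takes $p=\exp(w_{0,-1}-w)$, $q=\exp(w_{-10}-w_{0,-1})$, which differs from your parametrisation only by the sign and lattice shift that you already flag as needing to be fixed), the first equation of (\ref{sys-p01q01}) becomes an identity, and the second, once expanded, is rearranged into (\ref{TodaEq}). Your intermediate observation that, modulo the first equation, the second collapses to $(p_{01}-2)(q_{01}-2)=(p-2)(q_{10}-2)$ is a clean way of making explicit the expansion that the paper dismisses as ``readily shown''.
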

\begin{proof}
Conservation law (\ref{2ndconlaw})\index{conservation law(s)} suggests that we can introduce a new variable, $w$, via the relations
\begin{equation}
\ln(pq_{10})=w_{0,-1}-w_{1,-1}, \quad \ln(p)=w_{0,-1}-w,
\end{equation}
and therefore
\begin{equation}
p=\exp(w_{0,-1}-w), \quad q=\exp(w_{-10}-w_{0,-1}).
\end{equation}
It can be readily shown that, applying the above transformation to equations (\ref{sys-p01q01}), the first one is identically satisfied, whereas the second is written in the form (\ref{TodaEq}).
\end{proof}

Equation (\ref{TodaEq}) can be written in a conserved form as
\begin{eqnarray*}
&&(\mathcal{T}-1)(e^{w_{-10}-w}-e^{w-w_{1,-1}}+2e^{w_{0,-1}-w}+e^{w_{-10}-w_{0,-1}}+e^{w_{0,-1}-w_{1,-1}})=\\
&&(\mathcal{S}-1)(e^{w_{-10}-w}-e^{w_{-11}-w}+e^{w_{-10}}-e^{w_{0,-1}}).
\end{eqnarray*}

\begin{remark}\normalfont
If we choose the solution $g=pq_{10}, v=q_{01}-p+1$ of (\ref{eqs-g-v}) instead of (\ref{sub-g-v}), and make the point transformation $(p,q)\rightarrow (p+1,q+1)$, we will derive the system
\begin{equation}
p_{01}=\frac{p_{10}q_{10}}{q_{11}}, \quad q_{01}=\frac{(p+2)(q_{10}+2)}{p_{10}q_{10}+2q_{11}}q_{11}-2,
\end{equation}
which after a transformation $p\rightarrow \exp(w-w_{0,-1}), q \rightarrow -\exp(w_{0,-1}-w_{-10})$ is written in the form (\ref{TodaEq}).
\end{remark}

\subsubsection{Derivation of a seven point scalar equation}
Let us now choose the following values for the first integrals (\ref{conLaw-DS-D2})\index{first integral(s)}
\begin{equation}
\frac{g-pq_{10}}{g^2+1-p^2-q_{10}^2}=-\frac{1}{2},\quad \text{and} \quad \frac{v-pq_{01}}{v^2+1-p^2-q_{01}^2}=\frac{1}{2},
\end{equation}
or, equivalently,
\begin{equation}
(g+1+p+q_{01})(g+1-p-q_{01})=0, \quad (v-1+p-q_{01})(v-1-p+q_{01})=0.
\end{equation}
There are obviously four set of solutions for $g$ and $v$, and we choose
\begin{equation}\label{subsol-g-v}
g=p+q_{10}-1,\quad \text{and} \quad v=p-q_{01}+1.
\end{equation}

Substitution of the above expressions into system (\ref{DS-dDNLS}), and making the point transformation $(p,q)\mapsto (p-1,q-1)$, implies a system of difference equations among the potentials $p$ and $q$ and their shifts. Its solution for $p_{01}$ and $q_{01}$ is given by
\begin{equation}\label{deqsp01q01}
p_{01}=p_{10}-q_{11}+2+\frac{q_{10}-2}{p}p_{10}, \quad q_{01}=\frac{p_{10}q_{10}-2q_{11}}{p_{10}(p+q_{10}-2)-pq_{11}}p.
\end{equation}
Similarly, using the same substitution and point transformation, conservation laws (\ref{conLaw-DS-D2}) and (\ref{conLaw-ds-D2}) become\index{conservation law(s)}
\begin{subequations}
\begin{align}
&(\mathcal{T}-1)pq=(\mathcal{S}-1)(pq-2q_{01}),\label{conslawdD2-1}
\intertext{and}
(\mathcal{T}-1)&\ln(p-2)(q_{10}-2)=(\mathcal{S}-1)\ln p(q_{01}-2),\label{conslawdD2-2}
\end{align}
\end{subequations}
respectively, which of course constitute conservation laws for system (\ref{deqsp01q01}). \index{conservation law(s)}
\vspace{.4 cm}
\begin{figure}[h!]
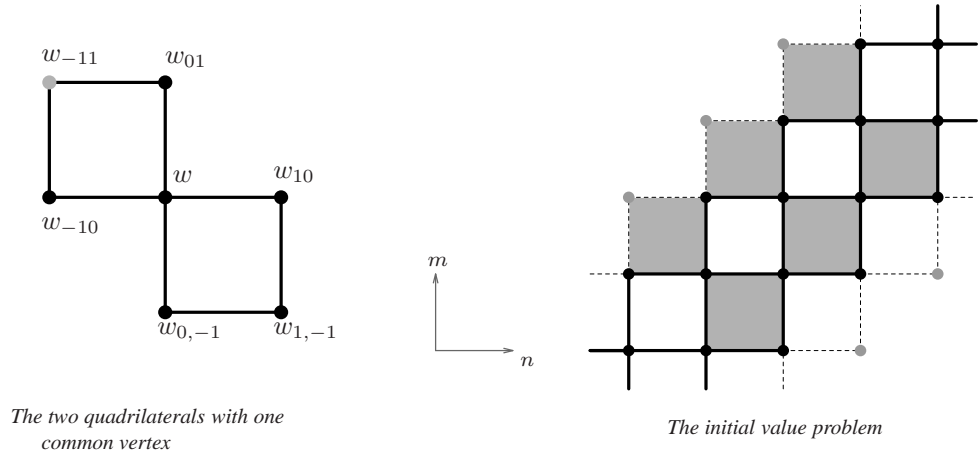

\centertexdraw{
\setunitscale 0.4
\move(-4.5 -2) \linewd 0.02 \setgray 0.4 \arrowheadtype t:V \arrowheadsize l:.12 w:.06 \avec(-4.5 -1) 
\move(-4.5 -2) \arrowheadtype t:V  \avec(-3.5 -2)
\setgray 1.0
\move (-2 -1) \lvec (-1 -1) \lvec (-1 0) \lvec (-2 0) \lvec (-2 -1) \lfill f:0.7
\move (-1 -2) \lvec (0 -2) \lvec (0 -1) \lvec (-1 -1) \lvec (-1 -2) \lfill f:0.7
\move (-1 0) \lvec (0 0) \lvec (0 1) \lvec (-1 1) \lvec (-1 0) \lfill f:0.7
\move (0 -1) \lvec (1 -1) \lvec (1 0) \lvec (0 0) \lvec (0 -1) \lfill f:0.7
\move (0 1) \lvec (1 1) \lvec (1 2) \lvec (0 2) \lvec (0 1) \lfill f:0.7
\move (1 0) \lvec (2 0) \lvec (2 1) \lvec (1 1) \lvec (1 0) \lfill f:0.7
\setgray 0.0
\linewd 0.04 \move (-8 -1.5) \lvec (-6.5 -1.5) \lvec (-6.5 0) \lvec (-9.5 0) \lvec (-9.5 1.5) \lvec(-8 1.5) \lvec (-8 -1.5)
\move (-8 -1.5) \fcir f:0.0 r:0.095 \move (-6.5 -1.5) \fcir f:0.0 r:0.095
\move (-6.5 0) \fcir f:0.0 r:0.095 \move (-8 0) \fcir f:0.0 r:0.095
\move (-9.5 0) \fcir f:0.0 r:0.095 \move (-9.5 1.5) \fcir f:0.7 r:0.095 \move (-8 1.5) \fcir f:0.0 r:0.095
\move (-2 -2.5) \lvec (-2 -2) \lvec (-1 -2) \lvec (-1 -1) \lvec (0 -1) \lvec (0 0) \lvec (1 0) \lvec(1 1) \lvec (2 1) \lvec(2 2) \lvec(2.5 2)
\linewd 0.04  \move (-2 -2) \lvec (-2 -1) \lvec(-1 -1) \lvec (-1 0) \lvec (0 0) \lvec (0 1) \lvec(1 1) \lvec (1 2) \lvec (2 2) \lvec (2 2.5) \move(-2.5 -2) \lvec(-2 -2) \move(-2.5 -2) \lvec(-2 -2)
\move (-1 -2.5) \lvec (-1 -2) \lvec(0 -2) \lvec(0 -1) \lvec(1 -1) \lvec(1 0) \lvec(2 0) \lvec(2 1) \lvec(2.5 1)
\linewd 0.001
\lpatt (.05 .05) \move (-2.5 -1) \lvec (-2 -1)  \lvec (-2 0) \lvec (-1 0) \lvec (-1 1) \lvec (0 1) \lvec (0 2) \lvec (1 2) \lvec (1 2.5)
\move (0 -2.5) \lvec (0 -2) \lvec (1 -2) \lvec (1 -1) \lvec (2 -1) \lvec (2 0) \lvec (2.5 0)
\lpatt  (1 1)
\move (-2 -2) \fcir f:0.0 r:0.075 \move (-2 -1) \fcir f:0.0 r:0.075
\move (-1 -2) \fcir f:0.0 r:0.075 \move (0 -2) \fcir f:0.0 r:0.075
\move (-1 -1) \fcir f:0.0 r:0.075 \move (-1 0) \fcir f:0.0 r:0.075 \move (0 -1) \fcir f:0.0 r:0.075 \move (1 -1) \fcir f:0.0 r:0.075
\move (0 0) \fcir f:0.0 r:0.075 \move (0 1) \fcir f:0.0 r:0.075 \move (1 0) \fcir f:0.0 r:0.075 \move (2 0) \fcir f:0.0 r:0.075 
\move (1 1) \fcir f:0.0 r:0.075 \move (2 1) \fcir f:0.0 r:0.075
\move (2 2) \fcir f:0.0 r:0.075 \move (1 2) \fcir f:0.0 r:0.075
\move (-2 0) \fcir f:0.6 r:0.075 
\move (-1 1) \fcir f:0.6 r:0.075 \move (0 2) \fcir f:0.6 r:0.075
\move (1 -2) \fcir f:0.6 r:0.075 \move (2 -1) \fcir f:0.6 r:0.075
\htext (-3.4 -2.2) {\scriptsize{$n$}}
\htext (-4.6 -.9) {\scriptsize{$m$}}
\htext (-8.1 -1.89) {\footnotesize{$w_{0,-1}$}}
\htext (-6.6 -1.89) {\footnotesize{$w_{1,-1}$}}
\htext (-7.9 .2) {\footnotesize{$w$}}
\htext (-6.6 .2) {\footnotesize{$w_{10}$}}
\htext (-9.6 -.5) {\footnotesize{$w_{-10}$}}
\htext (-9.6 1.7) {\footnotesize{$w_{-11}$}}
\htext (-8 1.7) {\footnotesize{$w_{01}$}}
\htext (-10 -3) {\scriptsize{\it{The two quadrilaterals with one}}}
\htext (-9.6 -3.3) {\scriptsize{\it{common vertex}}}
\htext (-1.5 -3.15) {\scriptsize{\it{The initial value problem}}}
}
\caption{{\small{The stencil of seven points and the initial value problem on the black and white lattice}}} \label{fig-ivp-y}
\end{figure}
\vspace{.4 cm}
\begin{proposition}
System (\ref{deqsp01q01}) implies the following seven-point scalar equation
\begin{eqnarray}\label{7pointeq}
&&(w-w_{10})(w_{0,-1}-w_{-10})\left(1+\frac{1}{w-w_{1,-1}}\right)\nonumber\\
&&+(w-w_{-10})(w_{10}-w_{01})\left(1-\frac{1}{w-w_{-11}}\right)=0.
\end{eqnarray}
\end{proposition}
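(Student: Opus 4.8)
The plan is to mimic the derivation of the discrete Toda equation (\ref{TodaEq}): exploit the conservation-law structure of the reduced system (\ref{deqsp01q01}) to introduce a single potential $w$ on the lattice, substitute it into (\ref{deqsp01q01}), and show that one of the two equations becomes an identity while the other reduces to the scalar seven-point equation (\ref{7pointeq}).

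First I would rewrite (\ref{deqsp01q01}) in a more transparent form. Solving the first equation for $p_{01}-2$ gives $p\,(p_{01}-2)=p_{10}(p+q_{10}-2)-p\,q_{11}$, so the denominator appearing in the second equation of (\ref{deqsp01q01}) is exactly $p\,(p_{01}-2)$; hence (\ref{deqsp01q01}) is equivalent to the pair
\begin{align*}
p\,(p_{01}-2) &= p_{10}\,(p+q_{10}-2)-p\,q_{11}, \\
q_{01}\,(p_{01}-2) &= p_{10}q_{10}-2q_{11}.
\end{align*}
Eliminating $p_{01}-2$ between these two relations yields the clean identity $p\,(p_{01}-2)(q_{01}-2)=p_{10}\,(p-2)(q_{10}-2)$, which is precisely the ``integrated'' content of the conservation law (\ref{conslawdD2-2}) in this branch. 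Together with (\ref{conslawdD2-1}) this shows that the system carries enough conserved ``curl-free'' quantities to define a potential $w$, consistently on the black-and-white lattice of Figure \ref{fig-ivp-y}, through which $p$, $q$ and their shifts become rational functions of the values of $w$ at neighbouring vertices; the values $-\tfrac12$ and $\tfrac12$ fixed for the two first integrals of this branch are what will generate the additive $\pm1$ inside the $1\pm\frac{1}{w-w_\bullet}$ factors of (\ref{7pointeq}).

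Once the substitution $p=p(w)$, $q=q(w)$ has been fixed, I would insert the expressions for $p,\,p_{01},\,p_{10},\,q_{10},\,q_{01},\,q_{11}$ into the two equations above. As in the Toda case, the first equation should collapse to $0=0$ after using the $\mathcal{S}\mathcal{T}^{-1}$- and $\mathcal{T}\mathcal{S}^{-1}$-shifted defining relations for $w$, and clearing denominators in the second and collecting terms should produce exactly (\ref{7pointeq}); a final remark verifying that (\ref{7pointeq}) can be recast in the conserved form displayed immediately after it would close the argument.

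The principal obstacle, exactly as in the derivation of (\ref{TodaEq}), is pinning down the correct potentiation: the substitution must be simultaneously compatible with both first integrals of this branch and must make one component of the reduced system trivial. Guessing it is guided by the requirement that only the seven vertices of the stencil in Figure \ref{fig-ivp-y} may appear after the substitution; once it is in hand, the remaining work is a lengthy but routine rational-function manipulation.
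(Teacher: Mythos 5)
Your strategy is exactly the paper's: use the conservation-law structure of the reduced system (\ref{deqsp01q01}) to introduce a single potential $w$ and substitute. Your preliminary manipulations are also correct --- the denominator in the second equation of (\ref{deqsp01q01}) is indeed $p\,(p_{01}-2)$, and eliminating $p_{01}-2$ does yield $p\,(p_{01}-2)(q_{01}-2)=p_{10}\,(p-2)(q_{10}-2)$, which is precisely (\ref{conslawdD2-2}). The genuine gap is that you stop exactly where the paper's proof begins: the explicit potentiation, which you defer to a ``guess''. It is not a guess, and it does not require juggling both conservation laws; it is forced by (\ref{conslawdD2-1}) alone. Writing $(\mathcal{T}-1)pq=(\mathcal{S}-1)(pq-2q_{01})$ in potential form gives $pq=4(w-w_{-10})$ and $pq-2q_{01}=4(w_{-11}-w_{-10})$ (the factor $4$ is a convenient normalisation). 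Subtracting gives $q_{01}=2(w-w_{-11})$, hence $q=2(w_{0,-1}-w_{-10})$, and then $p=pq/q=2(w-w_{-10})/(w_{0,-1}-w_{-10})$. This substitution \emph{is} the proof: inserting it into the \emph{first} equation of (\ref{deqsp01q01}) produces (\ref{7pointeq}) directly. Note in particular that $1+\tfrac{1}{w-w_{1,-1}}=\tfrac{q_{10}-2}{q_{10}}$ and $1-\tfrac{1}{w-w_{-11}}=\tfrac{q_{01}-2}{q_{01}}$, so the additive $\pm 1$ you attribute to the values $\mp\tfrac12$ of the first integrals in fact comes from the ``$-2$'' terms already present in (\ref{deqsp01q01}) (ultimately from the point transformation $(p,q)\mapsto(p-1,q-1)$), combined with the chosen normalisation of $w$.

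Two smaller corrections. You predict that the first equation of (\ref{deqsp01q01}) collapses to $0=0$ and the second yields the scalar equation; in the paper the roles are the other way round --- it is the first equation that becomes (\ref{7pointeq}) after substitution. And, unlike the Toda case (\ref{TodaEq}), the paper displays no conserved form immediately after (\ref{7pointeq}), so the ``final remark'' you plan is not part of closing this argument. With the explicit potentiation supplied, the rest of your plan goes through as stated.
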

\begin{proof}
Conservation law (\ref{conslawdD2-1}) \index{conservation law(s)} suggests that we can introduce a potential, $w$, using the following relations:
\begin{equation}
pq=4(w-w_{-10}),\quad pq-2q_{01}=4(w_{-11}-w_{-10}).
\end{equation}
Expressing $p$ and $q$ in terms of $w$, we obtain
\begin{equation}
p=2\frac{w-w_{-10}}{w_{0,-1}-w_{-10}},\quad q=2(w_{0,-1}-w_{-10}).
\end{equation}
Substitution of the above to the first equation of (\ref{deqsp01q01}), implies equation (\ref{7pointeq}).
\end{proof}

Equation (\ref{7pointeq}) involves seven lattice points, and it can be solved uniquely for any of the shifts of $w$, but not for $w$. These points can be placed on the vertices of two quadrilaterals with a common vertex; this allows us to consider an initial value problem where the initial values are placed placed across a double staircase, as in Figure \ref{fig-ivp-y}. Of course, since we can solve uniquely for both $w_{-11}$ and $w_{1,-1}$, the evolution can be uniquely determined in both directions.

\chapter{Introduction to Yang-Baxter maps}
\label{chap4} \setcounter{equation}{0}
\renewcommand{\theequation}{\thechapter.\arabic{equation}}

\section{Overview}
The original (quantum) Yang-Baxter equation \index{Yang-Baxter equation} \nomenclature{YB}{Yang-Baxter} originates in the works of Yang \cite{Yang} and Baxter \cite{Baxter}, and it has a fundamental role in the theory of quantum and classical integrable systems.

In this thesis we are interested in the study of the set-theoretical solutions of the Yang-Baxter equation.\index{Yang-Baxter (YB) equation} The first examples of such solutions  appeared in 1988, in a paper of Sklyanin \cite{Sklyanin}. However, the study of the set-theoretical solutions was formally proposed by Drinfiel'd in 1992 \cite{Drinfel'd}. Veselov, in \cite{Veselov}, proposed the more elegant term ``Yang-Baxter maps''\index{Yang-Baxter (YB) map(s)} for this type of solutions and, moreover, he connected them with integrable mappings \cite{Veselov, Veselov3}.

Yang-Baxter maps\index{Yang-Baxter (YB) map(s)} have been of great interest by many researchers in the area of Mathematical Physics. They are related to several concepts of integrability as, for instance, the multidimensionally consistent equations \cite{ABS-2004, ABS-2005, Bobenko-Suris,Frank3,Frank5,Frank4}. Especially, for those Yang-Baxter maps which admit Lax representation \cite{Veselov2}, there are corresponding hierarchies of commuting transfer maps\index{transfer map(s)} which preserve the spectrum of their monodromy matrix\index{monodromy matrix} \cite{Veselov,Veselov3}. 

In this chapter we give an introduction to the theory of Yang-Baxter maps.\index{Yang-Baxter (YB) map(s)}

In particular, this chapter is organised as follows: In the next section we briefly give an introduction to the Yang-Baxter equation\index{Yang-Baxter (YB) equation} and Yang-Baxter maps.\index{Yang-Baxter (YB) map(s)} We shall restrict our attention to the Yang-Baxter maps\index{Yang-Baxter (YB) map(s)} admitting \textit{Lax-representation} which we study in the next chapters of this thesis. In section 3 we discuss the connection between 3D consistent equations and Yang-Baxter maps,\index{Yang-Baxter (YB) map(s)} while section 4 deals with their classification. Finally, section 5 is devoted to the transfer dynamics \index{transfer dynamics}of Yang-Baxter maps\index{Yang-Baxter (YB) map(s)} and initial values problems on a two-dimensional lattice.

\section{The quantum Yang-Baxter equation}
Let $V$ be a vector space and $Y\in \End(V \otimes V)$ a linear operator. The Yang-Baxter equation\index{Yang-Baxter (YB) equation} is given by the following
\begin{equation}\label{YB_eq1}
Y^{12}\circ Y^{13} \circ Y^{23}=Y^{23}\circ Y^{13} \circ Y^{12},
\end{equation}
where $Y^{ij}$, $i,j=1,2,3$, $i\neq j$, denotes the action of $Y$ on the $ij$ factor of the triple tensor product $V \otimes V\otimes V$. In this form, equation (\ref{YB_eq1}) is known in the literature as the \textit{quantum YB equation}.\index{Yang-Baxter (YB) equation}

\subsection{Parametric Yang-Baxter maps}\index{Yang-Baxter (YB) map(s)!parametric}
Let us now replace the vector space $V$ by a set $A$, and the tensor product $V\otimes V$ by the Cartesian product $A \times A$. In what follows, we shall consider $A$ to be a finite dimensional algebraic variety in $K^N$, where $K$ is any field of zero characteristic, such as $\field{C}$ or $\field{Q}$.

Now, let $Y\in \End(A\times A)$ be a map defined by
\begin{equation}\label{Y-map}
Y:(x,y)\mapsto (u(x,y),v(x,y)).
\end{equation}
Furthermore, we define the maps $Y^{ij}\in \End(A\times A \times A)$ for $i,j=1,2,3,~i\neq j$, which appear in equation (\ref{YB_eq1}), by the following relations
\begin{subequations}\label{Yijs}
\begin{align}
 Y^{12}(x,y,z)&=(u(x,y),v(x,y),z), \\ 
 Y^{13}(x,y,z)&=(u(x,z),y,v(x,z)), \\ 
 Y^{23}(x,y,z)&=(x,u(y,z),v(y,z)).
\end{align}
\end{subequations}
Let also $Y^{21}=\pi Y \pi$, where $\pi\in\End(A\times A)$ is the permutation map: $\pi(x,y)=(y,x)$.

Map $Y$ is a YB map,\index{Yang-Baxter (YB) map(s)} if it satisfies the YB equation ($\ref{YB_eq1}$). \index{Yang-Baxter (YB) equation} Moreover, it is called \textit{reversible} if the composition of $Y^{21}$ and $Y$ is the identity map, i.e.
\begin{equation}\label{reversible}
Y^{21}\circ Y=Id.
\end{equation}

Now, let us consider the case where parameters are involved in the definition of the YB map.\index{Yang-Baxter (YB) map(s)} In particular we define the following map
\begin{equation}
Y_{a,b}:(x,y)\mapsto (u,v)\equiv (u(x,y;a,b),v(x,y;a,b)).
\end{equation}
This map is called \textit{parametric YB map}\index{Yang-Baxter (YB) map(s)!parametric} if it satisfies the \textit{parametric YB equation}\index{Yang-Baxter (YB) equation!parametric}
\begin{equation}\label{YB_eq}
Y^{12}_{a,b}\circ Y^{13}_{a,c} \circ Y^{23}_{b,c}=Y^{23}_{b,c}\circ Y^{13}_{a,c} \circ Y^{12}_{a,b}.
\end{equation}

One way to represent the map $Y_{a,b}$ is to consider the values $x$ and $y$ taken on the sides of the quadrilateral as in figure \ref{YBmap-Eq}-(a); the map $Y_{a,b}$ maps the values $x$ and $y$ to the values placed on the opposite sides of the quadrilateral, $u$ and $v$. 

Moreover, for the YB equation,\index{Yang-Baxter (YB) equation} we consider the values $x$, $y$ and $z$ taken on the sides of the cube as in figure \ref{YBmap-Eq}-(b). Specifically, by the definition \ref{Yijs} of the functions $Y^{ij}$, the map $Y^{23}_{b,c}$ maps 
\begin{equation}
(x,y,z)\stackrel{Y^{23}_{b,c}}{\rightarrow}(x,y^{(1)},z^{(1)}),
\end{equation}
using the right face of the cube. Then, map $Y^{13}_{a,c}$ maps
\begin{equation}
(x,y^{(1)},z^{(1)})\stackrel{Y^{13}_{a,c}}{\rightarrow}(x^{(1)},y^{(1)},z^{(2)})\equiv Y^{13}_{a,c} \circ Y^{23}_{b,c}(x,y,z),
\end{equation}
using the front face of the cube. Finally, map $Y^{12}_{a,b}$ maps
\begin{equation}
(x^{(1)},y^{(1)},z^{(2)})\stackrel{Y^{12}_{a,b}}{\rightarrow}(x^{(2)},y^{(2)},z^{(2)})\equiv Y^{12}_{a,b}\circ Y^{13}_{a,c} \circ Y^{23}_{b,c}(x,y,z),
\end{equation}
using the top face of the cube.

On the other hand, using the bottom, the back and the left face of the cube, the values $x$, $y$ and $z$ are mapped to the values $\hat{x}^{(2)}$, $\hat{y}^{(2)}$ and $\hat{z}^{(2)}$ via the map $Y^{23}_{b,c}\circ Y^{13}_{a,c} \circ Y^{12}_{a,b}$ which consists with the right hand side of equation, namely (\ref{Y-map})
\begin{equation}
Y^{23}_{b,c}\circ Y^{13}_{a,c} \circ Y^{12}_{a,b}(x,y,z)=(\hat{x}^{(2)},\hat{y}^{(2)},\hat{z}^{(2)}).
\end{equation}
Therefore, the map $Y_{a,b}$ satisfies the YB equation (\ref{YB_eq})\index{Yang-Baxter (YB) equation} if and only if $x^{(2)}=\hat{x}^{(2)}$, $y^{(2)}=\hat{y}^{(2)}$ and $z^{(2)}=\hat{z}^{(2)}$.

\begin{figure}[ht]
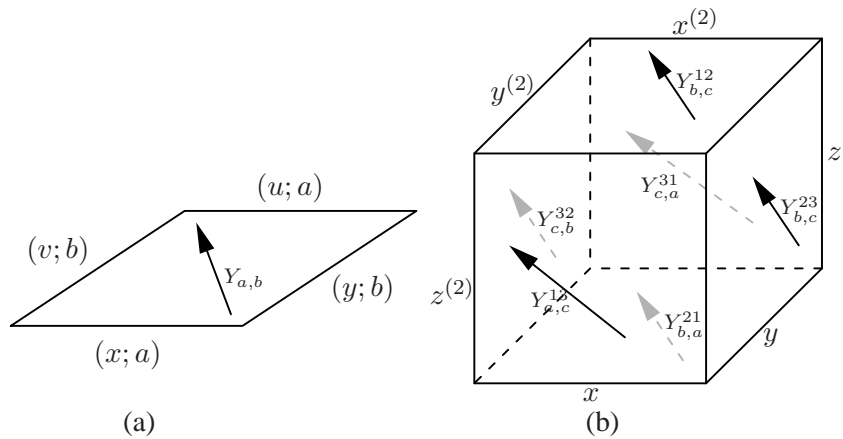

\centering
\centertexdraw{ 
\setunitscale 0.6
\move (-2 0.5)  \lvec(0 0.5) \lvec(1.5 1.5) \lvec(-0.5 1.5) \lvec(-2 0.5)
\textref h:C v:C \small{\htext(-1 0.25){$(x;a)$}}
\textref h:C v:C \htext(0.4 1.7){$(u;a)$} 
\textref h:C v:C \htext(1.03 0.82){$(y;b)$}
\textref h:C v:C \htext(-1.6 1.15){$(v;b)$}  
\textref h:C v:C \scriptsize{\htext(0 0.9){$Y_{a,b}$}}
\move (-0.1 0.6) \arrowheadtype t:F \avec(-0.4 1.4)
\move (2.7 1.1) \arrowheadtype t:F \lpatt(0.067 0.1)  
\setgray 0.7
\avec(2.3 1.7)
\textref h:C v:C \scriptsize{\htext(2.7 1.4){$Y^{32}_{c,b}$}}
\move (4.4 1.4) \arrowheadtype t:F \footnotesize{\avec(3.3 2.2)}
\textref h:C v:C \scriptsize{\htext(3.6 1.7){$Y^{31}_{c,a}$}}
\move (3.8 0.2) \arrowheadtype t:F \avec(3.4 0.8)
\textref h:C v:C \scriptsize{\htext(3.8 0.5){$Y^{21}_{b,a}$}}
\lpatt()
\setgray 0
\move (2 0)  \lvec(4 0) \lvec(5 1) 
\lpatt(0.067 0.1) \lvec(3 1) \lvec(2 0) 
\lpatt() \lvec(2 2) \lvec(3 3) 
\lpatt (0.067 0.1) \lvec(3 1) 
\lpatt() \move (3 3) \lvec(5 3) \lvec(4 2) \lvec(4 0) 
\move (2 2) \lvec(4 2)
\move (5 3) \lvec(5 1)
\textref h:C v:C \small{\htext(3 -0.1){$x$}}
\textref h:C v:C \small{\htext(4.55 0.4){$y$}}
\textref h:C v:C \small{\htext(5.1 2){$z$}}
\move (4.8 1.2) \arrowheadtype t:F \avec(4.4 1.8)
\textref h:C v:C \scriptsize{\htext(4.8 1.5){$Y^{23}_{b,c}$}}
\move (3.3 0.4) \arrowheadtype t:F \avec(2.3 1.2)
\textref h:C v:C \scriptsize{\htext(2.65 0.7){$Y^{13}_{a,c}$}}
\textref h:C v:C \small{\htext(1.8 0.8){$z^{(2)}$}}
\move (3.9 2.3) \arrowheadtype t:F \avec(3.5 2.9)
\textref h:C v:C \scriptsize{\htext(3.9 2.6){$Y^{12}_{b,c}$}}
\textref h:C v:C \small{\htext(3.9 3.15){$x^{(2)}$}} \htext(2.33 2.55){$y^{(2)}$}
\textref h:C v:C \htext(3.1 -0.35){(b)}
\textref h:C v:C \htext(-0.9 -0.35){(a)}
}
\caption{Cubic representation of (a) the parametric YB map\index{Yang-Baxter (YB) map(s)!parametric} and (b) the corresponding YB equation.}\label{YBmap-Eq}\index{Yang-Baxter (YB) equation}
\end{figure}

Most of the examples of YB maps \index{Yang-Baxter (YB) map(s)} which appear in this thesis are parametric. 

\begin{example}\normalfont
One of the most famous parametric YB maps \index{Yang-Baxter (YB) map(s)!parametric} is Adler's map \cite{Adler}
\begin{equation}\label{Adler_map}
(x,y)\stackrel{Y_{a,b}}{\rightarrow}(u,v)=\left(y-\frac{a-b}{x+y},x+\frac{a-b}{x+y}\right),
\end{equation}
which is related to the 3-D consistent discrete potential KDV equation \cite{Frank, PNC}.
\end{example}

\subsection{Matrix refactorisation problems and the Lax equation}
Let us consider the matrix $L$ depending on a variable $x$, a parameter $c$ and a \textit{spectral parameter} $\lambda$, namely $L=L(x;c,\lambda)$, such that the following matrix refactorisation problem\index{refactorisation problem(s)}
\begin{equation} \label{eqLax}
L(u;a,\lambda)L(v;b,\lambda)=L(y;b,\lambda)L(x;a,\lambda), \quad \text{for any $\lambda \in \field{C}$,}
\end{equation}
is satisfied whenever $(u,v)=Y_{a,b}(x,y)$. Then, $L$ is called Lax matrix\index{Lax matrix(-ces)} for $Y_{a,b}$, and (\ref{eqLax}) is called the \textit{Lax-equation}\index{Lax equation} or \textit{Lax-representation} for $Y_{a,b}$.

\begin{note}\normalfont
In the rest of this thesis we use the letter ``$L$" when referring to Lax matrices\index{Lax matrix(-ces)} of the refactorisation problem (\ref{eqLax})\index{refactorisation problem(s)} and the calligraphic ``$\mathcal{L}$" for Lax operators. Moreover, for simplicity of the notation, we usually omit the dependence on the spectral parameter, namely $L(x;a,\lambda)\equiv L(x;a)$. 
\end{note}

Since the Lax equation (\ref{eqLax})\index{Lax equation} does not always have a unique solution for $(u,v)$, Kouloukas and Papageorgiou in \cite{Kouloukas2} proposed the term \textit{strong Lax matrix}\index{Lax matrix(-ces)!strong} for a YB map.\index{Yang-Baxter (YB) map(s)} This is when the Lax equation\index{Lax equation} is equivalent to a map
\begin{equation}\label{unique-sol}
  (u,v)=Y_{a,b}(x,y).
\end{equation}
The uniqueness of refactorisation (\ref{eqLax}) is a sufficient condition for the solutions of the Lax equation\index{Lax equation} to define a reversible YB map \index{Yang-Baxter (YB) map(s)!reversible} \cite{Veselov3} of the form (\ref{unique-sol}). In particular, we have the following.

\begin{proposition}\label{PropVes}
(Veselov) Let $u=u(x,y)$, $v=v(x,y)$ and $L=L(x;\alpha)$ a matrix such that the refactorisation (\ref{eqLax}) is unique. Then, the map defined by (\ref{unique-sol}) satisfies the Yang-Baxter equation\index{Yang-Baxter (YB) equation} and it is reversible.
\end{proposition}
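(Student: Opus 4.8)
The plan is to follow Veselov's strategy: realise both sides of the parametric Yang--Baxter equation (\ref{YB_eq}) as two different ways of refactorising one and the same product of three Lax matrices, and then invoke uniqueness of the refactorisation to force the two outputs to coincide.

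First I would attach to the triple $(x,y,z)$ the parameters $a,b,c$ and consider the matrix $T(\lambda):=L(z;c,\lambda)L(y;b,\lambda)L(x;a,\lambda)$. The key remark is that each of the elementary maps $Y^{12}_{a,b}$, $Y^{13}_{a,c}$, $Y^{23}_{b,c}$ acts, through the Lax equation (\ref{eqLax}), as a local move that interchanges two \emph{adjacent} factors of such a product while leaving the product itself unchanged: e.g. $Y^{23}_{b,c}$ replaces the block $L(z;c,\lambda)L(y;b,\lambda)$ by $L(u;b,\lambda)L(v;c,\lambda)$ with $(u,v)=Y_{b,c}(y,z)$, in accordance with (\ref{Yijs}), and similarly for the other two. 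In each of the two compositions one has to check, keeping track of which factor currently sits in which position, that the pair of slots on which $Y^{13}$ acts is in fact adjacent at that stage, so that the move is legitimate.

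Next I would run the chain of three moves, once for the left-hand side $Y^{12}_{a,b}\circ Y^{13}_{a,c}\circ Y^{23}_{b,c}$ and once for the right-hand side $Y^{23}_{b,c}\circ Y^{13}_{a,c}\circ Y^{12}_{a,b}$. In both cases the net effect is to bring $T(\lambda)$ into the normally ordered form $L(X_1;a,\lambda)L(X_2;b,\lambda)L(X_3;c,\lambda)$, where $(X_1,X_2,X_3)$ denotes the image of $(x,y,z)$ under whichever composition is being used. Hence both sides of (\ref{YB_eq}) produce a triple solving the refactorisation problem $T(\lambda)=L(\,\cdot\,;a,\lambda)L(\,\cdot\,;b,\lambda)L(\,\cdot\,;c,\lambda)$; uniqueness of this three-fold refactorisation, obtained by iterating the uniqueness assumed for the two-factor problem (\ref{eqLax}), forces the two triples to be equal, which is precisely the parametric Yang--Baxter equation.

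Finally, for reversibility (\ref{reversible}): starting from $(u,v)=Y_{a,b}(x,y)$ one has $L(u;a,\lambda)L(v;b,\lambda)=L(y;b,\lambda)L(x;a,\lambda)$, and applying $Y^{21}$ amounts to refactorising this product once more in the reverse order; since $L(y;b,\lambda)L(x;a,\lambda)$ is already displayed in the form produced by that refactorisation with data $(x,y)$, uniqueness of (\ref{eqLax}) returns $(x,y)$, i.e. $Y^{21}\circ Y=Id$. The main obstacle is not any single computation but the combinatorial bookkeeping of the second step, together with the point that should be argued rather than merely assumed: that uniqueness of the two-factor refactorisation really propagates to the three-factor one (in concrete examples this is automatic from the $\lambda$-degree of $L$ and its leading coefficients, but in the abstract statement it is the crux of the matter).
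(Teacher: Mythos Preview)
Your proposal is correct and follows essentially the same route as the paper: both arguments realise the two sides of the Yang--Baxter equation as two chains of adjacent refactorisations of the triple product $L(z;c)L(y;b)L(x;a)$, arriving at the identity (\ref{trifactorisation}), and then invoke uniqueness to conclude; the reversibility argument is likewise the same. You are, if anything, more scrupulous than the paper: you explicitly flag the bookkeeping that makes the $Y^{13}$ move act on adjacent factors at the moment it is applied, and you single out the passage from two-factor to three-factor uniqueness as the point requiring justification, whereas the paper simply asserts that uniqueness of (\ref{eqLax}) yields (\ref{YBprop}) without further comment.
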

\begin{proof}
Due to the associativity of matrix multiplication and equation (\ref{eqLax}), we have
\begin{eqnarray}\label{uniquePr1}
&L(z;c)L(y;b)L(x;a)=L(y^{(1)};b)L(z^{(1)};c)L(x;a)=&\nonumber\\
&L(y^{(1)};b)L(x^{(1)};a)L(z^{(2)};c)=L(x^{(2)};a)L(y^{(2)};b)L(z^{(2)};c).&
\end{eqnarray}
On the other hand
\begin{eqnarray}\label{uniquePr2}
&L(z;c)L(y;b)L(x;a)=L(z;c)L(\hat{x}^{(1)};a)L(\hat{y}^{(1)};b)=&\nonumber\\
&L(\hat{x}^{(2)};a)L(\hat{z}^{(1)};c)L(\hat{y}^{(1)};b)=L(\hat{x}^{(2)};a)L(\hat{y}^{(2)};b)L(\hat{z}^{(2)};c).&
\end{eqnarray}
From the relations (\ref{uniquePr1}) and (\ref{uniquePr2}) follows that
\begin{equation}\label{trifactorisation}
L(x^{(2)};a)L(y^{(2)};b)L(z^{(2)};c)=L(\hat{x}^{(2)};a)L(\hat{y}^{(2)};b)L(\hat{z}^{(2)};c).
\end{equation}
Since the refactorisation (\ref{eqLax}) is unique, the above equation implies
\begin{equation}\label{YBprop}
x^{(2)}=\hat{x}^{(2)}, \qquad y^{(2)}=\hat{y}^{(2)}, \qquad \text{and} \quad z^{(2)}=\hat{z}^{(2)},
\end{equation}
which is the Yang-Baxter equation\index{Yang-Baxter (YB) equation}. 

For the reversibility, we need to show that
\begin{equation}
\pi Y \pi Y (x,y)=(x,y),
\end{equation}
or equivalently that
\begin{equation}\label{revrels}
\begin{cases}
x=v(v(x,y),u(x,y)),\\
y=u(v(x,y),u(x,y)).
\end{cases}
\end{equation}
Now, we have that
\begin{equation}\label{prrev}
L(u(x,y);a)L(v(x,y),b)=L(y;b)L(x;a), \quad \text{for any} ~~ x,y\in A.
\end{equation}
For $x=v(x,y)$ and $y=u(x,y)$ we have 
\begin{equation}\label{prrev-2}
L(u(x,y);a)L(v(x,y);b)=L(u(v(x,y),u(x,y));b)L(v(v(x,y),u(x,y)),a),
\end{equation}
where we have swapped $a$ with $b$. Since, the refactorisation is unique, (\ref{prrev}) and (\ref{prrev-2}) imply (\ref{revrels}).
\end{proof}

In the case where the map (\ref{unique-sol}) admits Lax representation (\ref{eqLax}), but it is not equivalent to (\ref{eqLax}), one may need to check the YB property separately. Yet, we can use the following \textit{trifactorisation criterion}.

\begin{corollary} (Kouloukas-Papageorgiou)
Let $u=u(x,y)$ and $v=v(x,y)$ and $L=L(x;\alpha)$ a matrix such that $L(u;a)L(v;b)=L(y;b)L(x;a)$. If equation (\ref{trifactorisation}) implies the relation (\ref{YBprop}), then the map defined by (\ref{unique-sol}) is a Yang-Baxter map.\index{Yang-Baxter (YB) map(s)}
\end{corollary}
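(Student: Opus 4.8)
The plan is to repeat the first half of the proof of Proposition~\ref{PropVes} essentially verbatim, noting that that half never uses uniqueness of the refactorisation, and then to invoke the corollary's hypothesis at the single point where Proposition~\ref{PropVes} appealed to uniqueness.

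Concretely, I would fix an arbitrary triple $(x,y,z)\in A\times A\times A$ and parameters $a,b,c$, form the matrix product $L(z;c)L(y;b)L(x;a)$ (the spectral parameter $\lambda$ being carried along silently, cf.\ equation (\ref{eqLax})), and rewrite it in two ways using the assumed identity $L(u;a)L(v;b)=L(y;b)L(x;a)$. Applying this identity three times to the appropriate adjacent pair of factors --- in the order encoded by $Y^{23}_{b,c}$, then $Y^{13}_{a,c}$, then $Y^{12}_{a,b}$, exactly as in display (\ref{uniquePr1}) --- produces $L(x^{(2)};a)L(y^{(2)};b)L(z^{(2)};c)$ with $(x^{(2)},y^{(2)},z^{(2)})=Y^{12}_{a,b}\circ Y^{13}_{a,c}\circ Y^{23}_{b,c}(x,y,z)$. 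Applying it instead in the order $Y^{12}_{a,b}$, then $Y^{13}_{a,c}$, then $Y^{23}_{b,c}$, as in display (\ref{uniquePr2}), produces $L(\hat x^{(2)};a)L(\hat y^{(2)};b)L(\hat z^{(2)};c)$ with $(\hat x^{(2)},\hat y^{(2)},\hat z^{(2)})=Y^{23}_{b,c}\circ Y^{13}_{a,c}\circ Y^{12}_{a,b}(x,y,z)$. Each of these six refactorisation steps is legitimate purely because $(u,v)=Y_{a,b}(x,y)$ is an honest single-valued map, so no solvability or uniqueness is needed to carry them out; the indices match the definitions (\ref{Yijs}) precisely because of the reversal between the ordering of the triple and the ordering of the matrix factors.

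Equating the two resulting products gives exactly equation (\ref{trifactorisation}), now valid for every $\lambda$. At this point Proposition~\ref{PropVes} invoked uniqueness of the refactorisation to cancel the common factors; here, instead, the standing hypothesis ``(\ref{trifactorisation}) $\Rightarrow$ (\ref{YBprop})'' is assumed, and it immediately yields $x^{(2)}=\hat x^{(2)}$, $y^{(2)}=\hat y^{(2)}$, $z^{(2)}=\hat z^{(2)}$. Since $(x,y,z)$ and $(a,b,c)$ were arbitrary, this is precisely the parametric Yang-Baxter equation (\ref{YB_eq}) for $Y_{a,b}$, so $Y_{a,b}$ is a Yang-Baxter map.

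There is no genuinely hard step: the whole content is the bookkeeping of which pair of matrix factors is refactorised at each stage so that the intermediate variables line up with (\ref{Yijs}) and their hatted analogues --- and this bookkeeping is already laid out in displays (\ref{uniquePr1})--(\ref{uniquePr2}). The only conceptual point worth flagging is that, relative to Proposition~\ref{PropVes}, we have traded the uniqueness of the refactorisation for the weaker (and here hypothesised) implication (\ref{trifactorisation}) $\Rightarrow$ (\ref{YBprop}); correspondingly we no longer obtain reversibility for free, and the corollary asserts only the Yang-Baxter property.
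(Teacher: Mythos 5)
Your proof is correct and is essentially the argument the paper intends: the corollary is stated immediately after Proposition~\ref{PropVes} precisely because displays (\ref{uniquePr1})--(\ref{uniquePr2}) establish (\ref{trifactorisation}) without ever using uniqueness, and the hypothesised implication then replaces the uniqueness step. Your remark that reversibility is correspondingly lost is also the right observation.
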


In this thesis we are interested in those YB maps\index{Yang-Baxter (YB) map(s)} whose Lax representation involves matrices with rational dependence on the spectral parameter, as the following.

\begin{example}\normalfont
In terms of Lax matrices, Adler's map (\ref{Adler_map}) has the following strong Lax representation\index{Lax representation!strong} \cite{Veselov2, Veselov3}
\begin{equation}
L(u;a,\lambda)L(v;b,\lambda)=L(y;b,\lambda)L(x;a,\lambda), \quad \text{for any $\lambda \in \field{C}$,}
\end{equation}
where
\begin{equation}
L(x;a,\lambda)=
\left(\begin{matrix}
x & 1 \\
x^2-a & x
\end{matrix}\right)-\lambda \left(\begin{matrix}
0 & 0 \\
1 & 0
\end{matrix}\right).
\end{equation}
\end{example}




\section{Yang-Baxter maps and 3D consistent equations}\index{Yang-Baxter (YB) map(s)}
From the representation of the YB equation\index{Yang-Baxter (YB) equation} on the cube, as in Fig. \ref{YBmap-Eq}-(b), it is clear that the YB equation is essentially the same with the 3D consistency\index{3D consistency} condition with the fields lying on the edges of the cube. Therefore, one would expect that we can derive YB maps\index{Yang-Baxter (YB) map(s)} from equations having the 3D consistency\index{3D consistency} property.

The connection between YB maps\index{Yang-Baxter (YB) map(s)} and the multidimensional consistency\index{multidimensional consistency} condition for equations on quad graphs originates in the paper of Adler, Bobenko and Suris in 2003 \cite{ABS-2004}. However, a more systematic approach was presented in the paper of Papageorgiou, Tongas and Veselov \cite{PTV} a couple of years later and it is based on the symmetry analysis of equations on quad-graphs. In particular, the YB variables constitute invariants of their symmetry groups.

We present the example of the discrete potential KdV (dpKdV) equation \cite{PNC, Frank} which was considered in \cite{PTV}.

\begin{example}\normalfont
The dpKdV equation is given by
\begin{equation}\label{dpKdV}
(f_{11}-f)(f_{10}-f_{01})-a+b=0,
\end{equation}
where the fields are placed on the vertices of the square as in figure (\ref{dpKdVtoAdler}). We consider the values on the edges to be the difference of the values on the vertices, namely
\begin{equation}\label{invar}
x=f_{10}-f, \quad y=f_{11}-f_{10}, \quad u=f_{11}-f_{01} \quad \text{and} \quad v=f_{01}-f,
\end{equation}
as in figure (\ref{dpKdVtoAdler}). This choice of the variables is motivated by the fact that the dpKdV equation is invariant under the translation $f\rightarrow f+const.$ Now, the invariants (\ref{invar}) satisfy the following equation
\begin{equation}\label{eq1}
x+y=u+v.
\end{equation}
Moreover, the equation (\ref{dpKdV}) can be rewritten as
\begin{equation}\label{eq2}
(x+y)(x-v)=a-b.
\end{equation}

Solving (\ref{eq1}) and (\ref{eq2}), we obtain the Adler's map (\ref{Adler_map}).
\end{example}

\begin{figure}[ht]
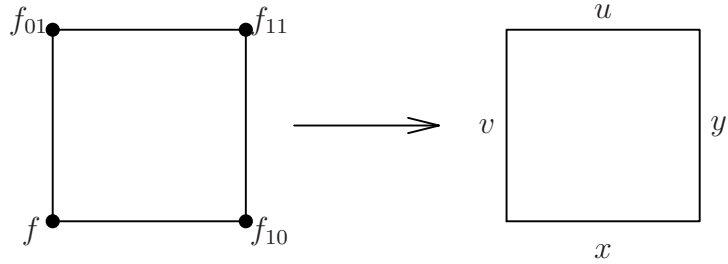

\centering
\centertexdraw{\setunitscale 0.5 \move (-1 0) \fcir f:0 r:0.075 \lvec(1 0)\fcir f:0 r:0.075 \lvec(1 2) \fcir f:0 r:0.075 \lvec (-1 2) \fcir f:0 r:0.075 \lvec (-1 0)
\textref h:C v:C \htext(-1.22 -0.1){$f$} \htext(-1.25 2.1){$f_{01}$} \htext(1.25 -0.1){$f_{10}$} \htext(1.25 2.1){$f_{11}$}
\move (1.5 1) \arrowheadtype t:V \avec(3 1)
\move (3.7 0) \lvec(5.7 0) \lvec(5.7 2) \lvec (3.7 2) \lvec (3.7 0)
\textref h:C v:C \htext(4.7 -0.3){$x$} \htext(5.9 1){$y$} \htext(4.7 2.2){$u$} \htext(3.5 1){$v$}
}
\caption{(a) dpKdV equation: fields placed on vertices (b) Adler's map: fields placed on the edges.}\label{dpKdVtoAdler}
\end{figure}

\begin{example}\normalfont
For the dpKdV equation, let us now consider a different combination for the variables assigned on the edges of the square, namely
\begin{equation}
x=ff_{10},\quad y=f_{10}f_{11}, \quad u=f_{01}f_{11}, \quad \text{and} \quad v=ff_{01}.
\end{equation}
This choice is motivated by the fact that equation (\ref{dpKdV}) is invariant under the change $f\rightarrow \epsilon f$, $f_{11}\rightarrow \epsilon f_{11}$, $f_{10}\rightarrow \epsilon^{-1} f_{10}$ and $f_{01}\rightarrow \epsilon^{-1} f_{01}$.

Now, the above variables satisfy the following equation
\begin{equation}\label{ex3}
xu=yv.
\end{equation}
On the other hand, we have that
\begin{equation}
y-x=f_{10}(f_{11}-f), \qquad u-v=f_{01}(f_{11}-f),
\end{equation}
and, therefore, the dpKdV equation can be rewritten as 
\begin{equation}\label{ex4}
y-x-(u-v)-a+b=0.
\end{equation}

Solving (\ref{ex3}) and (\ref{ex4}) for $u$ and $v$ we obtain the following map
\begin{equation}\label{YBexample}
(x,y)\longrightarrow (u,v)\equiv\left(y \left(1+\frac{a-b}{x-y}\right),\,x \left(1+\frac{a-b}{x-y}\right)\right),
\end{equation}
which is a parametric YB map.\index{Yang-Baxter (YB) map(s)!parametric}
\end{example}

\section{Classification of quadrirational YB maps: The $H$-list}\index{$H$-list}\index{Yang-Baxter (YB) map(s)!quadrirational}
All the quadrirational maps\index{quadrirational maps} in the $F$-list\index{$F$-list} presented in the first chapter satisfy the YB equation.\index{Yang-Baxter (YB) equation} However, in principle, their M\"obius-equivalent maps\index{M\"obius-equivalent maps} do not necessarily have the YB property, as in the following.

\begin{example}\normalfont
Consider the map $F_V$ of the $F$-list.\index{$F$-list} Under the change of variables
\begin{equation}\label{Mchange}
(x,y,u,v)\rightarrow(-x,-y,u,v),
\end{equation}
it becomes
\begin{equation}
(x,y)\rightarrow(-y-\frac{a-b}{x-y},-x-\frac{a-b}{x-y}).
\end{equation}
The above map does not satisfy the YB equation.\index{Yang-Baxter (YB) equation}
\end{example} 

In fact, all the maps of the $F$-list\index{$F$-list} lose the YB property under the transformation (\ref{Mchange}).

The quadrirational maps\index{quadrirational maps} which satisfy the YB equation \index{Yang-Baxter (YB) equation} were classified in \cite{PSTV}. Particularly, their classification is based on the following.

\begin{definition}
Let $\rho_\lambda:X\rightarrow X$ be a $\lambda$-parametric family of bijections. The parametric YB maps\index{Yang-Baxter (YB) map(s)!parametric} $Y_{a,b}$ and $\tilde{Y}_{a,b}$ are called equivalent, if they are related as follows
\begin{equation}\label{equivYB}
\tilde{Y}_{a,b}=\rho_{a}^{-1}\times \rho_{b}^{-1}~Y_{a,b}~\rho_a \times\rho_b.
\end{equation}
\end{definition}

\begin{remark}\normalfont
It is straightforward to show that the above equivalence relation is well defined; if $Y_{a,b}$ has the YB property, so does the map $\tilde{Y}_{a,b}$.
\end{remark}

The representative elements of the equivalence classes, with respect to the equivalence relation (\ref{equivYB}), are given by the following list.

\begin{theorem} Every quadrirational parametric YB map\index{Yang-Baxter (YB) map(s)!parametric}\index{Yang-Baxter (YB) map(s)!quadrirational} is equivalent (in the sense (\ref{equivYB})) to one of the maps of the F-list\index{$F$-list} or one of the maps of the following list
\begin{align}
 u&=yQ^{-1}, ~~\quad v=xQ,~\quad\qquad Q=\frac{(1-b)xy+(b-a)y+b(a-1)}{(1-a)xy+(a-b)x+a(b-1)}; \tag{$H_I$}\label{HI}\\
 u&=yQ^{-1}, ~~\quad v=xQ,~\quad\qquad Q=\frac{a+(b-a)y-bxy}{b+(a-b)x-axy}; \tag{$H_{II}$}\label{HII}\\
 u&=\frac{y}{a}Q,~\quad \quad v=\frac{x}{b}Q, \quad\qquad Q=\frac{ax+by}{x+y};\tag{$H_{III}$}\label{HIII}\\
 u&=yQ^{-1} \quad\quad v=xQ,~\quad\qquad Q=\frac{axy+1}{bxy+1}\tag{$H_{IV}$};\label{HIV}\\
 u&=y-P, ~~\quad v=x+P,\qquad P=\frac{a-b}{x+y}\tag{$H_{V}$}.\label{HV}
\end{align}
\end{theorem}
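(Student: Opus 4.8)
The plan is to bootstrap from the classification of quadrirational maps on $\field{CP}^1\times\field{CP}^1$ in \cite{ABS-2005} (the $F$-list) and to promote it to a classification of \emph{parametric} Yang-Baxter maps by controlling precisely how the conjugating M\"obius transformations are allowed to depend on the parameters. First I would fix a quadrirational parametric YB map $Y_{a,b}$. For each generic pair $(a,b)$ the map $(x,y)\mapsto(u,v)$ is a quadrirational map of $\field{CP}^1\times\field{CP}^1$, hence by the $F$-list theorem it is M\"obius-equivalent to exactly one of $F_I,\dots,F_V$. The type is detected by a discrete invariant --- the intersection pattern of the two singular curves of the map (the loci on which $Y_{a,b}$ and $Y_{a,b}^{-1}$ blow down) --- which varies continuously with $(a,b)$ and is therefore constant. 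So there is a fixed $\ast\in\{I,\dots,V\}$ and M\"obius pairs $\mu_{a,b}\times\nu_{a,b}$ with $Y_{a,b}=(\mu_{a,b}\times\nu_{a,b})^{-1}\circ F_\ast\circ(\mu_{a,b}\times\nu_{a,b})$, up to a reparametrisation of $a,b$, which is harmless since the statement is only ``up to suitable choice of the parameters''.

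The key step is to show that $\mu_{a,b}$ may be chosen to depend on $a$ alone and $\nu_{a,b}$ on $b$ alone. One writes out the parametric YB equation \eqref{YB_eq} for $Y_{a,b}$ and substitutes the factorisation above. Because the first tensor slot always carries the parameter $a$ (it appears in $Y^{12}_{a,b}$ and $Y^{13}_{a,c}$, never paired with $b$ or $c$), the distinguished points of $Y_{a,b}$ in that slot --- the endpoints of its singular curve and their images --- are functions of $a$ only; likewise for the other two slots. The M\"obius map sending these distinguished points to their normalised positions in $F_\ast$ then depends only on the parameter of that slot. Conjugating by $\rho_a:=\mu_a$ and $\rho_b:=\nu_b$, i.e.\ the same $\lambda$-parametric family in each factor evaluated at $\lambda=a$ or $\lambda=b$, and using the equivalence \eqref{equivYB}, we reduce $Y_{a,b}$ to a normal form equal to $F_\ast$ for every $(a,b)$ after conjugation by a M\"obius pair with trivial parameter dependence in at least one variable.

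The residual freedom in this normalisation is what produces the $H$-list. The set of distinguished points of $Y_{a,b}$ in a given slot carries a nontrivial stabiliser (for some types the singular curve is determined only as an unordered pair, or there is a binary choice of a third normalising point), so two normal forms can differ by a fixed M\"obius involution applied \emph{simultaneously} to $x,y,u,v$ but \emph{not} absorbable into $\rho_a\times\rho_b$. Running through the five types and enumerating these residual involutions, one finds that besides $F_\ast$ each type admits exactly one further inequivalent normal form; a direct computation identifies these as $H_I,\dots,H_V$ and verifies that each satisfies \eqref{YB_eq}. No further cases occur because the residual stabiliser in each type is a finite group whose orbits on normal forms are completely described; this is where the F-list maps of the example above lose the YB property while their $H$-counterparts retain it.

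I expect the main obstacle to be the key step together with the enumeration: one must bookkeep the parameter dependence of all the conjugating M\"obius transformations across the three copies of $\field{CP}^1$ in the YB equation and argue that no ``mixed'' dependence survives, and then organise the residual-freedom analysis so that the split into exactly $F$-list $\cup$ $H$-list, and nothing more, is manifest. The final verification that the displayed $H_I,\dots,H_V$ are genuinely Yang-Baxter is a routine but lengthy substitution into \eqref{YB_eq}, which can be streamlined by exhibiting strong Lax matrices for them in the spirit of the Adler-map example above and invoking Proposition \ref{PropVes}.
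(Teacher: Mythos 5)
You should first be aware that the thesis does not actually prove this theorem: it is quoted from \cite{PSTV} as a known classification result, so there is no in-paper proof to compare your attempt against. Judged on its own terms, your outline follows what is essentially the strategy of that reference --- reduce to the $F$-list pointwise in $(a,b)$, show that the conjugating M\"obius transformations split as $\rho_a\times\rho_b$, then account for the residual freedom --- but the two central steps are asserted rather than argued, and the first of them is the heart of the matter.

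The key step, that the distinguished (singular) points of $Y_{a,b}$ in the first slot depend on $a$ alone, does not follow from the observation that ``the first tensor slot always carries the parameter $a$''. A priori the singular locus of the quadrirational map $Y_{a,b}$ in the $x$-variable is a set of points depending on \emph{both} $a$ and $b$; the fact that $a$ is merely the label attached to that slot in (\ref{YB_eq}) says nothing by itself about this dependence. What is actually needed is an analysis of how the threefold composition in (\ref{YB_eq}) propagates singularities: one compares the singularity sets of $Y^{12}_{a,b}$, $Y^{13}_{a,c}$ and $Y^{23}_{b,c}$ acting on a common factor and shows that consistency for all values of the third parameter forces the unwanted dependence to drop out. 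Without this lemma the normalisation by $\rho_a\times\rho_b$, and hence the reduction to the equivalence (\ref{equivYB}), collapses. Secondly, the residual-freedom analysis --- ``each type admits exactly one further inequivalent normal form'' --- is precisely where the list $H_I,\dots,H_V$ comes from and where most of the work lives: the stabiliser of the normalised singularity configuration can produce several candidate normal forms per type, and one must substitute each into (\ref{YB_eq}) (or exhibit a strong Lax matrix and invoke Proposition \ref{PropVes}) to decide which survive. That exactly one extra map per type survives is the content of the theorem, not something that can be read off from the finiteness of the stabiliser. As a plan your proposal points in the right direction; as a proof it is missing its two essential arguments.
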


We refer to the above list as the $H$-list.\index{$H$-list} Note that, the map $H_V$ is the Adler's map (\ref{Adler_map}).

\section{Transfer dynamics of YB maps and initial value pro-\\blems}\index{transfer dynamics}
It is well known that, given a Yang-Baxter map,\index{Yang-Baxter (YB) map(s)} there is a hierarchy of commuting \textit{transfer maps},\index{transfer map(s)} which arise out of the consideration of initial value problems. The connection between the set-theoretical solutions of the YB equation \index{Yang-Baxter (YB) equation} and integrable mappings was first introduced by Veselov in \cite{Veselov, Veselov3}. In particular he showed that for those YB maps\index{Yang-Baxter (YB) map(s)} which admit Lax representation, there is a hierarchy of commuting transfer maps\index{transfer map(s)} which preserve the spectrum of their monodromy matrix.\index{monodromy matrix}

In this section we present the transfer maps,\index{transfer map(s)} which arise out of the consideration of the initial value problem on the staircase, as they were defined in \cite{kouloukasEnt}. Specifically, in \cite{kouloukasEnt} they considered YB maps,\index{Yang-Baxter (YB) map(s)} which admit a \textit{Lax pair} representation, $(L,M)$, namely YB maps\index{Yang-Baxter (YB) map(s)} which can be represented as 
\begin{equation}
L(u;a,\lambda)M(v;b,\lambda)=M(y;b,\lambda)L(x;a,\lambda).
\end{equation}
These maps are the so-called \textit{entwining} YB maps.\index{Yang-Baxter (YB) map(s)!entwining}

However, in this section we shall restrict ourselves to the case when $L\equiv M$. Therefore, we present the transfer maps\index{transfer map(s)} defined in \cite{kouloukasEnt} in the particular case when they admit Lax representation (\ref{eqLax}).

For a given a parametric YB map $Y_{a,b}$,\index{Yang-Baxter (YB) map(s)!parametric} we can consider a periodic initial value problem on the staircase as in \cite{PNC, kouloukasEnt}. Motivated by the fact that the YB map\index{Yang-Baxter (YB) map(s)} can be represented as a map mapping two successive edges of the quadrilateral to the opposite ones (as in figure \ref{YBmap-Eq}-(a)), we shall place the initial values on the edges of the staircase. In particular, let $x_1,x_2,\ldots,x_n$ and $y_1,y_2,\ldots,y_n$ be initial values assigned to the edges of the staircase with periodic boundary conditions 
\begin{equation}
x_{n+1}=x_1 \quad \text{and} \quad y_{n+1}=y_1,
\end{equation}
as in Figure \ref{staircase}. The edges with values $x_i$ and $y_i$, $i=1,\ldots,n$, carry the parameters $a$ and $b$ respectively.

The YB map $Y_{a,b}$\index{Yang-Baxter (YB) map(s)} maps the values $(x_i,y_i)$ to $(x_i^{(1)},y_i^{(1)})=Y_{a,b}(x_i,y_i)$. Then, the values on the several levels of the lattice will be given by
\begin{equation}
(x_i^{(k)},y_j^{(k)})=Y_{a,b}(x_i^{(k-1)},y_j^{(k-1)}), \quad j=i+k-1 \mod n.
\end{equation}

Now, for the $n$-periodic problem in Figure \ref{staircase} we define the \textit{transfer map}\index{transfer map(s)}
\begin{equation}
T_n:(x_1,x_2,\ldots,x_n,y_1,y_2,\ldots,y_n)\mapsto (x_1^{(1)},x_2^{(1)},\ldots,x_n^{(1)},y_2^{(1)},\ldots,y_n^{(1)},y_1^{(1)}),
\end{equation}
which maps the initial values $x_1,\ldots,x_n$ and $y_1,\ldots,y_n$ to the next level of the staircase. Note that $T_1^1\equiv Y_{a,b}$. Moreover, we define the $k$-\textit{transfer map}\index{transfer map(s)}
\begin{equation}
T_n^k:(x_1,\ldots,x_n,y_1,\ldots,y_n)\mapsto (x_1^{(k)},\ldots,x_n^{(k)},y_{r+1}^{(k)},\ldots,y_n^{(k)},y_1^{(k)},\ldots,y_r^{(k)}), \quad T_n^1\equiv T_n,
\end{equation}
where $r=k\mod n$. 
\begin{figure}[ht]
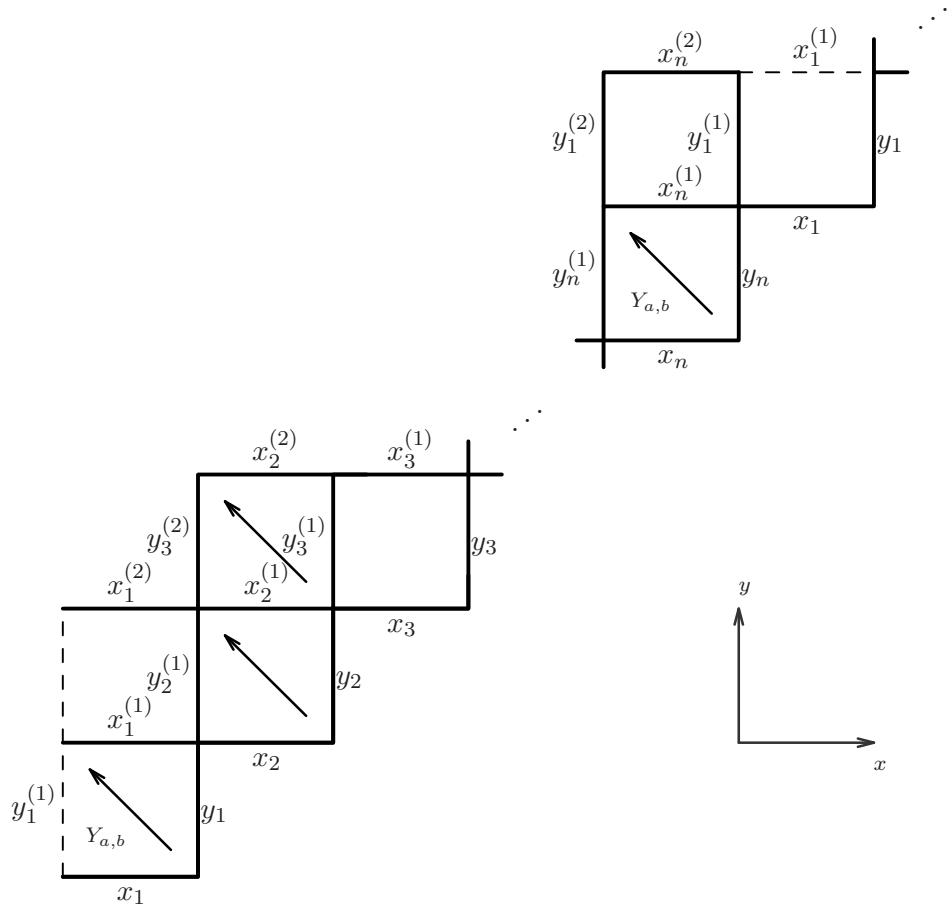

\centertexdraw{
\setunitscale 0.7
\move(3 -1)  \linewd 0.02 \setgray 0.2 \arrowheadtype t:V \arrowheadsize l:.12 w:.06 \avec(3 0) 
\move(3 -1) \arrowheadtype t:V  \avec(4 -1)
\setgray 0.0 
\move(-1.2 -1.8)  \arrowheadtype t:V \arrowheadsize l:.12 w:.06 \avec(-1.8 -1.2) 
\move(-0.2 -0.8)  \arrowheadtype t:V \arrowheadsize l:.12 w:.06 \avec(-0.8 -0.2) 
\move(2.8 2.2)  \arrowheadtype t:V \arrowheadsize l:.12 w:.06 \avec(2.2 2.8) 
\move(-0.2 0.2)  \arrowheadtype t:V \arrowheadsize l:.12 w:.06 \avec(-0.8 0.8) 
\setgray 0.0
\linewd 0.03 \move (-2 -2) \lvec (-1 -2) \lvec (-1 -1) \lvec (0 -1) \lvec (0 0) \lvec (1 0)  \lvec (1 1) \lvec(1 1.25) \move (1.8 2) \lvec (2 2) \lvec (3 2) \lvec (3 3) \lvec (4 3) \lvec (4 4)
\move (-2 -1) \lvec (-1 -1) \lvec (-1 0) \lvec (0 0)  \lvec (0 1) \lvec (1 1) \lvec (1.25 1) \move (2 1.8) \lvec (2 2) \lvec (2 3) \lvec (3 3) \lvec (3 4) 
\move (-2 -2) \lvec (-1 -2) \lvec (-1 -1) \lvec (0 -1) \lvec (0 0) \lvec (1 0) \lvec (1 0.25)
\move (2 3) \lvec (2 4) \lvec (3 4) 
\move (-2 0) \lvec (-1 0) \lvec (-1 1) \lvec (0 1) \lvec (0.25 1)
\move (4 4) \lvec (4.25 4)
\move (4 4) \lvec (4 4.25)

\linewd 0.015 \lpatt (.1 .1 ) \move (-2 -2) \lvec (-2 -1) \lvec(-2 0) \move (3 4) \lvec(4 4)
\htext (4 -1.2) {\scriptsize{$x$}}
\htext (3 .1) {\scriptsize{$y$}}
\htext (1.3 1.3) {$\iddots$}
\htext (4.3 4.3) {$\iddots$}
\htext (-1.83 -1.79) {\scriptsize{$Y_{a,b}$}}
\htext (2.2 2.2) {\scriptsize{$Y_{a,b}$}}

\htext (-1.6 -2.2) {\small{$x_1$}} \htext (-0.6 -1.2) {\small{$x_2$}} \htext (0.4 -0.2) {\small{$x_3$}} \htext (2.4 1.8) {\small{$x_n$}} \htext (3.4 2.8) {\small{$x_1$}}
\htext (-0.98 -1.6) {\small{$y_1$}} \htext (0.02 -0.6) {\small{$y_2$}} \htext (1.02 0.4) {\small{$y_3$}} \htext (3.02 2.4) {\small{$y_n$}} \htext (4.02 3.4) {\small{$y_1$}}

\htext (-1.67 -0.95) {\small{$x_1^{(1)}$}} \htext (-0.66 0.05) {\small{$x_2^{(1)}$}} \htext (0.4 1.05) {\small{$x_3^{(1)}$}} \htext (2.4 3.05) {\small{$x_n^{(1)}$}} \htext (3.4 4.05) {\small{$x_1^{(1)}$}}
\htext (-2.38 -1.6) {\small{$y_1^{(1)}$}} \htext (-1.38 -0.63) {\small{$y_2^{(1)}$}} \htext (-0.38 0.4) {\small{$y_3^{(1)}$}}  \htext (1.62 2.4) {\small{$y_n^{(1)}$}} \htext (2.62 3.4) {\small{$y_1^{(1)}$}} 

\htext (-1.67 0.05) {\small{$x_1^{(2)}$}} \htext (-0.6 1.05) {\small{$x_2^{(2)}$}}  \htext (2.4 4.05) {\small{$x_n^{(2)}$}}
 \htext (-1.38 0.4) {\small{$y_3^{(2)}$}}  \htext (1.62 3.4) {\small{$y_1^{(2)}$}}
 }
\caption{{Transfer maps corresponding to the $n$-periodic initial value problem}}\label{staircase}
\end{figure}

For the transfer map\index{transfer map(s)} $T_n$ we define the \textit{monodromy matrix}\index{monodromy matrix}
\begin{equation}\label{monMat}
\mathcal{M}({\textbf{x},\textbf{y}};\lambda)=\prod_{j=1}^{\stackrel{\curvearrowleft}{n}}L(\textbf{y}_j;b)L(\textbf{x}_j;a),
\end{equation}
where $\textbf{x}:=(x_1,\ldots,x_n)$, $\textbf{y}:=(y_1,\ldots,y_n)$ and the ``$\curvearrowleft$" indicates that the terms of the above product are placed from the right to the left. Similarly to \cite{Veselov}, in \cite{kouloukasEnt} was proven that the transfer map\index{transfer map(s)} $T_n$ preserves the spectrum of its monodromy matrix.\index{monodromy matrix} Therefore, the function $\tr(\mathcal{M}({\textbf{x},\textbf{y}};\lambda))$ is a generating function of invariants for the map $T_n$.

As we will see in the next chapter, the invariants of a map are essential for integrability claims.

\begin{example}\normalfont
For Adler's map (\ref{Adler_map}) we consider the transfer matrix\index{transfer matrix} of the two-periodic initial value problem, given by
\begin{equation}
T_2(x_1,x_2,y_1,y_2)=\left(y_1-\frac{a-b}{x_1+y_1},y_2-\frac{a-b}{x_2+y_2},x_2+\frac{a-b}{x_2+y_2},x_1+\frac{a-b}{x_1+y_1}\right).
\end{equation}
Moreover, the corresponding monodromy matrix\index{monodromy matrix} is given by
\begin{equation}
\mathcal{M}_2(\textbf{x},\textbf{y};\lambda)=L(y_2;b)L(x_2;a)L(y_1;b)L(x_1;a),
\end{equation}
where $\textbf{x}:=(x_1,x_2)$ and $\textbf{y}:=(y_1,y_2)$.

The trace of the monodromy matrix\index{monodromy matrix} is given by the following second order polynomial
\begin{equation}
\tr\left(\mathcal{M}_2(\textbf{x},\textbf{y};\lambda)\right)=2 \lambda^2-\left(I_1(\textbf{x},\textbf{y})\right)^2 \lambda-I_0(\textbf{x},\textbf{y}),
\end{equation}
where $I_0$ and $I_1$ are invariants of the map $T_2$, and they are given by
\begin{subequations}
\begin{align}
I_1(\textbf{x},\textbf{y})&=a f(\textbf{x},\textbf{y})+b f(\textbf{y},\textbf{x})-\prod_{k,l=1}^2 (x_k+y_l),  \\
I_2(\textbf{x},\textbf{y})&=x_1+x_2+y_1+y_2,
\intertext{where we have omitted the constant terms, and $f$ is given by}
f(\textbf{x},\textbf{y})&=x_1^2+x_2^2+2y_1y_2+(x_1+x_2)(y_1+y_2).
\end{align}
\end{subequations}
\end{example}

\chapter{Yang-Baxter maps related to NLS type equations}
\label{chap5} \setcounter{equation}{0}
\renewcommand{\theequation}{\thechapter.\arabic{equation}}

\section{Overview}
As explained in the previous chapter, the construction of Yang-Baxter maps which admit Lax representation\index{Lax representation} and the study of their integrability is important, as they are related to several concepts of integrability. 

The aim of this chapter is to construct Yang-Baxter maps using the Darboux matrices\index{Darboux matrix(-ces)} we presented in chapter 3, and study their integrability. Particularly, we are interested in the Liouville integrability\index{Liouville integrability} of these maps, as finite dimensional maps. We shall present six and four-dimensional YB maps\index{Yang-Baxter (YB) map(s)} corresponding to all the NLS type equations which we considered in chapter 3.

The chapter is organised as follows: In the following section we give the definitions of a Poisson manifold, the Poisson bracket\index{Poisson!bracket} and Casimir functions\index{Casimir function(s)} that we use in the later sections, for the convenience of the reader. However, for more information on Poisson geometry one could refer to \cite{Arnold, Marsden}. Moreover, we shall prove some basic consequences of the matrix refactorisation problems;\index{refactorisation problem(s)} the birationality of the deduced YB maps\index{Yang-Baxter (YB) map(s)} and the derivation of their invariants. Finally, we will give the definition of the complete integrability of a YB map.\index{Yang-Baxter (YB) map(s)} 

In section 3 we construct six-dimensional YB maps\index{Yang-Baxter (YB) map(s)} for the all the NLS type equations which we considered in chapter 3. Furthermore, in the cases of the NLS and the DNLS equations the six-dimensional maps can be restricted to four-dimensional YB maps\index{Yang-Baxter (YB) map(s)} on invariant leaves.\index{invariant leaves} These maps deserve our attention, as they are related to several aspects of integrability; they are integrable in the Liouville sense as finite dimensional maps, they can be used to construct integrable lattices and they also have applications to a recent theory of maps preserving functions with symmetries \cite{Allan-Pavlos}.

Finally, section 4 deals with the vector generalisations of the four-dimensional YB maps.\index{Yang-Baxter (YB) map(s)} However, the Liouville integrability of these generalisations is an open problem.

The results of this chapter appear in \cite{Sokor-Sasha}.

\section{Preliminaries}
In what follows, we consider $M$ to be a differentiable manifold with $\dim(M)=n$, and $\mathcal{C}^{\infty}(M)$ the space of smooth functions defined on $M$.

\subsection{Poisson manifolds and Casimir functions}\index{Casimir function(s)}
Let us start with the definition of the Poisson bracket.\index{Poisson!bracket}

\begin{definition}
A map $\left\{,\right\}:\mathcal{C}^{\infty}(M)\times \mathcal{C}^{\infty}(M)\rightarrow \mathcal{C}^{\infty}(M)$ is called Poisson bracket,\index{Poisson!bracket} if it possesses the following properties:
\begin{enumerate}
	\item $\left\{\alpha f+\beta g,h\right\}=\alpha \left\{f,h\right\}+\beta \left\{g,h\right\}$; \qquad (bilinear)
	\item $\left\{f,g\right\}=-\left\{g,f\right\}$; \qquad (antisymmetric)
	\item $\left\{\left\{f,g\right\},h\right\}+\left\{\left\{h,f\right\},g\right\}+\left\{\left\{g,h\right\},f\right\}=0$; \qquad (Jacobi identity)
	\item $\left\{f,gh\right\}=\left\{f,g\right\}h+\left\{f,h\right\}g$; \qquad (Leibnitz rule)
\end{enumerate}for any $f,g,h\in \mathcal{C}^{\infty}(M)$ and $\alpha,\beta \in \field{C}$. Moreover, the manifold $M$ equipped with the above Poisson bracket\index{Poisson!bracket} is called Poisson manifold, and it is denoted as $\left(M,\{,\}\right)$.
\end{definition}

Yet, the above definition is abstract and, in practice, the Poisson bracket\index{Poisson!bracket} is usually defined by the following map
\begin{equation}\label{Poissonmap}
(f,g)\stackrel{J}{\rightarrow}\left\{f,g\right\}=\nabla f\cdot J\cdot \left(\nabla g\right)^t,
\end{equation}
where $J$ is an antisymmetric matrix which satisfies the Jacobi identity and it is called the \textit{Poisson matrix}.\index{Poisson!matrix} It can be readily verified that the above relation defines a Poisson bracket.\index{Poisson!bracket} Specifically, we have the following.

\begin{proposition}
Let $J=J(\textbf{x})$, $\textbf{x}\in M$, an $n\times n$ matrix. Then, $J$ defines a Poisson bracket\index{Poisson!bracket} via the relation (\ref{Poissonmap}) iff it is antisymmetric and it satisfies the following
\begin{equation}
\sum_{i=1}^n\left(J_{ij}\partial_{x_i}J_{kl}+J_{il}\partial_{x_i}J_{jk}+J_{ik}\partial_{x_i}J_{lj}\right)=0, \quad j,k,l=1,\ldots,n,
\end{equation}
for any $x=(x_1,\ldots,x_n)\in M$.
\end{proposition}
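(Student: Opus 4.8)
The plan is to prove the equivalence by establishing both directions, treating the computation as a direct unpacking of the bilinear form (\ref{Poissonmap}). First I would verify that antisymmetry of $J$ is equivalent to antisymmetry of the bracket: since $\{f,g\}=\sum_{i,j}\partial_{x_i}f\, J_{ij}\,\partial_{x_j}g$, swapping $f$ and $g$ gives $\{g,f\}=\sum_{i,j}\partial_{x_i}g\, J_{ij}\,\partial_{x_j}f=\sum_{i,j}\partial_{x_j}f\, J_{ij}\,\partial_{x_i}g$ after relabelling, so $\{f,g\}=-\{g,f\}$ for all $f,g$ holds iff $J_{ij}=-J_{ji}$ for all $i,j$ (choosing coordinate functions $f=x_k$, $g=x_l$ recovers $J_{kl}=-J_{lk}$). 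Bilinearity over $\field{C}$ and the Leibniz rule are automatic from the form (\ref{Poissonmap}), since $\nabla$ is $\field{C}$-linear and satisfies the product rule; I would remark on this briefly but it needs no real argument.

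The substantive step is the Jacobi identity. I would compute $\{\{f,g\},h\}$ explicitly: writing $\{f,g\}=\sum_{i,j}f_{x_i}J_{ij}g_{x_j}$, applying the bracket with $h$ and differentiating through the product yields terms in which $\partial$ hits the coefficients $J_{ij}$ and terms in which $\partial$ hits second derivatives of $f$ or $g$. The key observation is that, upon forming the cyclic sum $\{\{f,g\},h\}+\{\{h,f\},g\}+\{\{g,h\},f\}$, all the terms containing second-order derivatives of $f$, $g$, or $h$ cancel pairwise (this is the standard computation and uses only the antisymmetry of $J$). What survives is a sum of the form
\begin{equation}
\sum_{i,k,l}\left(\sum_{j} J_{jl}\partial_{x_j}J_{ki} + J_{ji}\partial_{x_j}J_{lk} + J_{jk}\partial_{x_j}J_{il}\right) f_{x_i} g_{x_k} h_{x_l},
\end{equation}
(indices arranged to match the statement). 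Since $f,g,h$ are arbitrary, the cyclic sum vanishes identically iff each coefficient vanishes, i.e. iff
\begin{equation}
\sum_{i=1}^n\left(J_{ij}\partial_{x_i}J_{kl}+J_{il}\partial_{x_i}J_{jk}+J_{ik}\partial_{x_i}J_{lj}\right)=0
\end{equation}
for all $j,k,l$, which is exactly the asserted condition. To pass from ``the cyclic sum vanishes for all $f,g,h$'' to ``each coefficient vanishes'' I would again test on coordinate functions $f=x_a$, $g=x_b$, $h=x_c$.

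I expect the main obstacle to be purely organizational: keeping the index bookkeeping straight in the cyclic-sum cancellation so that it is transparent which terms kill which, and confirming that the three surviving families of terms reassemble precisely into the cyclically symmetrized expression in the statement (as opposed to some relabelled variant). The antisymmetry of $J$ must be invoked at exactly the right place to collapse the second-derivative terms; I would flag this as the one spot where care is needed, and otherwise present the calculation compactly rather than in full.
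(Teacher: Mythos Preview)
Your proposal is correct and is precisely the standard argument for this result. The paper, however, states this proposition without proof (it is immediately followed by a corollary), treating it as a well-known fact from Poisson geometry; so there is nothing to compare against beyond noting that your direct computation---checking bilinearity and Leibniz trivially, matching antisymmetry of the bracket to that of $J$, and reducing the Jacobi identity on arbitrary $f,g,h$ to the coordinate condition via the cyclic-sum cancellation of second-derivative terms---is exactly the expected route.
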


\begin{corollary}
Any constant antisymmetric matrix defines a Poisson bracket\index{Poisson!bracket} via the relation (\ref{Poissonmap}).
\end{corollary}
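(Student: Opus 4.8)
The plan is to verify the two defining conditions of the previous proposition for a constant antisymmetric matrix $J$, namely antisymmetry and the Jacobi-type identity
\begin{equation*}
\sum_{i=1}^n\left(J_{ij}\partial_{x_i}J_{kl}+J_{il}\partial_{x_i}J_{jk}+J_{ik}\partial_{x_i}J_{lj}\right)=0,\quad j,k,l=1,\ldots,n.
\end{equation*}
Antisymmetry holds by hypothesis, so nothing is needed there. For the identity, the key observation is that if $J$ is constant, then every entry $J_{kl}$ is independent of $\textbf{x}$, hence $\partial_{x_i}J_{kl}=0$ for all $i,k,l$. Therefore each of the three summands vanishes identically, and the sum is zero for all $j,k,l$ and all $\textbf{x}\in M$.

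First I would state that by the preceding proposition it suffices to check these two conditions. Then I would note that a constant matrix has vanishing partial derivatives, substitute this into the displayed identity, and conclude that the left-hand side is a sum of terms each containing a factor $\partial_{x_i}J_{(\cdot)(\cdot)}=0$. This immediately gives the result: by relation (\ref{Poissonmap}), $J$ defines a Poisson bracket.

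There is essentially no obstacle here; the corollary is an immediate specialisation of the proposition. The only thing worth being careful about is making explicit that "constant" means every entry is a fixed scalar independent of the coordinates $x_1,\ldots,x_n$, so that the partial derivatives appearing in the identity all vanish — this is the single substantive (if trivial) step. One could optionally add a remark that, concretely, the bracket is then the bilinear form $\{f,g\}=\nabla f\cdot J\cdot(\nabla g)^t$ with skew-symmetric constant coefficients, which is the standard constant-coefficient Poisson structure, but this is not needed for the proof itself.
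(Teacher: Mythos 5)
Your argument is correct and is exactly the intended one: the paper states this corollary without proof as an immediate consequence of the preceding proposition, since for a constant matrix all the terms $\partial_{x_i}J_{kl}$ in the Jacobi-type condition vanish and antisymmetry is assumed. Nothing is missing.
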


Now, for any two smooth functions on $M$, we have the following.
\begin{definition}
The functions $f,g\in \mathcal{C}^{\infty}(M)$ are said to be in involution with respect to a Poisson bracket (\ref{Poissonmap})\index{Poisson!bracket} if $\{f,g\}=0$.
\end{definition}

\begin{definition}
A function $C=C(\textbf{x})\in \mathcal{C}^{\infty}(M)$ is called Casimir function\index{Casimir function(s)} if it is in involution with any arbitrary function with respect to the Poisson bracket,\index{Poisson!bracket} namely $\left\{C,f\right\}=0$, for any $f=f(\textbf{x})\in M$.
\end{definition}

\subsection{Properties of the YB maps\index{Yang-Baxter (YB) map(s)} which admit Lax representation}\index{Lax representation}
Since the Lax equation, (\ref{eqLax}),\index{Lax equation} has the obvious symmetry 
\begin{equation}\label{symL}
(u,\,v,\,a,\,b) \longleftrightarrow  (y,\,x ,\,b,\,a)
\end{equation}
we have the following.

\begin{proposition} \label{rationality}
If a matrix refactorisation problem (\ref{eqLax})\index{refactorisation problem(s)} yields a rational map (\ref{unique-sol}), then this map is birational.
\end{proposition}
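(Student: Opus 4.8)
The plan is to read off a rational formula for the inverse of $Y_{a,b}$ directly from the symmetry (\ref{symL}) of the Lax equation (\ref{eqLax}). The hypothesis to be used is that, for each admissible choice of parameters, the refactorisation problem (\ref{eqLax}) is \emph{equivalent} to the map (\ref{unique-sol}); that is, it determines the output pair uniquely and by rational expressions. Applying this with the two parameters interchanged produces a second rational map $Y_{b,a}$, and the proof will play these two maps against each other via (\ref{symL}).

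First I would fix a generic point $(x,y)$ and set $(u,v)=Y_{a,b}(x,y)$, so that
\begin{equation*}
L(u;a,\lambda)L(v;b,\lambda)=L(y;b,\lambda)L(x;a,\lambda)\qquad\text{for all }\lambda .
\end{equation*}
The crux is to recognise this as the $Y_{b,a}$-refactorisation evaluated at a suitable input: rewriting the identity in the reversed form
\begin{equation*}
L(y;b,\lambda)L(x;a,\lambda)=L(u;a,\lambda)L(v;b,\lambda),
\end{equation*}
we see that $(y,x)$ solves the refactorisation problem attached to $Y_{b,a}$ with right-hand data $(v,u)$. By the assumed uniqueness of that refactorisation, $Y_{b,a}(v,u)=(y,x)$. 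With $\pi$ the permutation $\pi(x,y)=(y,x)$ from the definition of $Y^{21}$, this reads $\pi\circ Y_{b,a}\circ\pi\circ Y_{a,b}=Id$, so $Y^{21}_{b,a}=\pi\circ Y_{b,a}\circ\pi$ is a left inverse of $Y_{a,b}$. Running the identical argument with $a$ and $b$ exchanged gives $\pi\circ Y_{a,b}\circ\pi\circ Y_{b,a}=Id$, and conjugating by $\pi$ shows that $Y^{21}_{b,a}$ is also a right inverse of $Y_{a,b}$. Since $\pi$ is a linear involution and $Y_{b,a}$ is rational by hypothesis, $\pi\circ Y_{b,a}\circ\pi$ is rational; hence $Y_{a,b}$ is birational, with $Y_{a,b}^{-1}=Y^{21}_{b,a}$.

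The whole argument is essentially bookkeeping with the symmetry (\ref{symL}), so I do not expect a serious obstacle. The one point requiring care is the precise matching of the rewritten Lax identity with the defining refactorisation of $Y_{b,a}$ — in particular checking which matrix factor carries which parameter and which pair is ``input'' versus ``output'' — together with the explicit use of uniqueness, which is what guarantees that the pair $(y,x)$ produced by the symmetry genuinely equals the value of the rational map $Y_{b,a}$ rather than merely being one of possibly several solutions of the same matrix equation.
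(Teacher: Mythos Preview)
Your proof is correct and follows essentially the same route as the paper: both arguments exploit the symmetry (\ref{symL}) of the Lax equation to exhibit the inverse of $Y_{a,b}$ as a rational map built from $Y_{b,a}$ with the roles of the two slots swapped. You are somewhat more explicit than the paper in writing the inverse as $\pi\circ Y_{b,a}\circ\pi$ and in checking both the left- and right-inverse conditions, but the underlying idea is identical.
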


\begin{proof} Let $Y:(x,y)\mapsto (u,v)$ be a rational map corresponding to a refactorisation problem (\ref{eqLax}),\index{refactorisation problem(s)} i.e.
\begin{equation}
x\mapsto u=\frac{n_1(x,y;a,b)}{d_1(x,y;a,b)}, \qquad y\mapsto v=\frac{n_2(x,y;a,b)}{d_2(x,y;a,b)},
\end{equation}
where $n_i$, $d_i$, $i=1,2$, are polynomial functions of their variables.

Due to the symmetry (\ref{symL}) of the refactorisation problem (\ref{eqLax}),\index{refactorisation problem(s)} the inverse map of $Y$, $Y^{-1}:(x,y)\mapsto (u,v)$, is also rational and, in fact,
\begin{equation}
u\mapsto x=\frac{n_1(v,u;b,a)}{d_1(v,u;b,a)}, \qquad v\mapsto y=\frac{n_2(v,u;b,a)}{d_2(v,u;b,a)}.
\end{equation} 
Therefore, $Y$ is a birational map.
\end{proof}

Now, for a Yang-Baxter map, $Y_{a,b}$, the quantity $M(x,y;a,b)=L(y;b)L(x;a)$ is called the \textit{monodromy} matrix.\index{monodromy matrix} The fact that the monodromy matrix\index{monodromy matrix} is a generating function of first integrals\index{first integral(s)} is well known from the eighties; for example, see \cite{FadTah}\footnote{Reprint of the 1987 edition.}. In particular, for the invariants of a YB map\index{Yang-Baxter (YB) map(s)} admiting Lax representation,\index{Lax representation} we have the following.

\begin{proposition}\label{genInv}
If $L=L(x,a;\lambda)$ is a Lax matrix\index{Lax matrix(-ces)} with corresponding YB map,\index{Yang-Baxter (YB) map(s)} $Y:(x,y)\mapsto (u,v)$, then the $\tr(L(y,b;\lambda)L(x,a;\lambda))$ is a generating function of invariants of the YB map.\index{Yang-Baxter (YB) map(s)}
\end{proposition}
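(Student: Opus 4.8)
The plan is to show that the trace of the monodromy matrix is invariant under the Yang--Baxter map by exploiting the refactorisation (Lax) equation (\ref{eqLax}) together with the cyclic property of the trace. Write $\mathcal{M}(x,y;a,b;\lambda):=L(y;b,\lambda)L(x;a,\lambda)$. By assumption the map $Y:(x,y)\mapsto(u,v)$ satisfies (\ref{eqLax}), that is
\begin{equation}
L(u;a,\lambda)L(v;b,\lambda)=L(y;b,\lambda)L(x;a,\lambda)
\end{equation}
for all $\lambda$. The quantity we want to track is $\mathcal{M}(u,v;a,b;\lambda)=L(v;b,\lambda)L(u;a,\lambda)$, i.e. the same two factors but multiplied in the opposite order; so the two monodromy matrices differ only by a cyclic rearrangement of the product $L(u;a,\lambda)L(v;b,\lambda)$, which is itself equal to $L(y;b,\lambda)L(x;a,\lambda)$ by (\ref{eqLax}).

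First I would make this precise. Set $P:=L(u;a,\lambda)$ and $Q:=L(v;b,\lambda)$. Then $\mathcal{M}(u,v;a,b;\lambda)=QP$, while $PQ=\mathcal{M}(x,y;a,b;\lambda)$ by the Lax equation. Since $\tr(QP)=\tr(PQ)$ for any two square matrices of compatible size, we obtain
\begin{equation}
\tr\big(\mathcal{M}(u,v;a,b;\lambda)\big)=\tr(QP)=\tr(PQ)=\tr\big(\mathcal{M}(x,y;a,b;\lambda)\big),
\end{equation}
valid for every value of the spectral parameter $\lambda$. Hence $\tr\big(L(y;b,\lambda)L(x;a,\lambda)\big)$, viewed as a function of $(x,y)$ (with $a,b,\lambda$ regarded as parameters), takes the same value on $(x,y)$ and on $Y(x,y)=(u,v)$, so it is a first integral of $Y$. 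Expanding this trace as a (Laurent) polynomial in $\lambda$ and collecting the coefficients of each power of $\lambda$ produces a collection of functions of $(x,y)$, each of which is individually invariant under $Y$; this is the precise sense in which the trace is a \emph{generating function} of invariants. This is exactly the mechanism already used in the excerpt for the periodic transfer maps via (\ref{monMat}), specialised here to period one.

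There is really no serious obstacle: the argument is a one-line consequence of (\ref{eqLax}) and the cyclicity of the trace, both of which are available. The only points that deserve a remark rather than a proof are (i) the identity $\tr(PQ)=\tr(QP)$ holds irrespective of invertibility of $P$ or $Q$, so no genuineness hypothesis on $L$ is needed beyond that it gives a Lax representation of $Y$; and (ii) since the equality of traces holds for all $\lambda$, it holds coefficient-by-coefficient once one fixes the (rational, in our setting Laurent-polynomial) dependence of $L$ on $\lambda$, which is what makes the word ``generating function'' meaningful. I would close by noting that the same computation shows $\tr\big(\mathcal{M}(x,y;a,b;\lambda)^{k}\big)$ is invariant for every $k$, and more generally that the full spectrum of $\mathcal{M}$ is preserved, recovering Veselov's observation; but for the statement as posed, the cyclic-trace argument above suffices.
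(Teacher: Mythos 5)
Your argument is correct and is essentially the paper's own proof: both rest on taking the trace of the refactorisation equation (\ref{eqLax}) and invoking cyclicity of the trace, then reading off the coefficients of the $\lambda$-expansion as individual invariants. The only cosmetic difference is that you apply cyclicity to $L(v;b,\lambda)L(u;a,\lambda)$ while the paper applies it to $L(y;b,\lambda)L(x;a,\lambda)$; the substance is identical.
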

\begin{proof}
Since, 
\begin{equation} \label{trace}
\tr(L(u,a;\lambda)L(v,b;\lambda))\overset{(\ref{eqLax})}{=}\tr(L(y,b;\lambda)L(x,a;\lambda))=\tr(L(x,a;\lambda)L(y,b;\lambda)),
\end{equation}
and the function $\tr(L(x,a;\lambda)L(y,b;\lambda))$ can be written as $\displaystyle\tr(L(x,a;\lambda)L(y,b;\lambda))=\sum_k \lambda^k I_k(x,y;a,b)$, from (\ref{trace}) follows that
\begin{equation}
I_i(u,v;a,b)=I_i(x,y;a,b),
\end{equation}
which are invariants for $Y$.
\end{proof}

Nevertheless, the above proposition does not guarantee that the generated invariants, $I_i(x,y;a,b)$, are functionally independent. Moreover, the number of the invariants we deduce from the trace of the monodromy matrix\index{monodromy matrix} may not be enough for integrability claims.

\subsection{Liouville integrability of Yang-Baxter maps}
The invariants of a YB map\index{Yang-Baxter (YB) map(s)} are essential towards its integrability in the Liouville sense. Here, we define the complete (Liouville) integrability \index{complete integrability}\index{Liouville integrability} of a YB map,\index{Yang-Baxter (YB) map(s)} following \cite{Fordy, Veselov4}.

Particularly, we have the following.

\newtheorem{CompleteIntegrability}{Definition}[section]
\begin{CompleteIntegrability}
A $2N$-dimensional Yang-Baxter map,  
\begin{equation}
Y:(x_1,\ldots,x_{2N})\mapsto (u_1,\ldots,u_{2N}), \quad u_i=u_i(x_1,\ldots,x_{2N}), \quad i=1,\ldots,2N, \nonumber
\end{equation}
is said to be completely integrable or Liouville integrable\index{completely integrable}\index{Liouville integrable} if
\begin{enumerate}
	\item there is a Poisson matrix,\index{Poisson!matrix} $J_{ij}=\left\{x_i,x_j\right\}$, of rank $2r$, which is invariant under the action of the YB map, namely $J_{ij}$ and $\tilde{J_{ij}}=\left\{u_i,u_j\right\}$ have the same functional form of their respective arguments,
	\item map $Y$ has $r$ functionally independent invariants, $I_i$, namely $I_i\circ Y=I_i$, which are in involution with respect to the corresponding Poisson bracket,\index{Poisson!bracket} i.e. $\left\{I_i,I_j\right\}=0$, $i,j=1,\ldots,r$, $i\neq j$,
	\item there are $k=2N-2r$ Casimir functions,\index{Casimir function(s)} namely functions $C_i$, $i=1,\ldots,k$, such that $\left\{C_i,f\right\}=0$, for any arbitrary function $f=f(x_1,...,x_{2N})$. These are invariant under $Y$, namely $C_i\circ Y=C_i$.
\end{enumerate}
\end{CompleteIntegrability} 

We will use this definition to study the integrability of the YB maps\index{Yang-Baxter (YB) map(s)} presented in the following section.

\section{Derivation of Yang-Baxter maps}
In chapter 3 we used Darboux transformations \index{Darboux transformation(s)}to construct integrable systems of discrete equations, which have the multidimensional consistency property.\index{multidimensional consistency} The compatibility condition of Darboux transformations\index{Darboux transformation(s)} around the square is exactly the same with the Lax equation (\ref{eqLax}).\index{Lax equation} Therefore, in this section, we use Darboux transformations\index{Darboux transformation(s)} to construct YB maps.\index{Yang-Baxter (YB) map(s)}

In particular, we consider Darboux matrices for the NLS type equations studied in chapter 3; the NLS equation, the DNLS equation and a deformation of the DNLS equation. For these Darboux matrices the refactorisation is not unique. Therefore, for the corresponding six-dimensional YB maps\index{Yang-Baxter (YB) map(s)} which are derived from the refactorisation problem,\index{refactorisation problem(s)} in principle, one needs to check the YB property separately. Yet, the entries of these Darboux matrices\index{Darboux matrix(-ces)} obey certain differential equations which possess first integrals.\index{first integral(s)} 

There is a natural restriction of the Darboux map on the affine variety corresponding to a level set of these first integrals.\index{first integral(s)} These restrictions make the refactorisation unique and this guarantees that the induced four-dimensional YB maps\index{Yang-Baxter (YB) map(s)} satisfy the YB equation \index{Yang-Baxter (YB) equation} and they are reversible \cite{Veselov3}. Moreover, we will show that the latter YB maps\index{Yang-Baxter (YB) map(s)} have Poisson structure.\index{Poisson!structure} 

However, the first integrals\index{first integral(s)} are not always very useful for the reduction because, in general, they are polynomial equations. In particular, in the cases of NLS and DNLS equations we present six-dimensional YB maps\index{Yang-Baxter (YB) map(s)} and their four-dimensional restrictions on invariant leaves.\index{invariant leaves} These four-dimensional restrictions are birational YB maps\index{Yang-Baxter (YB) map(s)} and we prove that they are integrable in the Liouville sense. In the case of the deformation of the DNLS equation, we present a six-dimensional YB map and a linear approximation to the four-dimensional YB map.\index{Yang-Baxter (YB) map(s)}

We start with the well known example of the Darboux transformation\index{Darboux transformation(s)} for the nonlinear Schr\"odinger equation \index{nonlinear Schr\"odinger (NLS)!equation} and construct its associated YB map.\index{Yang-Baxter (YB) map(s)}

\subsection{The Nonlinear Schr\"odinger equation}
Recall that, in the case of NLS equation,\index{nonlinear Schr\"odinger (NLS)!equation} the Lax operator\index{Lax operator(s)!for NLS equation} is given by
\begin{equation}
\mathcal{L}(p,q;\lambda)=D_x+\lambda U_{1}+U_{0},\quad \text{where} \quad U_1=\sigma_3,\quad U_0=\left(\begin{matrix}
        0 & 2p \\
        2q & 0
    \end{matrix}\right),
\end{equation}
where $\sigma_3$ is the standard Pauli matrix, i.e. $\sigma_3=\text{diag}(1,-1)$.

Moreover, a Darboux matrix \index{Darboux matrix(-ces)} for $\mathcal{L}$ is given by 
\begin{equation}\label{NLSDarboux}
  M=\lambda \left(
     \begin{matrix}
         1 & 0\\
         0 & 0
     \end{matrix}\right)+\left(
     \begin{matrix}
         f & p\\
         q_{10} & 1
     \end{matrix}\right).
\end{equation}
The entries of (\ref{NLSDarboux}) must satisfy the following system of equations
\begin{equation}\label{baecklundNLS}
\partial_x f=2(pq-p_{10}q_{10}), \qquad \partial_x p=2(pf-p_{10}), \qquad \partial_x q_{10}=2(q-q_{10}f),
\end{equation}
which admits the following first integral\index{first integral(s)}
\begin{equation}\label{integralNLS}
\partial_x(f-pq_{10})=0.
\end{equation}
This integral implies that $\partial_x \det M=0$. 

In correspondence with (\ref{NLSDarboux}), we define the matrix
\begin{equation} \label{3d-Darboux-NLS}
  M(\textbf{x};\lambda)=\lambda \left(
     \begin{matrix}
         1 & 0\\
         0 & 0
     \end{matrix}\right)+\left(
     \begin{matrix}
         X & x_1\\
         x_2 & 1
     \end{matrix}\right),
\qquad \textbf{x}=(x_1,x_2,X),
\end{equation}
and substitute it into the Lax equation (\ref{eqLax}) \index{Lax equation}
\begin{equation}\label{laxM}
M(\textbf{u};\lambda)M(\textbf{v};\lambda)=M(\textbf{y};\lambda)M(\textbf{x};\lambda),
\end{equation}
to derive the following system of equations
\begin{eqnarray}
&v_1 = x_1,\ u_2 = y_2,\ U +V = X + Y ,\ u_2 v_1 = x_1 y_2,& \nonumber \\
&u_1 +U v_1 = y_1+x_1 Y,\ u_1 v_2+U V = x_2 y_1+X Y,\ v_2+u_2 V = x_2 + X y_2.& \nonumber
\end{eqnarray}

The corresponding algebraic variety is a union of two six-dimensional components. The first one is obvious from the refactorisation problem (\ref{laxM}),\index{refactorisation problem(s)} and it corresponds to the permutation map
\begin{equation}
 \textbf{x}\mapsto \textbf{u}=\textbf{y}, \quad \textbf{y}\mapsto \textbf{v}=\textbf{x},  \nonumber
\end{equation}
which is a (trivial) YB map.\index{Yang-Baxter (YB) map(s)} The second one can be represented as a rational six-dimensional non-involutive map of $K^3\times K^3 \rightarrow K^3\times K^3$
\small
\begin{subequations}\label{NLS-3d} 
 \begin{align}
&x_1\mapsto u_1=\frac{y_1+x_1^2x_2-x_1X+x_1Y}{1+x_1y_2},\quad y_1\mapsto v_1=x_1, \\ 
&x_2\mapsto u_2=y_2,\quad\qquad\qquad\qquad\qquad \quad~~~ y_2\mapsto v_2=\frac{x_2+y_1y_2^2+y_2X-y_2Y}{1+x_1y_2},  \\
&X\mapsto U=\frac{y_1y_2-x_1x_2+X+x_1y_2Y}{1+x_1y_2}, ~~~Y\mapsto V=\frac{x_1x_2-y_1y_2+x_1y_2X+Y}{1+x_1y_2},
 \end{align}
\end{subequations}
\normalsize 
which, one can easily check that, satisfies the YB equation.\index{Yang-Baxter (YB) equation}

The trace of $M(\textbf{y};\lambda)M(\textbf{x};\lambda)$ is a polynomial in $\lambda$ whose coefficients are
\begin{equation}
\mbox{tr}(M(\textbf{y};\lambda)M(\textbf{x};\lambda))=\lambda^2+\lambda I_1(\textbf{x},\textbf{y})+I_2(\textbf{x},\textbf{y}), \nonumber
\end{equation}
where
\begin{equation}\label{NLS3dInv}
I_1(\textbf{x},\textbf{y})=X+Y \qquad \text{and} \qquad I_2(\textbf{x},\textbf{y})=x_2y_1+x_1y_2+XY,
\end{equation}
and those, according to proposition \ref{genInv}, are invariants for the YB map (\ref{NLS-3d}).\index{Yang-Baxter (YB) map(s)}

In the following section we show that the YB map (\ref{NLS-3d})\index{Yang-Baxter (YB) map(s)} can be restricted to a four-dimensional YB map\index{Yang-Baxter (YB) map(s)} which has Poisson structure.\index{Poisson!structure}


\subsubsection{Restriction on symplectic leaves: The Adler-Yamilov map}\index{Adler-Yamilov!map}\index{symplectic leaf(-ves)}
In this section, we show that map (\ref{NLS-3d}) can be restricted to the Adler-Yamilov map\index{Adler-Yamilov!map} on symplectic leaves,\index{symplectic leaf(-ves)} by taking into account the first integral,\index{first integral(s)} (\ref{integralNLS}), of the system (\ref{baecklundNLS}). 

In particular, we have the following.

\begin{proposition} For the six-dimensional map (\ref{NLS-3d}) we have the following:
\begin{enumerate}
	\item The quantities $\Phi =X-x_1x_2$ and $\Psi=Y-y_1y_2$ are its invariants (first integrals).\index{first integral(s)}
	\item It can be restricted to a four-dimensional map $Y_{a,b}:A_a\times A_b \longrightarrow A_a\times A_b$, where $A_a$, $A_b$ are level sets of the first integrals\index{first integral(s)} $\Phi$ and $\Psi$, namely
\begin{subequations}\label{symleaves}
\begin{align}
A_a&=\{(x_1,x_2,X)\in K^3; X=a+x_1x_2\}, \\
A_b&=\{(y_1,y_2,Y)\in K^3; Y=b+y_1y_2\}.
\end{align}
\end{subequations}
\end{enumerate}
Moreover, map $Y_{a,b}$ is the Adler-Yamilov map.\index{Adler-Yamilov!map}
\end{proposition}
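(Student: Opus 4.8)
The plan is to verify the three assertions by direct computation, exploiting the structure already extracted from the Darboux-matrix analysis. First I would establish that $\Phi = X - x_1 x_2$ and $\Psi = Y - y_1 y_2$ are invariants of the six-dimensional map (\ref{NLS-3d}). This is the discrete shadow of the first integral (\ref{integralNLS}): writing $\mathcal{F} := f - p q_{10}$, relation (\ref{integralNLS}) says $\partial_x \mathcal{F} = 0$, and the Darboux matrix $M(\textbf{x};\lambda)$ in (\ref{3d-Darboux-NLS}) is precisely the substitution of $(p,q_{10},f)$ by $(x_1,x_2,X)$ into (\ref{NLSDarboux}). So I expect $\Phi = X - x_1 x_2$ and $\Psi = Y - y_1 y_2$ to play the role played by $\mathcal{F}$ and $\mathcal{G}$ in Proposition stated for system (\ref{LPcompatEq}), i.e. $\Phi$ is constant along the first factor and $\Psi$ along the second. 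Concretely I would compute $U - u_1 u_2$ using the formulas (\ref{NLS-3d}) for $U$, $u_1$, $u_2$ and check it equals $X - x_1 x_2$; since $u_2 = y_2$ and $v_1 = x_1$, and $V - v_1 v_2$ should come out equal to $Y - y_1 y_2$, the invariance of $\Phi$ and $\Psi$ follows. (A quicker route: one of $\Phi,\Psi$ equals $\det M(\textbf{x};\lambda)$ evaluated appropriately, and Proposition \ref{FS-Det} together with the refactorisation (\ref{laxM}) forces the determinants on the two sides to match, which already pins down the level sets.)

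Second, granting the invariants, the restriction to $A_a \times A_b$ is essentially automatic: on $A_a$ one has $X = a + x_1 x_2$, and because $\Phi$ is invariant the image point again satisfies $U = a + u_1 u_2$, so it lies in $A_a$; similarly $\Psi$ invariant sends $A_b$ to $A_b$. Thus $Y_{a,b} := Y|_{A_a \times A_b}$ is a well-defined map $A_a \times A_b \to A_a \times A_b$. Parametrising $A_a$ by $(x_1,x_2)$ and $A_b$ by $(y_1,y_2)$, I would substitute $X = a + x_1 x_2$ and $Y = b + y_1 y_2$ into (\ref{NLS-3d}) and simplify. The factor $1 + x_1 y_2$ in the denominators survives, and after cancellation the numerator $y_1 + x_1^2 x_2 - x_1 X + x_1 Y$ becomes $y_1 + x_1 Y - x_1(X - x_1 x_2) = y_1 + x_1(b + y_1 y_2) - a x_1 = y_1 + x_1 y_1 y_2 - (a-b)x_1$, i.e. the $x_1$-component of the image is $\dfrac{y_1(1 + x_1 y_2) - (a-b)x_1}{1 + x_1 y_2} = y_1 - \dfrac{(a-b)x_1}{1 + x_1 y_2}$, and symmetrically for $v_2$. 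This is exactly the Adler--Yamilov map in the non-autonomous/parametric form that appeared (with $p,q$ variables) as (\ref{nls-pq-sys}) specialised to constant $\alpha,\beta$.

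Third, to finish I would record explicitly that $Y_{a,b}$ coincides with the Adler--Yamilov map by displaying its four components
\begin{equation}
x_1 \mapsto u_1 = y_1 - \frac{(a-b)\,x_1}{1 + x_1 y_2}, \qquad
y_2 \mapsto v_2 = x_2 + \frac{(a-b)\,y_2}{1 + x_1 y_2},
\end{equation}
together with $u_2 = y_2$, $v_1 = x_1$, and then note that the Yang--Baxter property and reversibility of $Y_{a,b}$ follow because the refactorisation (\ref{laxM}) becomes \emph{unique} once restricted to the level sets $A_a, A_b$ (the level-set condition removes exactly the spurious permutation branch), so Proposition \ref{PropVes} applies; birationality is then Proposition \ref{rationality}. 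The main obstacle I anticipate is purely bookkeeping: confirming that, after imposing $X = a + x_1 x_2$, $Y = b + y_1 y_2$, every one of the six rational expressions in (\ref{NLS-3d}) either reduces to one of $x_1, y_2$ (the "frozen" coordinates), or to the claimed Adler--Yamilov expression, or becomes redundant with $U = a + u_1 u_2$, $V = b + v_1 v_2$ — i.e. that the restriction is genuinely four-dimensional and consistent, with no hidden constraint. This is a finite check but needs care with the common denominator $1 + x_1 y_2$ and signs of $a - b$.
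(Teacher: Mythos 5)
Your proposal is correct and follows essentially the same route as the paper: a direct verification that $U-u_1u_2=X-x_1x_2$ and $V-v_1v_2=Y-y_1y_2$, followed by elimination of $X,Y$ via the level-set conditions $X=a+x_1x_2$, $Y=b+y_1y_2$, which yields exactly the map $\left(y_1-\frac{a-b}{1+x_1y_2}x_1,\;y_2,\;x_1,\;x_2+\frac{a-b}{1+x_1y_2}y_2\right)$ recorded in the paper. One small caveat: your parenthetical determinant shortcut only gives $\det M(\textbf{x};\lambda)=\lambda+\Phi$, so the refactorisation forces the \emph{unordered} pair $\{\Phi,\Psi\}$ to be preserved and cannot by itself distinguish the nontrivial branch from the permutation branch; the direct computation you propose as the main route is what is actually needed and is what the paper does.
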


\begin{proof}
\vspace{-1cm}
\begin{enumerate}
	\item It can be readily verified that (\ref{NLS-3d}) implies $U-u_1u_2=X-x_1x_2$ and $V-v_1v_2=Y-y_1y_2$. Thus, $\Phi$ and $\Psi$ are invariants, i.e.  first integrals of the map.\index{first integral(s)}
	\item The existence of the restriction is obvious. Using the conditions $X=x_1x_2+a$ and $Y=y_1y_2+b$, one can eliminate $X$ and $Y$ from (\ref{NLS-3d}). The resulting map, $\textbf{x}\rightarrow \textbf{u}(\textbf{x},\textbf{y})$, $\textbf{y}\rightarrow \textbf{v}(\textbf{x},\textbf{y})$, is given by
\begin{eqnarray} \label{YB_NLS}
(\textbf{x},\textbf{y})\overset{Y_{a,b}}{\longrightarrow }\left(y_1-\frac{a -b}{1+x_1y_2}x_1,y_2,x_1,x_2+\frac{a -b}{1+x_1y_2}y_2\right).
\end{eqnarray}
\end{enumerate}
Map (\ref{YB_NLS}) coincides with the Adler-Yamilov map.\index{Adler-Yamilov!map}
\end{proof}

Map (\ref{YB_NLS}) originally appeared in the work of Adler and Yamilov \cite{Adler-Yamilov}. Moreover, it appears as a YB map in \cite{kouloukas, PT}.\index{Yang-Baxter (YB) map(s)}

Now, one can use the condition $X=x_1x_2+a$ to eliminate $X$ from the Lax matrix\index{Lax matrix(-ces)} (\ref{3d-Darboux-NLS}), i.e. 
\begin{equation} \label{laxNLS}
M(\textbf{x};a,\lambda)=\lambda \left(
\begin{matrix}
 1 & 0\\
 0 & 0
\end{matrix}\right)+\left(
\begin{matrix}
 a+x_1x_2 & x_1\\
 x_2 & 1
\end{matrix}\right), \quad \textbf{x}=(x_1,x_2).
\end{equation}
The form of Lax matrix (\ref{laxNLS})\index{Lax matrix(-ces)} coincides with the well known Darboux transformation\index{Darboux transformation(s)} for the NLS equation (see \cite{Rog-Schief} and references therein). Now, Adler-Yamilov map \index{Adler-Yamilov!map}follows from the strong Lax representation\index{Lax representation}
\begin{equation} \label{lax_eq_NLS}
  M(\textbf{u};a,\lambda)M(\textbf{v};b,\lambda)=M(\textbf{y};b,\lambda)M(\textbf{x};a,\lambda).
\end{equation}
Therefore, the Adler-Yamilov map (\ref{YB_NLS})\index{Adler-Yamilov!map} is a reversible parametric YB map\index{Yang-Baxter (YB) map(s)!parametric} with strong Lax matrix\index{Lax matrix(-ces)} ($\ref{laxNLS}$). Moreover, it is easy to verify that it is not involutive.

For the integrability of this map we have the following

\begin{proposition}\label{cintA-Y}
The Adler-Yamilov map (\ref{YB_NLS}) is completely integrable.\index{Adler-Yamilov!map}\index{completely integrable}
\end{proposition}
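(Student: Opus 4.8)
The plan is to establish the three conditions in the definition of Liouville integrability for the $2N=4$ case (so $N=2$, and we expect $r=1$, $k=2$). First I would exhibit a Poisson structure on $K^2\times K^2$ with coordinates $(x_1,x_2,y_1,y_2)$ that is invariant under the Adler-Yamilov map (\ref{YB_NLS}). The natural candidate, coming from the canonical structure associated with the NLS Darboux matrix, is the constant bracket $\{x_1,x_2\}=1$, $\{y_1,y_2\}=1$, with all other brackets among the four variables vanishing; equivalently $J=\mathrm{diag}(J_0,J_0)$ with $J_0=\left(\begin{smallmatrix}0&1\\-1&0\end{smallmatrix}\right)$. This is a constant antisymmetric matrix, hence a genuine Poisson matrix by the corollary following the Jacobi-identity proposition in the Preliminaries. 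Its rank is $4=2r+k$ with $r=2$? — no: here the full $4\times 4$ bracket is nondegenerate, so $2r=4$ would leave no room for Casimirs. Since the definition demands $k=2N-2r$ Casimirs and the trace of the monodromy matrix supplies (at most) two invariants $I_1,I_2$, I instead expect the relevant invariant Poisson structure to be \emph{degenerate} of rank $2$, reflecting that the six-dimensional map has been restricted to a leaf. Concretely I would take the bracket to be $\{x_1,x_2\}=\{y_1,y_2\}=0$ augmented so that only a two-dimensional symplectic part survives; the precise choice is dictated by requiring $\tilde J_{ij}=\{u_i,u_j\}$ to have the same functional form as $J_{ij}$ when $(u,v)=Y_{a,b}(x,y)$. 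The cleanest route: verify directly that the bracket $\{x_1,x_2\}=1$, $\{y_1,y_2\}=1$ (rank $4$, $r=2$, $k=0$) is preserved — a short computation using (\ref{YB_NLS}) — and then recognize $a,b$ (the parameters labelling the leaves $A_a,A_b$) as the Casimirs that were ``used up'' in the restriction, so that on the total space $K^3\times K^3$ we have $2N=6$, $2r=4$, $k=2$ with Casimirs $\Phi=X-x_1x_2-a$ type; I would present whichever bookkeeping makes the definition's arithmetic close.

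Second, I would check invariance of the Poisson structure: substitute $u_1=y_1-\frac{a-b}{1+x_1y_2}x_1$, $u_2=y_2$, $v_1=x_1$, $v_2=x_2+\frac{a-b}{1+x_1y_2}y_2$ into $\{u_1,u_2\}$, $\{v_1,v_2\}$, $\{u_i,v_j\}$ and confirm these reproduce the original bracket. This is a direct but routine rational-function calculation; the only mild subtlety is handling the common denominator $1+x_1y_2$ and its shifts. Third, for the involutivity/independence of invariants: from (\ref{NLS3dInv}) the trace of $M(\mathbf y)M(\mathbf x)$ yields $I_1=X+Y$ and $I_2=x_2y_1+x_1y_2+XY$; after imposing $X=a+x_1x_2$, $Y=b+y_1y_2$ these descend to two functions on the four-dimensional leaf, $\bar I_1=a+b+x_1x_2+y_1y_2$ and $\bar I_2=x_2y_1+x_1y_2+(a+x_1x_2)(b+y_1y_2)$. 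I would check that these are functionally independent (compute the $2\times 4$ Jacobian and show it has rank $2$ generically) and that they Poisson-commute, $\{\bar I_1,\bar I_2\}=0$, with respect to the invariant bracket — again a short computation. Since on the four-dimensional leaf we need $r=1$ invariant in involution with itself (trivially true) but the map is genuinely $4$-dimensional, the honest statement is $r=2$ (two commuting invariants, symplectic form of rank $4$, $k=0$); alternatively one keeps the ambient $6$-dimensional picture with $k=2$ Casimirs $\Phi,\Psi$. I would commit to the $6$-dimensional formulation since $\Phi=X-x_1x_2$, $\Psi=Y-y_1y_2$ are already shown to be invariants in the preceding proposition, and verifying they are Casimirs for the (degenerate, rank-$4$) extension of the bracket to $K^3\times K^3$ is immediate.

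The main obstacle I anticipate is not any single computation but getting the dimension count in Definition (Liouville integrability) to close consistently: one must choose the Poisson matrix $J$ so that simultaneously (i) $J$ is invariant under $Y_{a,b}$, (ii) $\mathrm{rank}\,J=2r$ with exactly $r$ independent commuting invariants available from the monodromy trace, and (iii) the remaining $2N-2r$ functions are Casimirs. For the Adler-Yamilov map the resolution is that the two-to-one structure of the refactorisation means the ``correct'' total phase space is $6$-dimensional with a rank-$4$ Poisson structure whose $2$-dimensional kernel is spanned by $d\Phi,d\Psi$; the two invariants $I_1,I_2$ are then in involution on the symplectic leaves $A_a\times A_b$. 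So the proof structure is: (1) write down $J$ on $K^3\times K^3$ (constant in suitable coordinates, or $x$-dependent so that $\{x_1,X\}=x_1$, $\{x_2,X\}=-x_2$, $\{x_1,x_2\}=1$ etc.), (2) verify $Y$-invariance of $J$, (3) verify $\Phi,\Psi$ are Casimirs, (4) verify $I_1,I_2$ from (\ref{NLS3dInv}) are invariants (already done) and $\{I_1,I_2\}=0$, (5) verify functional independence. Everything reduces to polynomial identities in the six variables once the bracket is pinned down.
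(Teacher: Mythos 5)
Your proposal is correct in substance, and the computational core — the canonical constant bracket $\{x_1,x_2\}=\{y_1,y_2\}=1$ with all $\{x_i,y_j\}=0$, the two invariants read off from the trace of the monodromy matrix, the verification that they Poisson-commute and that the bracket is preserved by the map — is exactly what the paper does. Where you diverge is in the bookkeeping: the paper simply works on the four-dimensional space, takes the nondegenerate rank-$4$ bracket, so $r=2$ and $k=0$, and exhibits the two invariants $I_1=x_1x_2+y_1y_2+a+b$ and $I_2=(a+x_1x_2)(b+y_1y_2)+x_1y_2+x_2y_1+1$ in involution; no Casimirs are needed and the arithmetic of the definition closes immediately. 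Your worry that ``the trace supplies at most two invariants'' forces a degenerate rank-$2$ structure is unfounded — rank $4$ requires exactly $r=2$ commuting independent invariants, which is precisely what the trace provides — so the detour through a degenerate bracket, and your eventual commitment to the ambient six-dimensional formulation with $\Phi=X-x_1x_2$, $\Psi=Y-y_1y_2$ as Casimirs of a rank-$4$ extension, is a heavier but legitimate alternative (your proposed extension $\{x_1,X\}=x_1$, $\{x_2,X\}=-x_2$ does satisfy Jacobi and does annihilate $\Phi$). What the six-dimensional version buys you is a uniform picture in which the restriction to the leaves $A_a\times A_b$ is the passage to symplectic leaves (cf.\ the corollary following the paper's proof); what the paper's version buys is a two-line proof. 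The only point you should make explicit in either formulation, and which you do flag, is the functional independence of $I_1,I_2$, which the paper leaves implicit.
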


\begin{proof}
From the trace of $M(\textbf{y};b,\lambda)M(\textbf{x};a,\lambda)$ we obtain the following invariants for the map (\ref{YB_NLS})
\begin{eqnarray}
&&I_1(\textbf{x},\textbf{y})=x_1 x_2+y_1 y_2+a+b, \\
&&I_2(\textbf{x},\textbf{y})=(a+x_1 x_2)(b+y_1 y_2)+x_1 y_2+x_2 y_1+1.
\end{eqnarray}
The constant terms in $I_1,I_2$ can be omitted. It is easy to check that $I_1,I_2$ are in involution with respect to a Poisson bracket\index{Poisson!bracket} defined as
\begin{equation}
\left\{x_1,x_2\right\}=\left\{y_1,y_2\right\}=1,\qquad \text{and all the rest}
\qquad \left\{x_i,y_j\right\}=0,
\end{equation}
and the corresponding Poisson matrix\index{Poisson!matrix} is invariant under the YB map (\ref{YB_NLS}).\index{Yang-Baxter (YB) map(s)} Therefore the map (\ref{YB_NLS}) is completely integrable.\index{completely integrable}
\end{proof}

The above proposition implies the following.

\begin{corollary}
The invariant leaves\index{invariant leaves} $A_a$ and $B_b$, given in (\ref{symleaves}), are symplectic.
\end{corollary}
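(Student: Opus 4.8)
The statement to be proved is that the invariant leaves $A_a$ and $B_b$ of \eqref{symleaves} are symplectic. The plan is to extract a Poisson structure on the ambient space $K^3$ (with coordinates $(x_1,x_2,X)$) under which the first integral $\Phi = X - x_1 x_2$ (respectively $\Psi = Y - y_1 y_2$) is a Casimir function, and then to observe that the level sets of a Casimir function in a three-dimensional Poisson manifold of rank $2$ are precisely the symplectic leaves. This is the natural way to read off the corollary from Proposition \ref{cintA-Y}: the bracket used there on $K^3 \times K^3$ is built out of a bracket on each $K^3$ factor, and that bracket has $\Phi$ as a Casimir.

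First I would exhibit the Poisson bracket on $K^3$ explicitly. Guided by the bracket $\{x_1,x_2\}=1$ appearing in the proof of Proposition \ref{cintA-Y}, and by the requirement that $\Phi = X - x_1 x_2$ Poisson-commute with everything, one is forced to take
\begin{equation}
\{x_1,x_2\} = 1, \quad \{x_1,X\} = x_1, \quad \{x_2,X\} = -x_2,
\end{equation}
so that the Poisson matrix is
\begin{equation}
J = \left(\begin{matrix} 0 & 1 & x_1 \\ -1 & 0 & -x_2 \\ -x_1 & x_2 & 0 \end{matrix}\right).
\end{equation}
One then checks that $J$ is antisymmetric and satisfies the Jacobi identity in the form given in the Proposition preceding the Corollary in \S 5.2.1 — this is a short direct computation with the three independent triples $(j,k,l)$. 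Next one verifies $\{X - x_1 x_2, x_1\} = \{X,x_1\} - x_1\{x_2,x_1\} \cdot (\text{Leibniz}) = -x_1 + x_1 = 0$ and similarly $\{X - x_1 x_2, x_2\} = 0$, hence $\Phi$ is a Casimir. Since $J$ manifestly has rank $2$ (its top-left $2\times2$ block is invertible), the symplectic leaves are two-dimensional, and the common level sets of the single Casimir $\Phi$ are exactly these leaves; $A_a = \{\Phi = a\}$ is therefore a symplectic leaf, and likewise $B_b = \{\Psi = b\}$ for the $(y_1,y_2,Y)$ factor with the analogous bracket.

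Finally I would tie this back to the restricted map: on the product $K^3\times K^3$ one takes the direct-sum bracket (the two factors Poisson-commute), $\Phi$ and $\Psi$ are Casimirs of the total structure, the six-dimensional map \eqref{NLS-3d} preserves them (Proposition part 1), and the restriction to $A_a\times B_b$ inherits the symplectic form obtained by restricting $J\oplus J$ to the leaf — which is exactly the form making $\{x_1,x_2\}=\{y_1,y_2\}=1$ in the reduced coordinates of Proposition \ref{cintA-Y}. The main obstacle, such as it is, is purely computational: verifying the Jacobi identity for $J$ and confirming that the restricted bracket is nondegenerate on the leaf (equivalently, that $d\Phi$ is nowhere zero on $A_a$, which is clear since $\partial\Phi/\partial X = 1$). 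No genuinely hard step is involved; the content is bookkeeping to confirm that the bracket used in the integrability proof descends from an ambient Poisson structure with $\Phi,\Psi$ as its Casimirs.
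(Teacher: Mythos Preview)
Your argument is correct, but it is considerably more elaborate than what the paper intends. In the thesis, the corollary carries no separate proof at all: it is stated as an immediate consequence of Proposition~\ref{cintA-Y}. There, the Poisson bracket $\{x_1,x_2\}=\{y_1,y_2\}=1$ (all other brackets zero) is defined \emph{directly} on the four-dimensional restricted space $A_a\times A_b$, parametrised by $(x_1,x_2,y_1,y_2)$ via $X=a+x_1x_2$, $Y=b+y_1y_2$. This bracket is manifestly nondegenerate, so $A_a\times A_b$ is symplectic with form $dx_1\wedge dx_2+dy_1\wedge dy_2$, and each factor $A_a$, $A_b$ is a two-dimensional symplectic manifold. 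No ambient Poisson structure on $K^3$ is ever constructed in the paper.

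Your route---building a rank-$2$ Poisson bracket on $K^3$ with $\Phi=X-x_1x_2$ as Casimir and then invoking the general fact that Casimir level sets are symplectic leaves---is a genuinely different and more geometric argument. It has the merit of explaining \emph{why} the word ``leaves'' is appropriate and of exhibiting the restricted bracket as inherited from an ambient one. The paper's approach, by contrast, simply checks nondegeneracy on the reduced space and says nothing about $K^3$; it is shorter but less conceptual. Both are valid; yours supplies context that the paper leaves implicit.
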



\subsection{Derivative NLS equation: $\field{Z}_2$ reduction}
Recall that the Lax operator for the DNLS equation \cite{CLH, Kaup-Newell} \index{Lax operator(s)!for DNLS equation}\index{derivative nonlinear Schr\"odinger (DNLS)!equation} is given by
\begin{equation}
   \mathcal{L}(p,q;\lambda)=D_x+\lambda^{2} U_2+\lambda U_1,\qquad \text{where} \qquad U_2=\sigma_3,\qquad U_1=\left(\begin{matrix} 0 & 2p \\ 2q & 0\end{matrix}\right),
\end{equation}
and $\sigma_3$ is a Pauli matrix. The operator $\mathcal{L}$ is invariant with respect to the following 
\begin{equation}\label{sym_cond}
  \mathcal{L}(\lambda)\rightarrow\sigma_{3}\mathcal{L}(-\lambda) \sigma_{3},
\end{equation}
where $\mathcal{L}(\lambda)\equiv \mathcal{L}(p,q;\lambda)$. In particular, the involution ($\ref{sym_cond}$) generates the so-called reduction group\index{reduction group} \cite{Mikhailov2,Lombardo} and it is isomorphic to $\field{Z}_2$. 

The Darboux matrix\index{Darboux matrix(-ces)} in this case is given by 
\begin{equation} \label{affDarboux}
  M:=\lambda^{2}\left(\begin{matrix}
    f & 0\\
    0 & 0\end{matrix}\right)+\lambda\left(\begin{matrix}
    0 & f p\\
    fq_{10} & 0\end{matrix}\right)+\left(\begin{matrix}
    c & 0\\
    0 & 1\end{matrix}\right),
\end{equation} 
whose entries $p$, $q_{10}$ and $f$ obey the following system of equations
\begin{subequations}
\begin{align}\label{baecklundZ2-a}
\partial_x p&=2p(p_{10}q_{10}-pq)-\frac{2}{f}(p_{10}-cp), \\
\label{baecklundZ2-b}
\partial_x q_{10}&=2q_{10}(p_{10}q_{10}-pq)-\frac{2}{f}(cq_{10}-q), \\
\label{baecklundZ2-c}
\partial_x f&=2f(pq-p_{10}q_{10}). 
\end{align}
\end{subequations}
The system (\ref{baecklundZ2-a})-(\ref{baecklundZ2-c}) has a first integral\index{first integral(s)} which obliges the determinant of matrix (\ref{affDarboux}) to be $x$-independent, and it is given by
\begin{equation}\label{alpha-eq}
\partial_x (f^{2}pq_{10}-f)=0.
\end{equation}

Using the entries of (\ref{affDarboux}) as variables, namely $(p,q_{10},f;c)\rightarrow (x_1,x_2,X;1)$, we define the matrix
\begin{equation}\label{affDarboux-3D}
  M(\textbf{x};\lambda)=\lambda^{2}\left(\begin{matrix}
    X & 0\\
    0 & 0\end{matrix}\right)+\lambda\left(\begin{matrix}
    0 & x_1X\\
    x_2X & 0\end{matrix}\right)+\left(\begin{matrix}
    1 & 0\\
    0 & 1\end{matrix}\right), \quad \textbf{x}=(x_1,x_2,X).
\end{equation} 

The Lax equation\index{Lax equation} implies the following equations
\begin{equation} 
 \begin{array}[pos]{ccc}
u_1U+v_1V=x_1X+y_1Y, \quad u_2U+v_2V=x_2X+y_2Y,\\ 
UV=XY, \quad v_1UV=x_1XY, \quad u_2UV=y_2XY, \quad u_2v_1UV=x_1y_2XY,  \\
U+V+u_1v_2UV=X+Y+x_2y_1XY.   
\end{array}
\end{equation}

As in the case of nonlinear Schr\"odinger equation, the algebraic variety consists of two components. The first six-dimensional component corresponds to the permutation map
\begin{equation} \label{perMap}
 \textbf{x}\mapsto \textbf{u}=\textbf{y}, \quad \textbf{y}\mapsto \textbf{v}=\textbf{x},  
\end{equation}
and the second corresponds to the following six-dimensional YB map\index{Yang-Baxter (YB) map(s)}
\begin{equation}\label{f-g-h}
\begin{array}[pos]{lcl}
x_1 \mapsto u_1=f_1(\textbf{x},\textbf{y}),\qquad y_1 \mapsto v_1=f_2(\pi\textbf{y},\pi\textbf{x}), \\ 
x_2 \mapsto u_2=f_2(\textbf{x},\textbf{y}),\qquad y_2 \mapsto v_2=f_1(\pi\textbf{y},\pi\textbf{x}), \\ 
\,X \mapsto U=f_3(\textbf{x},\textbf{y}),\qquad \, Y \mapsto V=f_3(\pi\textbf{y},\pi\textbf{x}),
\end{array}
\end{equation}
where $\pi$ is the \textit{permutation function}, $\pi(x_1,x_2,X)=(x_2,x_1,X)$, $\pi^2=1$ and $f_1,f_2$ and $f_3$ are given by
\begin{subequations}
\begin{align} \label{f1-Aff}
\hspace{-0.5cm} f_1(\textbf{x},\textbf{y})&=\frac{-1}{f_3(\textbf{x},\textbf{y})}\frac{x_1X+(y_1-x_1)Y-x_1x_2y_1XY-x_1^2x_2X^2}{x_1 x_2 X+x_1 y_2 Y-1}, \\
\label{f2-Aff}
\hspace{-0.5cm} f_2(\textbf{x},\textbf{y})&=y_2, \\ 
\label{f3-Aff}
\hspace{-0.5cm} f_3(\textbf{x},\textbf{y})&=\frac{x_1x_2X+x_1y_2Y-1}{x_1y_2X+y_1 y_2 Y-1}X.
\end{align}
\end{subequations}

One can verify that the above map is a non-involutive YB map.\index{Yang-Baxter (YB) map(s)} The invariants of this map are given by 
\begin{equation}
I_1(\textbf{x},\textbf{y})=XY\qquad \text{and}\qquad I_2(\textbf{x},\textbf{y})=(\textbf{x}\cdot\pi\textbf{y})XY+X+Y.
\end{equation}

The map $Y:K^3\times K^3 \rightarrow K^3\times K^3$, given by $\{(\ref{f-g-h}),(\ref{f1-Aff})-(\ref{f3-Aff})\}$, can be restricted to a map of the Cartesian product of two two-dimensional affine varieties
\begin{subequations}\label{symleaves2}
\begin{align}
A_a&=\left\{(x_1,x_2,X)\in K^3; X-X^2x_1x_2=a\in K\right\}, \\
A_b&=\left\{(y_1,y_2,Y)\in K^3; Y-Y^2y_1y_2=b\in K\right\},
\end{align}
\end{subequations}
which are invariant varieties of the map $Y$. Thus, the YB map, $Y_{a,b}$,\index{Yang-Baxter (YB) map(s)} is a birational map $Y_{a,b}:A_a\times A_b \rightarrow A_a\times A_b$.

It is easy to uniformise the rational variety $A_a$ and express the YB map explicitly.\index{Yang-Baxter (YB) map(s)} The equations defining the varieties $A_a$ and $A_b$ are linear in $x_1, x_2$ and $y_1,y_2$, respectively. Hence, we can express
\begin{equation}
x_2=\frac{1}{x_1X}-\frac{a}{x_1X^2}, \qquad y_2=\frac{1}{y_1Y}-\frac{b}{y_1Y^2}.
\end{equation}
The resulting map is given by
\begin{equation}
x_1\mapsto u_1=\frac{h_1}{h_2}y_1, \quad X\mapsto U=h_2Y, \quad y_1\mapsto v_1=x_1, \quad Y\mapsto V=\frac{1}{h_2}X
\end{equation}
where the quantities $h_i$, $i=1,2$, are given by
\begin{equation}
h_1=\frac{ay_1Y+x_1X(a-Y)}{ay_1Y+x_1X(b-Y)}, \quad h_2=\frac{ay_1Y+x_1X(b-Y)}{by_1Y+x_1X(b-X)}.
\end{equation}

Nevertheless, in the next section, we present a more symmetric way to parametrise the varieties $A_a$, $A_b$ and the Lax matrix.\index{Lax matrix(-ces)}


\subsubsection{$\field{Z}_2$ reduction: A reducible six-dimensional YB map}\index{Yang-Baxter (YB) map(s)} 
Now, let us go back to the Darboux matrix (\ref{affDarboux})\index{Darboux matrix(-ces)} and  replace $(f p,f q_{10},f;c)\rightarrow (x_1,x_2,X;1)$, namely  
\begin{equation}\label{Affine-3d1}
  M(\textbf{x};\lambda)=\lambda^{2}\left(\begin{matrix}
    X & 0\\
    0 & 0\end{matrix}\right)+\lambda\left(\begin{matrix}
    0 & x_1\\
    x_2 & 0\end{matrix}\right)+\left(\begin{matrix}
    1 & 0\\
    0 & 1\end{matrix}\right), \quad \textbf{x}=(x_1,x_2,X).
\end{equation} 
 
From the Lax equation\index{Lax equation} we obtain the following equations
\begin{eqnarray}
&u_2v_1=x_1y_2, \quad u_2V=Xy_2, \quad Uv_1=x_1Y, \quad UV=XY& \nonumber \\
&u_1+v_1=x_1+y_1, \quad U+u_1v_2+V=X+x_2y_1+Y, \quad u_2+v_2=x_2+y_2.& \nonumber
\end{eqnarray}

Now, the first six-dimensional component of the algebraic variety corresponds to the trivial map (\ref{perMap}) and the second component corresponds to a map of the form (\ref{f-g-h}), with $f_1,f_2$ and $f_3$ now given by
\begin{subequations}
\begin{align} \label{YB-Aff-3d1}
f_1(\textbf{x},\textbf{y})&=\frac{(x_1+y_1)X-x_1Y-x_1x_2(x_1+y_1)}{X-x_1(x_2+y_2)}, \\
\label{YB-Aff-3d2}
f_2(\textbf{x},\textbf{y})&=\frac{X-x_1(x_2+y_2)}{Y-y_2(x_1+y_1)}y_2,  \\ 
\label{YB-Aff-3d3}
f_3(\textbf{x},\textbf{y})&=\frac{X-x_1(x_2+y_2)}{Y-y_2(x_1+y_1)}Y. 
\end{align}
\end{subequations}

This map has the following invariants
\begin{subequations}\label{invar-aff-3d}
\begin{align}
I_1(\textbf{x},\textbf{y})&=XY, \qquad \quad \,\,I_2(\textbf{x},\textbf{y})=\textbf{x}\cdot\pi\textbf{y}+X+Y,\\
\label{invar-aff-3d-2}
I_3(\textbf{x},\textbf{y})&=x_1+y_1, \qquad I_4(\textbf{x},\textbf{y})=x_2+y_2.
\end{align}
\end{subequations}


\subsubsection{Restriction on invariant leaves}\index{invariant leaves}
In this section, we show that the map given by (\ref{f-g-h}) and (\ref{YB-Aff-3d1})-(\ref{YB-Aff-3d3}) can be restricted to a completely integrable\index{completely integrable} four-dimensional map on invariant leaves.\index{invariant leaves} As in the previous section, the idea of this restriction is motivated by the existence of the first integral\index{first integral(s)} of the system (\ref{baecklundZ2-a})-(\ref{baecklundZ2-c}),
\begin{equation}
f-(fp)(fq_{10})=k=constant,
\end{equation}

Particularly, we have the following.

\begin{proposition} For the six-dimensional map $\left\{(\ref{f-g-h}),(\ref{YB-Aff-3d1})-(\ref{YB-Aff-3d3})\right\}$ we have the following:
\begin{enumerate}
	\item The quantities $\Phi =X-x_1x_2$ and $\Psi=Y-y_1y_2$ are its first integrals.\index{first integral(s)}
	\item It can be restricted to a four-dimensional map $Y_{a,b}:A_a\times A_b \longrightarrow A_a\times A_b$, given by
\begin{equation} \label{YB-affine}
\hspace{-.3cm}\left(\textbf{x},\textbf{y}\right)\overset{Y_{a,b}}{\longrightarrow }\left(y_1+\frac{a-b }{a-x_1y_2}x_1, \frac{a-x_1 y_2}{b-x_1 y_2}y_2, \frac{b-x_1 y_2}{a-x_1 y_2}x_1,x_2+\frac{b-a}{b-x_1 y_2}y_2\right),
\end{equation}
and $A_a$, $A_b$ are given by (\ref{symleaves}).
\end{enumerate}
\end{proposition}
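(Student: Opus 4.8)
The plan is to mirror the proof of the analogous statement for the NLS case (the Adler--Yamilov restriction), since the structure here is identical. The two assertions are: (1) $\Phi = X - x_1 x_2$ and $\Psi = Y - y_1 y_2$ are first integrals of the six-dimensional map $\{(\ref{f-g-h}),(\ref{YB-Aff-3d1})-(\ref{YB-Aff-3d3})\}$; (2) fixing the level sets $\Phi = a$, $\Psi = b$ yields a well-defined four-dimensional map which, after eliminating $X$ and $Y$, takes the explicit form (\ref{YB-affine}).

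For part (1), I would proceed purely computationally: substitute the formulas $U = f_3(\textbf{x},\textbf{y})$, $u_1 = f_1(\textbf{x},\textbf{y})$, $u_2 = f_2(\textbf{x},\textbf{y})$ from (\ref{f-g-h}) and (\ref{YB-Aff-3d1})--(\ref{YB-Aff-3d3}) into the expression $U - u_1 u_2$ and simplify to show it equals $X - x_1 x_2$; by the symmetry of the map under $\pi$ together with $\textbf{x}\leftrightarrow\textbf{y}$, the companion identity $V - v_1 v_2 = Y - y_1 y_2$ follows automatically. This is the discrete shadow of the continuous first integral (\ref{alpha-eq}): since $\partial_x(f^2 p q_{10} - f) = 0$ and with the substitution $(fp, fq_{10}, f) \to (x_1,x_2,X)$ one has $f^2 p q_{10} - f = x_1 x_2 - X$ (up to sign), the quantity $X - x_1 x_2$ is precisely the conserved quantity inherited from the Darboux-matrix determinant being $x$-independent. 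Concretely, $\det M(\textbf{x};\lambda) = \lambda^2(X - x_1 x_2) + 1$ wait --- more carefully, $\det M = -\lambda^2 x_1 x_2 + \lambda^2 X \cdot 0 + \cdots$; in any case the determinant of (\ref{Affine-3d1}) depends on $\textbf{x}$ only through $X - x_1 x_2$, and the Lax equation (\ref{eqLax}) forces $\det M(\textbf{u})\det M(\textbf{v}) = \det M(\textbf{y})\det M(\textbf{x})$, which combined with the explicit component structure gives the conservation of $\Phi$ and $\Psi$ separately.

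For part (2), existence of the restriction is immediate from part (1): since $\Phi\circ Y = \Phi$ and $\Psi\circ Y = \Psi$, the map preserves each product of level sets $A_a \times A_b$ where $A_a = \{X = a + x_1 x_2\}$, $A_b = \{Y = b + y_1 y_2\}$ in the notation of (\ref{symleaves}). To get the explicit form, I would use the relations $X = a + x_1 x_2$ and $Y = b + y_1 y_2$ to eliminate $X$ and $Y$ from (\ref{YB-Aff-3d1})--(\ref{YB-Aff-3d3}). Substituting $X - x_1(x_2+y_2) = a - x_1 y_2$ and $Y - y_2(x_1+y_1) = b - x_1 y_2$ simplifies the denominators dramatically, and then $f_1, f_2, f_3$ collapse to the four rational expressions appearing in (\ref{YB-affine}); one also checks that the new values $U = u_1 u_2 + a$, $V = v_1 v_2 + b$ are consistent, so the four-dimensional map is genuinely closed on $A_a \times A_b$.

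The main obstacle is entirely bookkeeping: the rational expressions $f_1, f_2, f_3$ have nontrivial numerators and denominators, and verifying $U - u_1 u_2 = X - x_1 x_2$ requires clearing denominators and checking a polynomial identity in six variables. I expect no conceptual difficulty --- the substitutions $X - x_1(x_2+y_2) = a - x_1 y_2$ and $Y - y_2(x_1+y_1) = b - x_1 y_2$ are the key simplifying observations --- but the algebra should be done carefully (or with a computer algebra check) to confirm that (\ref{YB-affine}) is exactly what drops out. It is also worth noting, as in the NLS case, that once (\ref{YB-affine}) is established one can further verify directly that it satisfies the parametric Yang--Baxter equation and is reversible, since the restricted Lax matrix obtained by setting $X = a + x_1 x_2$ in (\ref{Affine-3d1}) gives a \emph{strong} Lax representation (the refactorisation becomes unique on the leaves), so Proposition \ref{PropVes} applies.
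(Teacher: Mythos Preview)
Your proposal is correct and mirrors the paper's own proof essentially line for line: the paper also just states that the map implies $U - u_1 u_2 = X - x_1 x_2$ and $V - v_1 v_2 = Y - y_1 y_2$ (your part 1), and then eliminates $X,Y$ via $X = a + x_1 x_2$, $Y = b + y_1 y_2$ to obtain (\ref{YB-affine}) (your part 2). Your identification of the key simplifications $X - x_1(x_2+y_2) = a - x_1 y_2$ and $Y - y_2(x_1+y_1) = b - x_1 y_2$ is exactly what makes the elimination clean, and your additional remarks on the determinant origin of $\Phi,\Psi$ and on the strong Lax representation of the restricted map are consistent with how the paper proceeds immediately after the proposition.
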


\begin{proof}
\vspace{-1cm}
\begin{enumerate}
	\item Map $\left\{(\ref{f-g-h}),(\ref{YB-Aff-3d1})-(\ref{YB-Aff-3d3})\right\}$ implies $U-u_1u_2=X-x_1x_2$ and $V-v_1v_2=Y-y_1y_2$. Therefore, $\Phi$ and $\Psi$ are first integrals of the map.\index{first integral(s)}
	\item The conditions $X=x_1x_2+a$ and $Y=y_1y_2+b$ define the level sets, $A_a$ and $A_b$, of $\Phi$ and $\Psi$, respectively. Using these conditions, we can eliminate $X$ and $Y$ from map $\left\{(\ref{f-g-h}),(\ref{YB-Aff-3d1})-(\ref{YB-Aff-3d3})\right\}$. The resulting map, $Y_{a,b}:A_a\times A_b \longrightarrow A_a\times A_b$, is given by (\ref{YB-affine}).
\end{enumerate}
\end{proof}

Now, using condition $X=x_1x_2+a$, matrix (\ref{Affine-3d1}) takes the following form
\begin{equation} \label{Darboux_Affine}
  M(\textbf{x};k;\lambda)=\lambda^{2}\left(\begin{matrix}
    k+x_1x_2 & 0\\
    0 & 0\end{matrix}\right)+\lambda\left(\begin{matrix}
    0 & x_1\\
    x_2 & 0\end{matrix}\right)+\left(\begin{matrix}
    1 & 0\\
    0 & 1\end{matrix}\right).
\end{equation} 
Now, map (\ref{YB-affine}) follows from the strong Lax representation (\ref{lax_eq_NLS}).\index{Lax representation} Therefore, it is reversible parametric YB map.\index{Yang-Baxter (YB) map(s)!parametric} It can also be verified that it is not involutive.

For the integrability of map (\ref{YB-affine}) we have the following

\begin{proposition}\label{cintZ2}
Map (\ref{YB-affine}) is completely integrable.\index{completely integrable}
\end{proposition}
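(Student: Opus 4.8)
The plan is to mirror, step by step, the argument already used for the Adler--Yamilov map in Proposition \ref{cintA-Y}. First I would compute the monodromy-type matrix product $M(\textbf{y};b,\lambda)M(\textbf{x};a,\lambda)$, where $M$ is the reduced Lax matrix \eqref{Darboux_Affine} with $X$ eliminated via $X=k+x_1x_2$ (so $k=a$ on $A_a$ and $k=b$ on $A_b$). Since the Lax matrix is a polynomial in $\lambda$ with only even powers $\lambda^0,\lambda^2,\lambda^4$, the trace $\tr\bigl(M(\textbf{y};b,\lambda)M(\textbf{x};a,\lambda)\bigr)$ is a polynomial in $\lambda^2$ whose coefficients are, by Proposition \ref{genInv}, invariants of the map \eqref{YB-affine}. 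Extracting these coefficients and discarding additive constants, I expect to recover essentially the two functions obtained from \eqref{invar-aff-3d} after the restriction, namely
\begin{equation*}
I_1 = (a+x_1x_2)(b+y_1y_2), \qquad I_2 = x_1y_2+x_2y_1+a+b,
\end{equation*}
or simple combinations thereof; in any case two functionally independent invariants on the four-dimensional phase space $A_a\times A_b$.

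Next I would exhibit the Poisson structure. The natural candidate, copying the NLS case, is the constant bracket $\{x_1,x_2\}=\{y_1,y_2\}=1$ with all remaining brackets $\{x_i,y_j\}=0$; by the Corollary after Proposition \ref{genInv} any constant antisymmetric matrix is Poisson, so the Jacobi identity is automatic. I then need to check two things: (i) that the bracket is invariant under $Y_{a,b}$, i.e. that $\{u_1,u_2\}=\{v_1,v_2\}=1$ and $\{u_i,v_j\}=0$ when $(u,v)=Y_{a,b}(x,y)$ is given by \eqref{YB-affine}; and (ii) that $\{I_1,I_2\}=0$. Both are direct, if somewhat lengthy, computations with the explicit rational expressions in \eqref{YB-affine}. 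Since the phase space has dimension $2N=4$, the Poisson matrix has full rank $2r=4$ (so $r=2$ and there are $k=2N-2r=0$ Casimir functions, and the third condition in Definition \ref{CompleteIntegrability} is vacuous). Having two independent commuting invariants then gives complete integrability in the sense of Definition \ref{CompleteIntegrability}.

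The main obstacle I anticipate is the verification of the invariance of the Poisson bracket under \eqref{YB-affine}: the coordinates $u_1,v_2,\dots$ are rational functions with denominators $a-x_1y_2$ and $b-x_1y_2$, so computing $\{u_1,u_2\}$ by the chain rule produces a fairly large rational expression that must be shown to collapse to $1$. A clean way to organise this, and the route I would actually take, is to avoid the chain-rule computation altogether and instead use the strong Lax representation \eqref{lax_eq_NLS}: the refactorisation with matrix \eqref{Darboux_Affine} is unique on the leaves (this is exactly the point of restricting to $A_a$, $A_b$), so the map is a genuine reversible parametric Yang--Baxter map, and one can either invoke a known Poisson property of this Darboux/Lax matrix (the same matrix governs the Adler--Yamilov map, cf. \eqref{laxNLS}) or, most economically, observe that \eqref{YB-affine} is conjugate to the Adler--Yamilov map \eqref{YB_NLS} by the explicit change of variables sending $x_1y_2 \to$ a shifted variable, so that Proposition \ref{cintA-Y} transfers directly. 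I would check which of these shortcuts is cleanest and present that; failing that, the brute-force verification of the two displayed identities $\{u_1,u_2\}=1$ and $\{I_1,I_2\}=0$ is elementary and finishes the proof.
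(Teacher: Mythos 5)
There is a genuine gap: the Poisson structure you propose to copy from the NLS case is not preserved by the map (\ref{YB-affine}), so steps (i) and (ii) of your plan both fail. With the canonical bracket $\{x_1,x_2\}=\{y_1,y_2\}=1$ (all other brackets zero) one finds, writing $z=x_1y_2$, that $\{u_1,v_1\}=\partial v_1/\partial y_2=x_1^2(b-a)/(a-z)^2\neq 0$, whereas invariance of the bracket would force this to vanish; likewise the two trace invariants are not in involution, since $\{I_1,I_2\}=(x_2y_1-x_1y_2)\bigl((b+y_1y_2)-(a+x_1x_2)\bigr)\neq 0$. The brute-force verification you offer as a fallback therefore cannot succeed, and the proposed conjugation to the Adler--Yamilov map is not established anywhere (the two maps are structurally different: one is additive in $x_1y_2$, the other multiplicative), so Proposition \ref{cintA-Y} does not transfer. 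A smaller point: your candidate $I_2=x_1y_2+x_2y_1+a+b$ is not actually an invariant of (\ref{YB-affine}); what the trace of the monodromy matrix produces after restriction is $(x_1+y_1)(x_2+y_2)+a+b$.

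The paper resolves this by choosing a \emph{degenerate} Poisson structure of rank $2$, namely $\{x_1,x_2\}=\{y_1,y_2\}=\{x_2,y_1\}=\{y_2,x_1\}=1$ with all remaining brackets zero, which \emph{is} preserved by the map. This changes the bookkeeping in Definition \ref{CompleteIntegrability}: with $2r=2$ one needs only $r=1$ invariant in involution (take $I_1=(a+x_1x_2)(b+y_1y_2)$, involution being automatic for a single function), together with $k=2N-2r=2$ Casimir functions. These are $C_1=x_1+y_1$ and $C_2=x_2+y_2$, which are annihilated by the rank-$2$ Poisson matrix and are preserved by the map (indeed $u_1+v_1=x_1+y_1$ and $u_2+v_2=x_2+y_2$ follow directly from (\ref{YB-affine})); the second trace invariant then factors as $I_2=C_1C_2+a+b$. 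If you want to salvage your outline, you must replace the full-rank symplectic ansatz by this rank-$2$ structure and verify its invariance instead.
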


\begin{proof}
The invariants of map (\ref{YB-affine}) which we retrieve from the trace of $M(\textbf{y};b,\lambda)M(\textbf{x};a,\lambda)$ are
\begin{equation}
I_1(\textbf{x},\textbf{y})=(a+x_1x_2)(b+y_1y_2), \qquad I_2(\textbf{x},\textbf{y})=(x_1+y_1)(x_2+y_2)+a+b.
\end{equation}
However, the quantities $x_1+y_1$ and $x_2+y_2$ in $I_2$ are invariants themselves. The Poisson bracket\index{Poisson!bracket} in this case is given by
\begin{equation}
\left\{x_1,x_2\right\}=\left\{y_1,y_2\right\}=\left\{x_2,y_1\right\}=\left\{y_2,x_1\right\}=1,\quad \text{and all the rest} \quad \left\{x_i,y_j\right\}=0.
\end{equation}
The rank of the Poisson matrix\index{Poisson!martix} is 2, $I_1$ is one invariant and $I_2=C_1C_2+a+b$, where $C_1=x_1+y_1$ and $C_2=x_2+y_2$ are Casimir functions.\index{Casimir function(s)} The latter are preserved by (\ref{YB-affine}), namely $C_i\circ Y_{a,b}=C_i$, $i=1,2$. Therefore, map (\ref{YB-affine}) is completely integrable.\index{completely integrable}
\end{proof}

\begin{corollary}
Map (\ref{YB-affine}) can be expressed as a map of two variables on the symplectic leaf\index{symplectic leaf(-ves)}
\begin{equation}
x_1+y_1=c_1, \qquad x_2+y_2=c_2.
\end{equation}
\end{corollary}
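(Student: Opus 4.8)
The plan is to use the two Casimir functions identified in the proof of Proposition~\ref{cintZ2} to parametrise the symplectic leaves and then simply eliminate two of the four variables from~(\ref{YB-affine}). The content is essentially conceptual: the phase space of $Y_{a,b}$ is foliated by the common level sets of the Casimirs, and on each such leaf the invariant Poisson structure has full rank, so $Y_{a,b}$ descends to a two-dimensional (hence, by dimension count, integrable) map.

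First I would recall that $C_1 = x_1 + y_1$ and $C_2 = x_2 + y_2$ are invariants of $Y_{a,b}$, i.e. $C_i \circ Y_{a,b} = C_i$; this was established in Proposition~\ref{cintZ2}, where they appeared as the Casimir functions of the rank-$2$ Poisson bracket. Consequently, for any fixed constants $c_1, c_2$, the affine plane $\mathcal{S}_{c_1,c_2} := \{(x_1,x_2,y_1,y_2)\,:\, x_1+y_1=c_1,\ x_2+y_2=c_2\}$ is invariant under $Y_{a,b}$, and the restriction $Y_{a,b}|_{\mathcal{S}_{c_1,c_2}}$ is a well-defined self-map of a two-dimensional variety. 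Since the Poisson matrix of Proposition~\ref{cintZ2} has constant rank $2$ and $C_1, C_2$ form a complete set of Casimirs, the leaves $\mathcal{S}_{c_1,c_2}$ are precisely the symplectic leaves, so the induced bracket on each is nondegenerate.

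Next I would coordinatise $\mathcal{S}_{c_1,c_2}$ by $(x_1, x_2)$, writing $y_1 = c_1 - x_1$ and $y_2 = c_2 - x_2$, and substitute these into the first two components of~(\ref{YB-affine}). This gives the reduced two-variable map
\begin{equation*}
(x_1, x_2) \longmapsto \left( c_1 - x_1 + \frac{a - b}{a - x_1(c_2 - x_2)}\,x_1,\ \ \frac{a - x_1(c_2 - x_2)}{b - x_1(c_2 - x_2)}\,(c_2 - x_2) \right).
\end{equation*}
To see that this captures the full dynamics on the leaf, I would check that the remaining two components $v_1, v_2$ of~(\ref{YB-affine}) satisfy $u_1 + v_1 = c_1$ and $u_2 + v_2 = c_2$, so that $v_1 = c_1 - u_1$ and $v_2 = c_2 - u_2$ are recovered from $u_1, u_2$. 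This identity is forced by the invariance of $C_1, C_2$, but it is also a one-line verification after clearing denominators: for the first component, $u_1 + v_1 = y_1 + \frac{(a-b)x_1 + (b - x_1 y_2)x_1}{a - x_1 y_2} = y_1 + x_1$, and similarly for the second.

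There is no genuinely hard step here; the only point requiring a little care is the final remark above, confirming that the two eliminated components are redundant on the leaf, which is what makes "a map of two variables" the correct description rather than merely a restriction to a codimension-two subvariety. Everything else is bookkeeping: recalling the Casimirs from Proposition~\ref{cintZ2}, substituting $y_i = c_i - x_i$, and recording the resulting formula.
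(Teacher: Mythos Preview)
Your proposal is correct and follows precisely the approach the paper implies: the corollary is stated without proof in the paper, as an immediate consequence of Proposition~\ref{cintZ2}, and your argument supplies exactly the natural details --- invariance of the Casimirs $C_1,C_2$, coordinatisation of the leaf by $(x_1,x_2)$, and the check that $u_i+v_i=c_i$ makes the remaining components redundant. Nothing is missing.
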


\subsection{A deformation of the DNLS equation: Dihedral Group}
Recall that, in the case of the deformation of the DNLS equation,\index{deformation of DNLS!equation} the Lax operator\index{Lax operator(s)!for deformation of DNLS equation} is given by
\begin{equation}\label{laxDihedral}
\begin{array}[pos]{ccl}
 \mathcal{L}(p,q;\lambda)=D_x+\lambda^{2}U_2+\lambda U_1+\lambda^{-1} U_{-1}-\lambda^{-2}U_{-2},\qquad \text{where} \\ 
 U_{2}\equiv U_{-2}=\sigma_3, \qquad 
 U_1=\left(\begin{matrix} 0 & 2p\\ 2q & 0\end{matrix}\right), \qquad U_{-1}=\sigma_1 U_1 \sigma_1,
\end{array}
\end{equation}
and $\sigma_1$, $\sigma_3$ are Pauli matrices. Here, the reduction group consists of the following set of transformations acting on the Lax operator\index{Lax operator(s)} (\ref{laxDihedral}) ,
\begin{equation}
 \mathcal{L}(\lambda)\rightarrow\sigma_3\mathcal{L}(-\lambda)\sigma_3 \qquad \text{and} \qquad \mathcal{L}(\lambda)
 \rightarrow\sigma_1\mathcal{L}(\lambda^{-1})\sigma_1,
\label{reduct_group}
\end{equation}
and it is isomorphic to $\field{Z}_2\times\field{Z}_2\cong D_2$, \cite{Lombardo}.

In this case, the Darboux matrix\index{Darboux matrix(-ces)} is given by 
\begin{equation} \label{Darboux-D2}
  M=f\left(\left(\begin{matrix} 
    \lambda^2 & 0\\
    0 & \lambda^{-2}\end{matrix}\right)+\lambda\left(\begin{matrix}
    0 & p\\
    q_{10} & 0\end{matrix}\right)+
    g\left(\begin{matrix} 
    1 & 0\\
    0 & 1\end{matrix}\right)
    +\frac{1}{\lambda}\left(\begin{matrix}
    0 & q_{10}\\
    p & 0\end{matrix}\right)\right),
\end{equation}
where its entries obey the following equations
\begin{subequations}
\begin{align}
\partial_x p&=2((p_{10}q_{10}-pq)p+(p-p_{10})g+q-q_{10}), \\
\partial_x q_{10}&=2((p_{10}q_{10}-pq)q_{10}+p-p_{10}+(q-q_{10})g), \\
\partial_x g&=2((p_{10}q_{10}-pq)g+(p-p_{10})p+(q-q_{10})q_{10}), \\
\partial_x f&=-2(p_{10}q_{10}-pq)f.
\end{align}
\end{subequations}
Moreover, the above system of differential equations admits two first integrals,\index{first integral(s)} $\partial_x\Phi_i=0$, $i=1,2$, where
\begin{equation}\label{eq.f-g}
  \Phi_1:=f^2(g-pq_{10}) \qquad \text{and}\qquad
  \Phi_2:=f^2(g^2+1-p^2-q_{10}^2).
\end{equation}

In the next section we construct a six-dimensional map from (\ref{Darboux-D2}).

\subsubsection{Dihedral group: A six-dimensional YB map}\index{Yang-Baxter (YB) map(s)}
We consider the matrix $N:=fM$, where $M$ is given by (\ref{Darboux-D2}), and we change $(p,q_{10},f^2)\rightarrow (x_1,x_2,X)$. Then,
\begin{equation} \label{3dLaxD2}
  N(\textbf{x},X;\lambda)=\left(\begin{matrix}
    \lambda^{2}X+x_1x_2X+1 & \lambda x_1X+\lambda^{-1}x_2X\\
    \lambda x_2X+\lambda^{-1}x_1X & \lambda^{-2}X+x_1x_2X+1\end{matrix}\right),
\end{equation}
where we have substituted the product $f^2g$ by
\begin{equation}\label{f-g-X}
f^2g=1+x_1x_2X,
\end{equation}
using the first integral,\index{first integral(s)} $\Phi_1$, in (\ref{eq.f-g}), and having rescaled $c_1\rightarrow 1$.

The Lax equation \index{Lax equation}for the Darboux matrix (\ref{3dLaxD2}) \index{Darboux matrix(-ces)}reads
\begin{equation}
N(\textbf{u};\lambda)N(\textbf{v};\lambda)=N(\textbf{y};\lambda)N(\textbf{x};\lambda),
\end{equation}
from where we obtain an algebraic system of equations, omitted because of its length.

The first six-dimensional component of the corresponding algebraic variety corresponds to the trivial YB map\index{Yang-Baxter (YB) map(s)} 
\begin{equation} 
 \textbf{x}\mapsto \textbf{u}=\textbf{y}, \quad \textbf{y}\mapsto \textbf{v}=\textbf{x}, \quad X\mapsto U=Y, \quad Y\mapsto V=X, \nonumber
\end{equation}
and the second component corresponds to the following map
\begin{eqnarray}\label{f-g-h-D2}
&&x_1 \mapsto u_1=\frac{f(\textbf{x},\textbf{y})}{g(\textbf{x},\textbf{y})},\qquad y_1 \mapsto v_1=x_1, \nonumber \\ 
&&x_2 \mapsto u_2=y_2,\qquad \qquad   y_2 \mapsto v_2=\frac{f(\pi\textbf{y},\pi\textbf{x})}{g(\pi\textbf{y},\pi\textbf{x})}, \\ 
&&~X \mapsto U=\frac{g(\textbf{x},\textbf{y})}{h(\textbf{x},\textbf{y})},\qquad Y \mapsto V=\frac{g(\pi\textbf{y},\pi\textbf{x})}{h(\pi\textbf{y},\pi\textbf{x})}\nonumber,
\end{eqnarray}
where $f,g$ and $h$ are given by
\begin{subequations}
\begin{align}
&f(\textbf{x},\textbf{y})=x_1X+[x_2-y_2+2x_1x_2y_1+x_1^2(y_2-3x_2)]XY+\nonumber\\
&\qquad (y_2^2-1)[y_1(1+x_1^2)-x_1(1+y_1^2)]XY^2-(x_1^2-1)(y_2-x_2)X^2-\nonumber\\
&\qquad (x_1^2-1)[x_2^2(3x_1-y_1)-x_1-y_1+2y_2(y_1y_2-x_1x_2)]X^2Y-\\
&\qquad (x_1^2-1)(y_2^2-1)[y_2(y_1^2-1)+x_2(y_1^2-2x_1y_1+1)]X^2Y^2+\nonumber\\
&\qquad y_1(x_1^2-1)^2(x_2^2-1)(y_2^2-1)X^3Y^2+(x_1^2-1)^2(x_2^2-1)(y_2-x_2)X^3Y+\nonumber\\
&\qquad (y_1-x_1)Y,\nonumber\\ \nonumber \\
&g(\textbf{x},\textbf{y})=X+2y_2(y_1-x_1)XY+(y_2^2-1)(x_1-y_1)^2XY^2+\nonumber\\
&\qquad 2(x_1^2-1)(1-x_2y_2)X^2Y+2x_2(x_1^2-1)(y_2^2-1)(x_1-y_1)X^2Y^2+\\
&\qquad (x_1^2-1)^2(x_2^2-1)(y_2^2-1)X^3Y^2\nonumber\\ \nonumber\displaybreak[0]
\intertext{and}
&h(\textbf{x},\textbf{y})=1-2x_1(y_2-x_2)X-2(x_1y_1-1)(y_2^2-1)XY+\nonumber\\
&\qquad (x_1^2-1)(x_2-y_2)^2X^2-2y_1(x_2-y_2)(x_1^2-1)(y_2^2-1)X^2Y+\\
&\qquad (x_1^2-1)(y_1^2-1)(y_2^2-1)^2X^2Y^2.\nonumber
\end{align}
\end{subequations}

It can be verified that this is a YB map.\index{Yang-Baxter (YB) map(s)!parametric} From $\tr(N(\textbf{x},X;\lambda)N(\textbf{y},Y;\lambda))$ we extract the following invariants for the above map
\begin{subequations}
\begin{align}
\hspace{-1.2cm} I_1(\textbf{x},\textbf{y})&=XY,  \\
\hspace{-1.2cm} I_2(\textbf{x},\textbf{y})&=X+Y+(x_1+y_1)(x_2+y_2)XY,  \\
\hspace{-1.2cm} I_3(\textbf{x},\textbf{y})&=2x_1x_2X+2y_1y_2Y+2(\textbf{x}\cdot\textbf{y}+x_1x_2y_1y_2)XY+2. 
\end{align}
\end{subequations}

\subsection{Dihedral group: A linearised YB map}\index{Yang-Baxter (YB) map(s)!linearised}
In the cases of the NLS and DNLS equations we were able to derive six-dimensional maps and, using their invariants, reduce them to four-dimensional YB maps. The matrix refactorisation (\ref{eqLax}) for matrix (\ref{Darboux-D2}) is not unique and it is difficult to deduce a six-dimensional YB map in this case. Therefore, even though we can reuniformise equations (\ref{eq.f-g}), we can not use them to derive a four-dimensional map. Yet, we can find a linear approximation to the former.

In particular, let us replace $(f q_{10},f p)\rightarrow (\epsilon x_1,\epsilon x_2)$ in the Darboux matrix (\ref{Darboux-D2}) \index{Darboux matrix(-ces)}and linearise the corresponding map around $\epsilon=0$.

It follows from $\Phi_1=\frac{1-k^2}{4}$ and $\Phi_2=\frac{1+k^2}{2}$ that the quantities $f$ and $fg$ are given by
$$f=\frac{1+k}{2}+{\cal{O}}(\epsilon) \qquad \text{and} \qquad f g=\frac{1-k}{2}+{\cal{O}}(\epsilon),$$
and, therefore, the Lax matrix\index{Lax matrix(-ces)} by
\begin{equation*}
  M=\frac{1+k}{2}\left(\begin{matrix} 
    \lambda^2  & 0\\
    0 & \lambda^{-2} \end{matrix}\right)+\left(\begin{matrix}
    0 & \lambda x_1+\lambda^{-1}x_2\\
    \lambda x_2+\lambda^{-1}x_1 & 0\end{matrix}\right)+
    \frac{1-k}{2}\left(\begin{matrix}
    1 & 0\\
    0 & 1\end{matrix}\right)+{\cal{O}}(\epsilon).
\end{equation*}

The linear approximation to the YB map\index{Yang-Baxter (YB) map(s)!linear approximation} is given by
\begin{equation}
\left(\begin{matrix} x_1 \\ x_2 \\ y_1\\ y_2 \end{matrix}\right) \overset{U_0}{\longrightarrow}
\left(\begin{matrix} u_1 \\ u_2 \\ v_1\\ v_2 \end{matrix}\right)=
\left(\begin{array}{c c|c c}
\frac{(a-1)(a-b)}{(a+1)(a+b)} & \frac{a-b}{a+b} & \frac{2 a}{a+b} & \frac{(a+1)(b-a)}{(b+1)(a+b)} \\
0 & 0 & 0 & \frac{a+1}{b+1}\\
\hline
\frac{b+1}{a+1} & 0 & 0 & 0\\
\frac{(a-b)(b+1)}{(a+1)(a+b)} & \frac{2b}{a+b} & \frac{b-a}{a+b} & \frac{(b-1)(b-a)}{(b+1)(a+b)}\end{array}\right)
\left(\begin{matrix} x_1 \\ x_2 \\ y_1\\ y_2 \end{matrix}\right)
\label{linear_YB_D2}
\end{equation}
which is a linear parametric YB map \index{Yang-Baxter (YB) map(s)!linear} and it is not involutive.


\section{$2N\times 2N$-dimensional YB maps}\index{Yang-Baxter (YB) map(s)}
In this section, we consider the vector generalisations of the YB maps\index{Yang-Baxter (YB) map(s)!generalisations} (\ref{YB_NLS}) and (\ref{YB-affine}). We replace the variables, $x_1$ and $x_2$, in the Lax matrices\index{Lax matrix(-ces)} with $N$-vectors $\textbf{w}_1$ and $\textbf{w}_2^T$ to obtain $2N \times 2N$ YB maps. 

In what follows we use the following notation for a $n$-vector $\textbf{w}=(w_1,...,w_n)$
\begin{equation}
\textbf{w}=(\textbf{w}_1,\textbf{w}_2),\qquad \text{where} \qquad \textbf{w}_1=(w_1,...,w_N), \qquad \textbf{w}_2=(w_{N+1},...,w_{2N})
\end{equation}
and also
\begin{equation}
\langle u_i|:=\textbf{u}_i,\qquad |w_i\rangle:=\textbf{w}_i^T \qquad \text{and their dot product with}\qquad \langle u_i,w_i\rangle.
\end{equation}

\subsection{NLS equation}
Replacing the variables in (\ref{laxNLS}) with $N$-vectors, namely
\begin{equation}
M(\textbf{w};a,\lambda)=\left(     
\begin{matrix}
\lambda+a+\langle w_1,w_2\rangle  & \langle w_1| \\
|w_2\rangle      & I             
\end{matrix}\right),
\label{laxNLSnN}
\end{equation}
we obtain a unique solution of the Lax equation\index{Lax equation} given by the following $2N\times 2N$ map
\begin{subequations}
\begin{equation}
\begin{cases}
\langle u_1|=\langle y_1|+f(z;a,b)\langle x_1|,\\
\langle u_2|=\langle y_2|,
\end{cases}
\end{equation}
and
\begin{equation}
\begin{cases}
\langle v_1|=\langle x_1|,\\
\langle v_2|=\langle x_2|+f(z;b,a)\langle y_2|,
\end{cases}
\end{equation}
where $f$ is given by
\begin{equation}
f(z;b,a)=\frac{b-a}{1+z},\qquad z:=\langle x_1,y_2\rangle.
\end{equation}
\end{subequations}
The above is a non-involutive parametric $2N\times 2N$ YB map\index{Yang-Baxter (YB) map(s)} with strong Lax matrix\index{Lax matrix(-ces)!strong} given by (\ref{laxNLSnN}). As a YB map it appears in \cite{PT}, but it is originally introduced by Adler \cite{AdlerNLSVec}. Moreover, one can construct the above $2N\times 2N$ map for the $N\times N$ Darboux matrix (\ref{laxNLSnN}) \index{Darboux matrix(-ces)} by taking the limit of the solution of the refactorisation problem\index{refactorisation problem(s)} in \cite{Kouloukas2}.

Two invariants of this map are given by
\begin{subequations}
\begin{align}
I_1(\textbf{x},\textbf{y};a,b)&=\langle x_1,x_2\rangle+\langle y_1,y_2\rangle,\\ 
I_2(\textbf{x},\textbf{y};a,b)&=b\langle x_1,x_2\rangle+a\langle y_1,y_2\rangle+\langle x_1,y_2\rangle+\langle x_2,y_1\rangle+\langle x_1,x_2\rangle \langle y_1,y_2\rangle.
\end{align}
\end{subequations}
These are the invariants which are obtained from the trace of $M(\textbf{y};b,\lambda)M(\textbf{x};a,\lambda)$ and they are not enough to claim Liouville integrability.\index{Liouville integrability}

\subsection{$\field{Z}_2$ reduction}
In the case of $Z_2$ we consider, instead of (\ref{Darboux_Affine}), the following matrix
\begin{equation}
M(\textbf{w};a,\lambda)=\left(     
\begin{matrix}
\lambda^2(a+\langle w_1,w_2\rangle)  & \lambda \langle w_1|\\
\lambda |w_2\rangle      & I           
\end{matrix}\right),
\label{laxAffinenN}
\end{equation}
we obtain a unique solution for the Lax equation\index{Lax equation} given by the following $2N\times 2N$ map
\begin{subequations}
\begin{equation}
\label{vectZ2_1}
\begin{cases}
\langle u_1|=\langle y_1|+f(z;a,b)\langle x_1|, \\
\langle u_2|=g(z;a,b)\langle y_2|,
\end{cases}
\end{equation}
and
\begin{equation}
\begin{cases}
\label{vectZ2_2}
\langle v_1|=g(z;b,a)\langle x_1|,\\
\langle v_2|=\langle x_2|+f(z;b,a)\langle y_2|,
\end{cases}
\end{equation}
where $f$ and $g$ are given by
\begin{equation}\label{vectZ2_3}
f(z;a,b)=\frac{a-b}{a-z},\qquad g(z;a,b)=\frac{a-z}{b-z},\qquad z:=\langle x_1,y_2\rangle.
\end{equation}
\end{subequations}

The above map is a non-involutive parametric $2N\times 2N$ YB map\index{Yang-Baxter (YB) map(s)} with strong Lax matrix\index{Lax matrix(-ces)!strong} given by (\ref{laxAffinenN}).

It follows from the trace of $M(\textbf{y};b,\lambda)M(\textbf{x};a,\lambda)$, that two invariants for (\ref{vectZ2_1})-(\ref{vectZ2_2}) are given by 
\begin{subequations}
\begin{align}
I_1(\textbf{x},\textbf{y};a,b)&=b\langle x_1,x_2\rangle+a\langle y_1,y_2\rangle+\langle x_1,x_2\rangle \langle y_1,y_2\rangle,\\ I_2(\textbf{x},\textbf{y};a,b)&=\langle x_1+y_1,x_2+y_2\rangle.
\end{align}
\end{subequations}
In fact, both vectors of the inner product in $I_2$ are invariants. 

However, as in the case of the NLS vector generalisation, the invariants are not enough to claim Liouville integrability.\index{Liouville integrability}

\chapter{Extensions on Grassmann algebras}
\label{chap6} \setcounter{equation}{0}
\renewcommand{\theequation}{\thechapter.\arabic{equation}}

\section{Overview}
Noncommutative extensions of integrable equations\index{noncommutative extension(s)!of integrable equations} have been of great interest since late seventies; for instance, in \cite{Chaichian} the supersymmetric Liouville and sine-Gordon equations were studied, while in \cite{Kupershmidt} hierarchies of the KdV equation were associated to super Lie algebras. Other examples include the NLS, the DNLS and KP equations etc. Indicatively, we refer to \cite{Antonowicz, Dimakis2005, Dimakis2010, Olver1998}.

In this chapter, motivated by \cite{Georgi}, we are interested in the Grassmann extensions of some of the Darboux matrices\index{Grassmann extention(s)!of Darboux matrix(-ces)} and their associated YB maps presented in the previous chapters. Specifically, we shall present noncommutative extensions of the Darboux matrices in the cases of the NLS equation and the DNLS equation, together with the noncommutative extensions of the associated YB maps. In fact, we will use the Darboux matrices to construct ten-dimensional YB maps which can be restricted to eight-dimensional YB maps on invariant leaves.

The chapter is organised as follows: In the next section we present the basic facts and properties of Grassmann algebras \index{Grassmann algebra(s)} which we will need in the following sections. Section 3 deals with the noncommutative extensions of the Darboux matrices of NLS type equations and, particularly, the Grassmann extension of the Darboux matrix \index{Grassmann extention(s)!of Darboux matrix(-ces)} in the cases of the NLS equation \cite{Georgi} and the DNLS equation. Section 4 is devoted to the construction of ten-dimensional YB maps and their eight-dimensional restrictions on invariant leaves corresponding to first integrals\index{first integral(s)} of the former. These YB maps constitute noncommutative extensions of the YB maps presented in chapter 5. Finally, in section 5 we present vector generalisations of these YB maps.

\section{Elements of Grassmann algebras}
In this section, we briefly present the basic properties of Grassmann algebras\index{Grassmann algebra(s)} that we will need in the rest of this chapter. However, one could consult \cite{Berezin} for further details.

Let $G$ be a $\field{Z}_2$-graded algebra over $\field{C}$ or, in general, a field $K$ of characteristic zero. Thus, $G$ as a linear space is a direct sum $G=G_0\oplus G_1$ (mod 2), such that $G_iG_j\subseteq G_{i+j}$. Those elements of $G$ that belong either to $G_0$ or to $G_1$ are called \textit{homogeneous}, the ones from $G_0$ are called \textit{even} (Bosonic),\index{even (Bosonic) element(s)} while those in $G_1$ are called \textit{odd} (Fermionic).\index{odd (Fermionic) element(s)}

By definition, the parity $|a|$ of an even homogeneous element\index{even (Bosonic) element(s)} $a$ is $0$ and it is $1$ for odd homogeneous elements.\index{odd (Fermionic) element(s)} The parity of the product $|ab|$ of two homogeneous elements is a sum of their parities: $|ab|=|a|+|b|$. Grassmann commutativity \index{Grassmann commutativity} means that $ba=(-1)^{|a||b|}ab$ for any homogeneous elements $a$ and $b$. In particular, $a_1^2=0$, for all $a_1\in G_1$ and even elements\index{even (Bosonic) element(s)} commute with all the elements of $G$.

\begin{remark}\normalfont
In the rest of this chapter we shall be using Latin letters for (commutative) even elements\index{even (Bosonic) element(s)} of Lax operators or entries of Darboux matrices\index{Darboux matrix(-ces)}, and Greek letters when referring to the (noncommutative) odd ones (even though Greeks are not odd!). Moreover, for the sake of consistency with the rest of the thesis we shall continue using the greek letter $\lambda$ when referring to the spectral parameter, despite the fact that $\lambda$ is a commutative element.
\end{remark}

\subsection{Supertrace and superdeterminant}
Let $M$ be a square matrix of the following form
\begin{equation}\label{G-matrixForm}
M=\left(
\begin{matrix}
 P & \Pi \\
 \Lambda & L
\end{matrix}\right),
\end{equation}
where $P$ and $L$ are square matrices of even variables,\index{even variable(s)} whereas $\Pi$ and $\Lambda$ are matrices of odd variables,\index{odd variable(s)} not necessarily square.

We define the \textit{supertrace} \index{supertrace} of $M$ --and we will denote it by $\str(M)$-- to be the following quantity 
\begin{equation}
\str(M)=\tr (P)-\tr (L),
\end{equation}
where $\tr(.)$ is the usual trace of a matrix.

Moreover, we define the \textit{superdeterminant}\index{superdeterminant} of $M$ --and we will denote it by $\sdet(M)$-- to be 
\begin{subequations}
\begin{eqnarray}
\sdet(M)&=&\det(P-\Pi L^{-1}\Lambda)\det(L^{-1})=\\
        &=&\det(P^{-1})\det(L-\Lambda P^{-1}\Pi),  
\end{eqnarray}
\end{subequations}
where $\det(.)$ is the usual determinant of a matrix.

The simplest properties of the supertrace\index{supertrace} and the superdeterminant, for two matrices $A$ and $B$ of the form (\ref{G-matrixForm}), are the following
\begin{enumerate}
	\item \str(AB)=\str(BA),
	\item \sdet(AB)=\sdet(A)\sdet(B).
\end{enumerate}

\subsection{Differentiation rule for odd variables}\index{odd variable(s)!differentiation rule}
The (left) derivative of a product of odd elements,\index{odd (Fermionic) element(s)} say $\alpha_{i_1},\ldots,\alpha_{i_k}$, obeys the following rule
\begin{eqnarray}
\frac{\partial}{\partial \alpha_i}(\alpha_{i_1}\cdot\ldots\cdot\alpha_{i_k})=&&\delta_{ii_1}\alpha_{i_2}\alpha_{i_3}\cdot\ldots\cdot\alpha_{i_k}-\nonumber\\
&&\delta_{ii_2}\alpha_{i_1}\alpha_{i_3}\cdot\ldots\cdot\alpha_{i_k}+\delta_{ii_3}\alpha_{i_1}\alpha_{i_2}\cdot\ldots\cdot\alpha_{i_k}-\ldots
\end{eqnarray}
where $\delta_{ij}$ is the Kronecker operator. For example, if $\alpha$ and $\beta$ are odd variables,\index{odd variable(s)} then
\begin{equation}
\frac{\partial}{\partial\alpha}(\alpha\beta)=\beta,\quad \text{but} \quad \frac{\partial}{\partial\beta}(\alpha\beta)=-\alpha.
\end{equation}


\subsection{Properties of the Lax equation}
Let $L=L(x,\chi;a)$ be a Lax matrix where $x$ is an even variable,\index{even variable(s)} $\chi$ is an odd variable\index{odd variable(s)} and $a$ a parameter. Since the Lax equation, 
\begin{equation}\label{Laxeq-G}
L(u,\xi;a)L(v,\eta;b)=L(y,\psi;b)L(x,\chi;a)
\end{equation}
has the obvious symmetry 
\begin{equation}\label{symL-G}
(u,\,\xi,\,v,\,\eta,\,a,\,b) \longleftrightarrow  (y,\,\psi,\,x ,\,\chi,\,b,\,a)
\end{equation}
we have the following
\begin{proposition} \label{rationality-G}
If a matrix refactorisation problem (\ref{Laxeq-G})\index{refactorisation problem(s)} yields a rational map $(x,\chi,y,\psi)=Y_{a,b}(u,\xi,v,\eta)$, then this map is birational.
\end{proposition}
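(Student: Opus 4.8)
The plan is to mimic exactly the argument of Proposition \ref{rationality} from the commutative setting, the only new ingredient being that some of the variables are odd Grassmann elements, and to check that none of the steps break because of noncommutativity. First I would write the rational map produced by the refactorisation problem explicitly as
\begin{equation}
Y_{a,b}:(x,\chi,y,\psi)\mapsto(u,\xi,v,\eta),\nonumber
\end{equation}
where each of $u,\xi,v,\eta$ is a rational expression (a ``quotient'' of polynomials, with denominators built from even elements so that the inverses make sense in $G$) in $x,\chi,y,\psi$ and the parameters $a,b$. The hypothesis that the refactorisation yields such a rational map is what we are given, so there is nothing to establish there.

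Next I would invoke the symmetry \eqref{symL-G} of the Lax equation \eqref{Laxeq-G}. The key observation is that interchanging $(u,\xi,v,\eta,a,b)\leftrightarrow(y,\psi,x,\chi,b,a)$ leaves equation \eqref{Laxeq-G} invariant; hence if $(u,\xi,v,\eta)=Y_{a,b}(x,\chi,y,\psi)$ solves the refactorisation problem, then applying the same formulas with the roles of the two ``slots'' swapped and the parameters exchanged expresses $(y,\psi,x,\chi)$ rationally in terms of $(v,\eta,u,\xi)$. In other words, the inverse map $Y_{a,b}^{-1}$ is obtained from the very same rational formulas by the substitution dictated by \eqref{symL-G}, so $Y_{a,b}^{-1}$ is again rational. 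Since both $Y_{a,b}$ and $Y_{a,b}^{-1}$ are rational, $Y_{a,b}$ is birational, which is the claim.

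The one genuinely new point compared with Proposition \ref{rationality} — and the place I expect to have to be slightly careful — is the meaning of ``rational map'' over a Grassmann algebra: a rational expression only makes sense where the denominators are invertible in $G$, and invertibility in $G$ means the even (body) part is a nonzero scalar. So I would remark that the denominators appearing in $Y_{a,b}$ (and in $Y_{a,b}^{-1}$) are, as in all the explicit examples of this chapter, polynomials whose bodies are generically nonzero, and that Grassmann-commutativity causes no obstruction: the symmetry \eqref{symL-G} is an honest involution of the algebraic relations coming from \eqref{Laxeq-G}, regardless of the parities of the entries, because it is induced by simply relabelling the matrices in the product $L(\cdot)L(\cdot)=L(\cdot)L(\cdot)$ and nothing in that relabelling depends on whether a variable is even or odd. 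Hence the argument is formally identical to the commutative case, and the only obstacle is bookkeeping of the domains of definition, which is routine.
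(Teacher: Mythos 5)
Your argument is exactly the paper's proof: write $Y_{a,b}$ as quotients $n_i/d_i$ and use the symmetry (\ref{symL-G}) of (\ref{Laxeq-G}) to read off the inverse as the same quotients evaluated at $(v,\eta,u,\xi;b,a)$, hence $Y_{a,b}$ is birational. Your closing caveat about denominators needing to be even (so they are invertible in $G$) is precisely the content of the remark the paper places immediately after its proof, so nothing is missing.
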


\begin{proof} Let $Y:(x,\chi,y,\psi)\mapsto (u,\xi,v,\eta)$ be a rational map corresponding to a refactorisation problem\index{refactorisation problem(s)} (\ref{eqLax}), i.e.
\begin{subequations}
\begin{align}
x\mapsto u&=\frac{n_1(x,\chi,y,\psi;a,b)}{d_1(x,\chi,y,\psi;a,b)}, \qquad y\mapsto v=\frac{n_2(x,\chi,y,\psi;a,b)}{d_2(x,\chi,y,\psi;a,b)},\\
\chi\mapsto \xi &=\frac{n_3(x,\chi,y,\psi;a,b)}{d_3(x,\chi,y,\psi;a,b)}, \qquad \psi\mapsto \eta=\frac{n_4(x,\chi,y,\psi;a,b)}{d_4(x,\chi,y,\psi;a,b)},
\end{align}
\end{subequations}
where $n_i$, $d_i$, $i=1,2,3,4$, are polynomial functions of their variables.

Due to the symmetry (\ref{symL-G}) of the refactorisation problem (\ref{Laxeq-G}),\index{refactorisation problem(s)} the inverse map of $Y$, $Y^{-1}:(x,\chi,y,\psi)\mapsto (u,\xi,v,\eta)$, is also rational and, in fact,
\begin{subequations}
\begin{align}
u\mapsto x&=\frac{n_1(v,\eta,u,\xi;b,a)}{d_1(v,\eta,u,\xi;b,a)}, \qquad v\mapsto y=\frac{n_2(v,\eta,u,\xi;b,a)}{d_2(v,\eta,u,\xi;b,a)},\\
\xi\mapsto \chi&=\frac{n_3(v,\eta,u,\xi;b,a)}{d_3(v,\eta,u,\xi;b,a)}, \qquad \eta\mapsto \psi=\frac{n_4(v,\eta,u,\xi;b,a)}{d_4(v,\eta,u,\xi;b,a)}.
\end{align}
\end{subequations}
Therefore, $Y$ is a birational map.
\end{proof}

\begin{remark}\normalfont
Functions $d_i(x,\chi,y,\psi;a,b)$, $i=1,2,3,4$, must depend on the odd variables\index{odd variable(s)} in a way such that their expressions are even. For example, the expression $xy+\chi\psi$ is even.
\end{remark}

\begin{proposition}\label{genInvGras}
If $L=L(x,\chi,a;\lambda)$ is a Lax matrix with corresponding YB map, $Y:(x,\chi,y,\psi)\mapsto (u,\xi,v,\eta)$, then $\str(L(y,\psi,b;\lambda)L(x,\chi,a;\lambda))$ is a generating function of invariants of the YB map.
\end{proposition}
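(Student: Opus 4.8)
The plan is to mimic exactly the argument used in Proposition \ref{genInv}, but replacing the ordinary trace by the supertrace and keeping careful track of the fact that the entries of the Lax matrices may be odd (anticommuting) elements of the Grassmann algebra $G$. The only structural facts I need are: (i) the Lax equation (\ref{Laxeq-G}) holds whenever $(u,\xi,v,\eta)=Y_{a,b}(x,\chi,y,\psi)$; (ii) the supertrace is cyclic, i.e. $\str(AB)=\str(BA)$ for matrices $A,B$ of the block form (\ref{G-matrixForm}) — this was listed as property 1 of the supertrace in the preceding subsection; and (iii) the entries of $L$ depend polynomially (or rationally with even denominators) on $\lambda$, so that $\str(L(y,\psi,b;\lambda)L(x,\chi,a;\lambda))$ expands as a Laurent polynomial in $\lambda$ with coefficients in the even part $G_0$.

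First I would apply the supertrace to both sides of the Lax equation (\ref{Laxeq-G}):
\begin{equation}
\str\big(L(u,\xi,a;\lambda)L(v,\eta,b;\lambda)\big)=\str\big(L(y,\psi,b;\lambda)L(x,\chi,a;\lambda)\big).
\end{equation}
Then, using cyclicity of the supertrace (property 1), the left-hand side equals $\str\big(L(v,\eta,b;\lambda)L(u,\xi,a;\lambda)\big)$, so that if we define the generating function
\begin{equation}
\mathcal{T}(x,\chi,y,\psi;a,b;\lambda):=\str\big(L(y,\psi,b;\lambda)L(x,\chi,a;\lambda)\big),
\end{equation}
we obtain $\mathcal{T}(u,\xi,v,\eta;a,b;\lambda)=\mathcal{T}(x,\chi,y,\psi;a,b;\lambda)$ for all $\lambda$. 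Expanding $\mathcal{T}=\sum_k \lambda^k \mathcal{I}_k(x,\chi,y,\psi;a,b)$ and equating coefficients of like powers of $\lambda$ gives $\mathcal{I}_k\circ Y = \mathcal{I}_k$, i.e. each $\mathcal{I}_k$ is an invariant of the Grassmann-extended YB map. This is exactly the content of the statement.

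The step that requires the most care — and which is the real (though mild) obstacle compared with the commutative case — is justifying the cyclicity $\str(AB)=\str(BA)$ when $A$ and $B$ have odd off-diagonal blocks, since for anticommuting entries the naive rearrangement of the trace picks up sign factors; one must check that these signs conspire, via the definition $\str=\tr(P)-\tr(L)$ of the supertrace, to leave $\str(AB)-\str(BA)=0$. However, this is precisely property 1 of the supertrace established in the subsection on supertrace and superdeterminant, so in the write-up I would simply invoke it. A secondary, purely bookkeeping point is that the coefficients $\mathcal{I}_k$ lie in $G_0$ (they are even), which follows from the remark that the denominators — and here also the surviving numerators in the supertrace — must be even; this ensures the $\mathcal{I}_k$ are genuine scalar-like invariants and not merely formal expressions. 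With these observations the proof is a one-line adaptation of Proposition \ref{genInv}, and I would present it as such.
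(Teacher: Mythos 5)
Your proposal is correct and follows essentially the same route as the paper's own proof: apply the supertrace to both sides of the Lax equation, invoke the cyclicity property $\str(AB)=\str(BA)$ to identify the result with the generating function evaluated at $(u,\xi,v,\eta)$, and equate coefficients of powers of $\lambda$. The additional remarks you make about why cyclicity survives the presence of odd entries and about the evenness of the coefficients $\mathcal{I}_k$ are sound but not spelled out in the paper, which simply cites the supertrace property established earlier.
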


\begin{proof}
Since, 
\begin{eqnarray} 
\str(L(u,\xi,a;\lambda)L(v,\eta,b;\lambda))&\overset{(\ref{eqLax})}{=}&\str(L(y,\psi,b;\lambda)L(x,\chi,a;\lambda))\nonumber\\
&=&\str(L(x,\chi,a;\lambda)L(y,\psi,b;\lambda)),\label{strace}
\end{eqnarray}
and function $\str(L(x,\chi,a;\lambda)L(y,\psi,b;\lambda))$ can be written as $\str(L(x,\chi,a;\lambda)L(y,\psi,b;\lambda))$ $\displaystyle=\sum_k \lambda^k I_k(x,\chi,y,\psi;a,b)$, from (\ref{strace}) follows that
\begin{equation}
I_i(u,\xi,v,\eta;a,b)=I_i(x,\chi,y,\psi;a,b),
\end{equation}
which are invariants for $Y$.
\end{proof}

\begin{remark} \normalfont 
The invariants of a YB map, $I_i(x,\chi,y,\psi;a,b)$, may not be functionally independent.
\end{remark}

\section{Extensions of Darboux transformations on Grass-\\mann algebras}\index{Grassmann algebra(s)}
In this section we consider the Grassmann extensions of the Darboux matrices\index{Grassmann extention(s)!of Darboux matrix(-ces)} corresponding to the NLS equation and the DNLS equation. In particular, we present the noncommutative extension of Darboux matrices \index{noncommutative extension(s)!of Darboux matrix(-ces)} (\ref{M-NLS}) (see \cite{Georgi}) and (\ref{DT-sl2-gen}) (see \cite{GSS}).


\subsection{Nonlinear Schr\"odinger equation}
The Grassmann extension of the Darboux matrix (\ref{M-NLS})\index{Grassmann extention(s)!of Darboux matrix(-ces)} was constructed in \cite{Georgi}. We present this Darboux matrix together with its associated Lax operator and we will use it in the next section to construct a Grassmann extension of the Adler-Yamilov map.\index{Grassmann extention(s)!of Adler-Yamilov map}  

Specifically, let us consider a more general Lax operator than (\ref{NLS-U}), namely the following noncommutative extension of the NLS operator \index{noncommutative extension(s)!of NLS operator}
\begin{subequations}\label{G-NLS-U}
\begin{align}
&\mathcal{L} := D_x + U(p,q,\psi,\phi,\zeta,\kappa;\lambda)=D_x +\lambda U^{1}+U^{0},
\intertext{where $U^{1}$ and $U^0$ are given by}
&U^1={\rm{diag}}(1,-1,0),\quad U^0=\left(\begin{array}{ccc} 0 & 2p & \theta \\ 2q & 0 & \zeta \\ \phi & \kappa & 0\end{array}\right),
\end{align}
\end{subequations}
where $p,q\in G_0$ and $\psi,\phi,\zeta,\kappa \in G_1$. Note that, if we set the odd variables\index{odd variable(s)} equal to zero, operator (\ref{G-NLS-U}) coincides with (\ref{NLS-U}).

A linear in the spectral parameter Darboux matrix for (\ref{NLS-U}) is given by the following \cite{Georgi}.

\begin{proposition}\label{DarbouxNLS-Prop}
Let $M=\lambda M_1+M_0$ be a Darboux matrix for (\ref{G-NLS-U}). Moreover, let $M$ define a Darboux transformation of rank 1. Then, up to a gauge transformation\index{gauge transformation}, $M$ is of the following form
\begin{equation}\label{DarbouxNLS}
M(p,q,\theta,\phi;c_1,c_2)=\left(
\begin{matrix}
 F+\lambda & p & \theta\\
 q_{10} & c_1 & 0\\
 \phi_{10} & 0 & c_2
\end{matrix}\right),
\end{equation}
where $c_1$ and $c_2$ can be either 1 or 0. In the case where $c_1=c_2=1$, the entries of $M(p,q,\theta,\phi;1,1)$ satisfy the following system of differential-difference equations \index{differential-difference equation(s)}
\begin{subequations}\label{DarbouxNLS-sys}
\begin{align}
F_x&=2(pq-p_{10}q_{10})+\theta \phi-\theta_{10}\phi_{10},\\
p_x&=2(Fp-p_{10})+\theta \zeta,\\
q_{10,x}&=2(q-q_{10}F)-\kappa_{10} \phi_{10},\\
\theta_x&=F\theta-\theta_{10}+p\kappa,\\
\phi_{10,x}&=\phi-\phi_{10}F-\zeta_{10}q_{10},
\end{align}
\end{subequations}
and the algebraic equations
\begin{subequations}
\begin{align}
\theta q_{10}&=(\mathcal{S}-1)\kappa,\\
\phi_{10}p&=(\mathcal{S}-1)\zeta.
\end{align}
\end{subequations}
\end{proposition}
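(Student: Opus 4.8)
The plan is to follow the same template used in the proof of Proposition~\ref{M-NLS-Prop}: substitute the ansatz $M=\lambda M_1+M_0$ into the Darboux equation $D_xM+U_{10}M-MU=0$ from Proposition~\ref{DarbouxMatrix}, collect powers of $\lambda$, and solve the resulting hierarchy of matrix equations. Since $U=\lambda U^1+U^0$ with $U^1=\mathrm{diag}(1,-1,0)$, substitution produces a quadratic polynomial in $\lambda$, yielding the three equations
\begin{subequations}
\begin{align}
\lambda^2:&\quad [U^1,M_1]=0,\\
\lambda^1:&\quad M_{1,x}+[U^1,M_0]+U^0_{10}M_1-M_1U^0=0,\\
\lambda^0:&\quad M_{0,x}+U^0_{10}M_0-M_0U^0=0.
\end{align}
\end{subequations}
First I would analyse the $\lambda^2$ equation: $[U^1,M_1]=0$ forces $M_1$ to be block-diagonal with respect to the grading induced by $U^1$, i.e. $M_1=\mathrm{diag}(a,b,c)$ for the $(1,1),(2,2),(3,3)$ entries, with the $(1,2)$ and $(2,1)$ entries also free in principle --- but the rank-one assumption $\mathrm{rank}\,M_1=1$, combined with the off-diagonal part of the $\lambda^1$ equation, will pin $M_1$ down. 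I expect that, after a gauge transformation (conjugation by a constant diagonal matrix, as in the scalar NLS case), one may take $M_1=\mathrm{diag}(1,0,0)$; the alternative choices of which diagonal slot is nonzero give gauge-equivalent Darboux matrices, exactly as in Proposition~\ref{M-NLS-Prop}.

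Next I would extract the off-diagonal entries of $M_0$ from the $\lambda^1$ equation. With $M_1=\mathrm{diag}(1,0,0)$, the commutator $U^0_{10}M_1-M_1U^0$ involves only the first row and first column of $U^0$ and $U^0_{10}$, so matching off-diagonal blocks identifies the $(1,2),(1,3),(2,1),(3,1)$ entries of $M_0$ with $p,\theta,q_{10},\phi_{10}$ respectively (up to the same rescaling used for the diagonal normalisation), while the diagonal part of the $\lambda^1$ equation shows the $(2,2)$ and $(3,3)$ entries of $M_0$ are $x$-independent constants, call them $c_1,c_2\in K$. Writing $F$ for the $(1,1)$ entry then gives precisely the claimed form \eqref{DarbouxNLS}. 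The differential-difference system \eqref{DarbouxNLS-sys} and the algebraic constraints then come entirely from the $\lambda^0$ equation $M_{0,x}+U^0_{10}M_0-M_0U^0=0$, read off entry by entry: the $(1,1),(1,2),(2,1),(1,3),(3,1)$ entries give the five evolution equations for $F,p,q_{10},\theta,\phi_{10}$, while the $(2,3)$ and $(3,2)$ entries --- which carry no derivative term, since $M_{0,23}=M_{0,32}=0$ --- give $0=U^0_{10,21}M_{0,13}-M_{0,12}U^0_{23}$ and its transpose, i.e. $\theta_{10}q_{10}=2q_{10}\kappa\cdot(\text{?})$; here I must be careful with the Grassmann signs and the shift structure, and the correct bookkeeping produces $\theta q_{10}=(\mathcal S-1)\kappa$ and $\phi_{10}p=(\mathcal S-1)\zeta$ after recognising telescoping combinations. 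Finally the case $c_1=c_2=1$ is obtained by rescaling nonzero constants to $1$ via composition with the constant Darboux matrix $\mathrm{diag}(\alpha,\beta,\gamma)$, exactly as noted in the Remark following Proposition~\ref{Darboux-DNLScase}.

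The main obstacle will be the careful treatment of the odd (Grassmann) variables throughout: one must keep track of the Grassmann-commutativity signs $ba=(-1)^{|a||b|}ab$ when expanding the products $U^0_{10}M_0$ and $M_0U^0$, and verify that all the purported even expressions (e.g. combinations like $\theta\phi$, $pq$) are genuinely even and that the odd equations are consistent (no $\alpha_1^2$-type terms appear spuriously). A secondary subtlety is justifying that the constraints $\theta q_{10}=(\mathcal S-1)\kappa$ and $\phi_{10}p=(\mathcal S-1)\zeta$ are compatible with the evolution equations --- i.e. that applying $\partial_x$ to them and using \eqref{DarbouxNLS-sys} reproduces no new condition; this is the analogue of checking that the first integral \eqref{nls-const} is preserved, and I would verify it by a direct (if slightly tedious) computation, appealing to the Grassmann differentiation rule for odd variables. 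The rank-one reduction and the gauge-equivalence argument are routine once the scalar NLS proof is in hand; the genuinely new content is purely the noncommutative bookkeeping.
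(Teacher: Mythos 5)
Your proposal follows exactly the route the paper takes: the same three $\lambda$-graded equations, $M_1$ diagonalised by the $\lambda^2$ equation, the rank-one/gauge normalisation $M_1=\diag(1,0,0)$, the off-diagonal part of the $\lambda^1$ equation identifying the entries of $M_0$ (and forcing $M_{0,23}=M_{0,32}=0$), and the $\lambda^0$ equation producing the differential-difference system together with the algebraic constraints read off from its $(2,3)$ and $(3,2)$ entries. One small correction that does not affect your argument: since $U^1=\diag(1,-1,0)$ has pairwise distinct diagonal entries, $[U^1,M_1]=0$ already forces $M_1$ to be fully diagonal, so the $(1,2)$ and $(2,1)$ entries are not ``free in principle''.
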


\begin{proof}
Substitution of $M$ to equation (\ref{DMeq}) implies a second order algebraic equation in $\lambda$. Equating the coefficients of the several powers of $\lambda$ equal to zero, we obtain the following system of equations
\begin{subequations}
\begin{align}
\lambda^2: & \quad \left[U^1,M_1\right]=0,\label{G-M-syst-1}\\
\lambda^1:   & \quad M_{1,x}+\left[U^1,M_0\right]+U^{0}_{10}M_1-M_1U^0=0,\label{G-M-syst-2}\\
\lambda^0: & \quad M_{0,x}+U^0_{10}M_0-M_0U^0=0.\label{G-M-syst-3}
\end{align}
\end{subequations}
The first equation, (\ref{G-M-syst-1}), implies that $M_1$ must be diagonal, say $M_1=\diag(\alpha,\beta,\gamma)$. Then, from the diagonal part of (\ref{G-M-syst-2}) we deduce that $\alpha_x=\beta_x=\gamma_x=0$. Since $\rank(M_1)=1$, only one of $\alpha$, $\beta$ and $\gamma$ can be nonzero. Without any loss of generality, we choose $\alpha=1$ and $\beta=\gamma=0$.

Now, the off-diagonal part of (\ref{G-M-syst-2}) implies $M_{0,12}=p$, $M_{0,13}=\theta$, $M_{0,21}=q_{10}$, $M_{0,31}=\phi_{10}$ and $M_{0,32}=M_{0,23}=0$. We call the $M_{0,11}$ entry $M_{0,11}=F$.

Finally, from equation (\ref{G-M-syst-3}) we obtain $(M_{0,22})_x=(M_{0,33})_x=0$, namely $M_{0,22}=c_1$, $M_{0,22}=c_2$, together with equations (\ref{DarbouxNLS-sys}).
\end{proof}

\begin{note}\normalfont
At this point, it is worth mentioning that the superdeterminant of matrix $M(p,q,\theta,\phi;1,1)$ in (\ref{DarbouxNLS}) implies the following
\begin{equation}\label{1integralNLS}
\partial_x(F-pq_{10}-\phi_{10}\theta)=0,
\end{equation}
since $\partial_x(\sdet (M))=0$.
Moreover, (\ref{1integralNLS}) is a first integral\index{first integral(s)} for system (\ref{DarbouxNLS-sys}) for $c_1=c_2=1$.
\end{note}

\begin{remark}\normalfont
If one sets the odd variables\index{odd variable(s)} equal to zero, matrix (\ref{DarbouxNLS}) the results of Proposition \ref{DarbouxNLS-Prop} agree with those in Proposition \ref{M-NLS-Prop} presented in chapter 3.
\end{remark}


\subsection{Derivative nonlinear Schr\"odinger equation}
Let us now consider a more general than (\ref{SL2-Lax-Op}), namely the following noncommutative extension of the DNLS operator \index{noncommutative extension(s)!of DNLS operator}
\begin{subequations} \label{G-SL2-Lax-Op} 
\begin{align}
&\mathcal{L}=D_x+\lambda^{2} U^2+\lambda U^1,
\intertext{where}
U^2=&\diag(1,-1,-1) \quad \text{and} \quad U^1=\left(\begin{matrix}
 0 & 2p & 2\theta\\
 2q & 0 & 0\\
 2\phi & 0 & 0
\end{matrix}\right).
\end{align}
\end{subequations}
Operator (\ref{G-SL2-Lax-Op}) is invariant under the transformation
\begin{equation}\label{reduction-Z2-G}
s_1(\lambda): \mathcal{L}(\lambda) \rightarrow \mathcal{L}(-\lambda)=\mathfrak{s}_{3}\mathcal{L}(\lambda) \mathfrak{s}_{3}, 
\end{equation}
where $\mathfrak{s}_3=\diag(1,-1,-1)$, $\mathfrak{s}_3^2=1$.

We are seeking a Darboux matrix \index{Darboux matrix(-ces)} for (\ref{G-SL2-Lax-Op}) with square dependence in the spectral parameter, namely of the form
\begin{equation}\label{sqformM}
M=\lambda^2 M_2+\lambda M_1+M_0,
\end{equation}
where $M_i$, $i=0,1,2$, is a $3\times 3$ matrix.

\begin{lemma}
Let $M$ be a second order matrix polynomial in $\lambda$ of the form (\ref{sqformM}). Then, $M$ is invariant under the involution $s_1(\lambda)$ iff 
\begin{subequations}\label{form-M-Z2-G}
\begin{align}
M_{i,12}&=M_{i,13}=M_{i,21}=M_{i,31}=0,\quad i=0,2,\quad\text{and}\\
M_{1,11}&=M_{1,22}=M_{1,33}=M_{1,23}=M_{1,32}=0.
\end{align}
\end{subequations}
\end{lemma}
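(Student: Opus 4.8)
The claim is a completely elementary consequence of writing out the invariance condition $\mathcal{L}(-\lambda)=\mathfrak{s}_3\mathcal{L}(\lambda)\mathfrak{s}_3$ at the level of the Darboux matrix $M$. The plan is to translate this requirement into a relation on $M$ of the form (\ref{Msym}) and then compare coefficients of the powers of $\lambda$.

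First I would determine the symmetry relation that $M$ must satisfy. Since $\tilde{\mathcal{L}}=M\mathcal{L}M^{-1}$ and both $\mathcal{L}$ and $\tilde{\mathcal{L}}$ are invariant under $s_1(\lambda)$, the general prescription (\ref{Msym}) specialises, with $\sigma(\lambda)=-\lambda$ and $\Sigma(\lambda)=\mathfrak{s}_3$ (a constant involution), to the condition
\begin{equation}
M(-\lambda)=\mathfrak{s}_3\, M(\lambda)\, \mathfrak{s}_3 .
\end{equation}
This is the DNLS-type analogue of the condition $M(\lambda)=\sigma_3 M(-\lambda)\sigma_3$ used in the proof of Lemma \ref{M2M1M0}, now with the $3\times 3$ matrix $\mathfrak{s}_3=\diag(1,-1,-1)$.

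Next I would substitute the ansatz (\ref{sqformM}), $M=\lambda^2 M_2+\lambda M_1+M_0$, into both sides. The left-hand side is $\lambda^2 M_2-\lambda M_1+M_0$, while the right-hand side is $\lambda^2(\mathfrak{s}_3 M_2\mathfrak{s}_3)+\lambda(\mathfrak{s}_3 M_1\mathfrak{s}_3)+(\mathfrak{s}_3 M_0\mathfrak{s}_3)$. Since $\{1,\lambda,\lambda^2\}$ are linearly independent, equating coefficients gives $M_2=\mathfrak{s}_3 M_2\mathfrak{s}_3$, $M_0=\mathfrak{s}_3 M_0\mathfrak{s}_3$, and $-M_1=\mathfrak{s}_3 M_1\mathfrak{s}_3$. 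The last ingredient is the explicit action of conjugation by $\mathfrak{s}_3$: writing $\mathfrak{s}_3=\diag(s_1,s_2,s_3)$ with $s_1=1$, $s_2=s_3=-1$, one has $(\mathfrak{s}_3 N\mathfrak{s}_3)_{ij}=s_i s_j N_{ij}$, so $s_is_j=+1$ for $(i,j)\in\{(1,1),(2,2),(3,3),(2,3),(3,2)\}$ and $s_is_j=-1$ for $(i,j)\in\{(1,2),(1,3),(2,1),(3,1)\}$. Hence $N=\mathfrak{s}_3 N\mathfrak{s}_3$ forces the entries with $s_is_j=-1$ to vanish, and $N=-\mathfrak{s}_3 N\mathfrak{s}_3$ forces the entries with $s_is_j=+1$ to vanish. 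Applying this to $M_2,M_0$ (fixed points) and to $M_1$ (anti-fixed point) yields exactly (\ref{form-M-Z2-G}): $M_{i,12}=M_{i,13}=M_{i,21}=M_{i,31}=0$ for $i=0,2$, and $M_{1,11}=M_{1,22}=M_{1,33}=M_{1,23}=M_{1,32}=0$. Conversely, if $M$ has this block structure, reversing the computation shows $M(-\lambda)=\mathfrak{s}_3 M(\lambda)\mathfrak{s}_3$, giving the ``if'' direction.

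There is no real obstacle here; the only point that deserves a line of care is justifying why $M$ must obey $M(-\lambda)=\mathfrak{s}_3 M(\lambda)\mathfrak{s}_3$ in the first place — this is the assumption (built into the general framework via (\ref{Msym})) that the Darboux matrix inherits the reduction-group symmetry of $\mathcal{L}$, and I would simply invoke that convention as was done for the $2\times 2$ cases in Lemmas \ref{M2M1M0} and \ref{M-DNLS-red}. Everything else is bookkeeping on the nine entries of a $3\times 3$ matrix.
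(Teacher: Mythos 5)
Your proposal is correct and follows exactly the paper's route: the paper's proof of this lemma consists of the single line that the result "can be readily proven from $M(-\lambda)=\mathfrak{s}_3M(\lambda)\mathfrak{s}_3$", and your argument is precisely the spelled-out version of that — equating coefficients of $1,\lambda,\lambda^2$ and using $(\mathfrak{s}_3 N\mathfrak{s}_3)_{ij}=s_is_jN_{ij}$ with $\mathfrak{s}_3=\diag(1,-1,-1)$. No discrepancies; your version simply makes explicit the bookkeeping the paper leaves to the reader.
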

\begin{proof}
It can be readily proven from $M(-\lambda)=\mathfrak{s}_3M(\lambda)\mathfrak{s}_3$.
\end{proof}

We restrict ourselves to the case where $M_2$ in (\ref{sqformM}) has rank one.

\begin{proposition}\label{G-Darboux-DNLScase}
Let $M$ be a Darboux matrix for (\ref{G-SL2-Lax-Op}) of the form (\ref{sqformM}), with $\rank{M_2}=1$, and suppose that it is invariant under the involution (\ref{reduction-Z2-G}). Then, up to a gauge transformation\index{gauge transformation}, $M$ is given by
\small
\begin{equation}\label{DarbouxZ2-M}
M(f,p,q_{10},\theta,\phi_{10};c_1,c_2)=\lambda^2 \left(
\begin{matrix}
 f & 0 & 0\\
 0 & 0 & 0\\
 0 & 0 & 0
\end{matrix}\right)+\lambda \left(
\begin{matrix}
 0 & f p & f \theta \\
 q_{10} f & 0 &0 \\
 \phi_{10} f & 0 & 0
\end{matrix}\right)
+\left(
\begin{matrix}
 c_1 & 0 & 0 \\
 0 & 1 &0 \\
 0 & 0 & c_2
\end{matrix}\right),
\end{equation}
\normalsize
where its entries satisfy the following differential-difference equations \index{differential-difference equation(s)}
\begin{subequations}\label{dif-difEqs}
\begin{align}
f_x&=2f(pq-p_{10}q_{10}\theta\phi-\theta_{10}\phi_{10}),\label{dif-difEqs-1}\\
p_x&=2p(p_{10}q_{10}-pq+\theta_{10}\phi_{10}-\theta\phi)-2\frac{c_2p_{10}-c_1p}{f},\label{dif-difEqs-2}\\
q_{10,x}&=2q_{10}(p_{10}q_{10}-pq+\theta_{10}\phi_{10}-\theta\phi)-2\frac{c_1q_{10}-c_2q}{f},\\
\theta_x&=2\phi(p_{10}q_{10}-pq+\theta_{10}\phi_{10}-\theta\phi)+\frac{c_1\theta-c_2\theta_{10}}{f},\\
\phi_{10,x}&=2\phi_{10}(p_{10}q_{10}-pq+\theta_{10}\phi_{10}-\theta\phi)+\frac{c_2\phi-c_1\phi_{10}}{f}.\label{dif-difEqs-5}
\end{align}
\end{subequations}
\end{proposition}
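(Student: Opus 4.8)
The plan is to follow exactly the same strategy used in the proof of Proposition \ref{Darboux-DNLScase}, but now in the $3\times 3$ Grassmann-extended setting. First I would substitute the ansatz (\ref{sqformM}) for $M$, together with the explicit form (\ref{G-SL2-Lax-Op}) of $U = \lambda^2 U^2 + \lambda U^1$, into the defining equation (\ref{DMeq}), namely $D_x M + U_{10} M - M U = 0$. Since $U$ is quadratic in $\lambda$ and $M$ is quadratic in $\lambda$, the left-hand side is a polynomial of degree $4$ in $\lambda$; equating the coefficients of $\lambda^4, \lambda^3, \lambda^2, \lambda^1, \lambda^0$ to zero yields a system of five matrix equations, structurally identical to (\ref{MZ2-sys-1})--(\ref{MZ2-sys-5}) but with $\sigma_3$ replaced by $U^2 = \diag(1,-1,-1)$ and with the $2\times2$ blocks replaced by the $3\times3$ blocks of (\ref{G-SL2-Lax-Op}).

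Next I would impose the reduction symmetry. By the Lemma just stated, invariance of $M$ under $s_1(\lambda)$ forces the block structure (\ref{form-M-Z2-G}): $M_2$ and $M_0$ are block-diagonal of the shape $\diag(*, * , *)$ with vanishing $(1,2),(1,3),(2,1),(3,1)$ entries, while $M_1$ has only the $(1,2),(1,3),(2,1),(3,1)$ entries possibly nonzero. Combined with $\rank M_2 = 1$, I can take $M_2 = \diag(f,0,0)$ without loss of generality (the choice putting the nonzero entry in another slot gives a gauge-equivalent Darboux matrix, exactly as in the $2\times2$ case). The $\lambda^4$-equation is then automatically satisfied, and the $\lambda^0$-equation $M_{0,x}=0$ shows $M_0 = \diag(c_1, c_2', c_2)$ is constant; a rescaling normalises the $(2,2)$ entry to $1$, giving the middle constant in (\ref{DarbouxZ2-M}). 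From the $\lambda^3$-equation $[U^2, M_1] + U^1_{10} M_2 - M_2 U^1 = 0$ I read off the entries of $M_1$: the off-diagonal part forces $M_{1,12} = fp$, $M_{1,13} = f\theta$, $M_{1,21} = fq_{10}$, $M_{1,31} = f\phi_{10}$, which is precisely the $\lambda$-coefficient in (\ref{DarbouxZ2-M}). Then the $\lambda^2$-equation $M_{2,x} + U^1_{10}M_1 - M_1 U^1 = 0$ produces the evolution equation (\ref{dif-difEqs-1}) for $f$, and the $\lambda^1$-equation $M_{1,x} + U^1_{10}M_0 - M_0 U^1 = 0$ produces the relations $(fp)_x = 2(c_1 p - c_2 p_{10})$, $(fq_{10})_x = 2(c_2 q - c_1 q_{10})$, $(f\theta)_x = c_2\theta_{10} - c_1\theta$ (with a sign from the odd differentiation rule), and $(f\phi_{10})_x = c_1\phi_{10} - c_2\phi$; using (\ref{dif-difEqs-1}) to expand $f_x$ and solving for $p_x, q_{10,x}, \theta_x, \phi_{10,x}$ gives (\ref{dif-difEqs-2})--(\ref{dif-difEqs-5}).

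The one genuinely new feature, and the step I expect to require the most care, is the bookkeeping of signs coming from Grassmann parity: $p, q, f$ are even while $\theta, \phi$ are odd, so in expanding the matrix products $U^1_{10}M_i$ and $M_i U^1$ one must track the order of factors (e.g. $\theta\phi = -\phi\theta$), and in differentiating products involving odd entries one must apply the left-derivative rule stated in the excerpt. In particular the quadratic combinations appearing in (\ref{dif-difEqs}) are $pq - p_{10}q_{10} + \theta\phi - \theta_{10}\phi_{10}$ (with the odd terms entering with the \emph{opposite} sign to the $p_x$ equation versus the $f_x$ equation, which is why (\ref{dif-difEqs-1}) carries $+\theta\phi$ while (\ref{dif-difEqs-2}) carries $-\theta\phi$); getting these signs consistent is the crux. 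Everything else is the same linear-algebra-over-a-graded-ring computation as in the commutative case, and reduces to it when the odd variables are set to zero, which provides a useful consistency check against Proposition \ref{Darboux-DNLScase}. Finally, as in the earlier propositions, one notes that the superdeterminant of (\ref{DarbouxZ2-M}) is $x$-independent, which furnishes a first integral of the system (\ref{dif-difEqs}) and confirms internal consistency.
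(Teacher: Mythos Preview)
Your proposal is correct and follows essentially the same route as the paper's own proof: expand (\ref{DMeq}) as a degree-four polynomial in $\lambda$, use the lemma on the $s_1$-symmetry to fix the block structure of $M_0,M_1,M_2$, impose $\rank M_2=1$ to set $M_2=\diag(f,0,0)$, and then read off $M_1$ from the $\lambda^3$-equation, $M_0$ constant diagonal from the $\lambda^0$-equation, $f_x$ from the $\lambda^2$-equation, and the remaining differential-difference relations from the $\lambda^1$-equation. Your emphasis on the Grassmann-parity bookkeeping as the only genuinely new ingredient is exactly right; just be careful that in the intermediate identities such as $(f\theta)_x=\ldots$ the factor of $2$ coming from the $2\theta,2\phi$ entries of $U^1$ should still be present (your stated $(f\theta)_x=c_2\theta_{10}-c_1\theta$ drops it), so double-check those coefficients when you write out the computation in full.
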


\begin{proof}
First of all, for the entries of matrices $M_i$, $i=1,2,3$ we have (\ref{form-M-Z2-G}). Then, substitution of $M$ to equation (\ref{DMeq}) implies a second order algebraic equation in $\lambda$. Equating the coefficients of the several powers of $\lambda$ equal to zero, we obtain the following system of equations
\begin{subequations}\label{Dar-eq}
\begin{align}
\left[U^2,M_2\right]&=0 \label{Dar-eq1} \\
\left[U^2,M_1\right]+U_{10}^1M_2-M_2U^1&=0 \label{Dar-eq2}\\
M_{2,x}+\left[U^2,M_0\right]+U_{10}^1M_1-M_1U^1&=0 \label{Dar-eq3}\\
M_{1,x}+U_{10}^1M_0-M_0U^1&=0 \label{Dar-eq4}\\
M_{0,x}&=0.\label{Dar-eq5}
\end{align}
\end{subequations}

From (\ref{Dar-eq5}) follows that the matrix $M_0$ must be constant, whereas equation (\ref{Dar-eq1}) implies that $M_2$ is diagonal. Since $\rank{M_2=1}$ we can choose $M_2=\diag\{f,0,0\}$ without loss of generality; the cases $M_2=\diag\{0,g,0\}$ and $M_2=\diag\{0,0,h\}$ lead to gauge equivalent Darboux matrices. In this case, from equation (\ref{Dar-eq2}) we have that the entries of $M_1$ are given by
\begin{equation}
M_{1,12}=fp,\quad M_{1,13}=f\theta,\quad M_{1,21}=q_{10}f \quad\text{and}\quad M_{1,31}=\phi_{10}f.
\end{equation}

Now, from equation (\ref{Dar-eq3}) we deduce equation (\ref{dif-difEqs-1}) and that $M_0$ must be diagonal, namely of the form $M_0=\diag(c_1,1,c_2)$ (one of the parameters along its diagonal can be rescaled to 1). Therefore, matrix $M$ is of the form (\ref{DarbouxZ2-M}).

Finally, equation (\ref{Dar-eq4}) implies system (\ref{dif-difEqs-2})-(\ref{dif-difEqs-5}) (where we have made use of (\ref{dif-difEqs-1})).
\end{proof}

\begin{note}\normalfont
Entry $f$ is permitted to appear in the denominator in (\ref{dif-difEqs}), as it is an even variable.\index{even variable(s)} This is due to the superdeterminant of matrix $M$. In fact, the  constant determinant property of matrix $M(f,p,q_{10},\theta,\phi_{10};1,1)$ implies the following equation
\begin{equation}\label{G-z2-1stInt}
\partial_x(f-f^2(pq_{10}+\theta\phi_{10}))=0,
\end{equation}
which makes it quite obvious that $f$ must be even. Moreover, (\ref{G-z2-1stInt}) is a first integral\index{first integral(s)} for the system of differential-difference equations \index{differential-difference equation(s)} (\ref{dif-difEqs}).
\end{note}

\begin{remark}\normalfont The results of Proposition \ref{G-Darboux-DNLScase} agree with those of Proposition \ref{Darboux-DNLScase} if one sets the odd variables\index{odd variable(s)} equal to zero.
\end{remark}


\section{Grassmann extensions of Yang-Baxter maps}\index{Grassmann extention(s)!of Yang-Baxter maps}
Here we employ the Darboux matrices presented in the previous section to construct ten-dimensional YB maps, which can be restricted to eight-dimensional YB maps on invariant leaves. We start with the case of NLS equation.


\subsection{Nonlinear Schr\"odinger equation} 
According to (\ref{DarbouxNLS}) we define the following matrix
\begin{equation}\label{mLaxNLS}
M(\textbf{x};\lambda)=\left(
\begin{matrix}
 X+\lambda & x_1 & \chi_1\\
 x_2 & 1 &0 \\
 \chi_2 & 0 & 1
\end{matrix}\right), \qquad \textbf{x}:=(x_1,x_2,\chi_1,\chi_2,X),
\end{equation}
and substitute to the Lax equation.

The corresponding algebraic variety is a union of two ten-dimensional components. The first one is obvious from the refactorisation problem,\index{refactorisation problem(s)} and it corresponds to the permutation map
\begin{equation}
 \textbf{x}\mapsto \textbf{u}=\textbf{y}, \qquad \textbf{y}\mapsto \textbf{v}=\textbf{x}, \nonumber
\end{equation}
which is a trivial YB map. The second one can be represented as a ten-dimensional non-involutive Yang-Baxter map given by
{\allowdisplaybreaks
\begin{subequations}\label{NLS5D}
\begin{align} 
 x_1\, \mapsto \, u_1&=y_1-\frac{X-x_1x_2-\chi_1\chi_2-Y+y_1y_2+\psi_1\psi_2}{1+x_1y_2+\chi_1\psi_2} x_1,\\
 x_2\, \mapsto \, u_2&=y_2,\\
 \chi_1\, \mapsto \, \xi_1&=\psi_1-\frac{X-x_1x_2-Y+y_1y_2+\psi_1\psi_2}{1+x_1y_2} \chi_1,\\
 \chi_2\, \mapsto \, \xi_2&=\psi_2,\\
 X\, \mapsto \, U&=\frac{X-x_1x_2-\chi_1\chi_2+(x_1y_2+\chi_1\psi_2)Y+y_1y_2+\psi_1\psi_2}{1+x_1y_2+\chi_1\psi_2}, \\
 y_1\, \mapsto \, v_1&=x_1,\\
 y_2\, \mapsto \, v_2&=x_2+\frac{X-x_1x_2-\chi_1\chi_2-Y+y_1y_2+\psi_1\psi_2}{1+x_1y_2+\chi_1\psi_2} y_2,\\
 \psi_1\, \mapsto \, \eta_1&=\chi_1,\\
 \psi_2\, \mapsto \, \eta_2&=\chi_2+\frac{X-x_1x_2-\chi_1\chi_2-Y+y_1y_2}{1+x_1y_2} \psi_2,\\
 Y\, \mapsto \, V&=\frac{(x_1y_2+\chi_1\psi_2)X+x_1x_2+\chi_1\chi_2+Y-y_1y_2-\psi_1\psi_2}{1+x_1y_2+\chi_1\psi_2}.
\end{align}
\end{subequations}
}

Finally, map (\ref{NLS5D}) is birational due to Proposition \ref{rationality-G}.


\subsubsection{Restriction on invariant leaves: Extension of Adler-Yamilov map}
In this section, we derive an eight-dimensional Yang-Baxter map from map (\ref{NLS5D}). This is a Grassmann extension of the Adler-Yamilov map\index{Grassmann extention(s)!of Adler-Yamilov map} \cite{Adler-Yamilov, kouloukas, PT}. Our proof is motivated by the existence of the first integral (\ref{1integralNLS}).\index{first integral(s)}

In particular, we have the following

\begin{proposition}
\begin{enumerate}
	\item The quantities $\Phi =X-x_1x_2-\chi_1\chi_2$ and $\Psi=Y-y_1y_2-\psi_1\psi_2$ are invariants (first integrals) of the map\index{first integral(s)} (\ref{NLS5D}),
	\item The ten-dimensional map (\ref{NLS5D}) can be restricted to an eight-dimensional map $Y_{a,b}:A_a\times A_b \longrightarrow A_a\times A_b$, where $A_a$, $A_b$ are level sets of the first integrals\index{first integral(s)} $\Phi$ and $\Psi$, namely
\begin{subequations}\label{symleavesGras}
\begin{align}
&A_a=\{(x_1,x_2,\chi_1,\chi_2,X)\in K^5; X=a+x_1x_2+\chi_1\chi_2\}, \\
&A_b=\{(y_1,y_2,\psi_1,\psi_2,Y)\in K^5; Y=b+y_1y_2+\psi_1\psi_2\},
\end{align}
\end{subequations}
  \item The bosonic limit\index{bosonic limit} of map $Y_{a,b}$ is the Adler-Yamilov map.
\end{enumerate}
\end{proposition}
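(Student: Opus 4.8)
The plan is to verify the three claims in turn, exploiting the structure already established for the commutative case and the fact that the ten-dimensional map (\ref{NLS5D}) was obtained directly from a Lax refactorisation. For part (1), I would substitute the explicit formulas of (\ref{NLS5D}) into the combinations $U-u_1u_2-\xi_1\xi_2$ and $V-v_1v_2-\eta_1\eta_2$ and check that they reduce to $X-x_1x_2-\chi_1\chi_2$ and $Y-y_1y_2-\psi_1\psi_2$ respectively. This is the Grassmann analogue of the computation behind Proposition on the Adler--Yamilov restriction in Chapter 5; the only extra care needed is tracking signs when commuting odd variables past one another (recall $\chi_1\chi_2=-\chi_2\chi_1$ and $\chi_i^2=0$), and noting that the denominator $1+x_1y_2+\chi_1\psi_2$ is even, hence central, so it may be manipulated freely. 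A conceptually cleaner alternative, which I would mention, is that these invariants are forced by the constancy of the superdeterminant: equation (\ref{1integralNLS}) states exactly $\partial_x(F-pq_{10}-\phi_{10}\theta)=0$, and since $\sdet$ is multiplicative, $\sdet M(\textbf{x};\lambda)$ is preserved by the refactorisation (\ref{laxM})-type equation; extracting the $\lambda$-independent part of $\sdet$ yields $X-x_1x_2-\chi_1\chi_2$ as an invariant, and similarly for $\textbf{y}$.

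For part (2), once $\Phi$ and $\Psi$ are invariants, the restriction is immediate: the level sets $A_a=\{X=a+x_1x_2+\chi_1\chi_2\}$ and $A_b=\{Y=b+y_1y_2+\psi_1\psi_2\}$ in (\ref{symleavesGras}) are mapped to themselves, and on these leaves $X$ and $Y$ may be eliminated algebraically. I would carry this out by substituting $X=a+x_1x_2+\chi_1\chi_2$ and $Y=b+y_1y_2+\psi_1\psi_2$ into (\ref{NLS5D}): every numerator of the form $X-x_1x_2-\chi_1\chi_2-Y+y_1y_2+\psi_1\psi_2$ collapses to the constant $a-b$, and the map becomes an eight-dimensional transformation of $(x_1,x_2,\chi_1,\chi_2,y_1,y_2,\psi_1,\psi_2)$ with $a-b$ and the even denominator $1+x_1y_2+\chi_1\psi_2$ as the only non-polynomial ingredients. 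The result is a reversible parametric Yang-Baxter map with strong Lax matrix obtained from (\ref{mLaxNLS}) by the same elimination, so reversibility and the YB property follow from Proposition \ref{PropVes} exactly as in the commutative case; birationality follows from Proposition \ref{rationality-G}.

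For part (3), the bosonic limit is obtained by setting all odd variables $\chi_i,\psi_i,\xi_i,\eta_i$ to zero. Every term containing a product of two Grassmann variables then vanishes, the denominator $1+x_1y_2+\chi_1\psi_2$ becomes $1+x_1y_2$, and the eight-dimensional map degenerates precisely to the four-dimensional map (\ref{YB_NLS}), which was identified in Chapter 5 with the Adler-Yamilov map. I would simply write out the limit of the surviving equations for $u_1,u_2,v_1,v_2$ and match them against (\ref{YB_NLS}).

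The main obstacle, such as it is, is bookkeeping rather than conceptual: ensuring that the sign conventions for left-differentiation and for reordering odd products are applied consistently when verifying the invariants of part (1), and confirming that the odd-variable denominators never actually require inversion of a nilpotent (they do not, since all denominators appearing are of the form $1+(\text{even})$). Once part (1) is secured, parts (2) and (3) are essentially formal substitutions. I would therefore devote the bulk of the written proof to a careful display of the part-(1) computation, preferably via the superdeterminant argument to avoid the sign pitfalls of the brute-force route, and treat (2) and (3) briefly.
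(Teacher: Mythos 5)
Your proposal follows essentially the same route as the paper: part (1) is proved by direct substitution into (\ref{NLS5D}) to check $U-u_1u_2-\xi_1\xi_2=X-x_1x_2-\chi_1\chi_2$ and $V-v_1v_2-\eta_1\eta_2=Y-y_1y_2-\psi_1\psi_2$, part (2) by eliminating $X$ and $Y$ via the level-set conditions, and part (3) by setting the odd variables to zero; this is exactly what the paper does.

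One caution about the superdeterminant shortcut you say you would prefer for part (1). A computation gives $\sdet M(\textbf{x};\lambda)=\lambda+\Phi$ with $\Phi=X-x_1x_2-\chi_1\chi_2$, so multiplicativity of $\sdet$ applied to the refactorisation yields $(\lambda+\Phi_{\textbf{u}})(\lambda+\Phi_{\textbf{v}})=(\lambda+\Phi_{\textbf{y}})(\lambda+\Phi_{\textbf{x}})$, i.e.\ only the symmetric functions $\Phi+\Psi$ and $\Phi\Psi$ are preserved. This shows $\{\Phi,\Psi\}$ is invariant as an unordered pair but does not by itself exclude the possibility that the map swaps them --- and indeed the other branch of the algebraic variety, the permutation map, does swap them. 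So to conclude that $\Phi$ and $\Psi$ are \emph{individually} invariant on the branch (\ref{NLS5D}) you still have to inspect the explicit formulas, which is precisely the brute-force check you were hoping to avoid. The direct verification should therefore remain the backbone of part (1), with the superdeterminant remark as supporting motivation only. Your parts (2) and (3) are fine; just note that in part (2) the numerators that lack the $\chi_1\chi_2$ term collapse to $a-b+\chi_1\chi_2$ rather than $a-b$, the discrepancy being killed by nilpotency when multiplied by $\chi_1$, and that expanding the even denominator $1+x_1y_2+\chi_1\psi_2$ by nilpotency produces the $(1+x_1y_2)^2$ denominators appearing in (\ref{Adler-Yamilov-Grassmann}).
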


\begin{proof}
\vspace{-1cm}
\begin{enumerate}
	\item It can be readily verified that (\ref{NLS5D}) implies $U-u_1u_2-\xi_1\xi_2=X-x_1x_2-\chi_1\chi_2$ and $V-v_1v_2-\eta_1\eta_2=Y-y_1y_2-\psi_1\psi_2$. Thus, $\Phi$ and $\Psi$ are invariants, i.e.  first integrals\index{first integral(s)} of the map.
	\item The existence of the restriction is obvious. Using the conditions $X=x_1x_2+\chi_1\chi_2+a$ and $Y=y_1y_2+\psi_1\psi_2+b$, one can eliminate $X$ and $Y$ from (\ref{NLS5D}). The resulting map is $\textbf{x}\rightarrow \textbf{u}=\textbf{u}(\textbf{x},\textbf{y})$, $\textbf{y}\rightarrow \textbf{v}=\textbf{v}(\textbf{x},\textbf{y})$, where $\textbf{u}$ and $\textbf{v}$ are given by
\begin{subequations}\label{Adler-Yamilov-Grassmann}
\begin{align}
\textbf{u}&=\left(y_1+\frac{(b-a)(1+x_1y_2-\chi_1\psi_2)}{(1+x_1 y_2)^2}x_1,\,\,y_2,\,\,\psi_1+\frac{b-a}{1+x_1y_2}\chi_2,\,\,\psi_2\right),  \\
\textbf{v}&=\left(x_1,\,\,x_2+\frac{(a-b)(1+x_1y_2-\chi_1\psi_2)}{(1+x_1y_2)^2}y_2,\,\,\chi_1,\,\,\chi_2+\frac{a-b}{1+x_1y_2}\psi_2 \right).
\end{align}
\end{subequations}  

\item If one sets the odd variables\index{odd variable(s)} of the above map equal to zero, namely $\chi_1=\chi_2=0$ and $\psi_1=\psi_2=0$, then the map (\ref{Adler-Yamilov-Grassmann}) coincides with the Adler-Yamilov map.
\end{enumerate}
\end{proof}

Now, one can use the condition $X=x_1x_2+\chi_1\chi_2+a$ to eliminate $X$ from the Lax matrix (\ref{mLaxNLS}), i.e.
\begin{equation} \label{laxSNLS}
M(\textbf{x};a,\lambda)=\left(
\begin{matrix}
 a+x_1x_2+\chi_1\chi_2+\lambda & x_1 & \chi_1\\
 x_2 & 1 &0 \\
 \chi_2 & 0 & 1
\end{matrix}\right),
\end{equation}
which corresponds to the Darboux matrix derived in \cite{Georgi}.
Now, Adler-Yamilov map's extension follows from the strong Lax representation
\begin{equation} \label{lax_eq_NLS-G}
  M(\textbf{u};a,\lambda)M(\textbf{v};b,\lambda)=M(\textbf{y};b,\lambda)M(\textbf{x};a,\lambda).
\end{equation}
Therefore, the extension of the Adler-Yamilov's map (\ref{Adler-Yamilov-Grassmann}) is a reversible parametric YB map. Moreover, it is easy to verify that it is not involutive. Birationality of map (\ref{Adler-Yamilov-Grassmann}) is due to Prop. \ref{rationality-G}.

Now, from $\str(M(\textbf{y};b,\lambda)M(\textbf{x};a,\lambda))$ we obtain the following invariants for map (\ref{Adler-Yamilov-Grassmann})
\begin{eqnarray*}
&& T_1=a+b+x_1x_2+y_1y_2+\chi_1\chi_2+\psi_1\psi_2, \\
&& T_2=(a+x_1x_2+\chi_1\chi_2)(b+y_1y_2+\psi_1\psi_2)+x_1y_2+x_2y_1+\chi_1\psi_2-\chi_2\psi_1,
\end{eqnarray*}
However, these are linear combinations of the following integrals
\begin{subequations}
\begin{align}
I_1=&(a+x_1x_2+\chi_1\chi_2)(b+y_1y_2)+\psi_1\psi_2(a+x_1x_2)+x_1y_2+\nonumber\\
    &x_2y_1+\chi_1\psi_2-\chi_2\psi_1, \\
I_2=&x_1x_2+y_1y_2, \qquad I_3=\chi_1\chi_2+\psi_1\psi_2, \qquad I_4=\chi_1\chi_2\psi_1\psi_2.
\end{align}
\end{subequations}
These are in involution with respect to the Poisson bracket
\begin{eqnarray*}
&&\{x_1,x_2\}=\{y_1,y_2\}=1, \quad \{\chi_1,\chi_2\}=\{\psi_1,\psi_2\}=1 \quad \text{and all the rest} \\ && \{x_i,x_j\}=\{y_i,y_j\}=\{x_i,y_j\}=0.
\end{eqnarray*}
and the corresponding Poisson matrix is invariant under the YB map (\ref{Adler-Yamilov-Grassmann}). However, we cannot make any conclusions about the Liouville integrability of (\ref{Adler-Yamilov-Grassmann}), as $I_3$ and $I_4$ are not functionally independent (notice that $I_3^2=2I_4$).


\subsection{Derivative nonlinear Schr\"odinger equation}
According to matrix $M(p,q_{10},\theta,\phi_{10};1,1)$ in (\ref{DarbouxZ2-M}) we consider the following matrix
\begin{equation} \label{DarbouxZ2}
M(\textbf{x};\lambda)=\lambda^2 \left(
\begin{matrix}
 X & 0 & 0\\
 0 & 0 & 0\\
 0 & 0 & 0
\end{matrix}\right)+\lambda \left(
\begin{matrix}
 0 & x_1 &  \chi_1 \\
 x_2 & 0 &0 \\
 \chi_2 & 0 & 0
\end{matrix}\right)
+\left(
\begin{matrix}
 1 & 0 & 0 \\
 0 & 1 &0 \\
 0 & 0 & 1
\end{matrix}\right),
\end{equation}
where $\textbf{x}=(x_1,x_2,\chi_1,\chi_2,X)$ and, in particular, we have set
\begin{equation}
X:=f,\quad x_1:=fp,\quad x_2=fq_{10},\quad \chi_1:=f\theta \quad \text{and}\quad \chi_2:=\psi_{10}f.
\end{equation}

The Lax equation for the above matrix implies the following equations
\begin{subequations}
\begin{align}
&U+V+u_1v_2+\xi_1\eta_2=Y+X+y_1x_2+\psi_1\chi_2, \\
&Uv_i=x_iY, \quad i=1,3, \qquad Vu_i=y_iX, \quad i=2,4, \\
&u_i+v_i=x_i+y_i, \qquad i=1,\ldots, 4.
\end{align}
\end{subequations}
As in the previous section, the algebraic variety consists of two components. The first ten-dimensional component corresponds to the permutation map
\begin{equation}
\textbf{x} \mapsto \textbf{u}=\textbf{y}, \qquad \textbf{y} \mapsto \textbf{v}=\textbf{x},
\end{equation}
and the second corresponds to the following ten-dimensional YB map
\begin{subequations}\label{Z25D}
\begin{align}
& \textbf{x}\, \mapsto \, \textbf{u}=\left(y_1+\frac{f}{g}x_1,\,\,\frac{g}{h}y_2,\,\,\psi_1+\frac{f}{g}\chi_1,\,\,\frac{g}{h}\psi_2,\,\,\frac{g}{h}Y\right), \\
& \textbf{y}\, \mapsto \, \textbf{v}=\left(\frac{h}{g}x_1,\,\,x_2+\frac{f}{h}y_2,\,\,\frac{h}{g}\chi_1,\,\,\chi_2+\frac{f}{h}\psi_2,\,\,\frac{h}{f}X \right).
\end{align}
\end{subequations}
where $f=f(\textbf{x},\textbf{y})$, $g=g(\textbf{x},\textbf{y})$ and $h=h(\textbf{x},\textbf{y})$ are given by the following expressions
\begin{subequations}\label{f-g-h-Z2}
\begin{align}
\label{f-g-h-Z2-1}
&f(\textbf{x},\textbf{y})=X-x_1x_2-\chi_1\chi_2-Y+y_1y_2+\psi_1\psi_2,\\
\label{f-g-h-Z2-2}
&g(\textbf{x},\textbf{y})=X-x_1(x_2+y_2)-\chi_1(\chi_2+\psi_2),\\
\label{f-g-h-Z2-3}
&h(\textbf{x},\textbf{y})=Y-(x_1+y_1)y_2-(\chi_1+\psi_1)\psi_2.
\end{align}
\end{subequations}


\subsection{Restriction on invariant leaves}
In this section, we show that the map given by (\ref{Z25D})-(\ref{f-g-h-Z2}) can be restricted to a completely integrable eight-dimensional YB map on invariant leaves. As in the previous section, the idea of this restriction is motivated by the first integral\index{first integral(s)} (\ref{G-z2-1stInt}).

Particularly, we have the following
\begin{proposition}
\begin{enumerate}
	\item $\Phi =X-x_1x_2-\chi_1\chi_2$ and $\Psi=Y-y_1y_2-\psi_1\psi_2$ are invariants of the map (\ref{Z25D})-(\ref{f-g-h-Z2}),
	\item The ten-dimensional map (\ref{Z25D})-(\ref{f-g-h-Z2}) can be restricted to an eight-dimensional map $Y_{a,b}:A_a\times A_b \longrightarrow A_a\times A_b$, where $A_a$, $A_b$ are given by (\ref{symleavesGras}).
	\item The bosonic limit\index{bosonic limit} of the above eight-dimensional map is map (\ref{YB-affine}), corresponding to the DNLS equation.
\end{enumerate}
\end{proposition}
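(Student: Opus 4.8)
The plan is to verify the three claims in the same spirit as the corresponding propositions for the NLS and the (purely bosonic) DNLS cases, namely Propositions \ref{cintA-Y}-type arguments together with the invariant-leaf reduction. First I would establish claim (1) by direct substitution: given the explicit ten-dimensional map $\{(\ref{Z25D}),(\ref{f-g-h-Z2})\}$ one computes $U-u_1u_2-\xi_1\xi_2$ and $V-v_1v_2-\eta_1\eta_2$ and checks that they equal $X-x_1x_2-\chi_1\chi_2$ and $Y-y_1y_2-\psi_1\psi_2$ respectively. This is the Grassmann analogue of the computation in the proof of the $\field{Z}_2$-reduction proposition for (\ref{YB-affine}); the only extra care needed is to keep track of signs when commuting the odd variables $\chi_i$, $\psi_i$ past one another, but since the first integral (\ref{G-z2-1stInt}) of the differential-difference system (\ref{dif-difEqs}) already has this very form, the identity is essentially forced by the refactorisation problem and the constancy of $\sdet M$.

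For claim (2), I would argue exactly as in the bosonic case: the conditions $X=a+x_1x_2+\chi_1\chi_2$ and $Y=b+y_1y_2+\psi_1\psi_2$ cut out the affine varieties $A_a$, $A_b$ of (\ref{symleavesGras}), and by claim (1) these are invariant under the ten-dimensional map. Substituting these relations into (\ref{f-g-h-Z2}) simplifies the auxiliary quantities — in particular $f(\textbf{x},\textbf{y})$ becomes $a-b$, which is the key collapse that makes the leaf reduction work — and one eliminates $X$, $Y$ from (\ref{Z25D}) to obtain the eight-dimensional map $Y_{a,b}:A_a\times A_b\to A_a\times A_b$. Here the main technical point is that the remaining denominators $g$ and $h$ must be shown to be even (Bosonic) elements of the Grassmann algebra so that the division is legitimate; this follows because on the leaves $g$ and $h$ reduce to expressions of the form $a - (\text{even})$ and $b - (\text{even})$, the ``even'' parts being sums of products $x_1x_2$, $\chi_1\chi_2$, etc., each of which is even. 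Birationality of the resulting map is then automatic from Proposition \ref{rationality-G}, and the YB and reversibility properties follow from the uniqueness of the refactorisation for the reduced Lax matrix together with Proposition \ref{PropVes}.

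For claim (3), the bosonic limit is obtained by setting all odd variables $\chi_1=\chi_2=\psi_1=\psi_2=0$; the varieties (\ref{symleavesGras}) degenerate to (\ref{symleaves}), the quantities (\ref{f-g-h-Z2}) degenerate to (\ref{YB-Aff-3d1})--(\ref{YB-Aff-3d3}) with $X=x_1x_2+a$, $Y=y_1y_2+b$ substituted in, and the eight-dimensional map therefore degenerates termwise to (\ref{YB-affine}). I would present this as a short remark since it is a routine specialisation once the explicit formulas are in place.

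I expect the main obstacle to be purely bookkeeping rather than conceptual: organising the substitution in claim (2) so that the odd-variable signs are handled correctly and confirming that every quantity appearing in a denominator is even. Because the excerpt already supplies the relevant first integrals (the superdeterminant identity (\ref{G-z2-1stInt})) and the general birationality/YB machinery (Propositions \ref{rationality-G} and \ref{genInvGras}, and the bosonic template in the $\field{Z}_2$ section), no genuinely new idea is required; the proof is a guided computation whose structure mirrors the one already carried out for the commutative DNLS reduction.
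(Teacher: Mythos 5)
Your proposal is correct and follows essentially the same route as the paper: direct verification that $U-u_1u_2-\xi_1\xi_2$ and $V-v_1v_2-\eta_1\eta_2$ reproduce $\Phi$ and $\Psi$, elimination of $X$ and $Y$ via the level-set conditions (under which $f=a-b$, $g=a-x_1y_2-\chi_1\psi_2$, $h=b-x_1y_2-\chi_1\psi_2$, all even, exactly as you note), and the bosonic limit by setting the odd variables to zero. The paper's proof is just a terser version of the same computation, so no further comment is needed.
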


\begin{proof}
\vspace{-1cm}
\begin{enumerate}
	\item Map (\ref{Z25D})-(\ref{f-g-h-Z2}) implies $U-u_1u_2-\xi_1\xi_2=X-x_1x_2-\chi_1\chi_2$ and $V-v_1v_2-\eta_1\eta_2=Y-y_1y_2-\psi_1\psi_2$. Therefore, $\Phi$ and $\Psi$ are first integrals\index{first integral(s)} of the map.
	\item The conditions $X=x_1x_2+\chi_1\chi_2+a$ and $Y=y_1y_2+\psi_1\psi_2+b$ define the level sets, $A_a$ and $A_b$, of $\Phi$ and $\Psi$, respectively. Using these conditions, we can eliminate $X$ and $Y$ from map (\ref{Z25D})-(\ref{f-g-h-Z2}). The resulting map, $Y_{a,b}:A_a\times A_b \longrightarrow A_a\times A_b$, is given by 
\begin{subequations}\label{Z2-Grassmann} 
\begin{align}
 x_1\, \mapsto \, u_1&=y_1+\frac{(a-b)(a-x_1y_2+\chi_1\psi_2)}{(a-x_1 y_2)^2}x_1, \\
 x_2\, \mapsto \, u_2&=\frac{(a-x_1y_2-\chi_1\psi_2)(b-x_1y_2+\chi_1\psi_2)}{(b-x_1y_2)^2}y_2, \\
 \chi_1\, \mapsto \, \xi_1&=\psi_1+\frac{a-b}{a-x_1y_2}\chi_1, \\
 \chi_2\, \mapsto \, \xi_2&=\frac{a-x_1y_2}{b-x_1y_2}\psi_2, \displaybreak[0]\\
 y_1\, \mapsto \, v_1&=\frac{(b-x_1y_2-\chi_1\psi_2)(a-x_1y_2+\chi_1\psi_2)}{(a-x_1y_2)^2}x_1, \\
 y_2\, \mapsto \, v_2&=x_2+\frac{(b-a)(b-x_1y_2+\chi_1\psi_2)}{(b-x_1y_2)^2}y_2, \\
 \psi_1\, \mapsto \, \eta_1&=\frac{b-x_1y_2}{a-x_1y_2}\chi_1, \\
 \psi_2\, \mapsto \, \eta_2&=\chi_2+\frac{b-a}{b-x_1y_2}\psi_2. 
\end{align}
\end{subequations}
\item Setting the odd variables\index{odd variable(s)} of the above map equal to zero, namely $\chi_1=\chi_2=0$ and $\psi_1=\psi_2=0$, we obtain the YB map (\ref{YB-affine}).	
\end{enumerate}
This proves the Proposition.
\end{proof}

Now, using condition $X = x_1x_2+\chi_1\chi_2+a$, matrix (\ref{DarbouxZ2}) takes the following form
\begin{equation} \label{DarbouxLaxZ2}
M=\lambda^2 \left(
\begin{matrix}
 k+x_1x_2+\chi_1\chi_2 & 0 & 0\\
 0 & 0 & 0\\
 0 & 0 & 0
\end{matrix}\right)+\lambda \left(
\begin{matrix}
 0 & x_1 & \chi_1 \\
 x_2 & 0 &0 \\
 \chi_2 & 0 & 0
\end{matrix}\right)
+\left(
\begin{matrix}
 1 & 0 & 0 \\
 0 & 1 &0 \\
 0 & 0 & 1
\end{matrix}\right).
\end{equation}
Map (\ref{Z2-Grassmann}) has the following Lax representation
\begin{equation} \label{lax_eq_Z2}
  M(\textbf{u};a,\lambda)M(\textbf{v};b,\lambda)=M(\textbf{y};b,\lambda)M(\textbf{x};a,\lambda).
\end{equation}
Therefore, it is reversible parametric YB map which is birational due to Prop. \ref{rationality-G}. It can also be verified that it is not involutive.

For the integrability of map (\ref{Z2-Grassmann}) we have the following

\begin{proposition}
Map (\ref{Z2-Grassmann}) is completely integrable.
\end{proposition}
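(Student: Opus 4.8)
The plan is to mirror, in the Grassmann-graded setting, exactly the strategy used for the bosonic map (\ref{YB-affine}) in Proposition \ref{cintZ2}. First I would compute the supertrace of the monodromy matrix $M(\textbf{y};b,\lambda)M(\textbf{x};a,\lambda)$, using the restricted Lax matrix (\ref{DarbouxLaxZ2}) with $X=x_1x_2+\chi_1\chi_2+a$ and $Y=y_1y_2+\psi_1\psi_2+b$. By Proposition \ref{genInvGras}, the $\lambda$-coefficients of $\str(M(\textbf{y};b,\lambda)M(\textbf{x};a,\lambda))$ are invariants of the map (\ref{Z2-Grassmann}); I expect to extract the even analogues of $I_1,I_2$ from Proposition \ref{cintZ2}, namely something of the form
\begin{equation*}
I_1 = (a+x_1x_2+\chi_1\chi_2)(b+y_1y_2+\psi_1\psi_2), \qquad I_2 = (x_1+y_1)(x_2+y_2)+(\chi_1+\psi_1)(\chi_2+\psi_2)+a+b,
\end{equation*}
together with the observation (as in the bosonic case) that the graded ``sums'' $x_1+y_1$, $x_2+y_2$, $\chi_1+\psi_1$, $\chi_2+\psi_2$ are themselves invariant under (\ref{Z2-Grassmann}) — this is checked directly from the explicit formulas, since $u_1+v_1$, $u_2+v_2$, $\xi_1+\eta_1$, $\xi_2+\eta_2$ telescope back to the corresponding input combinations, just as $u_i+v_i=x_i+y_i$ held before the restriction.

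Next I would introduce the graded Poisson bracket extending the bosonic one of Proposition \ref{cintZ2}: set
\begin{equation*}
\{x_1,x_2\}=\{y_1,y_2\}=\{x_2,y_1\}=\{y_2,x_1\}=1, \qquad \{\chi_1,\chi_2\}=\{\psi_1,\psi_2\}=\{\chi_2,\psi_1\}=\{\psi_2,\chi_1\}=1,
\end{equation*}
with all remaining brackets among generators vanishing, where on the odd generators the bracket is understood as the graded (symmetric) Poisson structure. I would verify that the associated structure matrix is invariant under the map (\ref{Z2-Grassmann}) — i.e.\ that $\{u_i,u_j\}$, $\{v_i,v_j\}$, $\{\xi_i,\xi_j\}$ etc.\ reproduce the same constant (or zero) values — which, given the near-linear telescoping structure of the map, reduces to a finite check. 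One then identifies the Casimirs: $C_1=x_1+y_1$, $C_2=x_2+y_2$ in the even sector and $\gamma_1=\chi_1+\psi_1$, $\gamma_2=\chi_2+\psi_2$ in the odd sector (consistent with $I_2$ being built from Casimirs plus constants). The remaining even invariant $I_1$ then plays the role of the single nontrivial Hamiltonian-type integral; in the $8$-dimensional phase space with a rank-$2$ even symplectic block and the appropriate odd contribution, one argues that the number of functionally independent commuting invariants together with the Casimirs meets the requirement of the Definition of complete integrability at the start of this chapter (the graded analogue of Definition~\ref{cintZ2}'s hypotheses), so $Y_{a,b}$ in (\ref{Z2-Grassmann}) is Liouville integrable.

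The main obstacle I anticipate is \emph{not} the computation of the invariants — that is routine once the supertrace is expanded — but rather making the notion of Liouville integrability precise and consistent in the $\field{Z}_2$-graded setting: one must decide how the odd variables enter the count of ``functionally independent'' integrals and the rank of the Poisson matrix, and check that the involutivity $\{I_i,I_j\}=0$ holds with the correct graded signs (recall from the differentiation rule for odd variables that $\partial_\alpha(\alpha\beta)=\beta$ but $\partial_\beta(\alpha\beta)=-\alpha$, so sign bookkeeping in the Jacobi identity and in $\{I_1,I_2\}$ is delicate). I would handle this by restricting attention to the bosonic projection for the dimension/independence count and treating the odd sector as carried along by its invariant Casimirs $\gamma_1,\gamma_2$, exactly as the paper does implicitly when it says the bosonic limit recovers the completely integrable map (\ref{YB-affine}); the nilpotency $\chi_i^2=\psi_i^2=0$ in fact \emph{simplifies} matters, since all odd-odd products square to zero and higher-order terms truncate. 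A secondary, purely bookkeeping obstacle is confirming that the restriction to the level sets $A_a\times A_b$ in (\ref{symleavesGras}) is compatible with the graded Poisson bracket — i.e.\ that $\Phi$ and $\Psi$ are in the centre modulo the constraints — which again follows from the first-integral property (\ref{G-z2-1stInt}) established in Proposition \ref{G-Darboux-DNLScase}.
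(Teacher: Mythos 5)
Your overall strategy is the one the paper follows: extract invariants from $\str(M(\textbf{y};b,\lambda)M(\textbf{x};a,\lambda))$ via Proposition \ref{genInvGras}, observe that the four sums $x_1+y_1$, $x_2+y_2$, $\chi_1+\psi_1$, $\chi_2+\psi_2$ are separately preserved by (\ref{Z2-Grassmann}), build a graded Poisson structure for which these are Casimirs, and count. However, your proposed bracket does not do the job. With the graded (symmetric) convention you invoke on odd pairs, setting $\left\{\chi_2,\psi_1\right\}=\left\{\psi_2,\chi_1\right\}=1$ gives $\left\{\chi_1+\psi_1,\chi_2\right\}=\left\{\chi_1,\chi_2\right\}+\left\{\psi_1,\chi_2\right\}=1+1=2$, so $\chi_1+\psi_1$ is \emph{not} a Casimir of your structure. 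The paper instead takes $\left\{\chi_1,\psi_2\right\}=\left\{\psi_1,\chi_2\right\}=-1$ (keeping $\left\{\chi_1,\chi_2\right\}=\left\{\psi_1,\psi_2\right\}=1$), which is exactly what is forced by the requirement $\nabla C_i\cdot J=0$ in the odd sector. You flag the sign bookkeeping as delicate, but the sign is determined, not optional, and your stated values give the wrong answer.

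The second gap is the invariant count. The correct Poisson matrix has rank $4$ (rank $2$ in the even block and rank $2$ in the odd block), so the definition of complete integrability used in this thesis requires $r=2$ functionally independent invariants in involution together with $2N-2r=4$ Casimirs. Your plan supplies the four Casimirs but only one non-Casimir invariant, $K_1=(a+x_1x_2+\chi_1\chi_2)(b+y_1y_2+\psi_1\psi_2)$; retreating to ``the bosonic projection for the dimension/independence count'' changes the structure being integrated and is not what the paper does. The paper resolves this by splitting $K_1$ into two pieces that are separately invariant, namely
\begin{equation*}
I_1=(a+x_1x_2)(y_1y_2+\psi_1\psi_2)+b(x_1x_2+\chi_1\chi_2)+y_1y_2\chi_1\chi_2
\qquad\text{and}\qquad
I_2=\chi_1\chi_2\psi_1\psi_2,
\end{equation*}
and checking $\nabla I_1\cdot J\cdot(\nabla I_2)^t=0$. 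Without producing this second involutive invariant (or an equivalent one), the hypotheses of the integrability definition are not met, so your argument as written does not close.
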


\begin{proof}
The invariants of map (\ref{Z2-Grassmann}) which we retrieve from $\str(M(\textbf{y};b,\lambda)M(\textbf{x};a,\lambda))$ are
\begin{eqnarray*}
&&K_1=(a+x_1x_2+\chi_1\chi_2)(b+y_1y_2+\psi_1\psi_2) \\
&&K_2=a+b+x_1x_2+y_1y_2+x_1y_2+x_2y_1+\chi_1\chi_2+\psi_1\psi_2+\chi_1\psi_2-\chi_2\psi_1,
\end{eqnarray*}
where the constant terms can be omitted. However, $K_1$ is sum of the following quantities 
\begin{subequations}
\begin{align}
I_1&=(a+x_1x_2)(y_1y_2+\psi_1\psi_2)+b(x_1x_2+\chi_1\chi_2)+y_1y_2\chi_1\chi_2\\
I_2&=\chi_1\chi_2\psi_1\psi_2,
\end{align}
\end{subequations}
which are invariants themselves. Moreover, $K_2$ is sum of the following invariants
\begin{equation}
I_3=(x_1+y_1)(x_2+y_2) \qquad \text{and} \qquad I_4=(\chi_1+\psi_1)(\chi_2+\psi_2).
\end{equation}
In fact, the quantities $C_i=x_i+y_i$ are invariants theirselves.

We construct a Poisson matrix, $J$, such that the following equation is satisfied
\begin{equation}\label{casEq}
\nabla C_i\cdot  J=0.
\end{equation}
In this case the Poisson bracket is given by
\begin{eqnarray*}
&\left\{x_1,x_2\right\}=\left\{x_2,y_1\right\}=\left\{y_2,x_1\right\}=\left\{y_1,y_2\right\}=1,&\\
&\left\{\chi_1,\chi_2\right\}=\left\{\psi_1,\psi_2\right\}=1 \quad \text{and} \quad \left\{\chi_1,\psi_2\right\}=\left\{\psi_1,\chi_2\right\}=-1.&
\end{eqnarray*}
Map (\ref{Z2-Grassmann}) preserves the Poisson bracket. Moreover, due to (\ref{casEq}), $C_i$'s are Casimir functions as
\begin{equation}
\left\{C_i,f\right\}=(\nabla C_i)\cdot J\cdot (\nabla f)^t=0,\quad \text{for any} \quad f=f(\textbf{x},\textbf{y}).
\end{equation}
Moreover, the invariants $I_1$ and $I_2$ are in involution with respect to this Poisson matrix, namely $\nabla I_1\cdot J \cdot (\nabla I_2)^t=0$. The rank of the Poisson matrix is 4 and $C_i$, $i=1,2,3,4$, are four Casimir functions. Therefore, the eight-dimensional map (\ref{Z2-Grassmann}) is completely integrable.
\end{proof}

\section{Vector generalisations: $4N\times4N$ Yang-Baxter maps}
In what follows we use the following notation for a $n-$vector $\textbf{w}=(w_1,...,w_n)$
\begin{eqnarray}
&& \textbf{w}=(\textbf{w}_1,\textbf{w}_2,\textbf{$\boldsymbol\omega$}_1,\textbf{$\boldsymbol\omega$}_2),\qquad \text{where} \qquad \textbf{w}_1=(w_1,...,w_N), \qquad \textbf{w}_2=(w_{N+1},...,w_{2N})\nonumber \\
&& \text{and} \qquad \textbf{$\boldsymbol\omega$}_1=(\omega_{2N+1},...,\omega_{3N}), \qquad \textbf{$\boldsymbol\omega$}_2=(\omega_{3N+1},...,\omega_{4N}), \nonumber
\end{eqnarray}
where $\textbf{w}_1$ and $\textbf{w}_2$ are even and $\textbf{$\boldsymbol\omega$}_1$ and $\textbf{$\boldsymbol\omega$}_2$ are odds.
Also,
\begin{equation}
\langle u_i|:=\textbf{u}_i,\qquad |w_i\rangle:=\textbf{w}_i^T \qquad \text{and their dot product with}\qquad \langle u_i,w_i\rangle.
\end{equation}

\subsection{Nonlinear Schr\"odinger equation}
Now, we replace the variables in map (\ref{Adler-Yamilov-Grassmann}) with $N-$vectors, namely we consider the following $4N\times 4N$ map 
\begin{subequations}\label{VectorNLS-G}
\begin{align}
\begin{cases}
\langle u_1|=\langle y_1|+f(z;a,b)\langle x_1|(1+\langle x_1,y_2\rangle-\langle \chi_1,\psi_2\rangle),\\
\langle u_2|=\langle y_2|, \\
\langle \xi_1|=\langle \psi_1|+f(z;a,b)\langle \chi_1|(1+\langle x_1,y_2\rangle-\langle \chi_1,\psi_2\rangle), \\
\langle \xi_2|=\langle \psi_2|,
\end{cases}
\intertext{and}
\begin{cases}
\langle v_1|=\langle x_1|,\\
\langle v_2|=\langle x_2|+f(z;b,a)\langle y_2|(1+\langle x_1,y_2\rangle-\langle \chi_1,\psi_2\rangle),\\
\langle \eta_1|=\langle \chi_1| \\
\langle \eta_2|=\langle \chi_2|+f(z;b,a)\langle y_4|(1+\langle x_1,y_2\rangle-\langle \chi_1,\psi_2\rangle)
\end{cases}
\end{align}
\end{subequations}
where $f$ is given by
\begin{equation}\label{f-g-h-VNLS}
f(z;b,a)=\frac{b-a}{(1+z)^2},\qquad z:=\langle x_1,y_2\rangle.
\end{equation}

Map (\ref{VectorNLS-G})-(\ref{f-g-h-VNLS}) is a reversible parametric YB map, for it has the following strong Lax-representation
\begin{equation}\label{LaxRep}
M(\textbf{u};a)M(\textbf{v};b)=M(\textbf{y};b)M(\textbf{x};a)
\end{equation}
where
\begin{equation}
M(\textbf{w};a)=\left(     
\begin{matrix}
\lambda+a+\langle w_1,w_2\rangle+\langle \omega_1,\omega_2\rangle  & \langle w_1| & \langle \omega_1| \\
|w_2\rangle        \\
|\omega_2\rangle      & \quad \qquad \qquad I_{2N-1} &
\end{matrix}\right).
\end{equation}
Moreover, map (\ref{VectorNLS-G})-(\ref{f-g-h-VNLS}) is birational and not involutive. 

The invariants of this map are given by
\begin{subequations}
\begin{align}
 I_1&=a+b+\langle x_1,x_2\rangle+\langle y_1,y_2\rangle+\langle \chi_1,\chi_2\rangle+\langle \psi_1,\psi_2\rangle, \\
 I_2&=b(\langle x_1,x_2\rangle+\langle \chi_1,\chi_2\rangle)+a(\langle y_1,y_2\rangle+\langle \psi_1,\psi_2\rangle)+(\langle y_1,y_2\rangle+\\
    & \quad \langle \psi_1,\psi_2\rangle)(\langle x_1,x_2\rangle+\langle \chi_1,\chi_2\rangle).\nonumber
\end{align}
\end{subequations}
However, the number of the invariants we obtain from the supertrace\index{supertrace} of the monodromy matrix\index{monodromy matrix} is not enough to claim integrability in the Liouville sense.

\subsection{Derivative nonlinear Schr\"odinger equation}
Now, replacing the variables in (\ref{Z2-Grassmann}) with $N$-vectors we obtain the following $4N\times 4N$-dimensional map
\begin{subequations}\label{vectorZ2-G}
\begin{align}
\begin{cases}
\langle u_1|&=\langle y_1|+h(z;a,b)\langle x_1|(a-\langle x_1,y_2\rangle+\langle \chi_1,\psi_2\rangle), \\
\langle u_2|&=g(z;a,b)\langle y_2|-h(z;b,a)\langle \chi_1,\psi_2\rangle \langle y_2|,\\
\langle \xi_1|&=\langle \psi_1|+f(z;a,b)\langle \chi_1|,\\
\langle \xi_2|&=g(z;a,b)\langle \psi_2|,
\end{cases}
\intertext{and}
\begin{cases}
\langle v_1|&=g(z;b,a)\langle x_1|-h(z;a,b)\langle x_1|,\\
\langle v_2|&=\langle x_2|+h(z;b,a)\langle y_2|(b-\langle x_1,y_2\rangle+\langle \chi_1,\psi_2\rangle),\\
\langle \eta_1|&=g(z;b,a)\langle \chi_1|,\\
\langle \eta_2|&=\langle \chi_2|+f(z;b,a)\langle \psi_2|,
\end{cases}
\end{align}
\end{subequations}
where $f$, $g$ and $h$ are given by
\begin{equation}\label{f-g-h-VZ2}
f(z;a,b)=\frac{a-b}{a-z},\quad g(z;a,b)=\frac{a-z}{b-z},\quad h(z;a,b)=\frac{a-b}{(a-z)^2}, \quad z:=\langle x_1,y_2\rangle.
\end{equation}

Map (\ref{vectorZ2-G})-(\ref{f-g-h-VZ2}) is reversible parametric YB map, as it has the strong Lax-representation (\ref{LaxRep}) where
\begin{equation} \label{DarbouxVectorLaxZ2}
M= \left(
\begin{matrix}
 \lambda^2(k+\langle x_1,x_2\rangle+\langle \chi_1,\chi_2\rangle) & \lambda \langle x_1|  & \lambda \langle \chi_1|\\
 \lambda |x_2\rangle &  & \\
 \lambda |\chi_2\rangle & \qquad \quad I_{2N} & 
\end{matrix}\right).
\end{equation}
Moreover, it is a non-involutive map and birational.

The invariants of the map that we retrieve from the supertrace\index{supertrace} of the monodromy matrix\index{monodromy matrix} are given by
\begin{eqnarray*}
&&K_1=(a+\langle x_1,x_2\rangle+\langle\chi_1,\chi_2\rangle)(b+\langle y_1,y_2\rangle+\langle\psi_1,\psi_2\rangle) \\
&&K_2=a+b+\langle x_1,x_2\rangle+\langle y_1,y_2\rangle+\langle x_1,y_2\rangle+\langle x_2,y_1\rangle+\langle\chi_1,\chi_2\rangle+\langle\psi_1,\psi_2\rangle+\\
&&\qquad~~\langle\chi_1,\psi_2\rangle-\langle\chi_2\psi_1\rangle.
\end{eqnarray*}
In fact, $K_2$ is a sum of the following invariants
\begin{equation}
I_1=\langle x_1+y_1,x_2+y_2\rangle,\qquad I_2=\langle\chi_1+\chi_2,\psi_1+\psi_2 \rangle,
\end{equation}
and the entries of the above dot products are invariant vectors. However, as in the case of the NLS vector generalisation in the previous section, the invariants are not enought to claim Liouville integrability.

\chapter{Conclusions}
\label{chap7} \setcounter{equation}{0}
\renewcommand{\theequation}{\thechapter.\arabic{equation}}

\section{Summary of results}
In this thesis we used Darboux transformations \index{Darboux transformation(s)} as the main tool to first link integrable partial differential equations to discrete integrable systems \index{integrable system(s)!discrete} and, then, to construct parametric Yang-Baxter maps.

In particular, we constructed Darboux matrices \index{Darboux matrix(-ces)} for certain Lax operators of NLS type\index{Lax operator(s)}\index{Lax operator(s)!for NLS type operators} and employed them in the derivation of integrable systems of difference equations.\index{integrable system(s)!of difference equations} The advantage of this approach is that it provides us not only with difference equations,\index{difference equation(s)} but also with their Lax pairs, symmetries, first integrals\index{first integral(s)} and conservation laws;\index{conservation law(s)} even with B\"acklund transformations\index{B\"acklund transformation(s)} for the original partial differential equations they are related to. Then, using first integrals in a systematic way we were able to reduce some of our systems to integrable equations of Toda type.\index{Toda (type) equation(s)}

In fact, we studied the cases of the NLS equation, the DNLS equation and a deformation of the DNLS equation. These equations were not randomly chosen, but their associated Lax operators possess certain symmetries, namely they are invariant under action of (reduction) groups of transformation which correspond to some classification results on automorphic Lie algebras.\index{automorphic Lie algebras}

More precisely, we derived novel systems of difference equations, \index{difference equation(s)} namely systems (\ref{nls-comp}), (\ref{sl2-res-eq}) and (\ref{DS-dDNLS}), which are actually systems with vertex and bond variables \cite{HV}. These systems have symmetries and first integrals which follow from the derivation of the Darboux matrix and they are integrable in the sense that they possess Lax pair. Additionally, they are multidimensionally consistent. \index{multidimensionally consistent} 

On the other hand, motivated by the similarity of the Bianchi-type compatibility condition with the Lax equation for Yang-Baxter maps, we used the afore-mentioned Darboux matrices to construct Yang-Baxter maps.\index{Yang-Baxter (YB) map(s)} Specifically, we constructed ten-dimensional Yang-Baxter maps as solutions of matrix refactorisation problems \index{refactorisation problem} related to Darboux matrices for all the NLS type equations we mentioned earlier.

Motivated by the fact that the potential-entries of the Darboux matrices obey systems of differential-difference equations which admit certain first integrals, we used the latter to restrict our ten-dimensional maps to four-dimensional Yang-Baxter maps on invariant leaves. Particularly, in the case of NLS equation we derived the Adler-Yamilov map, while in the case of DNLS equation we derived a novel Yang-Baxter map, namely map (\ref{YB-affine}). Moreover, we showed that these Yang-Baxter maps are completely integrable and we considered their vector generalisations.

Finally, following \cite{Georgi} where the noncommutative extension of the Darboux transformation for the NLS operator was constructed, we derived the Grassmann extension of the Darboux matrix for the DNLS operator. We employed these Darboux matrices to construct the noncommutative extensions of the Adler-Yamilov map and map (\ref{YB-affine}), namely maps (\ref{Adler-Yamilov-Grassmann}) and (\ref{Z2-Grassmann}). Moreover, we considered the vector generalisations of these maps.

\section{Future work}
The goal of a PhD thesis is not only to solve problems but also to create new ones. Therefore, we list some open problems for future work epigrammatically, and we analyse them.
\begin{enumerate}
	\item Study the integrability of the transfer maps for the Yang-Baxter maps corresponding to NLS type equations;
	\item Study the corresponding entwining Yang-Baxter maps;
	\item Examine the possibility of deriving auto-B\"acklund transformations from Yang-baxter maps for the associated partial differential equations;
	\item Examine the possibility of deriving hetero-B\"acklund transformations from entwining Yang-baxter maps;
	\item Derive auto-B\"acklund and hetero-B\"acklund transformations related to the noncommutative extensions of the NLS and the DNLS equations;
	\item Examine the relation between YB maps which have as Lax representations gauge equivalent Darboux matrices\index{Darboux matrix(-ces)!gauge equivalent}.
\end{enumerate}

Regarding \textbf{1.} one could consider the transfer maps, $T_n$, as in \cite{kouloukasEnt}, which arise out of the consideration of the initial value problem on the staircase, as in Figure \ref{staircase}. For their integrability we need to use the monodromy matrix (\ref{monMat}). However, it is obvious that for $n>2$ the expressions derived from the trace of the generating function (\ref{monMat}) will be of big length and, therefore, not quite useful for (Liouville) integrability claims. Moreover, the trace of (\ref{monMat}) does not guarantee that the derived invariants are functionally independent. Thus, the most fruitful approach would be the discovery of a universal generating function of invariants for all the maps $T_n$, $n\in \field{N}$. That may demand that we have to define the transfer maps $T_n$ in a different way.

Concerning \textbf{2.} the idea is to consider a matrix refactorisation problem of the form (\ref{eqLax}) using two different Darboux matrices. For instance, having in our disposal Darboux matrices $M_1=M_1(x;a,\lambda)$ and $M_2=M_2(x;a,\lambda)$, study the solutions of the following problem
\begin{equation}\label{EntRef}
M_1(u;a,\lambda)M_2(v;b,\lambda)=M_2(y;b,\lambda)M_1(x;a,\lambda).
\end{equation}

For \textbf{3.} recall that the Yang-Baxter maps in chapter 5 were derived as solutions of matrix refactorisation problems of particular Darboux matrices. Now, one needs to take into account that the entries of these Darboux matrices are potentials satisfying certain partial differential equations. The Yang-Baxter map does not preserve these solutions. Imposing this as a condition, namely that the Yang-Baxter map maps a solution of a particular PDE to another solution of the same PDE, we obtain some relations among these solutions. 

With regards to \textbf{4.} the idea is to impose that an entwning Yang-Baxter map preserves the the solutions of the associated partial differential equations. Then, check if the resulted relations constitute hetero-B\"acklund transformations for these PDEs.

Concerning \textbf{5.} the idea is the same with the one mentioned in \textbf{3.} and \textbf{4.} but regarding the noncommutative extensions of the Yang-Baxter maps presented in chapter 6.

Finally, regarding \textbf{6.} one should find how YB maps with gauge equivalent Darboux-Lax representations are related, and use this information to classify them.

\addcontentsline{toc}{chapter}{Appendix}

\chapter*{Appendices\markboth{Appendices}{}}

\renewcommand{\thesection}{\Alph{section}}

\renewcommand{\theequation}{\Alph{section}.\arabic{equation}}

\renewcommand{\thefigure}{\Alph{section}.\arabic{figure}}

\appendix


\section{Solution of the system of discrete equations associated to the deformation of the DNLS equation}\label{appendix1}
The solution of the system (\ref{DS-dDNLS}) consists of two branches: The trivial solution given by
\begin{equation}
p_{01}=p_{10}, \quad q_{01}=q_{01},\quad v=g,\quad g_{01}=v_{10},
\end{equation}
and a non-trivial given by
\begin{equation}
p_{01}=\frac{1}{A}\mathcal{F}_1,\quad q_{01}=\frac{1}{B}\mathcal{F}_2,\quad v=\frac{1}{B}\mathcal{F}_3,\quad g_{01}=\frac{1}{A}\mathcal{F}_4,
\end{equation}
where $A$ and $B$ are given by the following expressions
\begin{subequations}
\begin{align}
A=&(g (p (q_{11}-1)+p_{10}-v_{10})+p^2 (q_{11}-1)+p q_{10} (v_{10}-p_{10})-\nonumber\\
&(q_{10}+1) (q_{11}-1)) (g (p_{10}-p (q_{11}+1)+v_{10})+p^2 (q_{11}+1)-\\
&p q_{10} (p_{10}+v_{10})+(q_{10}-1) (q_{11}+1)),\nonumber
\intertext{and}
B=&(q_{11} (q_{11}-p_{10} (g+q_{10}))+v_{10} (g+q_{10}-q_{11})+p (p_{10}-q_{11} v_{10}+\nonumber\\
&q_{11}^2-1)+p_{10}-1) (p_{10} (-g q_{11}+q_{10} q_{11}-1)+v_{10} (g-q_{10}+q_{11})+\\
&p (p_{10}-q_{11} (q_{11}+v_{10})+1)+q_{11}^2-1),\nonumber
\end{align}
\end{subequations}
whereas $\mathcal{F}_i$, $i=1,\ldots,4$, are given by
\small
\begin{subequations}
\begin{align}
\mathcal{F}_1=&g^3 (p_{10}-p) (p_{10} q_{11}-v_{10})+g^2 (p^2 (p_{10}-q_{11} v_{10})-p (p_{10}^2 (q_{10} q_{11}+1)-p_{10} (q_{10}+\nonumber\\
&2 q_{11}) v_{10}-q_{11}^2+v_{10}^2+1)+p_{10} q_{11} (q_{10}-2 q_{11})+p_{10}+(q_{11}-q_{10}) v_{10})+\nonumber\\
&g (q_{10} (p^2 (v_{10}^2-2 p_{10} q_{11} v_{10}+p_{10}^2-q_{11}^2+1)-2 p_{10} q_{11} v_{10}+4 p_{10} p (q_{11}^2-1)+\nonumber\\
&p_{10}^2-q_{11}^2+v_{10}^2+1)+(p^2-1) (q_{11} (p p_{10}+v_{10}^2+1)-(p+p_{10}) v_{10}-q_{11}^3)+\nonumber\\
&(p-p_{10}) q_{10}^2 (p_{10} q_{11}-v_{10}))+((p^2-1) q_{11}+q_{10}) (p^2 v_{10}+p q_{10} (q_{11}^2-v_{10}^2-1)+\nonumber\\
&(q_{10}^2-1) v_{10})+p_{10} (p^3 q_{10} v_{10}-p^2 (p^4+(q_{10}^2-1) (q_{10} q_{11}-1)-q_{10} (q_{10}-\nonumber\\
&2 q_{10} q_{11}^2-q_{11})-2)-p q_{10} (q_{10}^2-2 q_{11} q_{10}+1) v_{10})+p p_{10}^2 q_{10}^2 (q_{10} q_{11}-1),\nonumber\\
\nonumber\\
\mathcal{F}_2=&g^2 (p_{10} q_{11}-v_{10}) (p (q_{11}^2-1)+p_{10}-q_{11} v_{10})-g (p_{10}^2 (q_{10}-q_{11}) v_{10}+\nonumber\\
&p_{10} (q_{10} q_{11}-1) v_{10}^2+p (p_{10}^2 (q_{11}^2+2 q_{10} q_{11}-1)-2 p_{10} q_{10} (q_{11}^2+1) v_{10}+\nonumber\\
&(1-q_{11}^2+2 q_{10} q_{11}) v_{10}^2+(q_{11}^2-1)^2)+p_{10}^3 (1-q_{10} q_{11})+p_{10} (q_{10} q_{11}+1) (q_{11}^2-1)+\nonumber\\
&(q_{11}-q_{10}) v_{10}^3-(q_{10}+q_{11}) (q_{11}^2-1) v_{10})+(1-p^2) (p_{10} q_{11}-v_{10}) (p (q_{11}^2-1)+\nonumber\\
&p_{10}-q_{11} v_{10})+q_{10} (2 p^2 (v_{10}-p_{10} q_{11})^2+4 p_{10} q_{11} v_{10}+p (p_{10}-q_{11} v_{10}) (p_{10}^2+\nonumber\\
&q_{11}^2-v_{10}^2-1)-p_{10}^2 (q_{11}^2+1)-(q_{11}^2+1) v_{10}^2+(q_{11}^2-1)^2)+\nonumber\\
&q_{10}^2 (p_{10} q_{11}-v_{10}) (1-v_{10} (q_{11}-p v_{10})-p p_{10}),\nonumber\\
\nonumber\\
\mathcal{F}_3=&g^2 (p_{10} q_{11}-v_{10}) (p q_{11} v_{10}+(1-p) p_{10}^2-v_{10}^2)+g (p^2 (p_{10}^2 (q_{11}^2+1)-\nonumber\\
&4 p_{10} q_{11} v_{10}+q_{11}^2 (v_{10}^2+2)-q_{11}^4+v_{10}^2-1)+p (p_{10}^2 v_{10} (q_{11} (1-p_{10} q_{10})-\nonumber\\
&p_{10}+q_{10})+p_{10} (q_{10} q_{11} (q_{11}^2+v_{10}^2-1)-q_{11}^2+v_{10}^2+1)-(q_{10}+q_{11}) v_{10}^3+\nonumber\\
&(q_{10}-q_{11}) (1-q_{11}^2) v_{10})+2 (p_{10} q_{11}-v_{10}) (p_{10} (q_{10}-q_{11})-q_{10} q_{11} v_{10}+v_{10}))-\nonumber\\
&(p (p_{10}-q_{11} v_{10})+q_{11}^2-1) (p^2 (p_{10} q_{11}-v_{10})-p q_{10} (p_{10}^2+q_{11}^2-v_{10}^2-1)+\nonumber\\
&(q_{10}^2-1) (p_{10} q_{11}-v_{10})),\nonumber\\
\nonumber\\
\mathcal{F}_4=&g^3 (v_{10}-p q_{11}) (p_{10} q_{11}-v_{10})-g^2 (v_{10} (q_{11} ((p^2+1) q_{11}+q_{10})+p p_{10} (q_{10} q_{11}+2)-2)-\nonumber\\
&p (q_{10}+q_{11}) v_{10}^2-q_{11} (p q_{11}^2+p_{10} (q_{10} q_{11}-1)+p (p_{10} (p+p_{10})-1)))+\nonumber\\
&g (p^3 q_{11} (p_{10} q_{11}-v_{10})+p^2 (p_{10} (2 q_{10}+q_{11}) v_{10}+(q_{10} q_{11}+1) (1-p_{10}^2-q_{11}^2)-\nonumber\\
&q_{10} q_{11} v_{10}^2)+p (q_{11}-q_{10} (q_{11} (q_{10}-4 q_{11})+4)) v_{10}-p_{10} (q_{11}+q_{10} (q_{10} q_{11}-2)) v_{10}+\nonumber\\
&p_{10} p (q_{10}^2-1) q_{11}^2-(q_{10} q_{11}-1) (p_{10}^2+q_{11}^2-1)+q_{10} (q_{10}-q_{11}) v_{10}^2)+\nonumber\\
&v_{10} (q_{11} q_{10} (p (p_{10}(1-p^2)+p_{10} q_{10}^2+p)+q_{10}^2-1)+q_{11}^2 ((1-p^2)^2-(p^2+1) q_{10}^2)+\nonumber\\
&2 p (p-p_{10}) q_{10}^2)+(1-p^2-q_{10} q_{11}) (p_{10} p^2 q_{11}-p q_{10} (p_{10}^2+q_{11}^2-1)+\nonumber\\
&p_{10} (q_{10}^2-1) q_{11})+p q_{10}^2 (q_{11}-q_{10}) v_{10}^2.\nonumber
\end{align}
\end{subequations}

\include{index}


\addcontentsline{toc}{chapter}{Bibliography}
\bibliography{refs}
\cleardoublepage
\addcontentsline{toc}{chapter}{Index}
\printindex
\end{document}